\theoremstyle{plain} 
    \newtheorem{theorem}{Theorem}[section]
    \newtheorem*{theorem*}{Theorem}
    \newtheorem{proposition}[theorem]{Proposition}
    \newtheorem*{proposition*}{Proposition}
	\newtheorem{corollary}[theorem]{Corollary}
    \newtheorem*{corollary*}{Corollary}
	\newtheorem{lemma}[theorem]{Lemma}
    \newtheorem*{lemma*}{Lemma}
	\newtheorem{conjecture}[theorem]{Conjecture}
    \newtheorem*{conjecture*}{Conjecture}
\theoremstyle{definition} 
    \newtheorem{definition}[theorem]{Definition}
    \newtheorem*{definition*}{Definition}
    \newtheorem{notation}[theorem]{Notation}
    \newtheorem*{notation*}{Notation}
\theoremstyle{remark} 
    \newtheorem{remark}[theorem]{Remark}
    \newtheorem*{remark*}{Remark}
	\newtheorem{example}[theorem]{Example}
	\newtheorem*{example*}{Example}	
\numberwithin{equation}{section}
\numberwithin{figure}{section}
    \newcommand{\addQEDstyle}[2]{\AtBeginEnvironment{#1}{\pushQED{\qed}\renewcommand{\qedsymbol}{#2}}
    \AtEndEnvironment{#1}{\popQED}} 
\apptocmd{\sloppy}{\hbadness 10000\relax}{}{} 
\begin{document}


\global\long\def\bR{\mathbb{R}}
\global\long\def\bRpos{\mathbb{R}_{> 0}}
\global\long\def\bRnn{\mathbb{R}_{\geq 0}}
\global\long\def\bZ{\mathbb{Z}}
\global\long\def\bN{\mathbb{N}}
\global\long\def\bZpos{\mathbb{Z}_{>0}}
\global\long\def\bZneg{\mathbb{Z}_{<0}}
\global\long\def\bZnn{\mathbb{Z}_{\geq 0}}
\global\long\def\bQ{\mathbb{Q}}
\global\long\def\bC{\mathbb{C}}
\global\long\def\bH{\mathbb{H}}
\global\long\def\bD{\mathbb{D}}

 \global\long\def\sF{\mathcal{F}}
 \global\long\def\sZ{\mathcal{Z}}
 \global\long\def\sD{\mathcal{D}}
 \global\long\def\sG{\mathcal{G}}
 \global\long\def\sC{\mathcal{C}}
 \global\long\def\sL{\mathcal{L}}
 \global\long\def\sA{\mathcal{A}}
 \global\long\def\sE{\mathcal{E}}
 \global\long\def\sR{\mathcal{R}}
 \global\long\def\sS{\mathcal{S}}
 \global\long\def\sP{\mathcal{P}}
 \global\long\def\sM{\mathcal{M}}
 \global\long\def\sV{\mathcal{V}}

\global\long\def\ii{\mathfrak{i}}

\newcommand{\SymGrp}{\mathfrak{S}}
\newcommand{\LP}{\mathsf{LP}}
\newcommand{\LS}{\mathsf{L}}
\newcommand{\DP}{\mathsf{DP}}
\newcommand{\TL}{\mathsf{TL}}
\newcommand{\Hecke}{\mathsf{H}}
\newcommand{\Uqsltwo}{{\mathsf{U}_q}}

\newcommand{\np}{d}
\newcommand{\Summed}{n}
\newcommand{\multii}{\varsigma}

\newcommand{\NB}{\textnormal{NB}^\lambda}
\newcommand{\SYT}{\textnormal{SYT}^\lambda}
\newcommand{\RSYT}{\textnormal{RSYT}^\lambda_\multii}
\newcommand{\CSYT}{\textnormal{CSYT}^\lambda_\multii}
\newcommand{\CSYTBar}{\textnormal{CSYT}^{\Bar{\lambda}}_\multii}
\newcommand{\Fill}{\textnormal{Fill}^\lambda_\multii}
\newcommand{\Fillof}[2]{{\textnormal{Fill}\sub{#1}^{#2}}}

\newcommand{\SYTof}[1]{{\textnormal{SYT}\super{#1}}}
\newcommand{\CSYTof}[1]{{\textnormal{CSYT}\super{#1}_\multii}}
\newcommand{\RSYTof}[1]{{\textnormal{RSYT}\super{#1}_\multii}}

\newcommand{\Spn}{\textnormal{span}}
\newcommand{\Det}{\textnormal{Det}}
\newcommand{\cconf}{\textnormal{Adm}}
\newcommand{\Stab}[2]{{\textnormal{Stab}_{#1}({#2})}}

\newcommand{\ord}{\textnormal{ord}}

\newcommand{\idpt}{\mathrm{p}_\multii}
\newcommand{\idptof}[1]{{\mathrm{p}\sub{#1}}}
\newcommand{\sym}{\mathrm{s}_\multii}
\newcommand{\symTL}{\bar{\mathrm{s}}_\multii}

\newcommand{\Rows}{\mathfrak{R}^\lambda}
\newcommand{\Columns}{\mathfrak{C}^\lambda}

\newcommand{\End}{\textnormal{\upshape End}}

\global\long\def\SolSp{\mathcal{S}}

\global\long\def\PartF{\mathcal{Z}}
\global\long\def\CobloF{\mathcal{U}}
\global\long\def\chamber{\mathfrak{X}}

\global\long\def\Schur{S}

\newcommand{\re}{\textnormal{\upshape Re}}
\newcommand{\im}{\textnormal{\upshape Im}}

\global\long\def\SLE{\textrm{SLE}}

\global\long\def\PR{\mathbb{P}}

\global\long\def\ud{\mathrm{d}}
\global\long\def\der#1{\frac{\ud}{\ud#1}}
\global\long\def\pder#1{\frac{\partial}{\partial#1}}
\global\long\def\pdder#1{\frac{\partial^{2}}{\partial#1^{2}}}
\global\long\def\pddder#1{\frac{\partial^{3}}{\partial#1^{3}}}

\newcommand{\rainbow}[1]{{\star({#1})}} 

\newcommand{\sign}{\mathrm{sgn}}

\newcommand{\GreenK}{\mathsf{G}}
\newcommand{\Walks}{\mathscr{W}}

\newcommand{\GreenKH}{\mathscr{G}}
\newcommand{\PoissonKH}{\mathscr{P}}
\newcommand{\ExcKH}{\mathscr{K}}

\newcommand{\DPleq}{\preceq} 
\newcommand{\DPgeq}{\succeq} 
\newcommand{\CItilingsof}{\mathcal{C}}

\global\long\def\Mob{\phi}
\global\long\def\domain{\Lambda}

\global\long\def\summ{q}
\global\long\def\mult{\mathrm{m}}
\newcommand{\conn}{\vartheta_{\mathrm{UST}}}

\newcommand{\event}{\mathrm{Conn}}

\newcommand{\Vir}{\mathrm{Vir}}

\global\long\def\link#1#2{\raisebox{.5ex}{\hspace{-1mm}
\scalebox{.75}{$\linkInEquation{\boldsymbol{#1}}{\boldsymbol{#2}}$}}}

\global\long\def\defect#1{\raisebox{.5ex}{\hspace{-1mm}
\scalebox{.75}{$\defectInEquation{\boldsymbol{#1}}$}}}

\global\long\def\quote#1{$``${{#1}}$"$}

\global\long\def\bs{\boldsymbol}

\newcommand{\super}[1]{^{\scaleobj{0.85}{(#1)}}}
\newcommand{\sub}[1]{_{\scaleobj{0.85}{(#1)}}}
\newcommand{\superscr}[1]{^{\scaleobj{0.85}{#1}}}
\newcommand{\subscr}[1]{_{\scaleobj{0.85}{#1}}}

\newcommand{\vol}{\mathrm{vol}}
\newcommand{\lfunct}{S_L^0}


\title{Fused Specht Polynomials and $c=1$ Degenerate Conformal Blocks}

\vspace{2.5cm}

\begin{center}
{
\huge
\bf \scshape{
Fused Specht Polynomials \\[.5em] and $c=1$ Degenerate Conformal Blocks
}}
\end{center}

\vspace{0.75cm}

\begin{center}
{\Large \scshape Augustin Lafay}{\footnotesize\footnotemark[1]} \qquad
{\Large \scshape Eveliina Peltola}{\footnotesize\footnotemark[1]\textsuperscript{\&}\footnotemark[2]} \qquad  
{\Large \scshape Julien Roussillon}{\footnotesize\footnotemark[1]} \\
\, {\footnotesize{\protect\url{augustin.lafay@aalto.fi}}} \quad \;
{\footnotesize{\protect\url{eveliina.peltola@aalto.fi}}} \quad \,
{\footnotesize{\protect\url{julien.roussillon@aalto.fi}}} 
\end{center}

\footnotetext[1]{Department of Mathematics and Systems Analysis, 
P.O. Box 11100, FI-00076, Aalto University, Finland.}
\footnotetext[2]{Institute for Applied Mathematics, University of Bonn, Endenicher Allee 60, D-53115 Bonn, Germany.}
	
\setcounter{footnote}{0}

\vspace{0.75cm}

\begin{center}
\begin{minipage}{0.85\textwidth} 
{\scshape Abstract.}
We introduce a class of polynomials that we call fused Specht polynomials and use them to characterize 
irreducible representations of the fused Hecke algebra with parameter $q=-1$ in the space of polynomials. 
We apply the fused Specht polynomials to construct a basis for a space of holomorphic (chiral) conformal blocks 
with central charge $c=1$ which are degenerate at each point. 
In conformal field theory, this corresponds to all primary fields having conformal weight in the Kac table. 
The associated correlation functions 
are expected to give rise to conformally invariant boundary conditions for the Gaussian free field, which has also been verified in special cases.  
\end{minipage}
\end{center}

\vspace{0.65cm}

{\hypersetup{linkcolor=black}
\setcounter{tocdepth}{2}
\tableofcontents}

\newpage

\bigskip{}
\section{Introduction}
This article is essentially divided into two parts, each of which is of independent interest. 
The first part is combinatorial and only assumes basic background in representation theory.
It concerns irreducible representations of the \emph{fused} (or valenced) \emph{Hecke algebra}, 
whose building blocks are fused generalizations of the classical Specht polynomials. 
The second part concerns \emph{degenerate conformal blocks} in a $c=1$ conformal field theory (CFT), which we explicitly build from the fused Specht polynomials. 
(For readers interested in CFT or random geometry, 
the algebraic results from the first part can be taken as a black box.)

We begin with motivation for our results from topology/representation theory point of view on the one hand, 
and from CFT/random geometry point of view on the other hand.

\smallskip

The representation theory of the symmetric group $\SymGrp_\Summed$ is a very classical subject (initiated by Frobenius, Schur, Young, and Specht), 
with ubiquitous applications to various areas in mathematics and physics. 
It was observed in the 1930s 
that the combinatorics of \emph{Young tableaux} plays a prominent role in the classification of irreducible representations of $\SymGrp_\Summed$~\cite{Young:On_quantitative_substitutional_analysis_IV, Fulton-Harris:Representation_theory}. 
A particular class of those, yielding a complete set of irreducible representations, is termed \emph{Specht modules} and spanned by \emph{Specht polynomials}~\cite{Specht:Die_irreduziblen_Darstellungen_der_symmetrischen_Gruppe, Peel:Specht_modules_and_symmetric_groups}.

One of the basic questions in representation theory is the decomposition of a given representation into irreducible components. 
A structurally beautiful result
(termed Schur-Weyl duality)  
relates the representation theory of the symmetric group to that of the special linear group $\mathrm{SL}(2,\bC)$ and its Lie algebra $\mathfrak{sl}(2,\bC)$~\cite{Schur:SW_duality, Weyl:Classical_groups, Fulton-Harris:Representation_theory}.
It concerns a tensor product 
of defining representations $\bC^2$ of $\mathfrak{sl}(2,\bC)$, and implies in particular that the centralizer algebra of $\mathfrak{sl}(2,\bC)$ on $(\bC^2)^{\otimes n}$ equals a quotient of the symmetric group algebra $\bC[\SymGrp_\Summed]$.  
In the case of tensor products of higher-dimensional representations, 
one encounters \emph{fused} (or valenced) versions of the symmetric group algebra $\bC[\SymGrp_\Summed]$. 
More precisely, the centralizer algebra of $\mathfrak{sl}(2,\bC)$ on its tensor product representation $\bC^{s_1+1} \otimes \cdots \otimes \bC^{s_\np+1}$, where $\multii=(s_1,\ldots,s_\np)$ encode the \emph{valences} of the representation, 
is isomorphic to a specific quotient of the \emph{fused Hecke algebra} 
(viz.~the algebra of \quote{fused permutations})~\cite{Flores-Peltola:Higher_spin_QSW,Crampe-Poulain-d-Andecy:Fused_braids_and_centralisers_of_tensor_representations_of_Uq_gln}.
This quotient is also known as the \emph{valenced Temperley-Lieb algebra}~\cite{Temperley-Lieb:Relations_between_percolation_and_colouring_problem, Flores-Peltola:Standard_modules_radicals_and_the_valenced_TL_algebra, Flores-Peltola:Higher_spin_QSW}.

In topology, the Temperley-Lieb algebra can be used to construct the Jones polynomial of a link~\cite{Jones:Polynomial_invariant_for_knots_via_von_Neumann_algebras}, and its valenced version the \quote{colored} Jones polynomial~\cite{Kashaev:Link_invariant_from_quantum_dilogarithm,Kashaev:The_hyperbolic_volume_of_knots_from_the_quantum_dilogarithm,Murakami-Murakami:The_colored_Jones_polynomials_and_the_simplicial_volume_of_a_knot}.
Hecke algebras can be used to construct further generalizations, such as the HOMFLY-PT polynomial~\cite{HOMFLY:New_polynomial_invariant_of_knots_and_links, Przytycki-Traczyk:Conway_algebras_and_skein_equivalence_of_links}. 
In applications to mathematical physics, one can build solutions of the Yang-Baxter equation from the Hecke algebra, 
which is intimately related to quantum groups (or quasitriangular Hopf algebras). 
We will not need to discuss the Yang-Baxter equation in the present work.
Let us briefly mention, however, that \quote{quantum} variants of the Schur-Weyl duality 
relate representations of quantum groups $U_q(\mathfrak{sl}(2,\bC))$ 
to representations of (quotients of) the Hecke algebra $\Hecke_\Summed(q)$, where $q \in \bC \setminus \{0\}$ 
is a deformation parameter~\cite{Jimbo:q_analog_of_UqglN_Hecke_algebra_and_YBE, Dipper-James:Q_Schur_algebra, Martin:On_Schur-Weyl_duality_An_Hecke_algebras_and_quantum_slN_on_CN_tensor_nplus1} and $\Summed \in \bZpos$. 
In the present article, we shall be concerned with the case of $q = -1$ 
(analogous to the classical case of $q = 1$)\footnote{The Hecke algebra $\Hecke_\Summed(\pm 1)$ is isomorphic to the group algebra $\bC[\SymGrp_\Summed]$ of the symmetric group,
and $U_{\pm 1}(\mathfrak{sl}(2,\bC))$ is understood as just the classical universal enveloping algebra $U(\mathfrak{sl}(2,\bC))$.
The quantum groups come up in the case where the deformation parameter is $q \in \bC \setminus \{0, \pm1\}$.
Nevertheless, because (motivated by CFT) we will speak of \quote{fusion} in the present work, 
which also has a direct analogue in the $q$-deformed case, 
we shall adopt the terminology of \quote{(fused) Hecke algebra} (or \quote{(valenced) Hecke algebra}) and the \quote{(valenced) Temperley-Lieb algebra} when we discuss the representation theory of the case of $q=-1$ as well.}. 
We will build irreducible representations of the fused Hecke algebra $\Hecke_\multii := \Hecke_\multii(-1)$ with $q=-1$,  
by introducing a class of polynomials that we call \emph{fused Specht polynomials} (Theorem~\ref{thm:theoremA}).

\medskip

\emph{Conformal field theory} 
has become a rich and important field of study in the mathematical physics community in the recent decades, 
both because of its relation with critical lattice models in statistical physics and random geometry 
(see~\cite{DMS:CFT, Smirnov:Towards_conformal_invariance_of_2D_lattice_models,Peltola:Towards_CFT_for_SLEs, GKR:Compactified_imaginary_Liouville} and references therein), 
and for its intricate connections to algebraic geometry and supersymmetric gauge theories
(see~\cite{AGT:Liouville_correlation_functions_from_4D_gauge_theories, Nekrasov-Shatashvili:Quantization_of_integrable_systems_and_four_dimensional_gauge_theories, Teschner:Quantization_of_moduli_spaces_etc} and references therein). 
In certain CFTs, combinatorial methods and special functions play an important role (cf.~\cite{AFLT:On_combinatorial_expansion_of_the_conformal_blocks_arising_from_AGT_conjecture, Bershtein-Foda:AGT_Burge_pairs_and_minimal_models, ILT:Isomonodromic_tau-functions_from_Liouville_conformal_blocks}), 
as will also be the case in the present work. 
Indeed, we shall find new expressions for conformal blocks in a CFT with central charge $c=1$ in terms of special functions, 
building on the aforementioned (fused) Specht polynomials (cf.~Theorem~\ref{thm:theoremBSA}).

\emph{Conformal blocks} 
provide fundamental building blocks of correlation functions of a CFT.  
In two dimensions, the conformal symmetry imposes infinitely many constraints to the system (encoded into representations of the Virasoro algebra)~\cite{BPZ:Infinite_conformal_symmetry_of_critical_fluctuations_in_2D,DMS:CFT} 
and thereby the structure of the correlation functions is believed to be completely determined by the two- and three-point functions together with the \emph{fusion rules} (or \quote{spectrum}), 
which describe the asymptotics of the correlation functions, and with the \emph{central charge} $c$, a parameter encoding the \quote{conformal anomaly.} 
In this approach, often termed \quote{conformal bootstrap,} or BPZ's algebraic approach, 
it is in principle sufficient to understand the correlation functions of the \emph{primary fields} and the underlying Virasoro algebra representation 
--- the former correspond to highest-weight vectors in Virasoro highest-weight modules, 
and the latter then yields the algebraic structure of the rest of the theory. 
Moreover, in applications one in fact most frequently encounters precisely the correlation functions of primary fields. 
In this article, we shall focus on correlation functions of primary fields in a certain $c=1$ CFT, comprising so-called \quote{degenerate fields,} 
relevant to random geometry applications.  

Upon expanding the correlation functions in terms of a Frobenius type expansion (operator product expansion (OPE) determined by the fusion rules), 
choices of different intermediate Virasoro modules yield different correlation functions.
Particular choices are expected to give distinguished bases of correlation functions (thus singled out by their OPEs), 
and all correlation functions then to be expanded in such bases. 
Certain distinguished bases of correlation functions have been related to geometric observables in scaling limits of critical lattice models: 
solving \emph{crossing probabilities}  (cf.~\cite{Cardy:Critical_percolation_in_finite_geometries,Smirnov:Critical_percolation_in_the_plane,FSKZ:A_formula_for_crossing_probabilities_of_critical_systems_inside_polygons,Peltola-Wu:Crossing_probabilities_of_multiple_Ising_interfaces}), 
or describing \emph{boundary condition changing operators}
(cf.~\cite{Cardy:Conformal_invariance_and_surface_critical_behavior,FSKZ:A_formula_for_crossing_probabilities_of_critical_systems_inside_polygons,Peltola-Wu:Global_and_local_multiple_SLEs_and_connection_probabilities_for_level_lines_of_GFF,FPW:Connection_probabilities_of_multiple_FK_Ising_interfaces}),
also related to \emph{Schramm-Loewner evolution curves}, SLE$(\kappa)$ 
(cf.~\cite{BBK:Multiple_SLEs_and_statistical_mechanics_martingales,Dubedat:Euler_integrals_for_commuting_SLEs,Kytola-Peltola:Pure_partition_functions_of_multiple_SLEs,Peltola:Basis_for_solutions_of_BSA_PDEs_with_particular_asymptotic_properties}. 
In that context, the OPE structure also admits a probabilistic meaning in the corresponding model, 
and is crucial in deriving rigorous scaling limit results (see~\cite{Peltola:Towards_CFT_for_SLEs} for a survey).

The correlation functions of primary fields are expected to be conformally covariant functions, 
and their behavior under conformal transformations is entirely characterized by their conformal weights.
Interestingly enough, a special class of primary fields called \emph{degenerate fields} often appear in applications to boundary effects in statistical physics models (as in the aforementioned references).
Their correlation functions 
should furthermore satisfy certain linear homogeneous partial differential equations, \emph{BPZ PDEs},
which emerge from the fact that Virasoro Verma modules corresponding to degenerate fields contain singular vectors, i.e., 
vectors which generate a nontrivial submodule~\cite{BPZ:Infinite_conformal_symmetry_in_2D_QFT}. 
Fe{\u\i}gin~\&~Fuchs classified all such modules~\cite{Feigin-Fuchs:Verma_modules_over_Virasoro_book, Iohara-Koga:Representation_theory_of_Virasoro}, 
yielding a two-parameter family of relevant conformal weights. 
It is conventional to parameterize them as $h_{r,t}(\theta)$ in terms of $r,t \in \bZpos$, and $\theta \in \bC \setminus \{0\}$:
\begin{align*}
h_{r,t}(\theta) := 
\frac{(r^2-1)}{4} \, \theta + \frac{(t^2-1)}{4} \, \theta^{-1} 
+ \frac{(1-rt)}{2} 
\qquad \textnormal{and} \qquad
c(\theta) = 13 - 6( \theta + \theta^{-1} ) 
\end{align*}
(this is also called the \quote{Kac table}~\cite{Kac:Highest_weight_representations_of_infinite_dimensional_Lie_algebras,
Schottenloher:Mathematical_introduction_to_CFT}).

For SLE$(\kappa)$ applications, one takes $\theta = \kappa / 4$, in which case 
$c = \frac{(3\kappa-8)(6-\kappa)}{2\kappa}$ and $h_{1,2}  = \frac{6-\kappa}{2\kappa}$, for example. 
Note that $c = 1$ if and only if $\kappa=4$, and in this case, we have 
\begin{align} \label{eq:conf_weights0}
h_{r,t} = \frac{(t-r)^2}{4} = h_{t,r} = h_{1,|t-r|+1} ,
\qquad r,t \in \bZpos ,
\end{align}
so it then suffices to consider the collection (indexed by $s=t-1$ for convenience)
\begin{align}\label{eq:conf_weights}
\{ h_{1,s+1} \; | \; s \in \bZnn\} 
= \big\{  \tfrac{s^2}{4} \; | \; s \in \bZnn \big\} 
= \big\{ 0,\tfrac{1}{4},1,\tfrac{9}{4},4,\tfrac{25}{4},9,\tfrac{49}{4},16,\tfrac{81}{4} , \ldots \big\}.
\end{align}

In Sections~\ref{sec:section3}-\ref{sec:section4}, 
we construct a basis for a space of conformal blocks in a CFT with central charge $c=1$ and conformal weights in the Kac table~\eqref{eq:conf_weights}.
We prove that the associated correlation functions 
are linearly independent (Proposition~\ref{prop:basisformathcalC})
and span a solution space of a special class of BPZ PDEs, 
also known as \quote{Beno{\^i}t~\&~Saint-Aubin equations}~\cite{BSA:Degenerate_CFTs_and_explicit_expressions_for_some_null_vectors} (Theorem~\ref{thm:theoremBSA}).
Such conformal blocks are expected to give rise to a family of conformally invariant boundary conditions for 
the Gaussian free field (GFF)\footnote{The GFF also describes the scaling limit of the height function of the double-dimer model~\cite{Kenyon:Dominos_and_the_Gaussian_free_field},
and certain correlation functions in the $c=1$ CFT considered in the present article give formulas for connection probabilities in this model~\cite{Kenyon-Wilson:Double_dimer_pairings_and_skew_Young_diagrams,Peltola-Wu:Global_and_local_multiple_SLEs_and_connection_probabilities_for_level_lines_of_GFF}.
See also the recent~\cite{Lafay-Le-Roussillon:Degenerate_conformal_blocks_for_the_W3_algebra_and_Specht_polynomials} for the case of triple-dimers.},  
which can also be verified in special cases~\cite{Peltola-Wu:Global_and_local_multiple_SLEs_and_connection_probabilities_for_level_lines_of_GFF,Liu-Wu:Scaling_limits_of_crossing_probabilities_in_metric_graph_GFF}. 
We also plan to return to this in future work.

Interestingly (and surprisingly to us), the conformal block basis which we introduce in the present work
(and which plays an important role in applications to statistical physics and random geometry) 
does \emph{not} correspond to the so-called \quote{comb basis,} which is often used especially in the physics 
literature~\cite{DMS:CFT,KKP:Conformal_blocks_q_combinatorics_and_quantum_group_symmetry}. 
(We provide a counterexample in Remark~\ref{rem:remarkcomb} via asymptotics of a certain basis element.) 
The comb basis should arise instead as a limit $c \nearrow 1$ of the conformal block basis defined 
in~\cite{KKP:Conformal_blocks_q_combinatorics_and_quantum_group_symmetry} for irrational central charges,
and a valenced/fused generalization thereof (analogous to but different as in~\cite{Peltola:Basis_for_solutions_of_BSA_PDEs_with_particular_asymptotic_properties}).
Alternatively, the comb basis can be constructed from our basis.

In~\cite{KLPR:Planar_UST_branches_and_degenerate_boundary_correlations}, with A.~Karrila 
we consider analogous functions 
for a CFT with central charge $c=-2$, describing the scaling limit of boundary-touching 
branches in a uniform spanning tree model.
In particular, the explicit determinantal functions discussed in~\cite[Thm.~B.1]{KLPR:Planar_UST_branches_and_degenerate_boundary_correlations}
are the $c=-2$ (and $\kappa = 2$) analogues of the conformal block basis functions considered in the present work 
(having $c=1$ and $\kappa = 4$).  
A special case of these are the so-called \quote{Fomin determinants} 
(see~\cite{Fomin:LERW_and_total_positivity} and~\cite[Sect.~3.4]{KKP:Boundary_correlations_in_planar_LERW_and_UST})
which come up as partition functions for non-intersecting random walks (loop-erased walks). 
 
\smallskip

\textbf{Short description of our results}.
Throughout, we fix \emph{valences} $\multii=(s_1,\ldots,s_\np)$, 
where $s_i \in \bZpos$ for all $i \in \{1,\ldots,\np\}$, and such that $s_1+\cdots+s_\np = \Summed$.
(These are called \quote{integer compositions} of $\Summed$ in combinatorics literature.)  
The symmetric group $\SymGrp_\Summed$ acts naturally on $\{1,2,\ldots,\Summed\}$ by permutation, 
and roughly, the composition $\multii$ represents tuples of indices that should be stable under this action,
yielding variants of the symmetric group.

Let $\bC[\SymGrp_\Summed]$ be the symmetric group algebra.
The \quote{colored symmetric group} 
$\SymGrp_{s_1} \times \cdots \times \SymGrp_{s_\np}$
is a subgroup of $\SymGrp_{\Summed}$ giving rise to the $\multii$-\emph{antisymmetrizer} idempotent $\idpt$ defined in~\eqref{eq:idempotent}, 
obtained by antisymmetrizing groups of consecutive letters according to the valences $\multii$. 
By the idempotent property $\idpt^2 = \idpt$, the following conjugated set is an associative algebra with unit $\idpt$,
termed the \emph{fused Hecke algebra}~\cite{Crampe-Poulain-d-Andecy:Fused_braids_and_centralisers_of_tensor_representations_of_Uq_gln} (with deformation parameter $q=-1$):
\begin{align} \label{eq:Hecke}
\Hecke_\multii := \idpt \bC[\SymGrp_\Summed] \idpt = \{\idpt \, a \, \idpt \;|\; a \in \bC[\SymGrp_\Summed]\} .
\end{align}
In Section~\ref{sec:sectionhecke}, we investigate irreducible representations of $\Hecke_\multii$ in the space of polynomials. 
In fact, $\Hecke_\multii$ is a semisimple algebra and its simple modules\footnote{Recall that a \emph{simple module} is a nonzero vector space $V$ carrying an irreducible representation, i.e., such that $V$ does not have any nontrivial submodules (subspaces other than $\{0\}$ and $V$ carrying a subrepresentation).}
can be expressed in terms of Young diagrams satisfying certain properties 
(see Theorem~\ref{thm:proppVlambda} and~\cite[Thm.~6.5]{Crampe-Poulain-d-Andecy:Fused_braids_and_centralisers_of_tensor_representations_of_Uq_gln}).

Recall that irreducible representations of $\SymGrp_\Summed$ in the space of polynomials 
can be described in terms of \emph{Specht polynomials} \cite{Specht:Die_irreduziblen_Darstellungen_der_symmetrischen_Gruppe, 
Peel:Specht_modules_and_symmetric_groups}. 
They are labeled by \emph{standard} Young tableaux and are given by products of Vandermonde determinants. 
One of our main contributions of Section~\ref{sec:sectionhecke} is to introduce a class of polynomials 
labeled by \emph{semi-standard} Young tableaux that we call \emph{fused Specht polynomials}, 
which we define as certain limits of linear combinations of Specht polynomials (up to a normalization factor) 
motivated by fusion in CFT for applications in both CFT and in statistical physics 
--- see Definition~\ref{def:fusedspecht}.
We also present an explicit formula for the fused Specht polynomials in Proposition~\ref{prop:combinatorialformula}.

The main result of Section~\ref{sec:sectionhecke} is Theorem~\ref{thm:theoremA}, 
which pertains to a characterization of the irreducible representations of $\Hecke_\multii$ in terms of the fused Specht polynomials. 
Our proof of Theorem~\ref{thm:theoremA} relies on a combinatorial argument (Lemma~\ref{lem:lemmainj1}) 
and is valid only for Young diagrams with two columns 
(which is sufficient for our applications)
--- however, we believe that the claim extends to Young diagrams of any shape (Conjecture~\ref{conj:theoremA}).

Sections~\ref{sec:section3} and~\ref{sec:section4} constitute the second part of this article. 
The central object of interest is a certain space $\SolSp_\multii$ of functions. 
Any element in $\SolSp_\multii$ satisfies, in particular, a system of $\np$ BPZ type 
(in this case, Beno{\^i}t~\&~Saint-Aubin, BSA) partial differential equations
with $c=1$, and a certain covariance property under M\"obius transformations. 
In other words, functions in the space $\SolSp_\multii$ can be regarded as correlation functions in a $c=1$ CFT
with degenerate fields of weights in the Kac table~(\ref{eq:conf_weights0},~\ref{eq:conf_weights}) 
$(h_{1,s_1+1}, \ldots , h_{1,s_\np+1})$, labeled by the valences $\multii$.

The simplest nontrivial case occurs when $\multii=(1,\ldots,1)$. 
In this case, $\np=2N$ is even and all the PDEs are of second order, 
and a certain important basis for $\SolSp\sub{1,\ldots,1}$ called \emph{conformal block basis} was constructed in~\cite{Peltola-Wu:Global_and_local_multiple_SLEs_and_connection_probabilities_for_level_lines_of_GFF}. 
We revisit this result in Proposition~\ref{prop:lemma3p3} by rewriting the basis elements in terms of Specht polynomials 
associated with standard Young tableaux with two columns. 
We show in Corollary~\ref{cor:TL_rep} that $\SolSp\sub{1,\ldots,1}$ is isomorphic to a standard module (without defects) 
of the Temperley-Lieb algebra\footnote{Here, the loop \quote{fugacity} parameter $\nu := -q - q^{-1} \in \bC$ equals $2$ for $q=-1$.} 
$\TL_{2N} = \TL_{2N}(\nu) = \TL_{2N}(2)$.

The main contribution of Section~\ref{sec:section3} is to extend this to the case of arbitrary $\multii$:
we construct a basis of $\SolSp_\multii$ that we also call \quote{conformal block basis.} 
We show that the basis elements can be written in terms of fused Specht polynomials associated with 
semi-standard Young tableaux with two columns (Proposition~\ref{prop:basisformathcalC}). 
We then show (Proposition~\ref{prop:repvTL}) that $\SolSp_\multii$ is isomorphic to 
a standard module of the valenced Temperley-Lieb algebra~\cite{Flores-Peltola:Standard_modules_radicals_and_the_valenced_TL_algebra,Flores-Peltola_Generators_projectors_and_the_JW_algebra}. 
We also verify the M\"obius covariance property of the conformal block basis elements (Proposition~\ref{prop:wardidentity}), 
state the BPZ equations (Theorem~\ref{thm:theoremBSA}) and outline how we can verify them. 
However, the complete proof of Theorem~\ref{thm:theoremBSA} requires significantly more efforts and is the sole objective of Section~\ref{sec:section4}.

In Section~\ref{sec:section3}, we also show that special cases of our conformal block basis functions 
indeed equal the ones used in applications to the Gaussian free field (GFF).
The special case where $\multii = (1,\ldots,1)$ is the content of~\cite[Sect.~5-6]{Peltola-Wu:Global_and_local_multiple_SLEs_and_connection_probabilities_for_level_lines_of_GFF}, 
where crossing probability formulas for the GFF with alternating boundary data were proven, 
and the case of more general boundary data was pointed out 
(and proven later in~\cite[Thm.~4.1]{Liu-Wu:Scaling_limits_of_crossing_probabilities_in_metric_graph_GFF}). 
The special case where $\multii = (2,\ldots,2)$ was studied by Liu \& Wu~\cite{Liu-Wu:Scaling_limits_of_crossing_probabilities_in_metric_graph_GFF},
who proved crossing probability formulas for the GFF with generalized alternating boundary data.
In particular, they introduced three functions in \cite[Eq.~(5.15,~5.16,~5.17)]{Liu-Wu:Scaling_limits_of_crossing_probabilities_in_metric_graph_GFF}.
We check in Remark~\ref{rem:Dyck path gen2} that these indeed agree with the three elements of 
the conformal block basis of $\SolSp\sub{2,2,2,2}$. 
One could similarly carry out the analysis for the more general blocks with arbitrary $\multii$.
We hence obtain a complete set of conformal block basis functions applicable to crossing events for the GFF.

In Section~\ref{sec:section4}, we turn to the BPZ equations. 
Systematic verification of these equations does not seem amenable via a direct 
computation\footnote{An alternative approach could be provided by generalizing the elementary computation 
performed in~\cite[Sect.~5.2]{KKP:Boundary_correlations_in_planar_LERW_and_UST}, but this seems very complicated in general.}.  
Therefore, we proceed by a recursive approach bootstrapping from the already known case of 2nd order PDEs~\cite[Lem.~6.4]{Peltola-Wu:Global_and_local_multiple_SLEs_and_connection_probabilities_for_level_lines_of_GFF} 
via asymptotics and a combination of tools from algebra and complex geometry.
We follow Dub\'edat's approach~\cite{Dubedat:SLE_and_Virasoro_representations_localization, 
Dubedat:SLE_and_Virasoro_representations_fusion} (which unfortunately only applies with irrational central charges), 
utilizing the underlying Virasoro algebra structure. The proof is rather non-trivial, and we shall explain the strategy 
in more detail in the beginning of Section~\ref{sec:section4}. 
The key new input needed is representation-theoretic:
we extend~\cite[Lem.~1]{Dubedat:SLE_and_Virasoro_representations_fusion} 
to the case of $c=1$, where the Virasoro structure is slightly more intricate (see Lemma~\ref{lem:analoglemma1Dubedat}). 

\medskip

\textbf{Acknowledgments.}

We thank Rick Kenyon, Ian Le, and Hao Wu for inspiring conversations, 
and Alex Karrila for useful comments on the first version of this manuscript. 
We are also very grateful to the anonymous referees for their input, which helped us to greatly improve the manuscript. 

\begin{itemize}[leftmargin=1em] 
\item A.L. is supported by the Academy of Finland grant number 340461 \quote{Conformal invariance in planar random geometry.}

\item J.R. is supported by the Academy of Finland Centre of Excellence Programme grant number 346315 \quote{Finnish centre of excellence in Randomness and STructures} (FiRST).

\item This material is part of a project that has received funding from the  European Research Council (ERC) under the European Union's Horizon 2020 research and innovation programme (101042460): 
ERC Starting grant \quote{Interplay of structures in conformal and universal random geometry} (ISCoURaGe) 
and from the Academy of Finland grant number 340461 \quote{Conformal invariance in planar random geometry}.
E.P.~is also supported by 
the Academy of Finland Centre of Excellence Programme grant number 346315 \quote{Finnish centre of excellence in Randomness and STructures} (FiRST) 
and by the Deutsche Forschungsgemeinschaft (DFG, German Research Foundation) under Germany's Excellence Strategy EXC-2047/1-390685813, 
as well as the DFG collaborative research centre \quote{The mathematics of emerging effects} CRC-1060/211504053.

\end{itemize}

\bigskip{}
\section{Fused Specht polynomials and the fused Hecke algebra with parameter $q=-1$} 
\label{sec:sectionhecke}
Throughout, we let $\Summed \in \bZpos$ be an integer and $\lambda \vdash \Summed$ a partition of $\Summed$, that is, 
$\lambda=(\lambda_1,\lambda_2,\ldots ,\lambda_l)$ such that $\lambda_1 \geq \lambda_2 \geq\cdots \geq\lambda_l \geq 0$ and $\lambda_1+\lambda_2+\cdots +\lambda_l = \Summed$. 
The size of the partition $\lambda$ is denoted by $|\lambda| = \Summed$. 
Let $\bC[\SymGrp_\Summed]$ be the symmetric group algebra, 
generated by the transpositions $\tau_i = (i,i+1) \in \SymGrp_\Summed$ for $i\in\{1,\ldots ,\Summed-1\} =:\llbracket 1, \Summed-1\rrbracket$ with relations
\begin{equation*}
\begin{aligned}
\tau_i^2 = \; & 1 , && \textnormal{for } i\in\llbracket 1, \Summed-1\rrbracket , \\
\tau_i\tau_{i+1}\tau_i = \; & \tau_{i+1}\tau_i\tau_{i+1} , && \textnormal{for } i\in\llbracket 1, \Summed-2\rrbracket , \\
\tau_i\tau_j = \; & \tau_j\tau_i , && \textnormal{for } |j-i|>1.
\end{aligned}
\end{equation*}
This section is devoted to investigating the irreducible representations of $\bC[\SymGrp_\Summed]$ 
and its special subalgebra, the fused Hecke algebra~\eqref{eq:Hecke}, in the space of polynomials.
In the key Theorem~\ref{thm:theoremA}, we consider irreducible representations in terms of 
the fused Specht polynomials, which we introduce as limiting expressions from 
the classical Specht polynomials (Definition~\ref{def:fusedspecht}). 
One of the key ingredients to prove Theorem~\ref{thm:theoremA} is an explicit combinatorial formula for the fused Specht polynomials, Proposition~\ref{prop:combinatorialformula}, which is of independent interest. 
Theorem~\ref{thm:theoremA}, in turn, shall be used in CFT applications later.

\subsection{Specht polynomials and irreducible modules for the symmetric group}
\label{subsec:Specht}

We begin by fixing terminology. 
A \emph{Young diagram} of shape $\lambda$ is a finite collection of boxes arranged in $l$ left-justified rows with row lengths being, from top to bottom, $\lambda_1,\ldots ,\lambda_l$. 
A \emph{numbering} of a Young diagram is obtained by placing the numbers $1,\ldots ,\Summed$ in the $\Summed$ boxes of the Young diagram. 
A \emph{standard Young tableau} is a numbering which is strictly increasing across each row and down each column. 
The sets of numberings and of standard Young tableaux of shape $\lambda$ will be denoted $\NB$ and $\SYT$, respectively.
Observe that $\SYT \subset \NB$.

The group $\SymGrp_\Summed$ acts on $\NB$ by letter permutations; 
the action of $\sigma \in \SymGrp_\Summed$ on a numbering $N \in \NB$ is denoted $\sigma.N$. 
For $N \in \NB$, let $\Rows(N)$ (resp.~$\Columns(N)$) 
be the subgroup of $\SymGrp_\Summed$ which preserves the set of entries of each of its rows (resp.~columns). 
A \emph{tabloid} $\{N\}$ is an equivalence class of numberings defined by $\{N'\} = \{N\}$ 
if and only if $N'=\sigma.N$ for some $\sigma \in \Rows(N)$. 
The $\bC$-vector space spanned by tabloids of shape $\lambda$,
\begin{align*}
M^\lambda := \Spn_{\bC} \{ \{N\} \; | \; N \in \NB\} ,
\end{align*}
carries a natural $\SymGrp_\Summed$-action denoted by $\sigma.\{N\} := \{\sigma.N\}$. 
\emph{Simple modules} of $\SymGrp_\Summed$ 
(i.e., nontrivial modules for which the representation, is irreducible) 
are subspaces of $M^\lambda$, and can be realized in various ways. 
In what follows, we recall two different but equivalent (well known) realizations --- 
in terms of polytabloids (Section~\ref{subsubsec:polytabloid})
and polynomials (Section~\ref{subsubsec:polynomial}).

\subsubsection{Polytabloid basis}
\label{subsubsec:polytabloid}

For each numbering $N \in \NB$, the \emph{column antisymmetrizer}
\begin{align*} 
\epsilon_N := \sum_{\sigma \in \Columns(N)} \sign(\sigma) \, \sigma 
\end{align*}
defines the associated \emph{polytabloid} $v_N := \epsilon_N.\{N\} = \{\epsilon_N.N\} \in M^\lambda$. 
Note that $\{N\} = \{N'\}$ does not necessarily imply that $v_N$ and $v_{N'}$ would be equal, 
since the actions of row and column permutations (the subgroups $\Rows(N)$ and $\Columns(N)$) do not commute in general.

\begin{lemma}{\cite{Specht:Die_irreduziblen_Darstellungen_der_symmetrischen_Gruppe}} 
\label{lem:polytabloids}
A complete set of pairwise non-isomorphic simple modules of the algebra $\bC[\SymGrp_\Summed]$ is given by 
$\{V^\lambda \; | \; \lambda \vdash \Summed \}$, where 
$V^\lambda \subset M^\lambda$ is the $\bC$-vector space spanned by the polytabloids, 
\begin{align*} 
V^\lambda := \Spn_{\bC} \{ v_N \; | \; N \in \NB \}
= \Spn_{\bC} \{ v_T \; | \; T \in \SYT \} ,
\end{align*}
where the \emph{polytabloid basis} $\{ v_T \; | \; T \in \SYT \}$ is a linearly independent collection. 
\end{lemma}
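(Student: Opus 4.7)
The plan is to follow the classical Specht approach, establishing in sequence: (i) that $V^\lambda$ is a $\SymGrp_\Summed$-submodule of $M^\lambda$, (ii) that $V^\lambda$ is simple, (iii) that $\{v_T \,|\, T \in \SYT\}$ is in fact a basis, and finally (iv) that no two $V^\lambda$, $V^{\lambda'}$ with $\lambda \neq \lambda'$ are isomorphic and that there are no other simple modules. The first point is immediate from the conjugation identity $\sigma \, \epsilon_N \, \sigma^{-1} = \epsilon_{\sigma.N}$, which holds because conjugation by $\sigma$ sends $\Columns(N)$ to $\Columns(\sigma.N)$; this yields $\sigma.v_N = v_{\sigma.N}$ and thus stability under the group action.

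For simplicity and non-isomorphism I would invoke the \emph{Submodule Theorem}: equip $M^\lambda$ with the $\SymGrp_\Summed$-invariant inner product $\langle \{N\}, \{N'\}\rangle := \delta_{\{N\},\{N'\}}$ and prove the sign lemma that $\epsilon_{N_0}.\{N\} \in \{0, \pm v_{N_0}\}$ for every $N, N_0 \in \NB$ (the zero case occurring precisely when two entries sharing a column in $N_0$ are in the same row of $N$). This lemma yields that any submodule $U \subset M^\lambda$ either contains some $v_{N_0}$, hence all of $V^\lambda$, or else lies in $(V^\lambda)^\perp$. Simplicity of $V^\lambda$ follows because in characteristic $0$ the intersection $V^\lambda \cap (V^\lambda)^\perp$ is trivial. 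Non-isomorphism of $V^\lambda$ and $V^{\lambda'}$ for $\lambda \neq \lambda'$ is extracted from the same lemma via the dominance order: $\epsilon_{N}.\{N'\}$ can be nonzero only when the column shape $\lambda$ dominates the row shape $\lambda'$, so $\mathrm{Hom}(V^\lambda, M^{\lambda'}) \neq 0$ forces $\lambda \trianglerighteq \lambda'$; applying this in both directions gives $\lambda = \lambda'$.

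The basis claim is the combinatorial core. I would introduce the last-letter (column-word) total order on $\SYT$, show that in the tabloid expansion $v_T = \sum_\sigma \sign(\sigma)\{\sigma.T\}$ the tabloid $\{T\}$ appears with coefficient $\pm 1$, while every other tabloid occurring is strictly smaller in this order; unitriangularity then gives linear independence of $\{v_T \,|\, T \in \SYT\}$. To spin any $v_N$ into this basis I would apply the Garnir relations (coming from antisymmetrizing over the union of a column segment with an adjacent column entry), which allow one to straighten an arbitrary polytabloid into a $\bZ$-linear combination of standard ones by a terminating induction on the order. Together with Lemma~\ref{lem:polytabloids}'s definition of $V^\lambda$ as $\Spn_{\bC}\{v_N\}$, this yields the equality $\Spn_\bC\{v_N \,|\, N \in \NB\} = \Spn_\bC\{v_T \,|\, T \in \SYT\}$ and establishes a basis. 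Completeness of the family $\{V^\lambda\}_{\lambda \vdash \Summed}$ then follows from Maschke's theorem and Wedderburn's theorem: the count of pairwise non-isomorphic simple $\bC[\SymGrp_\Summed]$-modules equals the number of conjugacy classes of $\SymGrp_\Summed$, which is exactly $|\{\lambda \vdash \Summed\}|$, and we have already exhibited that many.

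The main obstacle I anticipate is not the submodule/simplicity step but the straightening: choosing a total order (last-letter order works, but other choices are possible) and verifying that every nonstandard polytabloid reduces to standard ones via Garnir relations without cycling. Once that is done, the unitriangular expansion simultaneously delivers spanning over standard tableaux and the desired linear independence, and the remaining pieces are essentially formal consequences of the sign lemma and dimension counting.
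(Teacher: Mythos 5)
The paper does not give a proof of this lemma at all --- it is the classical Specht module theorem, cited to Specht's original paper and to Chapter~7 of Fulton's book --- and your outline is exactly that standard argument: the sign lemma and Submodule Theorem (via the permutation-invariant inner product on $M^\lambda$) for simplicity and pairwise non-isomorphism, Garnir straightening plus last-letter unitriangularity for the standard polytabloid basis, and the conjugacy-class count for completeness. The sketch is correct as it stands; the one step that genuinely requires careful bookkeeping is the termination of the straightening algorithm, which you have already identified.
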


Note that $\rho_\lambda(\sigma)(v_N):= \sigma.v_N = v_{\sigma.N}$, 
for $\sigma \in \SymGrp_\Summed$ and $N \in \NB$, 
which implies that $(V^\lambda,\rho_\lambda)$ has the structure of a (left) $\SymGrp_\Summed$-module.
Its linear extension then gives a representation $\rho_\lambda \colon \bC[\SymGrp_\Summed] \to \End(V^\lambda)$.
The pair $(V^\lambda,\rho_\lambda)$ is called a \emph{Specht module}~\cite{Specht:Die_irreduziblen_Darstellungen_der_symmetrischen_Gruppe}. 
See~\cite[Chap.~7]{Fulton:Young_tableaux_With_applications_to_representation_theory_and_geometry} for a detailed account on Specht modules and the proof of Lemma~\ref{lem:polytabloids}.

\subsubsection{Polynomial basis} 
\label{subsubsec:polynomial}

Throughout, let $\{x_i \; | \; i \in \bZpos\}$ be a collection of formal variables. 
We write $\bs{x}_{i_1,\ldots ,i_r} := (x_{i_1},\ldots ,x_{i_r})$. 
The \emph{Vandermonde determinant} is the antisymmetric function
\begin{align} \label{eq:Vandermonde determinant}
\Delta(\bs{x}_{i_1,\ldots ,i_r}) := \prod_{1\leq j < k \leq r} (x_{i_j}-x_{i_k}) 
\end{align}
(with the convention that $\Delta(x) = \Delta(\bs{x}_{i_1}) \equiv 1$ for $r=1$).   

\begin{definition}
The \emph{Specht polynomial} associated with $N \in \NB$ is the polynomial
\begin{align} \label{eq:spechtfactorized}
\mathcal P_N = \mathcal P_N(x_{1},\ldots ,x_{\Summed})
:= \prod_c \Delta(\bs{x}_{N_{.,c}}),
\end{align}
where $c$ runs through the columns of $N$ 
and $N_{.,c}$ is the ordered set of entries in the $c$-th column of $N$ listed from bottom to top.
For instance, we have
\begin{align*}
\mathcal P_{\;\ytableausetup{smalltableaux}\ytableaushort{12,34}}
= \; & \Delta(\bs{x}_{3,1}) \Delta(\bs{x}_{4,2})
= (x_3-x_1)(x_4-x_2)
\\
\mathcal P_{\;\ytableausetup{smalltableaux}\ytableaushort{154,63,2}}
= \; & \Delta(\bs{x}_{2,6,1}) \Delta(\bs{x}_{3,5}) \Delta(\bs{x}_{4})
= (x_2-x_6)(x_2-x_1)(x_6-x_1)(x_3-x_5) .
\end{align*}
\end{definition}

The symmetric group $\SymGrp_\Summed$ acts on the polynomial algebra $\bC[x_1,\ldots ,x_\Summed]$ 
by permutation of the variables. 
In fact, Peel showed in~\cite[Thm.~1.1]{Peel:Specht_modules_and_symmetric_groups} that the space
\begin{align} \label{eq:defPlambda}
P^\lambda := \Spn_{\bC}\{\mathcal P_N \; | \; N \in \NB \} 
= \Spn_{\bC}\{\mathcal P_T \; | \; T \in \SYT\}
\end{align}
is a simple $\SymGrp_\Summed$-module with basis $\{\mathcal P_T \; | \; T \in \SYT\}$ consisting of Specht polynomials.

\begin{lemma}\label{lem:isomorphism}
The following map is an isomorphism of simple $\bC[\SymGrp_\Summed]$-modules:
\begin{align*}
\phi \colon \; & V^\lambda \to P^\lambda \\
\; & v_N \mapsto \phi(v_N) := \mathcal P_N.
\end{align*}
\end{lemma}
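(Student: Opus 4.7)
The plan is to realize $\phi$ as the map induced by a classical presentation of the Specht module $V^\lambda$ in terms of generators and relations on the polytabloids, then check that the Specht polynomials $\{\mathcal P_N\}_{N \in \NB}$ satisfy the same relations.

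Concretely, I would rely on the classical fact that $V^\lambda$ is spanned by $\{v_N \; | \; N \in \NB\}$ subject to two families of relations (see, e.g., \cite[Ch.~7.4]{Fulton:Young_tableaux_With_applications_to_representation_theory_and_geometry}):
(R1) the column antisymmetry $v_{\tau.N} = \sign(\tau)\,v_N$ for every $\tau \in \Columns(N)$; and
(R2) the Garnir relations $\sum_{\bar{\sigma}} \sign(\bar{\sigma})\, v_{\bar{\sigma}.N}=0$, with $\bar{\sigma}$ ranging over coset representatives of $\SymGrp(A) \times \SymGrp(B)$ in $\SymGrp(A \sqcup B)$ for any Garnir set $A \sqcup B$ occupying parts of two adjacent columns of $N$ with $|A|+|B|$ exceeding the length of the left column. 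That (R1) and (R2) indeed present $V^\lambda$ follows from the standard straightening algorithm combined with the linear independence of standard polytabloids (Lemma~\ref{lem:polytabloids}).

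Next, I would verify that the Specht polynomials satisfy (R1) and (R2). Relation (R1) is immediate from antisymmetry of the Vandermonde determinant~\eqref{eq:Vandermonde determinant} in the factorized formula~\eqref{eq:spechtfactorized}. For (R2), $\mathcal P_N$ is already antisymmetric under $\SymGrp(A)$ and $\SymGrp(B)$ separately, so the Garnir sum equals, up to a nonzero factor, the full antisymmetrization of $\mathcal P_N$ over the $\lambda'_c + 1$ variables indexed by $A \sqcup B$. Since $\mathcal P_N$ has degree at most $\lambda'_c - 1$ in each of these variables (using $\lambda'_{c+1} \leq \lambda'_c$), and any antisymmetric polynomial in $\lambda'_c + 1$ variables must be divisible by a Vandermonde of individual degree $\lambda'_c$, this antisymmetrization vanishes identically.

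Granted (R1) and (R2) for $\{\mathcal P_N\}$, the assignment $v_N \mapsto \mathcal P_N$ extends to a well-defined $\bC$-linear map $\phi \colon V^\lambda \to P^\lambda$. Its $\SymGrp_\Summed$-equivariance is automatic, since $\sigma.v_N = v_{\sigma.N}$ and $\sigma.\mathcal P_N = \mathcal P_{\sigma.N}$ for every $\sigma$, directly from the definitions. Finally, $\phi$ bijects the polytabloid basis $\{v_T \; | \; T \in \SYT\}$ of $V^\lambda$ onto the Specht polynomial basis $\{\mathcal P_T \; | \; T \in \SYT\}$ of $P^\lambda$ (the latter by~\cite[Thm.~1.1]{Peel:Specht_modules_and_symmetric_groups}), so $\phi$ is an isomorphism of simple $\bC[\SymGrp_\Summed]$-modules. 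The main delicate point I anticipate is the Garnir vanishing for $\{\mathcal P_N\}$: the degree argument sketched above is clean, but one has to identify the correct partial-degree bounds of $\mathcal P_N$ in each relevant variable and match them with the combinatorics of the Garnir set. All other steps are routine bookkeeping.
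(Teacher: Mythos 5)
Your proposal is correct, but it takes a genuinely different route from the paper. The paper defines $\phi$ on all of the tabloid module $M^\lambda$ by sending the tabloid $\{N\}$ to the monomial $m_N = \prod_i x_i^{r^N(i)-1}$ (well-definedness is then immediate, since tabloids form a basis of $M^\lambda$ and $m_N$ depends only on the row data of $N$), and then invokes the identity $\epsilon_N.m_N = \mathcal P_N$ (the Leibniz expansion of the product of Vandermonde determinants, cited from~\cite{HLV:Polytabloid_bases_and_Specht_polynomials}) to conclude that the restriction to $V^\lambda$ lands on the Specht polynomials; no Garnir relations appear. You instead present $V^\lambda$ by generators and relations --- column antisymmetry plus the Garnir relations, with completeness supplied by straightening together with Lemma~\ref{lem:polytabloids} --- and verify that the $\mathcal P_N$ satisfy them. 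Your verification of the Garnir relations is sound: since $\mathcal P_N$ is already antisymmetric under $\SymGrp(A)\times\SymGrp(B)$, the Garnir sum is $\tfrac{1}{|A|!\,|B|!}$ times the full antisymmetrization over $A\sqcup B$; each term has degree at most $\lambda'_c-1$ in every variable indexed by $A\sqcup B$ (using $\lambda'_{c+1}\leq\lambda'_c$), while a nonzero polynomial antisymmetric in $|A|+|B|\geq\lambda'_c+1$ variables is divisible by their Vandermonde and hence has degree at least $\lambda'_c$ in each, forcing the sum to vanish. The trade-off is that your argument leans on the completeness of the Garnir presentation (a nontrivial, though classical, input), whereas the paper's argument leans only on the elementary determinant expansion; in exchange, yours never leaves $V^\lambda$ and makes the module-theoretic content of the Specht polynomials (which relations they satisfy) explicit. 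Both are valid proofs of the lemma.
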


\begin{proof}[Proof summary]
Consider first the homomorphism $\phi \colon M^\lambda \to \bC[x_1,\ldots ,x_\Summed]$ 
of $\SymGrp_\Summed$-modules defined by the natural extension of $\phi(\{N\}) := m_N$ in terms of the monomials
\begin{align*}
m_N = m_N(x_{1},\ldots ,x_{\Summed}) := \prod_{i=1}^\Summed x_i^{r^N(i)-1},
\end{align*}
where $r^N(i)$ denotes the row number of the entry \quote{$i$} in $N$, counting row numbers from top to bottom. 
For instance, we have 
\begin{align*}
m_{\;\ytableausetup{smalltableaux}\ytableaushort{12,34}} = x_1^0 x_2^0 x_3^1 x_4^1 = x_3x_4 .
\end{align*}  
By~\cite[Thm.~9]{HLV:Polytabloid_bases_and_Specht_polynomials}, 
the Specht polynomial~\eqref{eq:spechtfactorized} equals the image of the polytabloid $v_N$: 
\begin{align} \label{eq:spechtmonomials}
\mathcal P_N = \phi(v_N) := \epsilon_N.m_N 
= \sum_{\sigma \in \Columns(N)} \sign(\sigma) \prod_{i=1}^\Summed x_{\sigma(i)}^{r^N(i)-1} , \qquad N \in \NB .
\end{align}
For instance, we have 
\begin{align*}
\mathcal P_{\;\ytableausetup{smalltableaux}\ytableaushort{12,34}} 
= (x_3-x_1)(x_4-x_2) = x_3x_4 - x_1x_4 - x_3x_2 + x_1x_2 
= \epsilon_{\;\ytableausetup{smalltableaux}\ytableaushort{12,34}} \, m_{\;\ytableausetup{smalltableaux}\ytableaushort{12,34}} .
\end{align*}  
Hence, the restriction of $\phi$ to $V^\lambda$ (cf.~Lemma~\ref{lem:polytabloids}) yields the sought isomorphism.
\end{proof}

For any $N \in \NB$, Equation~\eqref{eq:spechtfactorized} expresses the Specht polynomial $\mathcal P_N$ as a factorized polynomial, whereas Equation~\eqref{eq:spechtmonomials} expresses it as a linear combination of monomials.

\subsection{Irreducible modules for the fused Hecke algebra}
\label{subsec:fused Hecke algebra}

Fix an integer composition $\multii=(s_1,\ldots,s_\np) \in \bZpos^\np$ such that $s_1+\cdots+s_\np = \Summed$ (valences).  
The \quote{colored symmetric group} 
$\SymGrp_{s_1} \times \cdots \times \SymGrp_{s_\np}$
is a subgroup of $\SymGrp_{\Summed}$ giving rise to the $\multii$-\emph{antisymmetrizer} idempotent 
\begin{align}\label{eq:idempotent}
\idpt := \frac{1}{s_1!\cdots s_\np!} \prod_{k=1}^\np \sum_{\sigma \in \SymGrp_{s_k}} \sign(\sigma) \, \sigma 
\; \in \; \bC[\SymGrp_{s_1} \times \cdots \times \SymGrp_{s_\np}] 
\; \subset \; \bC[\SymGrp_\Summed] ,
\end{align}
which is used to define the \emph{fused Hecke algebra}~\cite{Crampe-Poulain-d-Andecy:Fused_braids_and_centralisers_of_tensor_representations_of_Uq_gln} (with deformation parameter $q=-1$),
\begin{align*}
\Hecke_\multii = \Hecke_\multii(-1) := \idpt \bC[\SymGrp_\Summed] \idpt = \{\idpt \, a \, \idpt \;|\; a \in \bC[\SymGrp_\Summed]\} .
\end{align*}
Note that the algebra $\Hecke_\multii$ has unit $\idpt$, so in particular, it is not a unital subalgebra of $\bC[\SymGrp_\Summed]$. 

\subsubsection{Fused Hecke algebras for $q=\pm 1$}

The fused Hecke algebra at $q=1$, also-called the algebra of \emph{fused permutations} in \cite{Crampe-Poulain-d-Andecy:Fused_braids_and_centralisers_of_tensor_representations_of_Uq_gln}, 
is defined as $\Hecke_\multii(1) := \sym \bC[\SymGrp_\Summed] \sym$ with unit $\sym$, where
\begin{align}\label{eq:defsymmetrizer}
\sym := \frac{1}{s_1!\cdots  s_\np!} \prod_{k=1}^\np \sum_{\sigma \in \SymGrp_{s_k}} \sigma 
\; \in \; \bC[\SymGrp_{s_1} \times \cdots \times \SymGrp_{s_\np}] 
\; \subset \; \bC[\SymGrp_\Summed] 
\end{align}
is the $\multii$-\emph{symmetrizer} idempotent. 
The two fused Hecke algebras $\Hecke_\multii = \Hecke_\multii(-1)$ for $q=-1$ and $\Hecke_\multii(1)$ for $q=1$ are related in the following manner --- in particular, they are isomorphic. 
There exists an involutive automorphism $\omega$ of $\bC[\SymGrp_\Summed]$ defined via 
\begin{align} \label{eq:defomega}
\omega \colon \sigma \mapsto \sign(\sigma) \, \sigma , \qquad \sigma \in \SymGrp_\Summed ,
\end{align}
extending linearly to $\bC[\SymGrp_\Summed]$. Since $\omega(\idpt)=\sym$, we see that
$\Hecke_\multii \cong \sym \bC[\SymGrp_\Summed] \sym = \Hecke(1)$, 
where the isomorphism and its inverse are given by
\begin{align*}
\idpt \, a \, \idpt \mapsto \sym \, \omega(a) \, \sym , \qquad 
\sym \, a \, \sym \mapsto \idpt \, \omega(a) \, \idpt, \qquad 
a \in \bC[\SymGrp_\Summed] . 
\end{align*}

\begin{remark}
\label{rem:remarkomega}
Let $\Bar{\lambda}$ denote the transpose of the partition $\lambda$, 
whose columns are given by the rows of $\lambda$, 
and let $(V^\lambda,\rho_\lambda)$ be a Specht module. 
Then, $(V^\lambda,\rho_\lambda\circ \omega)$ yields a $\bC[\SymGrp_\Summed]$-module isomorphic to $V^{\Bar{\lambda}}$,  see~\cite[Chap.~7]{Fulton:Young_tableaux_With_applications_to_representation_theory_and_geometry}. 
In particular, this implies that, as vector spaces, 
\begin{align}\label{eq:isomorphism1}
\idpt (V^\lambda) \cong \sym(V^{\Bar{\lambda}}) .
\end{align}
We emphasize that under this isomorphism, the basis of polytabloids in $V^\lambda$ is \emph{not} 
mapped to the basis of polytabloids in $V^{\Bar{\lambda}}$, 
but instead, to the basis of so-called \quote{dual polytabloids.} 
\end{remark}

\begin{remark} \label{rem:dual polytabloids}
The space of \emph{dual tabloids} is defined as equivalence classes of numberings, 
$\check{M}^\lambda := \Spn_{\bC} \{ [N] \; | \; N \in \NB\}$, 
modulo $[N'] = \sign(\sigma) [N]$ if and only if $N'=\sigma.N$ for some $\sigma \in \Columns(N)$. 
The symmetric group $\SymGrp_\Summed$ acts on $\check{M}^\lambda$ by $\sigma.[N] = [\sigma.N]$.
Alternatively to Lemma~\ref{lem:polytabloids}, simple modules of $\SymGrp_\Summed$ can be characterized as subspaces of $\check{M}^\lambda$ and are constructed as follows. 
For each numbering $N \in \NB$, the \emph{row symmetrizer} 
\begin{align*} 
\check{\epsilon}_N := \sum_{\sigma \in \Rows(N)} \sigma 
\end{align*}
defines the associated \emph{dual polytabloid} 
$\check{v}_N := \check{\epsilon}_N . [N]$.  
Then, we have
\begin{align*}
V^\lambda \simeq \Spn_{\bC}\{\check{v}_N \; | \; N \in \NB\}.
\end{align*}
\end{remark}

The following lemma is proven, e.g., in~\cite[App.~A.1]{Crampe-Poulain-d-Andecy:Fused_braids_and_centralisers_of_tensor_representations_of_Uq_gln}.

\begin{lemma} \label{lem:lemmaPAP} 
Let $A$ be a finite-dimensional semisimple associative algebra and $p \in A$ an idempotent element (i.e.,~$p^2=p$). 
Then, the algebra $pAp$ with unit $p$ is finite-dimensional and semisimple. 
Moreover, if $\{R^\lambda \; | \; \lambda \in I\}$ is a complete set of pairwise non-isomorphic simple $A$-modules, 
then 
\begin{align*}
\{p(R^\lambda) \; | \; \lambda \in I, \; p(R^\lambda) \neq \{0\}\}
\end{align*}
is a complete set of pairwise non-isomorphic simple $pAp$-modules. 
\end{lemma}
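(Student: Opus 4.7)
The plan is to reduce everything to the Artin–Wedderburn decomposition of $A$ and analyze what happens to each matrix-algebra block under conjugation by $p$. Finite-dimensionality of $pAp$ is immediate since $pAp \subseteq A$, so the content of the lemma lies entirely in semisimplicity and in the classification of the simple $pAp$-modules.

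First, I would invoke Artin–Wedderburn to write $A \cong \bigoplus_{\lambda \in I} \End_\bC(R^\lambda)$, with each block acting naturally on $R^\lambda$ and by zero on $R^\mu$ for $\mu \neq \lambda$. Writing $p = \sum_\lambda p_\lambda$ accordingly, each $p_\lambda \in \End_\bC(R^\lambda)$ is an idempotent endomorphism, hence a projection onto its image $p(R^\lambda)$. Computing $pAp$ block-by-block and applying the standard identification $e\,\End_\bC(V)\,e \cong \End_\bC(eV)$ valid for any idempotent endomorphism $e$, I obtain
\[
pAp \;\cong\; \bigoplus_{\lambda \in I,\; p(R^\lambda) \neq \{0\}} \End_\bC\bigl(p(R^\lambda)\bigr).
\]
The right-hand side is a direct sum of full matrix algebras, so $pAp$ is semisimple, and a second application of Artin–Wedderburn identifies its simple modules as exactly the spaces $p(R^\lambda)$ with $p(R^\lambda) \neq \{0\}$, pairwise non-isomorphic because their dimensions occur as the matrix sizes of distinct simple factors.

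The only step requiring genuine care — and hence the main obstacle, modest as it is — is verifying that the $pAp$-module structure on $p(R^\lambda)$ inherited from the $A$-action, namely $(pap)\cdot(pv) := pap\,v$, agrees under the block isomorphism above with the tautological action of $\End_\bC(p(R^\lambda))$. This reduces to observing that each $\End_\bC(R^\lambda)$-block of $A$ acts on $R^\lambda$ in the defining way and annihilates the other $R^\mu$, so restricting to the subspace $p(R^\lambda)$ intertwines both actions by construction. Once this compatibility is in place, semisimplicity of $pAp$ and the classification of its simple modules both follow at once from the displayed decomposition.
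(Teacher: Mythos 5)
Your argument is correct, and it is a genuinely self-contained route: the paper does not prove this lemma at all but simply cites~\cite[App.~A.1]{Crampe-Poulain-d-Andecy:Fused_braids_and_centralisers_of_tensor_representations_of_Uq_gln}, where the standard treatment goes through the idempotent-truncation functor $M \mapsto pM$ and checks directly that it sends simples to simples or to zero and hits every simple $pAp$-module. Your approach instead makes everything explicit via Artin--Wedderburn and the identification $e\,\End_\bC(V)\,e \cong \End_\bC(eV)$, which buys a very concrete description of $pAp$ as a product of matrix algebras and makes both semisimplicity and the classification visible in one display; the price is that you implicitly use that the base field is algebraically closed (so that the Wedderburn blocks are $\End_\bC(R^\lambda)$ rather than matrix algebras over division rings), which is harmless here since the paper works over $\bC$, but the functorial proof is field-independent. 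Two small points of care: your justification that the resulting simples are pairwise non-isomorphic should rest on the fact that they belong to distinct simple factors of $pAp$ (equivalently, the central idempotents of $pAp$ act differently on them), not on their dimensions, which could coincide; and your final compatibility check of the two module structures on $p(R^\lambda)$ is indeed the one step that needs to be said out loud, and you say it correctly.
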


Lemma~\ref{lem:lemmaPAP} implies in particular that both the subspaces
\begin{align}
\label{eq:defpVlamba} 
\idpt( V^\lambda) := \Spn_{\bC}\{\idpt.v_T \; | \; T \in \SYT\}, \\ 
\label{eq:defpPlambda} 
\idpt( P^\lambda) := \Spn_{\bC}\{\idpt.\mathcal P_T \; | \; T \in \SYT\}
\end{align}
are either $\{0\}$ or irreducible modules for the algebra $\Hecke_\multii$. 
Theorem~\ref{thm:proppVlambda} and Corollary~\ref{cor:proppPlambda} characterize these as complete sets of pairwise non-isomorphic simple $\Hecke_\multii$-modules.

\subsubsection{Row-strict Young tableaux}

Fix valences $\multii=(s_1,\ldots,s_\np)$. 
For $\lambda \vdash \Summed$, a (Young) \emph{filling} assigns a positive integer to each box of $\lambda$.
Let $\Fill$ be the set of fillings of Young diagrams of shape $\lambda \vdash \Summed$ 
where each number $k$ appears $s_k$ times, for $k \in \llbracket 1, \np\rrbracket$. 
We say that $\multii$ is the \emph{content}, or weight, of a filling in $\Fill$. 
In particular, we have $\NB = \Fillof{1^\Summed}{\lambda} = \Fillof{1,\ldots,1}{\lambda}$.

A \emph{row-strict} Young tableau is a filling whose entries are weakly increasing down each column and strictly increasing along each row. 
Similarly, a \emph{column-strict} Young tableau is a filling whose numbers are weakly increasing along each row and strictly increasing down each column. 
Let $\RSYT$ and $\CSYT$ be the set of row-strict and column-strict Young tableaux of shape $\lambda$ and content $\multii$, respectively. 
The column-strict ones are often called \emph{semistandard}. 
Observe that $|\RSYT| = |\CSYTBar|$, where $\Bar{\lambda}$ is the transpose of the partition $\lambda$.

There is a condition that $\lambda$ and $\multii$ need to satisfy in order for $|\RSYT|$ to be non-zero. 
Namely, let $\multii^\ord$ be the composition $\multii$ rearranged in decreasing order, i.e.,~a \emph{partition}. 
We say that two partitions $\lambda$ and $\mu$ satisfy the \emph{dominance ordering relation} $\lambda \geq \mu$ 
if and only if 
\begin{align*}
\lambda_1 + \cdots  + \lambda_i \geq \mu_1 + \cdots  + \mu_i , \qquad \textnormal{for all $i$} ,
\end{align*}
where we possibly extend the sequences by zeros.

\begin{lemma} \label{lem:kostkacondition}
We have $|\CSYT| \neq 0 \iff \lambda \geq \multii^\ord$ and similarly, 
$|\RSYT| \neq 0 \iff \Bar{\lambda} \geq \multii^\ord$.
\end{lemma}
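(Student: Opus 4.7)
The plan is to reduce Lemma~\ref{lem:kostkacondition} to the classical fact that for two partitions $\lambda, \mu$ of the same size, a column-strict (semistandard) Young tableau of shape $\lambda$ and content $\mu$ exists if and only if $\lambda \geq \mu$ in dominance order. The first step is to observe that $|\CSYT|$ depends on $\multii$ only through the underlying multiset of entries, so one may replace $\multii$ by $\multii^\ord$ without loss of generality. This symmetry is the Bender--Knuth involution (equivalently, the symmetry of Schur polynomials $s_\lambda$), which provides an explicit bijection exchanging tableaux of content $(\ldots, s_i, s_{i+1}, \ldots)$ with those of content $(\ldots, s_{i+1}, s_i, \ldots)$.

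For the forward direction, I fix $T \in \CSYT$ with $\multii$ assumed a partition, and for each $i$ let $\lambda^{(i)}$ denote the set of boxes of $T$ carrying an entry $\leq i$. Weak row monotonicity together with strict column monotonicity imply that $\lambda^{(i)}$ is a Young subdiagram of $\lambda$; strict column monotonicity furthermore forces at most $i$ entries $\leq i$ per column, placing $\lambda^{(i)}$ inside the first $i$ rows of $\lambda$. Counting then yields $s_1 + \cdots + s_i = |\lambda^{(i)}| \leq \lambda_1 + \cdots + \lambda_i$, i.e.,~$\lambda \geq \multii$.

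For the reverse direction, given $\lambda \geq \multii$ with both partitions, I exhibit an element of $\CSYT$ via a flag of sub-partitions $\emptyset = \lambda^{(0)} \subset \lambda^{(1)} \subset \cdots \subset \lambda^{(\np)} = \lambda$ in which each skew shape $\lambda^{(i)} \setminus \lambda^{(i-1)}$ is a horizontal strip of size $s_i$; labeling the boxes of $\lambda^{(i)} \setminus \lambda^{(i-1)}$ with $i$ then produces a semistandard tableau of shape $\lambda$ and content $\multii$. The existence of such a flag is proved by induction on $\np$: one selects a horizontal strip $\lambda \setminus \nu$ of size $s_\np$ on the outer boundary of $\lambda$ whose removal leaves a partition $\nu$ satisfying $\nu \geq (s_1, \ldots, s_{\np-1})$, and then applies the inductive hypothesis to $\nu$. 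Verifying that this strip-removal preserves the dominance inequality is the main technical obstacle and requires a brief case analysis of how the dominance inequalities transform under box removal.

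The second equivalence follows immediately: applying the first equivalence to the transposed shape $\Bar\lambda$ with the same content $\multii$ gives $|\CSYTBar| \neq 0 \iff \Bar\lambda \geq \multii^\ord$, and combining this with the identity $|\RSYT| = |\CSYTBar|$ noted just before the lemma completes the proof.
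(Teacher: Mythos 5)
Your proposal is correct in substance but takes a genuinely different route from the paper's. The paper disposes of the first equivalence in a single line by citing \cite[Lem.~6.3]{Crampe-Poulain-d-Andecy:Fused_braids_and_centralisers_of_tensor_representations_of_Uq_gln}, i.e.\ it treats the classical nonvanishing criterion for Kostka numbers as known, and only the reduction of the second equivalence to the first via $|\RSYT| = |\CSYTBar|$ is carried out explicitly --- exactly as you do in your last paragraph. You instead give a self-contained proof of the Kostka criterion: reduction to partition content via the Bender--Knuth involution (equivalently, the symmetry of Schur functions), the forward implication by the standard observation that the boxes with entries $\leq i$ form a subdiagram confined to the first $i$ rows, and the reverse implication by peeling off horizontal strips. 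This buys independence from the external reference at the cost of length; both your forward argument and the transposition step are complete and correct. The one place where your argument falls short of a full proof is the reverse direction: you assert that one can always choose a horizontal strip of size $s_\np$ on the boundary of $\lambda$ whose removal leaves a partition still dominating $(s_1,\ldots,s_{\np-1})$, and you defer the verification to ``a brief case analysis'' that you do not perform. That verification is precisely where the hypothesis $\lambda \geq \multii^\ord$ is used (e.g.\ via a greedy choice of the strip from the rightmost columns inward), and it is the entire content of the harder implication, so as written that half of your proof is a sketch rather than a proof. If you either carry out that case analysis or simply cite the classical Kostka nonvanishing theorem at that point, the argument is complete.
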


\begin{proof}
The first statement follows immediately from~\cite[Lem.~6.3]{Crampe-Poulain-d-Andecy:Fused_braids_and_centralisers_of_tensor_representations_of_Uq_gln} and 
the second statement follows immediately from the first one, since $|\RSYT| = |\CSYTBar|$.
\end{proof}

The next result identifies the complete set of irreducible representations of $\Hecke_\multii$. 

\begin{corollary} \label{cor:proppPlambda}
The collection 
$\{ \idpt(P^\lambda) \; | \; \lambda \in I_\multii \}$, 
where $I_\multii:= \{\lambda \vdash \Summed \; | \; \Bar{\lambda} \geq \multii^\ord\}$, 
is a complete set of pairwise non-isomorphic simple $\Hecke_\multii$-modules. 
\end{corollary}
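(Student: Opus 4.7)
The plan is to combine the idempotent-truncation principle of Lemma~\ref{lem:lemmaPAP} with the classical Kostka-number characterization of when a Specht module has nonzero Young-subgroup invariants, and then read off the dominance condition from Lemma~\ref{lem:kostkacondition}.

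First, I would apply Lemma~\ref{lem:lemmaPAP} to $A = \bC[\SymGrp_\Summed]$ and $p = \idpt$. By Lemma~\ref{lem:polytabloids} together with the $\bC[\SymGrp_\Summed]$-equivariant isomorphism $\phi \colon V^\lambda \to P^\lambda$ of Lemma~\ref{lem:isomorphism}, the family $\{P^\lambda \; | \; \lambda \vdash \Summed\}$ is a complete set of pairwise non-isomorphic simple $\bC[\SymGrp_\Summed]$-modules. Lemma~\ref{lem:lemmaPAP} then yields that $\{\idpt(P^\lambda) \; | \; \lambda \vdash \Summed, \, \idpt(P^\lambda) \neq \{0\}\}$ is a complete set of pairwise non-isomorphic simple $\Hecke_\multii$-modules, reducing the claim to characterizing when $\idpt(P^\lambda) \neq \{0\}$.

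Next, using $\phi$ once more, this is equivalent to $\idpt(V^\lambda) \neq \{0\}$. Appealing to the involutive automorphism $\omega$ of~\eqref{eq:defomega} and the identification~\eqref{eq:isomorphism1} from Remark~\ref{rem:remarkomega}, the latter condition becomes $\sym(V^{\Bar{\lambda}}) \neq \{0\}$, where $\sym$ is the $\multii$-symmetrizer~\eqref{eq:defsymmetrizer}. Since $\sym$ is the averaging idempotent for the Young subgroup $\SymGrp_{s_1} \times \cdots \times \SymGrp_{s_\np} \subset \SymGrp_\Summed$, the image $\sym(V^{\Bar{\lambda}})$ coincides with the space of invariants of this subgroup in the Specht module $V^{\Bar{\lambda}}$; by Frobenius reciprocity combined with Young's rule, its dimension equals the Kostka number $K_{\Bar{\lambda},\multii} = |\CSYTBar|$. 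Applying the first equivalence of Lemma~\ref{lem:kostkacondition} with $\Bar{\lambda}$ in place of $\lambda$ then gives
\begin{align*}
\idpt(P^\lambda) \neq \{0\}
\; \iff \;
\sym(V^{\Bar{\lambda}}) \neq \{0\}
\; \iff \;
|\CSYTBar| \neq 0
\; \iff \;
\Bar{\lambda} \geq \multii^\ord
\; \iff \;
\lambda \in I_\multii ,
\end{align*}
which would conclude the proof.

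The main (modest) subtlety is the identification $\dim \sym(V^{\Bar{\lambda}}) = |\CSYTBar|$: although this is a classical consequence of Young's rule when $\multii$ is a partition, for a general composition $\multii$ one either reduces to the reordering $\multii^\ord$ (noting that conjugation of the Young subgroup by a permutation preserves the dimension of the invariant subspace) or else exhibits the invariant basis directly from column-strict tableaux of shape $\Bar{\lambda}$ and content $\multii$. Neither route presents a real obstacle, which is why the corollary follows quickly from the structural results already assembled in Sections~\ref{subsec:Specht}--\ref{subsec:fused Hecke algebra}.
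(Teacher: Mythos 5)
Your proposal is correct and follows essentially the same route as the paper: both first combine Lemmas~\ref{lem:polytabloids},~\ref{lem:isomorphism} and~\ref{lem:lemmaPAP} to reduce the corollary to deciding when $\idpt(P^\lambda)\neq\{0\}$, and both settle that question with the dominance criterion of Lemma~\ref{lem:kostkacondition}. The one place you diverge is in how the non-vanishing criterion is justified: the paper simply asserts $\dim\idpt(P^\lambda)=|\RSYT|$, a fact it only really establishes afterwards (in Proposition~\ref{prop:linind1}, via the basis $\{w_T\}$ and a citation to Cramp\'e--Poulain d'Andecy), whereas you pass through the identification $\idpt(V^\lambda)\cong\sym(V^{\Bar{\lambda}})$ of Remark~\ref{rem:remarkomega}, recognize $\sym(V^{\Bar{\lambda}})$ as the Young-subgroup invariants of $V^{\Bar{\lambda}}$, and compute the dimension as the Kostka number $K_{\Bar{\lambda},\multii}=|\CSYTBar|$ via Frobenius reciprocity and Young's rule. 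This makes the key step self-contained and avoids the mild forward reference in the paper's own write-up; your observation that a general composition $\multii$ reduces to the partition $\multii^{\ord}$ by conjugating the Young subgroup correctly disposes of the only subtlety, so the argument goes through as stated.
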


\begin{proof}
Combining Lemmas~\ref{lem:polytabloids},~\ref{lem:isomorphism}~\&~\ref{lem:lemmaPAP}, 
we see that a complete set of pairwise non-isomorphic simple $\Hecke_\multii$-modules is given by 
$\mathcal N := \{ \idpt(P^\lambda) \; | \; \lambda \vdash \Summed, \; \idpt(P^\lambda) \neq \{0\} \}$. 
The claim then follows, since $\dim \idpt( P^\lambda) = |\RSYT| \neq 0$ if and only if $\lambda \in I_\multii$ by Lemma~\ref{lem:kostkacondition}.
\end{proof}

The sum-of-squares formula now yields the dimension of the semisimple\footnote{The fused Hecke algebra $\Hecke_\multii$ is semisimple by Lemma~\ref{lem:lemmaPAP}.} 
algebra $\Hecke_\multii$:  
\begin{align} \label{eq:sum-of-squares}
\dim (\Hecke_\multii) 
\; = \; \sum_{\lambda \vdash \Summed} |\RSYT|^2 
\; = \; \sum_{\lambda \in I_\multii} |\RSYT|^2 .
\end{align}

\subsubsection{The subspaces $\idpt( V^\lambda)$}

We now return to the characterization of $\idpt( V^\lambda)$.

\begin{definition} \label{def:tildeT}
Let $F \in \Fill$ be a filling of shape $\lambda$ with content $\multii$. 
We associate to $F$ a numbering $\tilde{F} \in \NB$ injectively as follows. 
First, we relabel each entry \quote{$k$} of $F$ by 
\begin{align} \label{eq:defqk}
\summ_k := 1+ \sum_{j=1}^{k-1} s_j, \qquad k \in \llbracket 1, \np\rrbracket .
\end{align} 
This gives a new filling $F'$. 
Second, we construct a word $w$ by reading the entries of $F'$ from top to bottom, column by column from left to right; 
we call this \emph{column reading}. 
Third, we construct a new numbering $\tilde{F}$ by relabeling the entry \quote{$l$} of $F'$ by $l+u$, 
where $u$ is the number of times the letter $l$ has previously appeared in $w$. 
This defines $\tilde{F} \in \NB$.
\end{definition}

For example, with $\lambda=(3,3,1)$ and $\multii=(2,1,3,1)$, for $F \in \Fill$ and $T \in \RSYT$, we have
\begin{align*}
& F \; = \; {\ytableausetup{nosmalltableaux}\begin{ytableau} 
1 & 3 & 4 \\
3 & 3 & 1 \\
2
\end{ytableau}} 
\qquad \implies \qquad
F' \; = \;  {\ytableausetup{nosmalltableaux}\begin{ytableau} 
1 & 4 & 7 \\
4 & 4 & 1 \\
3
\end{ytableau}} 
\qquad \implies \qquad 
\tilde{F} \; = \;  {\ytableausetup{nosmalltableaux}\begin{ytableau}
1 & 5 & 7 \\
4 & 6 & 2 \\
3
\end{ytableau}}, 
\end{align*}
and
\begin{align*}
& T \; = \; {\ytableausetup{nosmalltableaux}\begin{ytableau} 
1 & 2 & 3 \\
1 & 3 & 4 \\
3
\end{ytableau}} 
\qquad \implies \qquad 
T' \; = \;  {\ytableausetup{nosmalltableaux}\begin{ytableau} 
1 & 3 & 4 \\
1 & 4 & 7 \\
4
\end{ytableau}} 
\qquad \implies \qquad 
\tilde{T} \; = \;  {\ytableausetup{nosmalltableaux}\begin{ytableau}
1 & 3 & 6 \\
2 & 5 & 7 \\
4
\end{ytableau}}.
\end{align*}

\begin{lemma} \label{lem:tildeT}
If $T \in \RSYT$, then $\tilde{T} \in \SYT$. 
\end{lemma}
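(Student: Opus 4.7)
The plan is to verify the two defining properties of a standard Young tableau for $\tilde{T}$: (i) the entries are exactly $\{1,2,\ldots,\Summed\}$, and (ii) they are strictly increasing along rows and down columns. Property (i) is essentially automatic: by Definition~\ref{def:tildeT}, the letter $\summ_k$ appears exactly $s_k$ times in $T'$, and the relabeling procedure renames these $s_k$ occurrences (in their reading order) to the consecutive integers $\summ_k,\summ_k+1,\ldots,\summ_k+s_k-1$; since $\summ_k+s_k = \summ_{k+1}$, the blocks of labels are disjoint and exhaust $\{1,\ldots,\Summed\}$. The core of the proof is (ii), and the main thing to exploit is the inequality
\begin{equation*}
\summ_{T_{r,c}} \;\leq\; \tilde{T}_{r,c} \;\leq\; \summ_{T_{r,c}}+s_{T_{r,c}}-1 \;=\; \summ_{T_{r,c}+1}-1,
\end{equation*}
valid for every box $(r,c)$ by construction.

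For rows, fix two horizontally adjacent entries $T_{r,c}<T_{r,c+1}$ of $T$ (strict, since $T$ is row-strict). Combining the bounds above gives
\begin{equation*}
\tilde{T}_{r,c} \;\leq\; \summ_{T_{r,c}+1}-1 \;\leq\; \summ_{T_{r,c+1}}-1 \;<\; \summ_{T_{r,c+1}} \;\leq\; \tilde{T}_{r,c+1},
\end{equation*}
where the second inequality uses that $k\mapsto\summ_k$ is strictly increasing together with $T_{r,c}+1\leq T_{r,c+1}$. This yields strict row increase.

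For columns, I would split into two cases according to whether consecutive vertical entries $T_{r,c}\leq T_{r+1,c}$ in $T$ are equal or distinct. If $T_{r,c}<T_{r+1,c}$, the same chain of inequalities as above proves $\tilde{T}_{r,c}<\tilde{T}_{r+1,c}$. If $T_{r,c}=T_{r+1,c}=k$, then both boxes carry the value $\summ_k$ in $T'$; here the column reading convention of Definition~\ref{def:tildeT} is essential: the box $(r,c)$ is read strictly before $(r+1,c)$, so its occurrence count $u$ is smaller, and hence $\tilde{T}_{r,c}=\summ_k+u<\summ_k+u'=\tilde{T}_{r+1,c}$. Combining the two cases gives strict column increase. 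The potentially delicate step is this last one, since unlike the rest of the argument it relies crucially on the column-by-column reading order rather than on numerical gaps; once it is handled, together with (i) and the row case, one concludes that $\tilde{T}\in\SYT$.
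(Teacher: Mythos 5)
Your proof is correct and follows exactly the route the paper intends: the paper dismisses this as "a simple combinatorial exercise to verify from Definition~\ref{def:tildeT}," and you have carried out that verification, with the key inequality $\summ_{T_{r,c}} \leq \tilde{T}_{r,c} \leq \summ_{T_{r,c}+1}-1$ handling the strict-increase cases and the column-reading order handling the case of equal vertical entries. Nothing further is needed.
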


\begin{proof}
It is a simple combinatorial exercise to verify from Definition~\ref{def:tildeT} 
that $T$ being row-strict implies that $\tilde{T}$ 
is strictly increasing across each row and down each column. 
\end{proof}

For $T \in \RSYT$, we define the following vector in $V^\lambda$:
\begin{align} \label{eq:defw_t}
w_T := \idpt.v_{\tilde{T}} \; \in \; V^\lambda.
\end{align}
It is, a priori, a linear combination containing polytabloids of tableaux which are not necessarily standard.
Nevertheless, $w_T$ can always be expressed as a linear combination of the basis elements $\{v_S \; | \; S \in \SYT \}$ (Lemma~\ref{lem:polytabloids}); see also Equation~\eqref{eq: upper triangular}. 
For example, 
\begin{align*}
w_{\;\ytableausetup{smalltableaux}\ytableaushort{13,14,2}}
\; = \; \tfrac12 \, v_{\;\ytableausetup{smalltableaux}\ytableaushort{14,25,3}}
- \tfrac12 \,v_{\;\ytableausetup{smalltableaux}\ytableaushort{24,15,3}} 
\; = \; v_{\;\ytableausetup{smalltableaux}\ytableaushort{14,25,3}} , 
\qquad \qquad 
w_{\;\ytableausetup{smalltableaux}\ytableaushort{13,24,3}}
\; = \; \tfrac12 \, v_{\;\ytableausetup{smalltableaux}\ytableaushort{14,25,3}}
- \tfrac12 \,v_{\;\ytableausetup{smalltableaux}\ytableaushort{13,25,4}} .
\end{align*}

\begin{proposition} \label{prop:linind1}
The set $\{w_T \; | \; T \in \RSYT\}$ defined by Equation~\eqref{eq:defw_t} is a basis for $\idpt(V^\lambda)$.
\end{proposition}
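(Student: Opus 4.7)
The strategy combines a dimension count with a linear independence argument relying on the standard polytabloid basis of $V^\lambda$.

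First, I would verify $\dim \idpt(V^\lambda) = |\RSYT|$. Restricting the $\bC[\SymGrp_\Summed]$-module isomorphism $\phi \colon V^\lambda \to P^\lambda$ from Lemma~\ref{lem:isomorphism} to a linear isomorphism $\idpt(V^\lambda) \cong \idpt(P^\lambda)$, and invoking the dimension identity $\dim \idpt(P^\lambda) = |\RSYT|$ used in the proof of Corollary~\ref{cor:proppPlambda}, the claim follows. An equivalent alternative is to use Remark~\ref{rem:remarkomega} to identify $\idpt(V^\lambda) \cong \sym(V^{\bar{\lambda}})$ as vector spaces, and recognize the latter as the invariant subspace of $V^{\bar{\lambda}}$ under the Young subgroup $\SymGrp_{s_1}\times\cdots\times\SymGrp_{s_\np}$, whose dimension is the Kostka number $|\CSYT^{\bar{\lambda}}_\multii| = |\RSYT^\lambda_\multii|$ via the transpose bijection between column-strict and row-strict tableaux.

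Given this dimension count, it suffices to prove that the $|\RSYT|$ vectors $\{w_T \mid T \in \RSYT\}$ are linearly independent in $\idpt(V^\lambda)$, since linear independence of that many vectors in a space of the same dimension forces spanning. Expanding $\idpt$ via~\eqref{eq:idempotent}, one obtains
\begin{align*}
w_T \; = \; \frac{1}{s_1!\cdots s_\np!} \sum_{\sigma \in \SymGrp_{s_1}\times\cdots\times\SymGrp_{s_\np}} \sign(\sigma) \, v_{\sigma.\tilde{T}} .
\end{align*}
For $\sigma = 1$, the term $v_{\tilde T}$ is already standard by Lemma~\ref{lem:tildeT}, while for $\sigma \neq 1$ the numbering $\sigma.\tilde T$ is typically non-standard and the corresponding polytabloid $v_{\sigma.\tilde T}$ must be straightened into the standard basis $\{v_S \mid S \in \SYT\}$ via the Garnir relations.

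The heart of the argument is then a triangularity statement: I would identify a total order $\prec$ on the subset $\{\tilde T \mid T \in \RSYT\} \subseteq \SYT$, compatible with the column-reading construction of Definition~\ref{def:tildeT}, such that upon writing $w_T = \sum_{S \in \SYT} \alpha_{T,S}\, v_S$, the square matrix $(\alpha_{T, \tilde{T'}})_{T, T' \in \RSYT}$ is triangular with nonzero diagonal entries. This would immediately yield linear independence of $\{w_T\}$ and, combined with the dimension count, complete the proof. The main obstacle is the identification of the correct order and the careful verification of triangularity under Garnir straightening: a natural candidate is a reverse-lexicographic or dominance-type order on column words, suitably adapted to the block structure encoded by $\multii$, but proving that straightening of $v_{\sigma.\tilde T}$ for $\sigma \neq 1$ never produces a contribution to $v_{\tilde{T'}}$ with $T' \neq T$ and $\tilde{T'} \succ \tilde T$ requires a delicate combinatorial analysis. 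An arguably cleaner route is to transfer the entire argument to the polynomial realization $P^\lambda$ via Lemma~\ref{lem:isomorphism}, where $w_T$ corresponds to $\idpt . \mathcal{P}_{\tilde T}$; one can then examine the leading monomials of these antisymmetrized products of Vandermonde determinants under a block-compatible lexicographic ordering of the variables, and show that the resulting map from $\RSYT$ to leading monomials is injective, which instantly gives the linear independence needed.
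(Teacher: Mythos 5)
Your architecture --- establish $\dim \idpt(V^\lambda) = |\RSYT|$ and then prove linear independence of the $w_T$ by a triangularity argument in the standard polytabloid basis --- is exactly the paper's, and the dimension step is sound (your second route, $\idpt(V^\lambda)\cong\sym(V^{\Bar{\lambda}})$ with $\dim\sym(V^{\Bar{\lambda}}) = |\CSYTBar| = |\RSYT|$, is the one the paper uses; your first route is shakier, since the identity $\dim\idpt(P^\lambda)=|\RSYT|$ asserted in the proof of Corollary~\ref{cor:proppPlambda} is itself only justified by this very count). The genuine gap is that you leave the triangularity --- the entire content of the proposition --- as an acknowledged obstacle requiring a \quote{delicate combinatorial analysis} of Garnir straightening. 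The point you are missing is that no Garnir straightening ever occurs here. Order $\NB$ lexicographically by column-reading words. Because $\tilde{T}$ is produced from the row-strict $T$ precisely by column reading (Definition~\ref{def:tildeT}), every $\sigma\in\SymGrp_{s_1}\times\cdots\times\SymGrp_{s_\np}$ satisfies $\sigma.\tilde{T}\succeq\tilde{T}$ (strictly unless $\sigma=\mathrm{id}$), and $\sigma.\tilde{T}$ is either already standard or becomes standard by permuting entries within its columns only; hence $v_{\sigma.\tilde{T}}=\pm v_S$ for a single standard $S\succeq\tilde{T}$, with no expansion into several standard polytabloids. The terms with $S=\tilde{T}$ come exactly from $\sigma\in\Columns(\tilde{T})$ and each contributes $\sign(\sigma)^2>0$, so $w_T=c_{\tilde{T}}\,v_{\tilde{T}}+\sum_{S\succ\tilde{T}}c_S\,v_S$ with $c_{\tilde{T}}>0$; injectivity of $T\mapsto\tilde{T}$ then gives a triangular change of basis from a subset of $\{v_S \mid S\in\SYT\}$ and hence linear independence. (Note also that the nonzero off-diagonal contributions go to $S\succ\tilde{T}$, i.e.\ the opposite direction to the vanishing you propose to verify; either strict triangularity would suffice, but the one compatible with this order is \quote{nothing below the diagonal}.)

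Your suggested \quote{cleaner route} through $P^\lambda$ is equivalent to the polytabloid argument via the isomorphism of Lemma~\ref{lem:isomorphism} and does not bypass the combinatorial work: one still has to pin down the monomial order and the leading term of $\idpt.\mathcal P_{\tilde{T}}$, which is the same analysis in different clothing. It is also worth noting that this leading-monomial strategy is precisely what the paper carries out later for the \emph{evaluated} fused Specht polynomials (Lemma~\ref{lem:lemmanonzero1}, Lemma~\ref{lem:lemmainj1}, Proposition~\ref{prop:proplinearindepfusedspecht}), where the injectivity of the leading-exponent map is only available for two-column shapes; the column-word argument above has the advantage of working for arbitrary $\lambda$, which is why the paper uses it for this proposition.
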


\begin{proof}
Denote by $\preceq$ the total order on the set $\NB$ of tableaux given by the lexicographic order 
on the words obtained by column reading. 
Note that for each $\sigma \in \SymGrp_{s_1} \times \cdots \times \SymGrp_{s_\np}$
and $T \in \RSYT$, we have $\sigma.\tilde{T} \succeq \tilde{T}$ (for the tableau as in Lemma~\ref{lem:tildeT}),
with strict inequality when $\sigma$ is not the identity.
Moreover, it follows from Definition~\ref{def:tildeT} that either $\sigma.\tilde{T} \in \SYT$, 
or it becomes standard by permuting numbers within its columns only. 
If $\sigma \in \SymGrp_{s_1} \times \cdots  \times \SymGrp_{s_\np}$ such that $v_{\sigma.\tilde{T}}$ is proportional to $v_{\tilde{T}}$, 
then $\sigma \in \Columns(\tilde{T})$ and $v_{\sigma.\tilde{T}} = \sign(\sigma) v_{\tilde{T}}$. 
Hence, by expanding $w_T$ using the definitions~(\ref{eq:defw_t},~\ref{eq:idempotent}), 
we see that the coefficient of $v_{\sigma.\tilde{T}}$ equals $\sign(\sigma) a$ for some $a>0$. 
Therefore, we find that 
\begin{align} \label{eq: upper triangular}
w_T = c_{\tilde{T}} \, v_{\tilde{T}} + \sum_{\substack{S\in \SYT \\ S \succ \tilde{T}}} c_S \, v_S , 
\qquad c_{\tilde{T}},c_S \in \bR, \quad c_{\tilde{T}} > 0 ,
\end{align}
and in particular, $w_T\neq 0$
since $\{ v_S \; | \; S \in \SYT \}$ is linearly independent (Lemma~\ref{lem:polytabloids}).  
Moreover, since the map sending $T \mapsto \tilde{T}$ from Definition \ref{def:tildeT} is injective, 
we see that each element in $\{w_T\; | \; T \in \RSYT\}$ is obtained by an upper-triangular transformation from 
$\{ v_S \; | \; S \in \SYT \}$. This implies that the set $\{w_T\; | \; T \in \RSYT\}$ is linearly independent.

Lastly, using the isomorphism~\eqref{eq:isomorphism1} and the fact 
(e.g., from~\cite[Thm.~6.5]{Crampe-Poulain-d-Andecy:Fused_braids_and_centralisers_of_tensor_representations_of_Uq_gln}) 
that $\dim(\sym(V^{\Bar{\lambda}})) = |\textnormal{CSYT}^{\Bar \lambda}_\multii|$, we have
$\dim \idpt(V^\lambda) = \dim \sym(V^{\Bar{\lambda}}) = |\textnormal{CSYT}^{\Bar \lambda}_\multii| = |\RSYT|$.
\end{proof}

The next result identifies the complete set of irreducible representations of $\Hecke_\multii$ in terms of polytabloids. 
It essentially follows 
from the proof of~\cite[Thm.~6.5]{Crampe-Poulain-d-Andecy:Fused_braids_and_centralisers_of_tensor_representations_of_Uq_gln}\footnote{The result~\cite[Thm.~6.5]{Crampe-Poulain-d-Andecy:Fused_braids_and_centralisers_of_tensor_representations_of_Uq_gln} 
states in particular that a complete set of pairwise non-isomorphic simple modules of 
the algebra $\Hecke_\multii(1) := \sym \bC[\SymGrp_\Summed] \sym$ of fused permutations 
is given by $\sym(V^\lambda)$ for $\lambda \geq \multii^\ord$.}.

\begin{theorem} \label{thm:proppVlambda}
The collection 
$\{ \idpt(V^\lambda) \; | \; \lambda \in I_\multii \}$, 
where $I_\multii:= \{\lambda \vdash \Summed \; | \; \Bar{\lambda} \geq \multii^\ord\}$, 
is a complete set of pairwise non-isomorphic simple $\Hecke_\multii$-modules. 
\end{theorem}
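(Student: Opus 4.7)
The plan is to mirror the argument used in the proof of Corollary~\ref{cor:proppPlambda}, replacing the polynomial realization $P^\lambda$ by the polytabloid realization $V^\lambda$. More precisely, I would invoke Lemma~\ref{lem:lemmaPAP} applied to the semisimple algebra $A = \bC[\SymGrp_\Summed]$ (semisimple by Maschke's theorem) and the idempotent $p = \idpt \in A$. Since Lemma~\ref{lem:polytabloids} identifies $\{V^\lambda \;|\; \lambda \vdash \Summed\}$ as a complete set of pairwise non-isomorphic simple $\bC[\SymGrp_\Summed]$-modules, Lemma~\ref{lem:lemmaPAP} immediately yields that
\[
\mathcal{M} \; := \; \{ \idpt(V^\lambda) \;|\; \lambda \vdash \Summed, \; \idpt(V^\lambda) \neq \{0\} \}
\]
is a complete set of pairwise non-isomorphic simple $\Hecke_\multii$-modules.

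It then remains to characterize the partitions $\lambda$ for which $\idpt(V^\lambda) \neq \{0\}$. Here I would simply combine two already-established facts: first, Proposition~\ref{prop:linind1} exhibits the explicit basis $\{w_T \;|\; T \in \RSYT\}$ for $\idpt(V^\lambda)$, so that $\dim \idpt(V^\lambda) = |\RSYT|$; second, Lemma~\ref{lem:kostkacondition} asserts that $|\RSYT| \neq 0$ if and only if $\Bar{\lambda} \geq \multii^\ord$, i.e., $\lambda \in I_\multii$. Putting these together, $\idpt(V^\lambda) \neq \{0\}$ if and only if $\lambda \in I_\multii$, whence $\mathcal{M} = \{ \idpt(V^\lambda) \;|\; \lambda \in I_\multii \}$, which is the statement of the theorem.

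There is essentially no obstacle left at this stage: the nontrivial content has already been absorbed into the preparatory results. In particular, the genuinely substantive step is Proposition~\ref{prop:linind1}, whose proof relied on the upper-triangular expansion~\eqref{eq: upper triangular} of $w_T$ against the polytabloid basis and on the dimension identity $\dim \idpt(V^\lambda) = \dim \sym(V^{\Bar\lambda}) = |\textnormal{CSYT}^{\Bar\lambda}_\multii| = |\RSYT|$ obtained via the isomorphism~\eqref{eq:isomorphism1} induced by $\omega$ together with~\cite[Thm.~6.5]{Crampe-Poulain-d-Andecy:Fused_braids_and_centralisers_of_tensor_representations_of_Uq_gln}. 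Once those are in hand, the theorem is merely an assembly of Lemmas~\ref{lem:polytabloids},~\ref{lem:lemmaPAP},~\ref{lem:kostkacondition} and Proposition~\ref{prop:linind1}, exactly paralleling the derivation of Corollary~\ref{cor:proppPlambda} via Lemma~\ref{lem:isomorphism}.
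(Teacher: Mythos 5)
Your proposal is correct and follows essentially the same route as the paper: combine Lemma~\ref{lem:polytabloids} with Lemma~\ref{lem:lemmaPAP} to obtain that $\{\idpt(V^\lambda) \;|\; \lambda \vdash \Summed,\; \idpt(V^\lambda)\neq\{0\}\}$ is a complete set of pairwise non-isomorphic simple $\Hecke_\multii$-modules, and then use Proposition~\ref{prop:linind1} together with Lemma~\ref{lem:kostkacondition} to identify the non-vanishing ones as exactly those with $\lambda \in I_\multii$. Nothing is missing.
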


\begin{proof}
On the one hand, combining Lemmas~\ref{lem:polytabloids}~\&~\ref{lem:lemmaPAP}, 
we see that a complete set of pairwise non-isomorphic simple $\Hecke_\multii$-modules is given by 
$\mathcal M := \{ \idpt(V^\lambda) \; | \; \lambda \vdash \Summed, \; \idpt(V^\lambda) \neq \{0\} \}$. 
On the other hand, we have 
$\idpt(V^\lambda) = \Spn_{\bC} \{w_T \; | \; T \in \RSYT\}$ by Proposition~\ref{prop:linind1},
and Lemma~\ref{lem:kostkacondition} shows that $|\RSYT| \neq 0$ if and only if $\lambda \in I_\multii$.
Hence, $\mathcal M = \{ \idpt(V^\lambda) \; | \; \lambda \in I_\multii \}$. 
\end{proof}

\subsection{Fused Specht polynomials}
\label{subsec:fusedSpecht}

Next, we will show how the $\Hecke_\multii$-modules $\idpt(P^\lambda)$ in~\eqref{eq:defpPlambda} 
can be characterized in terms of fused Specht polynomials (Definition~\ref{def:fusedspecht}~\&~Theorem~\ref{thm:theoremA})
when $\lambda$ is a Young diagram with two columns (we believe that this result also holds in general, but the proof eludes us, see Conjecture~\ref{conj:theoremA}). 
Observe that, by definition, any element of $\idpt(P^\lambda)$ is a totally antisymmetric polynomial with respect to its variables $x_{\summ_k},\ldots ,x_{\summ_{k+1}-1}$ for all $k \in \llbracket 1, \np\rrbracket$ 
(with the indices $q_k$ defined in~\eqref{eq:defqk}). 
Hence, any element of $\idpt(P^\lambda)$ is divisible by a product of Vandermonde determinants. 
This observation leads us to the definition of the fused Specht polynomials. 
To facilitate notation, we denote 
\begin{align*}
\mathfrak{D}_\multii := \big\{ (x_1,\ldots, x_\Summed) \in \bC^\Summed \; | \;  
x_{\summ_k}=x_{\summ_k+1}=\cdots=x_{\summ_{k+1}-1}\textnormal{ for all } k \in \llbracket 1, \np\rrbracket \big\}
\; \subset \; \bC^\Summed ,
\end{align*}
and for a function $f \colon U \to \bC$ defined on a domain $U\subset \bC^\Summed$ which can be continuously extended 
to a subset of $\mathfrak{D}_\multii$, we shall write
\begin{align} \label{eq: eval notation}
[f]_\textnormal{eval} \colon \bC^\np \to \bC
\end{align}
for the function obtained from 
$f(x_1,\ldots,x_\Summed)$ by the evaluations of variables (projection) 
$x_{\summ_k}=x_{\summ_k+1}=\cdots=x_{\summ_{k+1}-1}$ for all $k \in \llbracket 1, \np\rrbracket$. 
We abuse notation and denote the variables of both $f$ and $[f]_\textnormal{eval}$ 
by $(x_1,\ldots, x_\Summed) \in \bC^\Summed$ and $(x_1,\ldots ,x_\np) \in \bC^\np$, respectively. 
We define 
\begin{align} \label{eq:defmappsi}
\begin{split}
\psi \colon \; & \idpt. \bC[x_1,\ldots ,x_\Summed] \to \bC[x_1,\ldots ,x_\np] \\
\; & \idpt.f \; \mapsto \; \left[ \frac{\idpt.f}{\prod_{k=1}^\np \prod_{\summ_k\leq i<j<\summ_{k+1}} (x_j-x_i)} \right]_{\textnormal{eval}} .
\end{split}
\end{align}

\begin{definition} \label{def:fusedspecht}
For each $F \in \Fill$, we define the \emph{fused Specht polynomial} $\mathcal F_F \colon \bC^\np \to \bC$ 
as 
\begin{align} \label{eq:deffusedspecht}
\mathcal F_F 
:= \left[ \frac{\idpt.\mathcal P_{\tilde{F}}}{\prod_{k=1}^\np \prod_{\summ_k\leq i<j<\summ_{k+1}} (x_j-x_i)} \right]_{\textnormal{eval}}
= \psi(\idpt.\mathcal P_{\tilde{F}}) ,
\end{align}
where $\tilde{F} \in \NB$ is obtained from $F$ as in Definition~\ref{def:tildeT}.
\end{definition}

\begin{example}
Consider the following $F \in \Fill$ and its associated numbering $\tilde{F} \in \NB$:
\begin{align*}
F = {\ytableausetup{nosmalltableaux}
\begin{ytableau} 
1 & 2 \\
1 & 3 \\
2 & 4
\end{ytableau}}, 
\qquad \qquad \tilde{F} = {\ytableausetup{nosmalltableaux}
\begin{ytableau} 
1 & 4 \\
2 & 5 \\
3 & 6
\end{ytableau}} \; .
\end{align*}
Adopting the abuse of notation to denote variables of both sides as \quote{$x$}, Definition \ref{def:fusedspecht} gives
\begin{align*}
\mathcal F_F(x_1,x_2,x_3,x_4) 
= 
\left[ \frac{\idptof{2,2,1,1} . \mathcal P_{\tilde{F}}(x_1,\ldots,x_6)}{(x_2-x_1)(x_4-x_3)} \right]_\textnormal{eval} 
= 
\left[ \frac{\idptof{2,2,1,1} . \mathcal P_{\tilde{F}}(x_1,\ldots,x_6)}{(x_2-x_1)(x_4-x_3)} \right]_{\substack{x_1, \, x_2 \, \mapsto x_1, \\ x_3,\, x_4 \, \mapsto x_2 \\ x_5 \, \mapsto x_3, \\ x_6 \, \mapsto x_4}},
\end{align*}
where $\idptof{2,2,1,1}$ antisymmetrizes $\mathcal P_{\tilde{F}}$ with respect to $\{x_1,x_2\}$ and $\{x_3,x_4\}$. Since $\mathcal P_{\tilde{F}}$ is already antisymmetric with respect to $\{x_1,x_2\}$, we obtain
\begin{align*}
\mathcal F_F(x_1,x_2,x_3,x_4) = 
\left[\frac{\mathcal P_{\tilde{F}}(x_1,\ldots,x_6) - (x_3 \leftrightarrow x_4)}{2(x_2-x_1)(x_4-x_3)} \right]_{\substack{x_1, \, x_2 \, \mapsto x_1, \\ x_3,\, x_4 \, \mapsto x_2 \\ x_5 \, \mapsto x_3, \\ x_6 \, \mapsto x_4}},
\end{align*}
and a straightforward computation then leads to
\begin{align*}
\mathcal F_F(x_1,x_2,x_3,x_4) =  -\frac{(x_1-x_2) (x_3-x_4)}{2} \, \big(x_1 (2 x_2-x_3-x_4)-x_2 (x_3+x_4)+2 x_3 x_4\big).
\end{align*}
\end{example}

The simplest class of fused Specht polynomials arises when the tableau has one column: 

\begin{proposition} \label{prop:fusedspechtonecolumn}
Fix $\lambda = (1^\Summed)$ and valences $\multii=(s_1,\ldots,s_\np) \in \bZpos^\np$ such that $s_1+\cdots+s_\np = \Summed$. 
Let $T \in \textnormal{RSYT}\super{1^\Summed}_\multii$. Then, we have
\begin{align} \label{eqfusedspechtonecolumn}
\mathcal F_T = \mathcal F_T(x_1,\ldots ,x_\np) = \prod_{1\leq i < j \leq \np} (x_j-x_i)^{s_i s_j}.
\end{align}
\end{proposition}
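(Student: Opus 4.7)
\textbf{Proof plan for Proposition~\ref{prop:fusedspechtonecolumn}.}
The plan is a direct unwinding of Definition~\ref{def:fusedspecht} in the single-column case, reducing everything to a manipulation of a single Vandermonde determinant. First I would observe that, for $\lambda = (1^\Summed)$, row-strictness imposes no constraint across rows (each row has one box) while forcing the entries to be weakly increasing down the single column. Combined with the content condition, this forces the unique row-strict tableau $T \in \textnormal{RSYT}^{(1^\Summed)}_\multii$ to be the column in which $k$ appears $s_k$ times in consecutive positions, for $k = 1, \ldots, \np$.

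Second, I would compute $\tilde T$ using Definition~\ref{def:tildeT}. Relabeling each $k$ by $\summ_k$ as in~\eqref{eq:defqk} gives a column with $s_k$ copies of $\summ_k$ in a row; column reading produces the word $(\summ_1,\ldots,\summ_1,\summ_2,\ldots,\summ_2,\ldots,\summ_\np,\ldots,\summ_\np)$, and the subsequent $u$-shift makes the entries of $\tilde T$ read $1,2,\ldots,\Summed$ from top to bottom. Hence by~\eqref{eq:spechtfactorized},
\begin{align*}
\mathcal P_{\tilde T}(x_1,\ldots,x_\Summed) \; = \; \Delta(\bs{x}_{\Summed,\Summed-1,\ldots,1}) \; = \; \prod_{1 \leq i < j \leq \Summed} (x_j - x_i) .
\end{align*}

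Third, since $\mathcal P_{\tilde T}$ is already totally antisymmetric in all $\Summed$ variables, it is in particular antisymmetric under each factor $\SymGrp_{s_k}$ of the subgroup $\SymGrp_{s_1} \times \cdots \times \SymGrp_{s_\np}$, so the idempotent~\eqref{eq:idempotent} acts trivially: $\idpt.\mathcal P_{\tilde T} = \mathcal P_{\tilde T}$. I would then split the product over $1 \le i < j \le \Summed$ according to which block $\{\summ_k,\ldots,\summ_{k+1}-1\}$ contains $i$ and $j$: the factors with $i,j$ in the same block $k$ assemble precisely into the denominator $\prod_k \prod_{\summ_k \le i < j < \summ_{k+1}}(x_j-x_i)$ of~\eqref{eq:defmappsi}, while the factors with $i$ in block $k$ and $j$ in block $l > k$ survive the division. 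After applying $[\,\cdot\,]_{\mathrm{eval}}$, the $s_k \cdot s_l$ surviving factors coming from block-$k$/block-$l$ pairs collapse to $(x_l - x_k)^{s_k s_l}$, yielding~\eqref{eqfusedspechtonecolumn}.

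There is no real obstacle in this argument: the main point is just bookkeeping of the indices in Definition~\ref{def:tildeT} and a clean partition of the pairs $\{(i,j) : 1 \le i < j \le \Summed\}$ into intra-block and inter-block pairs. The only thing to be careful about is the ordering convention for $N_{.,c}$ (bottom to top) in~\eqref{eq:spechtfactorized}, which I would double-check to make sure it produces $\prod_{i<j}(x_j - x_i)$ rather than its sign-flipped counterpart.
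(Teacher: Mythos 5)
Your proposal is correct and follows essentially the same route as the paper's proof: both reduce to the observation that $\mathcal P_{\tilde T}$ is the full Vandermonde determinant (so $\idpt$ acts trivially), cancel the intra-block factors against the denominator in Definition~\ref{def:fusedspecht}, and evaluate the surviving inter-block factors to get $\prod_{i<j}(x_j-x_i)^{s_i s_j}$. The extra bookkeeping you include (uniqueness of $T$, the explicit form of $\tilde T$, and the sign check for the bottom-to-top convention) is all accurate and merely spells out what the paper leaves implicit.
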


\begin{proof}
By Definition~\ref{def:fusedspecht} and using the fact that the Specht polynomial for a standard Young tableau with one column is the Vandermonde determinant, we have
\begin{align*}
\mathcal F_T 
\; = \; \left[\frac{\prod_{1 \leq i < j \leq \Summed} (x_j-x_i)}{\prod_{k=1}^d \prod_{\summ_k \leq i < j < \summ_{k+1}} (x_j-x_i)}\right]_\textnormal{eval} 
= \; \left[ \prod_{1 \leq i < j \leq d} \prod_{l=0}^{s_j-1} \prod_{m=0}^{s_i-1} (x_{\summ_j+l}-x_{\summ_i+m}) \right]_\textnormal{eval}.
\end{align*}  
The evaluation of this leads to~\eqref{eqfusedspechtonecolumn}.
\end{proof}

We now state the main theorem of this section, which gives an isomorphism of 
the two $\Hecke_\multii$-modules 
$\idpt( P^\lambda) := \Spn_{\bC}\{\idpt.\mathcal P_S \; | \; S \in \SYT\}$ in~\eqref{eq:defpPlambda} 
and $\Spn_{\bC}\{\mathcal F_F \; | \; F \in \Fill\}$ defined via~\eqref{eq:deffusedspecht}, 
in the case where $\lambda$ is a Young diagram with two columns. 

\begin{theorem} \label{thm:theoremA}
Let $\lambda$ be a Young diagram with two columns. 
The map $\psi$ in~\eqref{eq:defmappsi} defines 
an isomorphism of $\Hecke_\multii$-modules from $\idpt( P^\lambda)$ to 
$\Spn_{\bC}\{\mathcal F_F \; | \; F \in \Fill\} = \Spn_{\bC} \{\mathcal F_T \; | \; T \in \RSYT\}$,
where the latter space 
obtains an $\Hecke_\multii$-action as
\begin{align*}
(\idpt \, a \, \idpt) . \mathcal F_T 
= \left[ \frac{(\idpt \, a \, \idpt) .  \mathcal P_{\tilde{T}}}{\prod_{k=1}^\np \prod_{\summ_k\leq i<j<\summ_{k+1}} (x_j-x_i)} \right]_{\textnormal{eval}} , \qquad \idpt \, a \, \idpt \in \Hecke_\multii ,
\end{align*}
for $T \in \RSYT$ and $\mathcal F_T$ being the fused Specht polynomial from Definition~\ref{def:fusedspecht},
\begin{align*} 
\mathcal F_T 
:= \left[ \frac{\idpt.\mathcal P_{\tilde{T}}}{\prod_{k=1}^\np \prod_{\summ_k\leq i<j<\summ_{k+1}} (x_j-x_i)} \right]_{\textnormal{eval}}
= \psi(\idpt.\mathcal P_{\tilde{T}})  .
\end{align*}
\end{theorem}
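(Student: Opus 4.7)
The plan is to prove Theorem~\ref{thm:theoremA} in three steps: well-definedness and image identification of $\psi$, linear independence of the fused Specht polynomials (the main difficulty), and the module-homomorphism property. For the first step, note that every element of $\idpt.\bC[x_1,\ldots,x_\Summed]$ is antisymmetric with respect to permutations within each block $\{x_{\summ_k},\ldots,x_{\summ_{k+1}-1}\}$ and hence divisible by the corresponding block Vandermonde, so the quotient in \eqref{eq:defmappsi} is a polynomial and its diagonal evaluation is well-defined. The isomorphism $\phi$ of Lemma~\ref{lem:isomorphism} intertwines the $\bC[\SymGrp_\Summed]$-action, so combining it with Proposition~\ref{prop:linind1} yields $\idpt(P^\lambda) = \Spn_{\bC}\{\idpt.\mathcal P_{\tilde T} \mid T \in \RSYT\}$. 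Applying $\psi$ identifies the image as $\Spn_{\bC}\{\mathcal F_T \mid T \in \RSYT\}$, which contains all $\mathcal F_F$ with $F\in\Fill$ since $\idpt.\mathcal P_{\tilde F} \in \idpt(P^\lambda)$.

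For the injectivity step, note that Proposition~\ref{prop:linind1} (together with Lemma~\ref{lem:isomorphism}) gives $\dim\idpt(P^\lambda) = |\RSYT|$, so injectivity of $\psi|_{\idpt(P^\lambda)}$ is equivalent to the linear independence of $\{\mathcal F_T \mid T\in\RSYT\}$. The plan is to exploit the explicit combinatorial formula of Proposition~\ref{prop:combinatorialformula} for $\mathcal F_T$ as a signed sum of monomials in $\bC[x_1,\ldots,x_\np]$ in order to exhibit, for each $T\in\RSYT$, a distinguished monomial $\mathfrak m_T$ occurring in $\mathcal F_T$ with nonzero coefficient. The core combinatorial input, Lemma~\ref{lem:lemmainj1}, would establish two facts simultaneously: the assignment $T\mapsto\mathfrak m_T$ is injective on $\RSYT$, and with respect to a suitable total order on monomials (e.g., the lexicographic order weighted by the content $\multii$), $\mathfrak m_T$ is the leading monomial of $\mathcal F_T$. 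Triangularity then yields $\mathcal F_T = c_T \mathfrak m_T + (\textnormal{lower terms})$ with $c_T\neq 0$, and linear independence follows.

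Once linear independence is in hand, the $\Hecke_\multii$-action on $\Spn_{\bC}\{\mathcal F_T\mid T\in\RSYT\}$ defined in the statement is consistent (the $\mathcal F_T$ now form a basis, so the coefficient expansion is unique), and $\psi$ intertwines the actions by construction. Thus $\psi$ is an isomorphism of $\Hecke_\multii$-modules, as claimed.

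The only nontrivial work is Step 2, and within it the construction of the leading monomial $\mathfrak m_T$ and the verification of its combinatorial uniqueness. This is where the two-column hypothesis genuinely enters: a row-strict tableau $T$ of two-column shape is determined by the multiset of entries in its second column, which encodes cleanly as a monomial in $x_1,\ldots,x_\np$ of degree prescribed by $\multii$, making the triangularity argument tractable. For three or more columns the intermediate matching between column contents becomes significantly more subtle and the naive leading-monomial identification breaks down, which is precisely the reason the general claim is only posed as Conjecture~\ref{conj:theoremA}.
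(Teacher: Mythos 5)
Your proposal is correct and follows essentially the same route as the paper: everything reduces to the linear independence of $\{\mathcal F_T \;|\; T \in \RSYT\}$, which both you and the paper obtain from the combinatorial formula of Proposition~\ref{prop:combinatorialformula} together with an extremal-monomial argument (the distinguished monomial $\prod_k x_k^{|\lambda^T(k)|}$ of Lemma~\ref{lem:lemmanonzero1}) and the injectivity of $T \mapsto (|\lambda^{T}(k)|)_{k}$ from Lemma~\ref{lem:lemmainj1}, the paper's order being the lexicographic \emph{minimum} rather than a leading term. The only cosmetic difference is the counting step: you invoke $\dim \idpt(P^\lambda) = |\RSYT|$ directly (available from Proposition~\ref{prop:linind1} and Lemma~\ref{lem:isomorphism}), whereas the paper reaches the same dimension equality via a sum-of-squares argument over all $\lambda \vdash \Summed$; both are valid.
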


\begin{proof}
The key will be to prove that the set $\{\mathcal F_T \; | \; T \in \RSYT\}$ is linearly independent, 
when $\lambda$ has exactly two columns (Proposition~\ref{prop:proplinearindepfusedspecht}).
Given this, we can finish the proof as follows.

On the one hand, because the map $\psi$ is a surjection onto $\Spn_{\bC}\{\mathcal F_F \; | \; F \in \Fill\}$ 
from the linear span of $\idpt.\mathcal P_{\tilde{F}}$, where $\tilde{F} \in \NB$ is obtained from $F \in \Fill$ as in Definition~\ref{def:tildeT}, 
and the space $\idpt( P^\lambda)$ defined by~(\ref{eq:defPlambda},~\ref{eq:defpPlambda}) 
is either $\{0\}$ or an irreducible $\Hecke_\multii$-module (by Lemma~\ref{lem:lemmaPAP}),
we obtain from the sum-of-squares formula (SOS)~\eqref{eq:sum-of-squares} that
\begin{align*}
\sum_{\lambda \vdash \Summed} ( \dim (\Spn_{\bC}\{\mathcal F_F \; | \; F \in \Fill\}) )^2 
\; \leq \; \sum_{\lambda \vdash \Summed} ( \dim (\idpt( P^\lambda)) )^2 
\; = \; \dim (\Hecke_\multii) .
\end{align*}
On the other hand, since the linearly independent collection $\{\mathcal F_T \; | \; T \in \RSYT\}$ spans a subset of 
$\Spn_{\bC}\{\mathcal F_F \; | \; F \in \Fill\}$ of dimension $|\RSYT|$, 
the SOS~\eqref{eq:sum-of-squares} also gives
\begin{align*}
\dim (\Hecke_\multii) 
\; = \; \sum_{\lambda \vdash \Summed} |\RSYT|^2 
\; \leq \; \sum_{\lambda \vdash \Summed} ( \dim (\Spn_{\bC}\{\mathcal F_F \; | \; F \in \Fill\}) )^2 .
\end{align*}
Combining these facts together, we conclude that
\begin{align*}
\dim (\idpt( P^\lambda)) = \dim (\Spn_{\bC}\{\mathcal F_F \; | \; F \in \Fill\}) = |\RSYT| ,
\end{align*}
so $\Spn_{\bC}\{\mathcal F_F \; | \; F \in \Fill\} = \Spn_{\bC} \{\mathcal F_T \; | \; T \in \RSYT\}$ 
and $\psi$ defines a linear isomorphism from $\idpt( P^\lambda)$ onto this space. 
In particular, it induces an isomorphism of $\Hecke_\multii$-modules. 
\end{proof}

It thus remains to prove that the set $\{\mathcal F_T \; | \; T \in \RSYT\}$ is linearly independent (Proposition~\ref{prop:proplinearindepfusedspecht}). 
One of the key ingredients for the proof will be to find a combinatorial formula for the fused Specht 
polynomials (Proposition~\ref{prop:combinatorialformula}). 
Unfortunately, the arguments leading to the linear independence of $\{\mathcal F_T \; | \; T \in \RSYT\}$ 
and thus to Theorem~\ref{thm:theoremA} are valid only for Young diagrams with two columns.
However the combinatorial formula will hold for any shape. 
Thus, we believe that Theorem~\ref{thm:theoremA} also holds more generally: 

\begin{conjecture} \label{conj:theoremA}
Theorem~\ref{thm:theoremA} holds for Young diagrams of any shape.
\end{conjecture}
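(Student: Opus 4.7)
The task reduces to establishing linear independence of the collection $\{\mathcal F_T \; | \; T \in \RSYT\}$ for an arbitrary Young diagram $\lambda \in I_\multii$. Indeed, inspecting the proof of Theorem~\ref{thm:theoremA}, both the surjectivity of $\psi$ onto $\Spn_{\bC}\{\mathcal F_F \; | \; F \in \Fill\}$ and the sum-of-squares bound~\eqref{eq:sum-of-squares} go through for any shape; together, they force $\dim(\idpt(P^\lambda)) = \dim(\Spn_{\bC}\{\mathcal F_F \; | \; F \in \Fill\}) = |\RSYT|$ as soon as the linear independence of $\{\mathcal F_T \; | \; T \in \RSYT\}$ is known, and the module isomorphism then follows exactly as in the two-column argument.

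The plan is to exploit the combinatorial formula of Proposition~\ref{prop:combinatorialformula}, which is valid for any shape. Concretely, for each $T \in \RSYT$ I would extract from this formula a distinguished monomial $m_T \in \bC[x_1, \ldots, x_\np]$ that appears with nonzero coefficient in $\mathcal F_T$ and is \emph{maximal} in $\mathcal F_T$ with respect to a suitable total monomial order $\prec$. If additionally the map $T \mapsto m_T$ is injective on $\RSYT$, then the expansion of $\{\mathcal F_T\}$ in the monomial basis is upper-triangular, hence linearly independent. For the two-column case this is precisely the strategy carried out via Lemma~\ref{lem:lemmainj1}, which produces such an injection from the column reading of $T$.

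To generalise beyond two columns, I would try a block-lexicographic order on monomials that first orders variables according to the valence blocks and then refines within blocks. Heuristically, for $T \in \RSYT$ the leading monomial should be obtained by reading the columns of $T$ from right to left and, within each column, from bottom to top; the row-strict condition on $T$ would then allow $T$ to be reconstructed from $m_T$, ensuring injectivity. The main obstacle is controlling cancellations in the combinatorial expansion when $\lambda$ has three or more columns: unlike in the two-column case, a single antisymmetrisation within a valence block can interact with many columns of $\lambda$ simultaneously, so one must verify that the putative leading term is not cancelled by contributions from other summands. A promising route is to re-express the combinatorial formula as a signed sum over non-overlapping configurations indexed by $T$ (in the spirit of Lindström--Gessel--Viennot), which would make the injectivity of $T \mapsto m_T$ manifest and thereby close the argument in full generality.
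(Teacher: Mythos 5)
This statement is an open conjecture in the paper, not a theorem: the authors explicitly state that their argument works only for two-column shapes and that \quote{to prove Conjecture~\ref{conj:theoremA}, one should find an argument replacing Lemma~\ref{lem:lemmainj1}.} Your reduction of the conjecture to the linear independence of $\{\mathcal F_T \; | \; T \in \RSYT\}$ is correct and matches the paper's own framing --- the surjectivity of $\psi$ and the sum-of-squares count are indeed shape-independent. But the proposal does not close that gap; it only restates it, and the specific strategy you sketch is blocked by an obstruction the paper itself exhibits.

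Concretely: after the evaluation, $\mathcal F_T$ is a polynomial in only $\np$ variables $x_1,\ldots,x_\np$, one per valence block, so a monomial of $\mathcal F_T$ carries no more information than its exponent tuple $(e_1,\ldots,e_\np)$. There is nothing to \quote{refine within blocks.} The distinguished (lex-minimal) monomial identified in Lemma~\ref{lem:lemmanonzero1} has exponent tuple exactly $(|\lambda^{T}(k)|)_{k}$, and the paper gives an explicit counterexample --- two row-strict tableaux $T_1 \neq T_2$ of a four-row, three-column shape with $|\lambda^{T_1}(k)| = |\lambda^{T_2}(k)|$ for all $k$ --- showing that the map $T \mapsto (|\lambda^{T}(k)|)_k$, and hence the map from $T$ to its leading monomial, fails to be injective once $\lambda$ has three or more columns. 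So an upper-triangularity argument based on reconstructing $T$ from a single monomial cannot work in general: one would have to either exploit the actual coefficients (e.g., the Schur-polynomial evaluations $\Schur_{\lambda^U(k)}(1^{s_k})$ and signs) to separate tableaux sharing a leading exponent tuple, or find an altogether different mechanism. The Lindstr\"om--Gessel--Viennot suggestion is left entirely undeveloped and, as stated, would at best reorganize the same signed sum without resolving the collision above. The conjecture therefore remains unproven by this proposal.
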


\begin{remark}
If Conjecture~\ref{conj:theoremA} holds, then the collection $\{ \Spn_{\bC} \{\mathcal F_T \; | \; T \in \RSYT\} \; | \; \lambda \in I_\multii \}$
is a complete set of pairwise non-isomorphic simple $\Hecke_\multii$-modules. 
\end{remark}

\subsubsection{Combinatorial formula for the fused Specht polynomials}

Consider the group
\begin{align*}
\mathfrak{Q}_\lambda := \SymGrp_{\Bar{\lambda_1}} \times \SymGrp_{\Bar{\lambda_2}} \times \cdots  \times \SymGrp_{\Bar{\lambda_l}} 
\; \subset \; \SymGrp_\Summed
\end{align*} 
where $\Bar{\lambda} = (\Bar{\lambda_1},\ldots ,\Bar{\lambda_l})$ (in particular, $\sum_i \Bar{\lambda_i} = \Summed$). 
Note that $\mathfrak{Q}_\lambda$ acts on $\Fill$ by permuting entries of a filling such that each factor $\SymGrp_{\Bar{\lambda_i}}$ permutes entries in the $i$th column. 
In what follows, we shall denote this action by \quote{$\star$}, in order to to avoid confusion with the action \quote{$.$} 
of $\SymGrp_\Summed$ on numberings defined in Section \ref{subsec:Specht}. 
\begin{example}
For instance, consider
\begin{align*}
T \; = \; {\ytableausetup{nosmalltableaux}\begin{ytableau} 
1 & 3 \\
1 & 3 \\
2 & 2
\end{ytableau}} 
\end{align*}
In this case, we have 
$\mathfrak{Q}_\lambda = \SymGrp_3 \times \SymGrp_3$. 
For instance, the permutation $\sigma = (13) \times \textnormal{Id} \in \mathfrak{Q}_\lambda$ 
exchanges the two entries lying in the first row, first column and third row, first column:
\begin{align*}
\sigma \star T \; = \; {\ytableausetup{nosmalltableaux}\begin{ytableau} 
2 & 3 \\
1 & 3 \\
1 & 2
\end{ytableau}} 
\end{align*}
Similarly, $\textnormal{Id} \times (12)$ leaves $T$ unchanged because it permutes two identical entries \quote{$3$}.
\end{example}

We denote by $\mathfrak{Q}_\lambda \star F$ the orbit of $F \in \Fill$ under the action of $\mathfrak{Q}_\lambda$. 
We also denote by $\Stab{\mathfrak{Q}_\lambda}{F}= \{ \sigma \in \mathfrak{Q}_\lambda \;|\; \sigma \star F = F\} \subset \mathfrak{Q}_\lambda $ the stabilizer of $F$ in $\mathfrak{Q}_\lambda$. 
\begin{remark}
For a numbering $N \in \NB$, the orbit $\mathfrak{Q}_\lambda \star N$ corresponds to $\Columns(N).N$, 
where $\Columns(N)$ is the column-stabilizer subgroup defined in Section~\ref{subsec:Specht}.
\end{remark}

Before proceeding, we fix some notation to be used throughout the rest of this section. 

\begin{notation}\label{not:combinatorial}
Let $W_F \subset \mathfrak{Q}_\lambda \star F$ be the subset of fillings in the orbit of $F$ 
which have at least two boxes containing the same entry in the same row. 
For $U \in (\mathfrak{Q}_\lambda \star F) \setminus W_F$, let $\sigma_{F;U}$ 
be the shortest permutation in $\mathfrak{Q}_\lambda$ such that $\sigma_{F;U} \star F = U$. 
We denote by $(r_i^U(k))_{i=1}^{s_k}$ the sequence of row numbers of boxes of $U$ containing the entry \quote{$k$}, 
ordered by column-reading $U$. 

Let $(r_i^{U,\ord}(k))_{i=1}^{s_k}$ be the ordering of $(r_i^U(k))_{i=1}^{s_k}$ in decreasing order, 
and let $\tau_{U;k}$ 
be permutations such that $(r_{\tau_{U;k}(i)}^U(k))_{i=1}^{s_k} = (r_i^{U,\ord}(k))_{i=1}^{s_k}$. 
Finally, let $\lambda^U(k)$ be the partition
\begin{align} \label{eq:defpartitionlambda}
\lambda^U(k) := \big( r_i^{U,\ord}(k)-s_k+i-1 \big)_{i=1}^{s_k}.
\end{align} 
\end{notation}

\begin{remark} \label{rem:remarklambdaU} 
Note that the elements of $(r_i^{U,\ord}(k))_{i=1}^{s_k}$ are all different, since $U$ is chosen with no two equal entries in the same row. 
Therefore, the elements of $\lambda^U(k)$ are nonnegative. 
Slightly abusing notation, if $\lambda^U(k)$ contains zeros at its tail, 
we identify it with the partition obtained by removing these zeros. 
\end{remark}

\begin{remark}
Let us mention that if $s_k=1$ for some $k$, that is, the entry \quote{$k$} appears exactly once in the filling, 
then $\lambda^F(k)$ is simply the row number where $k$ lies minus $1$. 
Also, if the entry \quote{$k$} appears in the rows $1,2,\ldots,s_k$ exactly once, then $\lambda^F(k) = \emptyset$.
\end{remark}

As a matter of convenience for the readers, we record two examples below. 
We focus on Young diagrams with two columns only, since only diagrams of this shape are considered in the subsequent sections (and in Theorem~\ref{thm:theoremA}).

\begin{example} \label{ex:examplepartition22} 
Let $\lambda=(2,2)$, so that $\Bar{\lambda} = (2,2)$ and $\mathfrak{Q}_\lambda = \SymGrp_2 \times \SymGrp_2$. 
Consider 
\begin{align*}
F \; = \; {\ytableausetup{nosmalltableaux}\begin{ytableau} 
1 & 2 \\
2 & 3
\end{ytableau}}  
\; \in \; \Fill 
\end{align*}
with $\multii = (s_1,s_2,s_3) = (1,2,1)$. 
The orbit $\mathfrak{Q}_\lambda \star F$ reads 
\begin{align*}
\mathfrak{Q}_\lambda \star F  \; = \; 
\Bigg\{ \; 
{\ytableausetup{nosmalltableaux}\begin{ytableau} 
1 & 2 \\
2 & 3
\end{ytableau}} 
\; , \; 
{\ytableausetup{nosmalltableaux}\begin{ytableau} 
2 & 2 \\
1 & 3
\end{ytableau}} 
\; , \; 
{\ytableausetup{nosmalltableaux}\begin{ytableau} 
1 & 3 \\
2 & 2
\end{ytableau}} 
\; , \; 
{\ytableausetup{nosmalltableaux}\begin{ytableau} 
2 & 3 \\
1 & 2
\end{ytableau}} 
\; \Bigg\} .
\end{align*}
Note that the stabilizer $\Stab{\mathfrak{Q}_\lambda}{F}$ consists only of the identity element, 
since there is no repeated entry within the same column. 
Therefore, we have $|\mathfrak{Q}_\lambda \star F| = |\mathfrak{Q}_\lambda|$. 
Moreover, for fillings containing repeated entries in the same row, we have 
\begin{align*}
W_F \; = \; 
\Bigg\{ \; 
{\ytableausetup{nosmalltableaux}\begin{ytableau} 
2 & 2 \\
1 & 3
\end{ytableau}}  
\; , \; 
{\ytableausetup{nosmalltableaux}\begin{ytableau} 
1 & 3 \\
2 & 2
\end{ytableau}}  
\; \Bigg\} .
\end{align*}
Finally, we have $r^F(1) = (1)$, $r^F(2) = (2,1)$, and $r^F(3) = (2)$. 
Hence we infer from~\eqref{eq:defpartitionlambda} that 
$\lambda^F(1) = \emptyset$, $\lambda^F(2) = \emptyset$, and $\lambda^F(3) = (1)$. 
\end{example}

\begin{example} \label{ex:examplepartition222} 
Let $\lambda = (2,2,2)$, so that $\Bar{\lambda} = (3,3)$ and $\mathfrak{Q}_\lambda = \SymGrp_3 \times \SymGrp_3$. 
Consider 
\begin{align*}
F \; = \; {\ytableausetup{nosmalltableaux}\begin{ytableau} 
2 & 3 \\
1 & 2 \\ 
2 & 3
\end{ytableau}}  
\; \in \; \Fill 
\end{align*}
with $\multii = (s_1,s_2,s_3)=(1,3,2)$. 
In this case, we have $\Stab{\mathfrak{Q}_\lambda}{F} \cong \SymGrp_2 \times \SymGrp_2$, 
since the entries \quote{$2$} and \quote{$3$} appear twice on the left and right column, respectively. 
Therefore, the orbit $\mathfrak{Q}_\lambda \star F$ contains $6^2/4 = 9$ elements 
(and $W_F$ consists of the last $6$ elements in $\mathfrak{Q}_\lambda \star F$):
\begin{align*}
{\ytableausetup{nosmalltableaux}\begin{ytableau} 
2 & 3 \\
1 & 2 \\
2 & 3
\end{ytableau}} 
\; , \; 
{\ytableausetup{nosmalltableaux}\begin{ytableau} 
2 & 3 \\ 
2 & 3 \\ 
1 & 2
\end{ytableau}} 
\; , \; 
{\ytableausetup{nosmalltableaux}\begin{ytableau} 
1 & 2 \\ 
2 & 3 \\ 
2 & 3 
\end{ytableau}} 
\; , \; 
{\ytableausetup{nosmalltableaux}\begin{ytableau} 
2 & 2 \\ 
1 & 3 \\ 
2 & 3
\end{ytableau}} 
\; , \; 
{\ytableausetup{nosmalltableaux}\begin{ytableau} 
2 & 3 \\ 
1 & 3 \\ 
2 & 2
\end{ytableau}} 
\; , \; 
{\ytableausetup{nosmalltableaux}\begin{ytableau} 
2 & 2 \\ 
2 & 3 \\ 
1 & 3
\end{ytableau}} 
\; , \; 
{\ytableausetup{nosmalltableaux}\begin{ytableau} 
2 & 3 \\ 
2 & 2 \\ 
1 & 3
\end{ytableau}} 
\; , \; 
{\ytableausetup{nosmalltableaux}\begin{ytableau} 
1 & 3 \\ 
2 & 2 \\ 
2 & 3
\end{ytableau}} 
\; , \; 
{\ytableausetup{nosmalltableaux}\begin{ytableau} 
1 & 3 \\ 
2 & 3 \\ 
2 & 2
\end{ytableau}} 
.
\end{align*}
Finally, we have $r^F(1) = (2)$, $r^F(2) = (1,3,2)$, and $r^F(3) = (1,3)$. 
Thus $r^{F,\ord}(1) = r^F(1) = (2)$, $r^{F,\ord}(2) = (3,2,1)$, and $r^{F,\ord}(3) = (3,1)$. 
We then have the permutations $\tau_{F;2} = (132)$ and $\tau_{F;3} = (13)$, 
and we infer from~\eqref{eq:defpartitionlambda} that $\lambda^F(1) = (1)$, $\lambda^F(2) = \emptyset$, and $\lambda^F(3) = (1)$.
\end{example}

We now give a combinatorial formula for the fused Specht polynomials, equivalent to 
Equation~\eqref{eq:deffusedspecht} in Definition~\ref{def:fusedspecht}.
This formula is key to obtain the linear independence in the proof of Theorem~\ref{thm:theoremA},
and it is also of independent interest.  

\begin{proposition} \label{prop:combinatorialformula}
Fix $F \in \Fill$. To each $U \in (\mathfrak{Q}_\lambda \star F) \setminus W_F$, we associate the following monomial\footnote{Here, we use the convention that $\binom{a}{b}=0$ if $a<b$.}: 
\begin{align} \label{defmonomialmU}
m_U = m_U(x_1,\ldots ,x_\np) := \prod_{k=1}^\np \frac{(-1)^{\binom{s_k}{2}} \sign(\tau_{U;k})}{s_k!} \, \Schur_{\lambda^U(k)}(1^{s_k}) \, x_k^{|\lambda^U(k)|},
\end{align}
where $\Schur_{\lambda^U(k)}(1^{s_k})$ is the Schur polynomial associated with the partition $\lambda^U(k)$ 
and evaluated at $1$ for each of its $s_k$ variables 
(see Appendix~\ref{app:appendixschur} for the definition of Schur polynomials). 

Then, the fused Specht polynomial defined in~\eqref{def:fusedspecht} admits the following combinatorial formula:
\begin{align} \label{eq:combinatorialformula}
\mathcal F_F = |\Stab{\mathfrak{Q}_\lambda}{F}| \sum_{U \in (\mathfrak{Q}_\lambda \star F) \setminus W_F} \sign(\sigma_{F;U}) \; m_U .
\end{align}
\end{proposition}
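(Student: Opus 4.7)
My strategy is to expand $\idpt\cdot\mathcal P_{\tilde F}$ over the column-permutation group, apply $\idpt$ block by block to obtain a product of determinants, identify each determinant via the Jacobi bialternant, and regroup according to the orbit of $F$ under $\mathfrak Q_\lambda$. Combining the monomial formula~\eqref{eq:spechtmonomials} for $\mathcal P_{\tilde F}$ with the identification of $\Columns(\tilde F)$ with $\mathfrak Q_\lambda$ (valid by the construction of $\tilde F$ in Definition~\ref{def:tildeT}) and the factorization of $\idpt$ over color blocks, I obtain
\[
\idpt\cdot\mathcal P_{\tilde F}
\;=\; \sum_{\sigma\in\mathfrak Q_\lambda}\sign(\sigma)\prod_{k=1}^{\np}\frac{1}{s_k!}\det\!\bigl(x_{\summ_k+j-1}^{\,r^{\sigma.\tilde F}(\summ_k+i-1)-1}\bigr)_{i,j=1}^{s_k}.
\]
To each $\sigma$ I associate the filling $U_\sigma\in\mathfrak Q_\lambda\star F$ obtained from $\sigma.\tilde F$ by collapsing each label in $\{\summ_k,\ldots,\summ_{k+1}-1\}$ to the color $k$; the multiset of exponents in block $k$ is then precisely $\{r_i^{U_\sigma}(k)-1\}$. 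Terms with $U_\sigma\in W_F$ vanish because two exponents coincide within some block.

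For $U_\sigma\notin W_F$, I reorder the rows of each block determinant to place the exponents in strictly decreasing order. This reordering acquires a sign $\sign(\pi_\sigma^{(k)})\sign(\tau_{U_\sigma;k})$, where $\pi_\sigma^{(k)}\in\SymGrp_{s_k}$ is the restriction to color class $k$ of the label permutation $\pi_\sigma:=\sigma\,\sigma_{F;U_\sigma}^{-1}\in\SymGrp_{s_1}\times\cdots\times\SymGrp_{s_\np}$. The Jacobi bialternant formula then identifies the reordered determinant with $\det(x_j^{s_k-i})\cdot s_{\lambda^{U_\sigma}(k)}(x_{\summ_k},\ldots,x_{\summ_{k+1}-1})$, where $\lambda^{U_\sigma}(k)$ is the partition~\eqref{eq:defpartitionlambda}. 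Dividing by $\prod_{\summ_k\le i<j<\summ_{k+1}}(x_j-x_i)=(-1)^{\binom{s_k}{2}}\det(x_j^{s_k-i})$ cancels the block Vandermonde up to the sign $(-1)^{\binom{s_k}{2}}$, and the evaluation $x_{\summ_k}=\cdots=x_{\summ_{k+1}-1}\mapsto x_k$ yields $s_{\lambda^{U_\sigma}(k)}(1^{s_k})\,x_k^{|\lambda^{U_\sigma}(k)|}$ by homogeneity of Schur polynomials.

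Finally, I regroup the sum by cosets modulo $\Stab{\mathfrak Q_\lambda}{F}$: each $U\in(\mathfrak Q_\lambda\star F)\setminus W_F$ is hit by exactly $|\Stab{\mathfrak Q_\lambda}{F}|$ elements $\sigma\in\sigma_{F;U}\cdot\Stab{\mathfrak Q_\lambda}{F}$. The crucial sign identity
\[
\sign(\sigma)\prod_{k=1}^{\np}\sign(\pi_\sigma^{(k)})=\sign(\sigma_{F;U})
\]
follows by writing $\sigma=\sigma_{F;U}\tau$ with $\tau\in\Stab{\mathfrak Q_\lambda}{F}$ and observing that $\pi_\sigma=\sigma_{F;U}\tau\sigma_{F;U}^{-1}$ is conjugate in $\SymGrp_\Summed$ to $\tau$, whence $\sign(\pi_\sigma)=\sign(\tau)$ and therefore $\sign(\sigma)\sign(\pi_\sigma)=\sign(\sigma_{F;U})\sign(\tau)^2=\sign(\sigma_{F;U})$. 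Each coset therefore contributes $|\Stab{\mathfrak Q_\lambda}{F}|\,\sign(\sigma_{F;U})\,m_U$, yielding~\eqref{eq:combinatorialformula}. The main obstacle is precisely this sign bookkeeping --- the interplay between the column permutations in $\mathfrak Q_\lambda$ and the color-class label permutations in $\SymGrp_{s_1}\times\cdots\times\SymGrp_{s_\np}$ --- while the remaining ingredients reduce to standard Jacobi bialternant manipulation and the homogeneity of Schur polynomials.
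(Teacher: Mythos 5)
Your proposal is correct and follows essentially the same route as the paper's proof: expand $\mathcal P_{\tilde F}$ monomially over the column orbit, antisymmetrize block by block to get determinants that the bialternant formula turns into Schur polynomials, discard the terms with repeated exponents (the $W_F$ contribution), and then do the stabilizer/coset counting with the sign identity before evaluating via homogeneity of Schur polynomials. The only differences are organizational — you sum over $\sigma\in\mathfrak Q_\lambda$ and group by cosets of $\Stab{\mathfrak Q_\lambda}{F}$ with a conjugation argument for the sign, whereas the paper sums over numberings in the orbit and groups by stabilizer orbits, proving the same sign identity in two steps — and these amount to the same computation.
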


\begin{remark}
The evaluation at $(1^{s_k})$ of the Schur polynomial $\Schur_{\lambda^U(k)}$,
\begin{align*}
\Schur_{\lambda^U(k)}(1^{s_k}) = \prod_{1\leq i < j \leq s_k} \frac{\lambda^U_i(k)-\lambda^U_j(k)+j-i}{j-i} ,
\end{align*}
equals the number of column-strict (semistandard) Young tableaux of shape $\lambda^U(k)$ 
and entries in $\{1,\ldots,s_k\}$ (and any content). 
In particular, $\Schur_\emptyset(1^{s_k}) = 1 = \Schur\sub{1^{s_k}}(1^{s_k})$ for $s_k\geq 1$.
\end{remark}

\begin{example}
Consider again 
\begin{align*}
F \; = \; {\ytableausetup{nosmalltableaux}\begin{ytableau} 
1 & 2 \\ 
2 & 3 
\end{ytableau}}  
\; \in \; \Fill 
\end{align*}
with $\lambda=(2,2)$ and $\multii = (s_1,s_2,s_3)=(1,2,1)$. 
As explained in Example~\ref{ex:examplepartition22}, the set $\mathfrak{Q}_\lambda \star F$ has four elements, 
two of them lying in $W_F$, and $|\Stab{\mathfrak{Q}_\lambda}{F}|=1$. 
Hence, the fused Specht polynomial $\mathcal F_F = \mathcal F_F(x_1,x_2,x_3)$ in~\eqref{eq:combinatorialformula} 
is a linear combination of two monomials:
\begin{align*}
\mathcal F_{\;\ytableausetup{smalltableaux}\ytableaushort{12,23}} 
\; = \; m_{\;\ytableausetup{smalltableaux}\ytableaushort{12,23}} \; + \; m_{\;\ytableausetup{smalltableaux}\ytableaushort{23,12}} 
\; = \; m_F \; + \; m_U ,
\qquad \textnormal{where} \qquad 
U \; = \; {\ytableausetup{nosmalltableaux}\begin{ytableau} 
2 & 3 \\ 
1 & 2
\end{ytableau}} .
\end{align*}
The permutation $\sigma_{F;U}$ is a product of two transpositions, so $\sign(\sigma_{F;U})=1$. 
The monomial $m_F$ is then calculated from Equation~\eqref{defmonomialmU} as follows:
\begin{align*}
\; & m_{\;\ytableausetup{smalltableaux}\ytableaushort{12,23}} \\
= \; & \frac{(-1)^{\binom{s_1}{2} + \binom{s_2}{2} + \binom{s_3}{2}} \sign(\tau_{F;1} \tau_{F;2} \tau_{F;3})}{s_1!s_2!s_3!} 
\, \Schur_{\lambda^F(1)}(1^{s_1}) \, \Schur_{\lambda^F(2)}(1^{s_2}) \, \Schur_{\lambda^F(3)}(1^{s_3}) 
\, x_1^{|\lambda^F(1)|} x_2^{|\lambda^F(2)|} x_3^{|\lambda^F(3)|} \\
= \; & \frac{(-1)^{\binom{1}{2} + \binom{2}{2} + \binom{1}{2}}}{1!2!1!} 
\, \Schur_{\emptyset}(1) \, \Schur_{\emptyset}(1,1) \, \Schur\sub{1}(1) 
\, x_1^{|\emptyset|} x_2^{|\emptyset|} x_3^{|(1)|} 
\; = \; - \frac{x_3}2.
\end{align*}
The computation of 
\begin{align*}
m_{\;\ytableausetup{smalltableaux}\ytableaushort{23,12}} \; = \; \frac{x_1}2 
\end{align*}
is quite similar, with the difference that $r^U(2) = (1,2)$, so $r^{U,\ord}(2) = (2,1)$, 
which yield the transposition $\tau_{U;2} = (12)$ with $\sign(\tau_{U;2}) = -1$. 
We finally conclude that 
\begin{align*}
\mathcal F_{\;\ytableausetup{smalltableaux}\ytableaushort{12,23}} = \frac{x_1-x_3}{2}.
\end{align*}
\end{example}

\begin{example}
Consider then
\begin{align*}
F \; = \; {\ytableausetup{nosmalltableaux}\begin{ytableau} 
2 & 3 \\ 
1 & 2 \\ 
2 & 3
\end{ytableau}}  
\; \in \; \Fill 
\end{align*}
with $\lambda=(2,2,2)$ and $\multii = (s_1,s_2,s_3)=(1,3,2)$. 
As explained in Example~\ref{ex:examplepartition222}, 
the sets $\mathfrak{Q}_\lambda \star F$ and $W_F$ contain $9$ and $6$ elements, respectively. 
Hence, the fused Specht polynomial $\mathcal F_F$ is a linear combination of three monomials. 
Each monomial is weighted by a factor $|\Stab{\mathfrak{Q}_\lambda}{F}|=4$ 
and by the sign of the shortest permutation sending $U$ to $F$. 
More precisely, straightforward computations show that
\begin{align*}
\mathcal F_{\;\ytableausetup{smalltableaux}\ytableaushort{23,12,23}} 
\; = \; 4 \, m_{\;\ytableausetup{smalltableaux}\ytableaushort{23,12,23}} 
\; + \;
4 \, m_{\;\ytableausetup{smalltableaux}\ytableaushort{23,23,12}} 
\; + \;
4 \, m_{\;\ytableausetup{smalltableaux}\ytableaushort{12,23,23}} . 
\end{align*}
The monomial $m_F$ is then calculated from Equation~\eqref{defmonomialmU} as follows:
\begin{align*}
\; & m_{\;\ytableausetup{smalltableaux}\ytableaushort{23,12,23}} \\
= \; & \frac{(-1)^{\binom{s_1}{2} + \binom{s_2}{2} + \binom{s_3}{2}} \sign(\tau_{F;1} \tau_{F;2} \tau_{F;3})}{s_1!s_2!s_3!} 
\, \Schur_{\lambda^F(1)}(1^{s_1}) \, \Schur_{\lambda^F(2)}(1^{s_2}) \, \Schur_{\lambda^F(3)}(1^{s_3}) 
\, x_1^{|\lambda^F(1)|} x_2^{|\lambda^F(2)|} x_3^{|\lambda^F(3)|} \\
= \; & - \frac{(-1)^{\binom{1}{2} + \binom{3}{2} + \binom{2}{2}}}{1!3!2!} 
\, \Schur\sub{1}(1) \, \Schur_{\emptyset}(1,1,1) \, \Schur\sub{1}(1,1) 
\, x_1^{|(1)|} x_2^{|\emptyset|} x_3^{|(1)|} \\
= \; & - \frac{1}{12} \cdot 1 \cdot 1 \cdot 2 \cdot x_1 x_3 
\; = \; - \frac{x_1 x_3}{6}.
\end{align*}
Note that the new subtlety in this example is that the Schur polynomial $\Schur\sub{1}(1,1)$ equals $2$. 
The other monomials are computed in a similar way. 
Altogether, we find that
\begin{align*}
\mathcal F_{\;\ytableausetup{smalltableaux}\ytableaushort{23,12,23}} 
= - \frac{2x_1 x_3}{3} + \frac{x_1^2}3 + \frac{x_3^2}3 = \frac{(x_1-x_3)^2}{3} .
\end{align*}
\end{example}

With the notation explained, we now proceed with the proof of Proposition~\ref{prop:combinatorialformula}.

\begin{proof}[Proof of Proposition~\ref{prop:combinatorialformula}]
The proof consists of an explicit computation of the formula~\eqref{eq:deffusedspecht} in Definition~\ref{def:fusedspecht} 
utilizing the expression~\eqref{eq:spechtmonomials} for the Specht polynomial as a sum over monomials. 
First of all, we write the Specht polynomial $\mathcal P_{\tilde{F}} = \mathcal P_{\tilde{F}}(x_1,\ldots ,x_\Summed)$ as follows:
\begin{align} \label{rewritespechtpolynomial}
\mathcal P_{\tilde{F}} = \sum_{N \in \mathfrak{Q}_\lambda \star \tilde{F}} \sign(\sigma_{\tilde{F};N}) \prod_{k=1}^\np \prod_{i=\summ_k}^{\summ_{k+1}-1} x_i^{l_i^{N}(k)-1},
\end{align}
where $(l^N_i(k))_{i=1}^{s_k} = (r^N(\summ_k),r^N(\summ_k+1),\ldots ,r^N(\summ_{k+1}-1))$ is the sequence of row numbers of the entries $\summ_k,\ldots,\summ_{k+1}-1$. 
Recall that $\idpt. \mathcal P_{\tilde{F}}$ in~\eqref{def:fusedspecht} is the antisymmetrization of $\mathcal P_{\tilde{F}}$ with respect to the groups of variables 
$x_{\summ_k},\ldots ,x_{\summ_{k+1}-1}$ for all $k \in \llbracket 1, \np\rrbracket$. 
The formula~\eqref{def:fusedspecht} can readily be brought to the following form:
\begin{align}\label{eq:fusedspechtsum}
\mathcal F_F = \left[\sum_{N \in \mathfrak{Q}_\lambda \star \tilde{F}} \sign(\sigma_{\tilde{F};N}) \prod_{k=1}^\np \frac{1}{s_k! (-1)^{\binom{s_k}{2}}} \frac{\sum_{\sigma \in \SymGrp_{s_k}} \sign(\sigma) \prod_{i=\summ_k}^{\summ_{k+1}-1} x_{\sigma(i)}^{l^N_i(k)-1}}{\prod_{\summ_k\leq i<j<\summ_{k+1}} (x_i-x_j)} \right]_{\textnormal{eval}}.
 \end{align}
(Note that we introduced a factor $(-1)^{\binom{s_k}{2}}$ to replace $x_j-x_i$ by $x_i-x_j$ in the denominator.) 
Now, denote $\hat{W}_k := \{N \in \mathfrak{Q}_\lambda \star \tilde{F} \; | \; l^N_i(k) = l^N_j(k) \; \textnormal{for some} \; (i,j) \in \llbracket 1,s_k \rrbracket^2, \, i \neq j \}$, and set $\hat{W} := \bigcup_{k=1}^\np \hat{W}_k$. 
Any numbering $N \in \hat{W}$ leads to a vanishing term in the sum~\eqref{eq:fusedspechtsum} 
because the product $\prod_{m=\summ_k}^{\summ_{k+1}-1} x_m^{l^N_m(k)-1}$ is a symmetric function of at least two variables, 
which therefore vanishes upon antisymmetrization. 
Thus, we obtain
\begin{align*}
\mathcal F_F = \left[\sum_{N \in (\mathfrak{Q}_\lambda \star \tilde{F})\setminus \hat{W}} \sign(\sigma_{\tilde{F};N}) \prod_{k=1}^\np \frac{1}{s_k! (-1)^{\binom{s_k}{2}}} \frac{\sum_{\sigma \in \SymGrp_{s_k}} \sign(\sigma) \prod_{i=\summ_k}^{\summ_{k+1}-1} x_{\sigma(i)}^{l^N_i(k)-1}}{\prod_{\summ_k\leq i<j<\summ_{k+1}} (x_i-x_j)} \right]_{\textnormal{eval}}.
\end{align*}

Let $l^{N,\ord}_i(k)$ be the ordering of $l^N_i(k)$ in decreasing order, 
and let $\tau_{N;k}$ be a permutation such that $(l^N_{\tau_{N;k}(i)}(k))_{i=1}^{s_k} = (l^{N,\ord}_i(k))_{i=1}^{s_k}$. 
We reorganize the sum in the numerator as 
\begin{align*}
\mathcal F_F = \left[\sum_{N \in (\mathfrak{Q}_\lambda \star \tilde{F})\setminus \hat{W}} \sign(\sigma_{\tilde{F};N}) \prod_{k=1}^\np \frac{\sign(\tau_{N;k})}{s_k! (-1)^{\binom{s_k}{2}}} \frac{\sum_{\sigma \in \SymGrp_{s_k}} \sign(\sigma) \prod_{i=\summ_k}^{\summ_{k+1}-1} x_{\sigma(i)}^{l^{N,\ord}_i(k)-1}}{\prod_{\summ_k\leq i<j<\summ_{k+1}} (x_i-x_j)} \right]_{\textnormal{eval}}.
\end{align*}
For a given $N \in (\mathfrak{Q}_\lambda \star \tilde{F})\setminus \hat{W}$, 
we introduce the partition\footnote{Since $N \notin \hat{W}$, it follows that $\big(\hat{\lambda}_i^N(k) \big)_{i=1}^{s_k}$ is a partition, similarly as in Remark~\ref{rem:remarklambdaU}.}
\begin{align*}
\big( \hat{\lambda}_i^N(k) \big)_{i=1}^{s_k} = \big( l^{N,\ord}_i(k) - s_k + i - 1 \big)_{i=1}^{s_k}.
\end{align*}
We now recognize the Schur polynomial~\eqref{eq:defslambda} discussed in Appendix~\ref{app:appendixschur}:
\begin{align*}
\frac{\sum_{\sigma \in \SymGrp_{s_k}} \sign(\sigma) \prod_{i=\summ_k}^{\summ_{k+1}-1} x_{\sigma(i)}^{\hat{\lambda}_i^N(k)+s_k-i}}{\prod_{\summ_k\leq i<j<\summ_{k+1}} (x_i-x_j)} = \Schur_{\hat{\lambda}^N(k)}(x_{\summ_k},\ldots ,x_{\summ_{k+1}-1}).
\end{align*}
Therefore, we infer that
\begin{align*}
\mathcal F_F= \left[\sum_{N \in (\mathfrak{Q}_\lambda \star \tilde{F})\setminus \hat{W}} \sign(\sigma_{\tilde{F};N}) \prod_{k=1}^\np \frac{\sign(\tau_{N;k})}{s_k! (-1)^{\binom{s_k}{2}}} 
\, \Schur_{\hat{\lambda}^N(k)}(x_{\summ_k},\ldots ,x_{\summ_{k+1}-1}) \right]_{\textnormal{eval}}.
\end{align*}

We now investigate the sum over the numberings $N$ in more detail. We have
\begin{align*}
(\mathfrak{Q}_\lambda \star \tilde{F}) \setminus \hat{W} = \bigcup_{U \in (\mathfrak{Q}_\lambda \star F)\setminus W_F} (\Stab{\mathfrak{Q}_\lambda}{U}) \star \tilde{U},
\end{align*}
where $\tilde{U}$ is the numbering associated with the filling $U$, 
and $(\Stab{\mathfrak{Q}_\lambda}{U}) \star \tilde{U}$ is the orbit of $\tilde{U}$ under the action of $\Stab{\mathfrak{Q}_\lambda}{U}$. 
Since the right-hand side is clearly a disjoint union of sets, we see that $\mathcal F_F$ equals
\begin{align} \label{eq:intermediateeq}
\hspace*{-4mm}
\left[\sum_{U \in (\mathfrak{Q}_\lambda \star F)\setminus W_F} \sum_{N \in (\Stab{\mathfrak{Q}_\lambda}{U}) \star \tilde{U}} \sign(\sigma_{\tilde{F};N}) \prod_{k=1}^\np \frac{\sign(\tau_{N;k})}{s_k! (-1)^{\binom{s_k}{2}}} 
\, \Schur_{\hat{\lambda}^N(k)}(x_{\summ_k},\ldots ,x_{\summ_{k+1}-1})  \right]_{\textnormal{eval}}.
\end{align}

The last step of the proof consists of showing that all of the terms 
in the sum over $N \in (\Stab{\mathfrak{Q}_\lambda}{U}) \star \tilde{U}$ are equal. 
To this end, let us consider some filling $U \in (\mathfrak{Q}_\lambda \star F)\setminus W_F$ and two numberings $N_1, N_2 \in (\Stab{\mathfrak{Q}_\lambda}{U}) \star \tilde{U}$ with $N_1 \neq N_2$. 
The key observation is that, although the sequences $l^{N_1}(k)$ and $l^{N_2}(k)$ are different for at least one index $k \in \llbracket 1, \np\rrbracket$, 
we have $l^{N_1,\ord}(k)=l^{N_2,\ord}(k)$ and therefore $\hat\lambda^{N_1}(k) = \hat\lambda^{N_2}(k)$ for all $k \in \llbracket 1, \np\rrbracket$. 
(This observation holds because $\Stab{\mathfrak{Q}_\lambda}{U}$ can only permute numbers in a subset $\{q_k,\cdots,q_{k+1}-1\}$.) 
Hence, we have $\hat{\lambda}^N(k)=\lambda^U(k)$ for all $N \in (\Stab{\mathfrak{Q}_\lambda}{U}) \star \tilde{U}$. 
It remains to prove that
\begin{align} \label{eq:sign1}
\sign(\sigma_{\tilde{F};N_1}) \prod_{k=1}^\np \sign(\tau_{N_1;k}) = \sign(\sigma_{\tilde{F};N_2}) \prod_{k=1}^\np \sign(\tau_{N_2;k}).
\end{align}
Because there exists a permutation $\omega \in \Stab{\mathfrak{Q}_\lambda}{U}$ such that $\omega \star N_1 = N_2$, we have
\begin{align} \label{eq:sign2}
\sign(\sigma_{\tilde{F};N_2}) = \sign(\sigma_{\tilde{F};\omega \star N_1}) = \sign(\omega) \, \sign(\sigma_{\tilde{F};N_1}).
\end{align}
Moreover, on the one hand, $\omega$ takes the form $\omega = \prod_{k=1}^\np \omega_k$, 
where each $\omega_k$ acts on the boxes containing the entries $\summ_k,\ldots ,\summ_{k+1}-1$, 
while on the other hand, we have
\begin{align*}
l^{N_2}(k) = l^{\omega_k \star N_1}(k), \qquad \textnormal{for all } k \in \llbracket 1, \np\rrbracket ,
\end{align*}
which implies that
\begin{align} \label{eq:sign3}
\sign(\tau_{N_2;k}) = \sign(\tau_{\omega_k \star N_1; k}) = \sign(\omega_k) \, \sign(\tau_{N_1;k}).
\end{align}
Therefore~\eqref{eq:sign1} follows from~(\ref{eq:sign2},~\ref{eq:sign3}). 
We thereby conclude that the second sum in~\eqref{eq:intermediateeq} contains $|\Stab{\mathfrak{Q}_\lambda}{U}|$ times the same term, 
and we in fact have $|\Stab{\mathfrak{Q}_\lambda}{U}| = |\Stab{\mathfrak{Q}_\lambda}{F}|$. 
Taking $\tilde{U}$ for the representative of the orbit $(\Stab{\mathfrak{Q}_\lambda}{U}) \star \tilde{U}$, we finally obtain
\begin{align*}
\mathcal F_F = |\Stab{\mathfrak{Q}_\lambda}{F}| \left[\sum_{U \in (\mathfrak{Q}_\lambda \star F)\setminus W_F} \sign(\sigma_{\tilde{F};\tilde{U}}) \prod_{k=1}^\np \frac{\sign(\tau_{\tilde{U}; k})}{s_k! (-1)^{\binom{s_k}{2}}} 
\, \Schur_{\lambda^{ U}(k)}(x_{\summ_k},\ldots ,x_{\summ_{k+1}-1}) \right]_{\textnormal{eval}}.
\end{align*}

It finally remains to perform the evaluations of variables $x_{\summ_k}=x_{\summ_k+1}=\cdots =x_{\summ_{k+1}-1}$ for all $k \in \llbracket 1, \np\rrbracket$. 
First of all, note that $\sign(\tau_{\tilde{U}; k}) = \sign(\tau_{U;k})$. 
Permutations between $F$ and $U$ may differ by products of transpositions exchanging boxes in the same column and having the same entry. 
However, the permutation sending $\Tilde{F}$ to $\Tilde{U}$ does not contain any such transposition in its decomposition. 
Thus, the permutation sending $\tilde{F}$ to $\tilde{U}$ is the shortest permutation sending $F$ to $U$. 
Hence, we have $\sign(\sigma_{\tilde{F};\tilde{U}}) = \sign(\sigma_{F;U})$. 
Lastly, the evaluation of the Schur polynomial is obtained from the identity~\eqref{eq:evaluationschur} from Appendix~\ref{app:appendixschur}. 
\end{proof}

\begin{remark} \label{remarkfusedspechtforvalence1}
If $s_k=1$ for all $k$, then the filling $F$ becomes a numbering and the fused Specht polynomial $\mathcal F_F$ 
in~\eqref{eq:combinatorialformula} becomes a Specht polynomial $\mathcal P_F$. To see this, let us choose $N \in \NB$ in the formula~\eqref{eq:combinatorialformula}. In this case $W_N$ is the empty set and $\Stab{\mathfrak{Q}_\lambda}{N}$ is the trivial group. Moreover, since $s_k=1$ for all $k$, 
the sequences $(r_i^U(k))_{i=1}^{s_k}$ contain one element only, which is the row number $r^U(k)$ where the entry \quote{$k$} lies. Thus, $\tau_{U;k}$ is the identity permutation, and the partition in~\eqref{eq:defpartitionlambda} becomes $(r^U(k)-1)$. 
This implies that $s_{\lambda^U(k)}(1^{s_k})=1$. 
Altogether, the formula~\eqref{eq:combinatorialformula} reduces to~\eqref{eq:spechtmonomials}, as expected:
\begin{align*}
\mathcal F_N = \sum_{U \in \Columns(N).N} \sign(\sigma_{N;U}) \prod_{k=1}^\np x_k^{r^U(k)-1} =  \mathcal P_N.
\end{align*}
\end{remark}

\subsubsection{Linear independence of the fused Specht polynomials}

We will next show that the set $\{\mathcal F_T \;|\; T\in \RSYT\}$ is a collection of non-zero vectors (Lemma~\ref{lem:lemmanonzero1}). 
In the case where $\lambda$ is a Young diagram with two columns, we show in addition that 
$\{\mathcal F_T \;|\; T\in \RSYT\}$ is a set of linearly independent vectors (Proposition~\ref{prop:proplinearindepfusedspecht}). 
This implies Theorem~\ref{thm:theoremA}.

\begin{lemma} \label{lem:lemmanonzero1}
Let $T \in \RSYT$ and consider the set 
$\{(|\lambda^{U}(k)|)_{k\in\llbracket 1,d \rrbracket} \;|\; U\in (\mathfrak{Q}_\lambda \star T) \setminus W_T \}$, 
where $(\mathfrak{Q}_\lambda \star T) \setminus W_T$ indexes the sum in~\eqref{eq:combinatorialformula}; recall also Notation~\ref{not:combinatorial}. 
Then, $(|\lambda^{T}(k)|)_{k\in\llbracket 1, d \rrbracket}$ is the unique minimum for the lexicographic order in this set. 
Hence, the coefficient of the monomial 
\begin{align} \label{eq:monomial}
\prod_{k=1}^{d} x_k^{|\lambda^{T}(k)|} 
\end{align}
in $\mathcal{F}_T$ equals 
(and implies in particular that $\mathcal{F}_T$ is non-zero)
\begin{align} \label{eq:monomial_coef}
|\Stab{\mathfrak{Q}_\lambda}{T}|\prod_{k=1}^{d}  \frac{\sign(\tau_{T;k})}{s_k!(-1)^{\binom{s_k}{2}}} \, \Schur_{\lambda^T(k)}(1^{s_k}) ,
\end{align}
where $\Schur_{\lambda^T(k)}(1^{s_k})$ is the Schur polynomial associated with the partition $\lambda^T(k)$ 
and evaluated at $1$ for each of its $s_k$ variables 
(see Appendix~\ref{app:appendixschur} for the definition of Schur polynomials). 
\end{lemma}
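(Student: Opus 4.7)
The plan is to extract the coefficient from the combinatorial formula~\eqref{eq:combinatorialformula} of Proposition~\ref{prop:combinatorialformula}: show that among $U \in (\mathfrak{Q}_\lambda \star T) \setminus W_T$ only $U = T$ contributes to the monomial $\prod_k x_k^{|\lambda^T(k)|}$, and then read off its coefficient and check non-vanishing. The crux is the lex-minimality claim for $(|\lambda^T(k)|)_k$; everything else is bookkeeping.

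First, I would unpack~\eqref{eq:defpartitionlambda} by summation:
\begin{align*}
|\lambda^U(k)|
\; = \; \sum_{i=1}^{s_k} \bigl( r_i^{U,\ord}(k) - s_k + i - 1 \bigr)
\; = \; \sum_{i=1}^{s_k} r_i^U(k) - s_k^2 + \binom{s_k}{2} ,
\end{align*}
so that lex-minimizing $(|\lambda^U(k)|)_k$ is the same as lex-minimizing the tuple of row-sums $\bigl(\sum_i r_i^U(k)\bigr)_k$ of the boxes labelled \quote{$k$} in $U$.

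Second, I would argue by induction on $k$. Since $\mathfrak{Q}_\lambda$ permutes within each column, the multiset of entries in each column is an orbit invariant, so the columns containing \quote{$k$} and its multiplicities per column are determined by $T$. For $k=1$ the row-sum $\sum_i r_i^U(1)$ decouples column by column and is minimized uniquely by placing the \quote{$1$}s in the topmost rows of each column where they appear --- exactly as in $T$, because $T$ being row-strict forces its columns to be weakly increasing from top to bottom, so the \quote{$1$}s already occupy the top. For the inductive step, assuming that minimality at steps $j<k$ has pinned down the positions of all entries $<k$ to match $T$, the rows still available for \quote{$k$} in each column are exactly those immediately below the entries $<k$, and the same column-wise decoupling forces \quote{$k$} to the topmost of these rows, again matching $T$. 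Note also that this lex-minimum automatically lies in $(\mathfrak{Q}_\lambda \star T) \setminus W_T$ since $T$ is row-strict.

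Consequently, $T$ is the unique element of $(\mathfrak{Q}_\lambda \star T) \setminus W_T$ whose tuple $(|\lambda^U(k)|)_k$ attains the lex-minimum. Reading off the coefficient of $\prod_k x_k^{|\lambda^T(k)|}$ from~\eqref{eq:combinatorialformula}--\eqref{defmonomialmU}, only the $U=T$ summand contributes, and with $\sigma_{T;T}=\mathrm{id}$ so that $\sign(\sigma_{T;T}) = +1$, one obtains precisely~\eqref{eq:monomial_coef}. Non-vanishing of $\mathcal{F}_T$ follows because $\Schur_{\lambda^T(k)}(1^{s_k})$ is a positive integer (it counts column-strict Young tableaux of shape $\lambda^T(k)$), $|\Stab{\mathfrak{Q}_\lambda}{T}|$ is a positive integer, and $\sign(\tau_{T;k})$ and $s_k!$ are nonzero.

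The main obstacle, although modest, is the careful inductive argument: one must verify that minimality at the earlier steps really fixes the positions of all entries smaller than $k$ (not just their column-wise multisets, which are orbit invariants, but the actual rows), and that the minimization at step $k$ decouples column by column. Both rest on the invariance of column multisets under $\mathfrak{Q}_\lambda$, together with the row-strictness of $T$ forcing the top-to-bottom column entries of $T$ itself to be sorted in weakly increasing order.
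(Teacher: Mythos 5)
Your proof is correct and follows essentially the same route as the paper's: both reduce lex-minimality of $(|\lambda^U(k)|)_k$ to lex-minimality of the row-sums $\bigl(\sum_i r_i^U(k)\bigr)_k$ and exploit that row-strictness of $T$ makes its columns weakly increasing, so each entry value sits in the topmost available rows of each column; the paper phrases this as a first-index-of-difference comparison for an arbitrary $U\neq T$, while you organize it as an induction on $k$, but the combinatorial core is identical. The coefficient extraction and the positivity of $\Schur_{\lambda^T(k)}(1^{s_k})$ are handled the same way in both.
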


\begin{proof}
Fix $U\in (\mathfrak{Q}_\lambda \star T) \setminus W_T$ such that $U\neq T$. 
Let $i\in \llbracket 1,d\rrbracket$ be the smallest index such that $U$ and $T$ differ at the positions of $i$. 
Consider two \quote{skew} Young tableaux $T'$ and $U'$ obtained from $T$ and $U$ 
by removing boxes containing a number in $\llbracket1,i-1\rrbracket$. 
Since the entries of $T'$ are weakly increasing along the columns, we see that 
\begin{align*}
\sum_{j=1}^{s_i}r_j^U(i) > \sum_{j=1}^{s_i}r_j^T(i),
\end{align*}
which, using the definition~\eqref{eq:defpartitionlambda} leads to
\begin{align*}
(|\lambda^{U}(k)|)_{k\in\llbracket 1, d \rrbracket}>(|\lambda^{T}(k)|)_{k\in\llbracket 1, d \rrbracket}.
\end{align*}
We then infer that $(|\lambda^{T}(k)|)_{k\in\llbracket 1, d \rrbracket}$ is indeed 
a minimum for the lexicographic order in the set 
$\{(|\lambda^{U}(k)|)_{k\in\llbracket 1,d \rrbracket} \;|\; U\in (\mathfrak{Q}_\lambda \star T) \setminus W_T \}$.
Consequently, the only monomial in
\begin{align*}
\mathcal F_T = |\Stab{\mathfrak{Q}_\lambda}{T}| \sum_{U \in (\mathfrak{Q}_\lambda \star T) \setminus W_T} \sign(\sigma_{T;U}) \; m_U
\end{align*}
proportional to~\eqref{eq:monomial} is obtained at $U=T$. 
This gives the coefficient~\eqref{eq:monomial_coef}.
\end{proof}

\begin{lemma} \label{lem:lemmainj1} 
Let $\lambda$ be a Young diagram with two columns. 
The map $T\in \RSYT \mapsto (|\lambda^{T}(k)|)_{k\in\llbracket 1, d \rrbracket}$ is injective. 
\end{lemma}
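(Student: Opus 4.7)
The plan is to induct on $\Summed$, peeling off the largest entry $d$ and showing that the single number $|\lambda^T(d)|$ already pins down its entire placement in $T$. Write the two-column shape as $\lambda = (2^a, 1^b)$. Row-strictness forces every copy of $d$ in the left column to lie in rows $a+1,\ldots,a+b$, since otherwise the right-hand neighbor would have to strictly exceed $d$, which is impossible as $d$ is maximal. Combined with the column-weakly-increasing condition, this means that if $\ell$ denotes the number of $d$'s in the left column, then $d$ fills the last $\ell$ positions of the left column and the last $s_d - \ell$ positions of the right column, with $\ell$ ranging over the interval $[\max(0,\,s_d-a),\min(s_d,b)]$. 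Consequently the positions of all copies of $d$ in $T$ are determined by $\ell$ alone.

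A direct computation starting from $|\lambda^T(d)| = \sum_{i=1}^{s_d} r_i^{T,\ord}(d) - s_d^2 + \binom{s_d}{2}$ yields
\begin{align*}
|\lambda^T(d)| \; = \; s_d(a - s_d) + \ell(b + s_d - \ell) .
\end{align*}
The map $\ell \mapsto \ell(b+s_d-\ell)$ is a downward parabola with maximum at $\ell = (b+s_d)/2$, and the trivial inequality $\min(s_d,b) \leq (b+s_d)/2$ shows that the admissible range of $\ell$ lies entirely in the strictly increasing branch. Hence $|\lambda^T(d)|$ determines $\ell$, and therefore the full placement of $d$ in $T$.

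To conclude, I take $T, T' \in \RSYT$ with $|\lambda^T(k)| = |\lambda^{T'}(k)|$ for all $k$. The preceding step shows that $d$ occupies the same cells in both. Removing those cells leaves row-strict tableaux $T^\sharp, T'^\sharp$ of the two-column shape $(2^{a-s_d+\ell}, 1^{b+s_d-2\ell})$ with content $(s_1,\ldots,s_{d-1})$; the same inequality $\ell \leq (b+s_d)/2$ is exactly what makes the new shape a valid Young diagram. Removing $d$ does not alter the rows where entries $k < d$ sit, so $|\lambda^{T^\sharp}(k)| = |\lambda^{T'^\sharp}(k)|$ for all $k < d$, and the inductive hypothesis yields $T^\sharp = T'^\sharp$; combined with the common placement of $d$, this gives $T = T'$. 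The principal obstacle is the a priori non-injectivity of the quadratic dependence of $|\lambda^T(d)|$ on $\ell$, which is defused precisely by the inequality $\min(s_d,b) \leq (b+s_d)/2$ that confines $\ell$ to the monotone branch of the parabola; this is exactly where the two-column hypothesis enters essentially, consistent with the apparent difficulty of Conjecture~\ref{conj:theoremA} in the general case.
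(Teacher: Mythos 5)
Your proof is correct. It hinges on the same underlying observation as the paper's — in a two-column shape, the cells occupied by the $s_k$ copies of a fixed value $k$ form a one-parameter family indexed by the column split $\ell$, and the corresponding row-sum (hence $|\lambda^T(k)|$) is strictly monotone in $\ell$ over the admissible range — but you run the argument in mirror image. The paper argues by minimal counterexample at the \emph{smallest} index where $T$ and $T'$ differ, working in the skew shape left after deleting the smaller entries, and simply asserts that distinct placements yield distinct row sums; you instead peel off the \emph{largest} entry $d$ and induct on the size of the tableau, which additionally obliges you to check that the truncations are row-strict tableaux of a valid two-column shape (your inequality $\ell \leq (b+s_d)/2$ does double duty there, and you handle this correctly). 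What your route buys is the closed form $|\lambda^T(d)| = s_d(a-s_d) + \ell(b+s_d-\ell)$ together with the parabola argument, which makes fully explicit the monotonicity that the paper leaves implicit — in the paper's bottom-up version that step still silently uses the row-strictness constraint forcing the copies of $i$ in the second column to sit strictly above the surviving part of the first column. Your closing remark also correctly locates where the two-column hypothesis enters, consistent with the paper's three-column counterexample given right after the lemma.
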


\begin{proof}
Let $T,T'\in\RSYT$ such that 
$(|\lambda^{T}(k)|)_{k\in\llbracket 1, d \rrbracket}=(|\lambda^{T'}(k)|)_{k\in\llbracket 1, d \rrbracket}$. 
Suppose $T\neq T'$. 
Let $i$ be the smallest index such that $T$ and $T'$ differ at the positions of $i$. 
Consider two \quote{skew} Young diagrams obtained by keeping only boxes containing $i$ in $T$ (resp.~$T'$):
both consist of either one single column diagram, or two disconnected column diagrams.
Because these skew diagrams are different for $T$ and $T'$, both $T$ and $T'$ have two columns 
and have the same set of removed boxes, we have
\begin{align*}
\sum_{j=1}^{s_i}r_j^{T'}(i) \neq \sum_{j=1}^{s_i}r_j^T(i) .
\end{align*}
However, this implies $|\lambda^{T'}(i)|\neq |\lambda^{T}(i)|$, a contradiction.
Hence, we deduce that $T =T'$.
\end{proof}

\begin{remark}
Lemma~\ref{lem:lemmainj1} does not hold when $\lambda$ is a Young diagram with more than two columns. 
As a counterexample, let us consider the two row-strict Young tableaux   
\begin{align*}
T_1 \; = \; {\ytableausetup{nosmalltableaux}\begin{ytableau} 
1 & 2 & 3 \\
1 & 4 \\
1 & 4  \\
3 
\end{ytableau}} 
\qquad \qquad \textnormal{and} \qquad \qquad
T_2 \; = \; {\ytableausetup{nosmalltableaux}\begin{ytableau} 
1 & 2 & 4 \\
1 & 3 \\
1 & 3  \\
4 
\end{ytableau}} 
\end{align*}
In this case, we have
\begin{align*}
\sum_{j=1}^{2}r_j^{T_1}(3) = \sum_{j=1}^{2}r_j^{T_2}(3) = 5 = \sum_{j=1}^{2}r_j^{T_1}(4) = \sum_{j=1}^{2}r_j^{T_2}(4) ,
\end{align*}
which implies in particular that
\begin{align*}
|\lambda^{T_1}(i)| = |\lambda^{T_2}(i)|, \qquad \textnormal{for all } i=1,2,3,4 .
\end{align*}
Hence, the map $T\in \RSYT \mapsto (|\lambda^{T}(k)|)_{k\in\llbracket 1, d \rrbracket}$ is not injective in this case.
\end{remark}

\begin{proposition} \label{prop:proplinearindepfusedspecht}
Let $\lambda$ be a Young diagram with two columns. 
The set $\{\mathcal F_T \; | \; T \in \RSYT\}$ is linearly independent. 
\end{proposition}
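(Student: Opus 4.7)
The plan is to deduce linear independence from the previous two lemmas by a standard leading term argument, where the leading monomial is taken in the sense of \emph{lex-minimal} exponents (not maximal). For each $T \in \RSYT$, Lemma~\ref{lem:lemmanonzero1} singles out a distinguished monomial $\prod_{k=1}^d x_k^{|\lambda^T(k)|}$ in $\mathcal{F}_T$ whose coefficient~\eqref{eq:monomial_coef} is nonzero; moreover, every other monomial appearing in $\mathcal{F}_T$ has exponent tuple strictly larger than $(|\lambda^T(k)|)_{k=1}^d$ in the lexicographic order, because the latter tuple is the \emph{unique} lex-minimum of the set indexing the sum in the combinatorial formula~\eqref{eq:combinatorialformula}. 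Lemma~\ref{lem:lemmainj1}, which is where the two-column assumption enters, ensures that the assignment $T \mapsto (|\lambda^T(k)|)_{k=1}^d$ is injective on $\RSYT$, so these distinguished exponent tuples are pairwise distinct.

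Assume for contradiction that there is a nontrivial linear relation $\sum_{T \in \RSYT} c_T \mathcal{F}_T = 0$. Among the $T$ with $c_T \neq 0$, select $T^{*}$ for which the tuple $(|\lambda^{T^{*}}(k)|)_{k=1}^d$ is minimal for the lexicographic order; such a $T^{*}$ exists and is unique thanks to Lemma~\ref{lem:lemmainj1}. I then read off the coefficient of the monomial $\prod_{k=1}^d x_k^{|\lambda^{T^{*}}(k)|}$ on both sides of the relation. By Lemma~\ref{lem:lemmanonzero1}, this monomial contributes to $\mathcal{F}_{T^{*}}$ with the nonzero scalar~\eqref{eq:monomial_coef}. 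For any other $T \in \RSYT$ with $c_T \neq 0$, the choice of $T^{*}$ gives $(|\lambda^{T}(k)|)_{k=1}^d > (|\lambda^{T^{*}}(k)|)_{k=1}^d$ in lex order, so every monomial that appears in $\mathcal{F}_T$ has exponent tuple strictly exceeding $(|\lambda^{T^{*}}(k)|)_{k=1}^d$; in particular, the monomial $\prod_{k=1}^d x_k^{|\lambda^{T^{*}}(k)|}$ does not appear in $\mathcal{F}_T$ at all.

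Consequently, the coefficient of $\prod_{k=1}^d x_k^{|\lambda^{T^{*}}(k)|}$ in $\sum_T c_T \mathcal{F}_T$ equals $c_{T^{*}}$ times the nonzero scalar~\eqref{eq:monomial_coef}, which forces $c_{T^{*}} = 0$, contradicting the choice of $T^{*}$. Hence no nontrivial linear relation exists, and $\{\mathcal{F}_T \mid T \in \RSYT\}$ is linearly independent. In this plan the only nontrivial inputs are Lemmas~\ref{lem:lemmanonzero1} and~\ref{lem:lemmainj1}; the combinatorial formula of Proposition~\ref{prop:combinatorialformula} is already absorbed into Lemma~\ref{lem:lemmanonzero1}. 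The genuine obstacle in the whole argument lies upstream, in the combinatorial injectivity of Lemma~\ref{lem:lemmainj1}, which is exactly the place where the restriction to two-column shapes becomes essential and where a proof of the conjectural general case (Conjecture~\ref{conj:theoremA}) would have to break new ground.
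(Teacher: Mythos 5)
Your argument is correct and is essentially identical to the paper's proof: both select the $T$ with $\alpha_T\neq 0$ whose exponent tuple $(|\lambda^{T}(k)|)_k$ is lex-minimal (unique by Lemma~\ref{lem:lemmainj1}) and then use Lemma~\ref{lem:lemmanonzero1} to extract a nonzero coefficient of the monomial~\eqref{eq:monomial}, yielding the contradiction. No gaps.
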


\begin{proof}
Suppose that there is a linear relation with coefficients 
$(\alpha_T)$ not identically zero:
\begin{align} \label{eq:linear_relation}
\sum_{T\in \RSYT} \alpha_T \mathcal{F}_T=0 , \qquad \alpha_T \in \bR 
, \quad \textnormal{ and } \quad 
\alpha_T \neq 0 \textnormal{ for some } T \in \RSYT .
\end{align}
Take $T \in \{U \in \RSYT \; | \; \alpha_U \neq 0 \}$ 
such that $(|\lambda^{T}(k)|)_{k\in\llbracket 1, d \rrbracket}$ 
is the minimum for the lexicographic order (unique by Lemma~\ref{lem:lemmainj1}). 
By Lemma~\ref{lem:lemmanonzero1}, the coefficient of the monomial~\eqref{eq:monomial} in
the linear relation~\eqref{eq:linear_relation} is nonzero, which is a contradiction.
\end{proof}

This concludes the proof of Theorem~\ref{thm:theoremA}. 
To prove Conjecture~\ref{conj:theoremA}, one should find an argument replacing Lemma~\ref{lem:lemmainj1}.

\bigskip{}
\section{The space of $c=1$ degenerate conformal blocks} 
\label{sec:section3}
Next, we apply the fused Specht polynomials from Section~\ref{sec:sectionhecke} 
to construct a basis for a space of conformal blocks 
in a CFT with central charge $c=1$ 
and conformal weights in the Kac table~\eqref{eq:conf_weights}.
Their correlation functions form a basis for a solution space $\SolSp_\multii$ of a special class of BPZ PDEs, 
known as \quote{Beno{\^i}t~\&~Saint-Aubin equations}~\cite{BSA:Degenerate_CFTs_and_explicit_expressions_for_some_null_vectors} (that we will call \quote{conformal block basis functions}),  see~Theorem~\ref{thm:theoremBSA}.
We also gather some algebraic structure related to the conformal block basis: 
in particular, we show that $\SolSp_\multii$ is isomorphic to 
a standard module of the valenced Temperley-Lieb algebra (Proposition~\ref{prop:repvTL}).

The key importance of these conformal block basis functions is that they 
are expected (and in some cases known) 
to give rise to a family of conformally invariant boundary conditions for the 
Gaussian free field (GFF)~\cite{Sheffield-Miller:Imaginary_geometry1,Peltola-Wu:Global_and_local_multiple_SLEs_and_connection_probabilities_for_level_lines_of_GFF,Liu-Wu:Scaling_limits_of_crossing_probabilities_in_metric_graph_GFF}.
Concrete formulas for them will thus be needed in applications for problems in random geometry 
(to which we plan to return in future work).
With this in mind, we briefly discuss the relationship of our construction with the prior literature and show that special cases of our conformal block basis functions 
indeed equal the ones used in GFF applications --- see Section~\ref{subsec:valence_1_blocks}.

Throughout the rest of this section, 
we assume that $\Summed=2N$ is a given positive even integer, 
and all (fused) Specht polynomials will be associated with two-column rectangular Young tableaux of $\Summed$ boxes.

\subsection{Conformal blocks for unit valences $\multii=(1^\Summed)$ with $\Summed=2N$}
\label{subsec:valence_1_blocks}

Recall that $\SYTof{N,N}$ is the 
set of standard Young tableaux of shape $\lambda = (N,N)$.
For each $T \in \SYTof{N,N}$, we associate its \emph{transpose} $T^t \in \SYTof{2^{N}}$ which is 
obtained by exchanging the rows and columns of $T$.

\begin{definition} \label{def:CBsh12def}
For each $T \in \SYTof{N,N}$, we define 
the \emph{conformal block basis function} as
\begin{align} \label{eq:CBsh12}
\CobloF_T(x_1,\ldots ,x_{2N}) := 
\Delta(x_1,\ldots ,x_{2N})^{-1/2} \, \mathcal P_{T^t}(x_1,\ldots ,x_{2N}),
\end{align}
where $\mathcal P_{T^t}$ is the Specht polynomial~\eqref{eq:spechtfactorized} 
and $\Delta$ is the Vandermonde determinant~\eqref{eq:Vandermonde determinant}.
\end{definition}

Note that the conformal block functions are positive functions $\CobloF_T \colon \chamber_{2N} \to \bRpos$ on 
\begin{align*}
\chamber_{2N} := \{(x_1,\ldots ,x_{2N}) \in \bR^{2N} \; | \; x_1<\cdots <x_{2N}\}.
\end{align*}
Because they are also M\"obius covariant and satisfy a system of second order BPZ PDEs 
(see~(\ref{eq:BPZeq},~\ref{eq:covariancepropertyvalence1})), 
they give rise to \quote{partition functions} for interacting Schramm-Loewner evolution, $\SLE_4$ curves. This fact is important for their probabilistic interpretation~\cite{Peltola-Wu:Global_and_local_multiple_SLEs_and_connection_probabilities_for_level_lines_of_GFF}.

Generally speaking, in this section we consider positive smooth functions $F \colon \chamber_{2N} \to \bRpos$ 
satisfying the below three properties. The first one is the following system of BPZ PDEs: 
\begin{align} \label{eq:BPZeq}
\bigg(\frac{\partial^2}{\partial x_j^2} 
+ \sum_{k \neq j} \bigg( \frac{1}{x_k-x_j} \frac{\partial}{\partial x_k} 
- \frac{1/4}{(x_k-x_j)^2} \bigg) \bigg) \; F(x_1,\ldots ,x_{2N}) = 0, \qquad j \in\llbracket 1, 2N\rrbracket .
\end{align}
Second, for all M\"obius transformations $\varphi \colon \bH \to \bH := \{z \in \bC \;|\; \im(z) > 0\}$ of the upper half-plane
such that $\varphi(x_1)<\cdots <\varphi(x_{2N})$, 
we require the covariance 
\begin{align} \label{eq:covariancepropertyvalence1}
F(\varphi(x_1),\ldots ,\varphi(x_{2N})) = \prod_{i=1}^{2N} \varphi'(x_i)^{-1/4} \times F(x_1,\ldots ,x_{2N}).
\end{align}
Finally, we insist that there exist constants $C>0$ and $p>0$ such that for all $N \geq 1$ and 
$(x_1, \ldots , x_{2N}) \in \chamber_{2N}$, the following power-law bound holds:
\begin{align} \label{eq:estimateZ}
F(x_1,\ldots ,x_{2N}) \leq C \hspace*{-3mm} \prod_{1 \leq i<j \leq {2N}} |x_j-x_i|^{\mu_{ij}(p)} \quad \textnormal{with} \quad \mu_{ij}(p) := 
\begin{cases} 
-p, \quad |x_i-x_j|<1, \\ 
+p, \quad |x_i-x_j| \geq 1. 
\end{cases}
\end{align}

The first space of interest to us describes correlation functions with Kac type conformal weights $h_{1,2} = 1/4$ as in~\eqref{eq:conf_weights} 
for a conformal field theory of central charge $c=1$:
\begin{align} \label{eq:defspaceS}
\SolSp\sub{1^{2N}} = 
\SolSp\sub{1,\ldots,1} 
:= \{F \colon \chamber_{2N} \to \bR \;|\; F \; \textnormal{satisfies}~\eqref{eq:BPZeq},~\eqref{eq:covariancepropertyvalence1},~\textnormal{and}~\eqref{eq:estimateZ}\} .
\end{align} 
It follows from the results~\cite{Flores-Kleban:Solution_space_for_system_of_null-state_PDE1, 
Flores-Kleban:Solution_space_for_system_of_null-state_PDE2} 
of Flores~\&~Kleban 
that $\dim \SolSp\sub{1^{2N}}$ equals the $N$-th Catalan number.
We will see that the conformal block basis functions $\{\CobloF_T \; | \; T \in \SYTof{N,N} \}$ of Definition~\ref{def:CBsh12def} indeed 
span $\SolSp\sub{1^{2N}}$ and are linearly independent.
Indeed, to establish this we only need to show that they coincide with the conformal blocks in~\cite[Eq.~(6.1)]{Peltola-Wu:Global_and_local_multiple_SLEs_and_connection_probabilities_for_level_lines_of_GFF},
which was proven to be a basis for $\SolSp\sub{1^{2N}}$ 
by Peltola~\&~Wu~\cite{Peltola-Wu:Global_and_local_multiple_SLEs_and_connection_probabilities_for_level_lines_of_GFF}.

\begin{lemma} \label{prop:lemma3p3}
The collection $\{\CobloF_T \; | \; T \in \SYTof{N,N} \}$ is a basis for $\SolSp\sub{1^{2N}}$.
\end{lemma}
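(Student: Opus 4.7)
The plan is to reduce the statement to the Peltola--Wu construction of a basis of $\SolSp\sub{1^{2N}}$ from~\cite{Peltola-Wu:Global_and_local_multiple_SLEs_and_connection_probabilities_for_level_lines_of_GFF}, as the preceding paragraph already suggests. The Peltola--Wu basis is indexed by non-crossing link patterns $\alpha$ on $\{1,2,\ldots,2N\}$, so the first step is to recall the classical Catalan bijection $T \longleftrightarrow \alpha_T$ between $\SYTof{N,N}$ and such link patterns: declaring an index $i$ an \emph{opener} if it lies in the first row of $T$ and a \emph{closer} otherwise produces a balanced Dyck word, whose canonical matching of closers to openers yields a planar pair partition $\alpha_T$. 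In particular $|\SYTof{N,N}|$ is the $N$-th Catalan number $C_N$, which by the Flores--Kleban dimension count matches $\dim \SolSp\sub{1^{2N}}$. This reduces the problem to producing one nonzero scalar identification between $\CobloF_T$ and the Peltola--Wu block $\CobloF_{\alpha_T}$.

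Next I would exploit the two-column shape of $T^t$. By the factorized form~\eqref{eq:spechtfactorized},
\begin{equation*}
\mathcal P_{T^t}(x_1,\ldots,x_{2N}) \; = \; \Delta(\bs{x}_A)\,\Delta(\bs{x}_B) ,
\end{equation*}
where $A$ and $B$ are the entries in the first and second rows of $T$ respectively, equivalently the openers and closers of $\alpha_T$ under the bijection above (up to a global sign coming from the bottom-to-top convention in~\eqref{eq:spechtfactorized}, which is the same for every $T$). Hence
\begin{equation*}
\CobloF_T(x_1,\ldots,x_{2N}) \; = \; \Delta(x_1,\ldots,x_{2N})^{-1/2}\,\Delta(\bs{x}_A)\,\Delta(\bs{x}_B) .
\end{equation*}

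The heart of the argument is to match this closed form with the Peltola--Wu basis element $\CobloF_{\alpha_T}$ at $\kappa=4$. The Coulomb-gas / contour-integral construction underlying~\cite[Eq.~(6.1)]{Peltola-Wu:Global_and_local_multiple_SLEs_and_connection_probabilities_for_level_lines_of_GFF} admits a closed-form evaluation precisely at $\kappa=4$ (the boundary of the parameter range where screening charges collapse): the $2N$-point correlator with conformal weights $h_{1,2}=1/4$ reduces to a product of two Vandermonde determinants bipartitioned according to the opener/closer assignment, multiplied by the global M\"obius covariance factor $\Delta^{-1/2}$. This evaluation is essentially carried out in the proof of~\cite[Lem.~6.4]{Peltola-Wu:Global_and_local_multiple_SLEs_and_connection_probabilities_for_level_lines_of_GFF}. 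The main obstacle, and really the only non-routine bookkeeping, is reconciling the overall positive constants and signs: the $\Delta^{-1/2}$ factor must be shown to carry the correct branch on $\chamber_{2N}$, and the Vandermonde factors arising from the PW Coulomb gas must be identified with $\Delta(\bs{x}_A)\,\Delta(\bs{x}_B)$ rather than with an antisymmetrized relative thereof.

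Once this identification $\CobloF_T = c_T\,\CobloF_{\alpha_T}$ with a positive scalar $c_T$ is established, the claim follows: by~\cite[Thm.~4.1 and Prop.~6.3]{Peltola-Wu:Global_and_local_multiple_SLEs_and_connection_probabilities_for_level_lines_of_GFF}, $\{\CobloF_\alpha\}_\alpha$ is a basis of $\SolSp\sub{1^{2N}}$, and transport through $T\mapsto\alpha_T$ yields the claim. As a safety net bypassing the constant-matching step, one can observe that \emph{linear independence} of $\{\CobloF_T\}$ is free: multiplying by $\Delta^{1/2}$ gives the Specht family $\{\mathcal P_{T^t}\}_{T\in\SYTof{N,N}}$, linearly independent by Lemma~\ref{lem:polytabloids} and Lemma~\ref{lem:isomorphism}. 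Combined with containment in $\SolSp\sub{1^{2N}}$ (which only requires checking~\eqref{eq:BPZeq}, \eqref{eq:covariancepropertyvalence1}, \eqref{eq:estimateZ} on one explicit formula) and the dimension count $|\SYTof{N,N}|=C_N=\dim \SolSp\sub{1^{2N}}$, this upgrades independence to a basis without ever fixing the normalization in the PW formula.
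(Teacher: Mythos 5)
Your proposal is correct and takes essentially the same route as the paper: translate $T \in \SYTof{N,N}$ into a planar link pattern via the opener/closer bijection and identify $\CobloF_T$ with the corresponding Peltola--Wu basis element of $\SolSp\sub{1^{2N}}$. The only comment is that your ``heart of the argument'' is an unnecessary detour: \cite[Eq.~(6.1)]{Peltola-Wu:Global_and_local_multiple_SLEs_and_connection_probabilities_for_level_lines_of_GFF} is already the explicit closed-form product $\prod_{i<j}(x_j-x_i)^{\theta_{\alpha}(i,j)/2}$ (exponent $+1/2$ for pairs within the opener set or within the closer set, $-1/2$ for mixed pairs), so the identity $\Delta(x_1,\ldots,x_{2N})^{-1/2}\,\mathcal P_{T^t} = \CobloF_{\alpha_T}$ follows by inspecting exponents pair by pair, with no Coulomb-gas evaluation or normalization matching required --- this elementary bookkeeping is exactly what the paper's proof does.
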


\begin{proof}
Observe that the set $\SYTof{N,N}$ of standard Young tableaux of shape $\lambda = (N,N)$ is in bijection with 
the set $\LP_{N}$ of planar $N$-link patterns, that is, planar pair partitions 
$\alpha = \{\{a_1,b_1\},\{a_2,b_2\},\ldots ,\{a_{N},b_{N}\}\}$ of the set $\{1,2,\ldots ,{2N}\}$.
(The latter can be used to label connectivities of planar curves as in~\cite{Peltola-Wu:Global_and_local_multiple_SLEs_and_connection_probabilities_for_level_lines_of_GFF}.)
Indeed, without loss of generality, we may assume that
$a_1 < a_2 < \cdots < a_{N}$ and $a_j < b_j$ for all $j$.
Then, it is not hard to check that a bijection is obtained by 
sending the element of the first row and $i$-th column of a tableau $T \in \SYTof{N,N}$ to 
the $i$-th element of $\{a_1, a_2, \ldots, a_{N}\}$ associated with $\alpha \in \LP_{N}$,
and the elements of the second row of $T$ to the elements of $\{b_1,b_2,\ldots ,b_{N}\}$ ordered in such a way that $\alpha$ becomes a planar pairing --- 
by the fact that $T$ is strictly increasing across each row and down each column 
one ensures\footnote{The bijection is more precise after passing to balanced parenthesis expressions as in~\cite[Sect.~2.1]{KKP:Boundary_correlations_in_planar_LERW_and_UST}.} 
that there is a unique way for this ordering. 
Conversely, given $\alpha$, 
by placing $\{a_1, a_2, \ldots, a_{N}\}$ and $\{b_1,b_2,\ldots ,b_{N}\}$ into the two rows of $T$ 
with the latter rearranged in increasing order readily yields a standard Young tableau $T \in \SYTof{N,N}$. 
Using this bijection, we obtain
\begin{align*}
\CobloF_T(x_1,\ldots ,x_{2N}) 
= \prod_{1 \leq i < j \leq {2N}} (x_j-x_i)^{\frac12 \theta_{\alpha(T)}(i,j)} ,
\end{align*}
where $\alpha(T) \in \LP_{N}$ is the link pattern corresponding to $T \in \SYTof{N,N}$ and
\begin{align*}
\theta_\alpha(i,j) := 
\begin{cases} 
+1, \quad i,j \in \{a_1,a_2,\ldots ,a_{N} \}  \; \textnormal{or} \; i,j \in \{b_1,b_2,\ldots ,b_{N}\}, \\
-1, \quad \textnormal{otherwise}.
\end{cases}
\end{align*}
This is exactly~\cite[Eq.~(6.1)]{Peltola-Wu:Global_and_local_multiple_SLEs_and_connection_probabilities_for_level_lines_of_GFF},
which is known to form a basis for $\SolSp\sub{1^{2N}}$.
\end{proof}

\begin{remark} \label{rem:Dyck path}
Both $\SYTof{N,N}$ and $\LP_{N}$ are in bijection with the set $\DP_{N}$ of 
\emph{Dyck paths}: walks $\varpi$ on $\bZnn$ of ${2N}$ steps with steps of length one, starting and ending at zero. 
The conformal block functions $\CobloF_T$ can then be related to piecewise constant Dirichlet boundary conditions for the GFF (see~\cite[Sect.~6.4]{Peltola-Wu:Global_and_local_multiple_SLEs_and_connection_probabilities_for_level_lines_of_GFF} for details) as follows. 
For fixed $x_1 < \cdots < x_{2N}$, 
consider the GFF $\Gamma_\varpi$ on the upper-half plane $\bH := \{z \in \bC \;|\; \im(z) > 0\}$ with boundary data\footnote{Here, we use the convention that $x_{0} = -\infty$ and $x_{{2N}+1} = +\infty$.}  
\begin{align*}
\tfrac{\pi}{2} \, (2\varpi(k) - 1) , 
\quad & \textnormal{if } x \in ( x_{k}, x_{k+1} ) , \qquad \textnormal{ for all } 
k \in \llbracket 0, 2N\rrbracket ,
\end{align*}
Then, the level lines of $\Gamma_\varpi$ started at the points $(x_1, \ldots , x_{2N})$ 
are $\SLE_4$ curves with partition function $\CobloF_{T(\varpi)}(x_1, \ldots , x_{2N})$, 
where $T(\varpi) \in \SYTof{N,N}$ is the Young tableau corresponding to the Dyck path $\varpi$ 
(this is a special case of~\cite[Thm.~1.1]{Sheffield-Miller:Imaginary_geometry1}; 
see~\cite[Prop.~6.8]{Peltola-Wu:Global_and_local_multiple_SLEs_and_connection_probabilities_for_level_lines_of_GFF} for details). 
This model was further investigated by Liu~\&~Wu in~\cite{Liu-Wu:Scaling_limits_of_crossing_probabilities_in_metric_graph_GFF}.
\end{remark}

\begin{remark} \label{rem:Dyck path gen}
Note that since each Dyck path $\varpi$ has steps of length one (cf.~Remark~\ref{rem:Dyck path}), the height gaps in the GFF $\Gamma_\varpi$ have absolute value $\pi$.
This is also the most common height gap when considering level lines of the GFF~\cite{Schramm-Sheffield:A_contour_line_of_the_continuum_GFF}. 
Liu~\&~Wu defined in~\cite[Eq.~(5.15,~5.16,~5.17)]{Liu-Wu:Scaling_limits_of_crossing_probabilities_in_metric_graph_GFF} 
three functions generalizing the conformal block functions and related them to height gaps of absolute value $2\pi$.
It is not hard to check that these functions are the three elements of our conformal block basis $\SolSp\sub{2,2,2,2}$, 
which we define in the next section using the fused Specht  polynomials. 
We shall detail this connection in Remark~\ref{rem:Dyck path gen2}.
\end{remark}

\subsection{Temperley-Lieb action and braiding}
\label{subsec: TL action}

Next, we make explicit the action on the space $\SolSp\sub{1^{2N}}$
of the Temperley-Lieb algebra $\TL_{2N} = \TL_{2N}(\nu) = \TL_{2N}(2)$, 
with loop fugacity $\nu := -q - q^{-1} \in \bC$ equaling $2$ for $q=-1$. 
It arises from \emph{braiding} of the conformal block basis functions $\CobloF_T$, when viewed as functions on 
$\{(z_1,\ldots ,z_{2N}) \in \bC^{2N} \; | \; z_i \neq z_j \textnormal{ for } i \neq j \}$.

The \emph{braid group} $\mathfrak{B}_\Summed$ on $\Summed$ strands is generated by $b_i \in \mathfrak{B}_\Summed$ for $i\in\llbracket 1, \Summed-1\rrbracket$ with relations
\begin{equation*}
\begin{aligned}
b_ib_{i+1}b_i = \; & b_{i+1}b_ib_{i+1} , && \textnormal{for } i\in\llbracket 1, \Summed-2\rrbracket , \\
b_ib_j = \; & b_jb_i , && \textnormal{for } |j-i|>1.
\end{aligned}
\end{equation*}
$\mathfrak{B}_\Summed$ is isomorphic to 
the fundamental group (the first homotopy group)
of the complex quotient manifold $\mathcal{C}_\Summed := \{(z_1,\ldots ,z_\Summed) \in \bC^\Summed \; | \; z_i \neq z_j \textnormal{ for } i \neq j \} \backslash \SymGrp_\Summed$, where the symmetric group acts by permutation of coordinates 
(see, e.g.,~\cite[Rem.~2.3 in Sect.~XIX.2]{Kassel:Quantum_groups}).

The braid group $\mathfrak{B}_{2N}$ acts naturally on the conformal block functions $\CobloF_T$ by 
\begin{align} \label{eq:braid action}
b_k . \CobloF_T(\ldots ,z_k,z_{k+1},\ldots) =  \CobloF_T(\ldots ,z_{k+1},z_k,\ldots), \qquad k \in \llbracket 1, {2N}-1\rrbracket ,
\end{align}
where $z_k$ and $z_{k+1}$ are exchanged along a counterclockwise loop of the fundamental group. 

\begin{remark} \label{rem:remarkcomb}
Let us emphasize that the conformal block basis for $\SolSp\sub{1^{2N}}$ \emph{does not correspond to} the \quote{comb basis} frequently used in the literature (e.g.,~\cite{DMS:CFT,KKP:Conformal_blocks_q_combinatorics_and_quantum_group_symmetry}). A simple counterexample is the basis of conformal block functions for $N=2$: 
\begin{align*}
\CobloF_{\;\ytableausetup{smalltableaux}\ytableaushort{13,24}}
\; = \;\; & \Delta(\bs{x}_{1,2,3,4})^{-1/2} \, 
\mathcal P_{\;\ytableausetup{smalltableaux}\ytableaushort{12,34}}
\; = \; \sqrt{\frac{(x_3-x_1)(x_4-x_2)}{(x_2-x_1)(x_3-x_2)(x_4-x_1)(x_4-x_3)}} , \\
\CobloF_{\;\ytableausetup{smalltableaux}\ytableaushort{12,34}}
\; = \;\; & \Delta(\bs{x}_{1,2,3,4})^{-1/2} \, 
\mathcal P_{\;\ytableausetup{smalltableaux}\ytableaushort{13,24}}
\; = \; \sqrt{\frac{(x_2-x_1)(x_4-x_3)}{(x_3-x_1)(x_3-x_2)(x_4-x_1)(x_4-x_2)}}.
\end{align*}
Namely, each element of the comb basis is an eigenvector of $b_1 \in \mathfrak{B}_4$, while $\CobloF_{\;\ytableausetup{smalltableaux}\ytableaushort{13,24}}$ is not.
\end{remark}

Recall from Lemmas~\ref{lem:polytabloids}~\&~\ref{lem:isomorphism} that
the space $P\super{2^{N}}$ of Specht polynomials is a simple module of the symmetric group $\SymGrp_{2N}$, 
where permutations act on the variables 
$\bs{x}_{1,\ldots,2N} := (x_1,\ldots ,x_{2N})$. The action~\eqref{eq:braid action} of the braid group generators $b_k$ on $\SolSp\sub{1^{2N}}$ 
can be related to the action of the symmetric group generators $\tau_k = (k,k+1)$ (transpositions) on $P\super{2^{N}}$ as 
\begin{align} \label{eq:braid action in terms of sym}
b_k . \CobloF_T 
= - \ii \, 
\Delta(\bs{x}_{1,\ldots,2N})^{-1/2} \, \tau_k . \mathcal P_{T^t} , \qquad k \in\llbracket 1, {2N}-1\rrbracket.
\end{align}
In particular, this induces an action of the symmetric group $\SymGrp_{2N}$ (or, equivalently, of the Hecke algebra $\Hecke\sub{1^{2N}}(-1)$) on $\SolSp\sub{1^{2N}}$. 
The action of the generators $\tau_k$ is then given by 
\begin{align} \label{eq:tau action on coblo}
\tau_k . \CobloF_T
= - \ii \, b_k . \CobloF_T 
= \Delta(\bs{x}_{1,\ldots,2N})^{-1/2} \,
(-\tau_k) . \mathcal P_{T^t} , \qquad k \in\llbracket 1, {2N}-1\rrbracket .
\end{align}
(So the action of $\tau_k$ on $\CobloF_T$ is not just transposition of the $k$-th and $(k+1)$-st coordinates.)
Utilizing the involutive algebra automorphism $\omega$ of $\bC[\SymGrp_{2N}]$ defined in~\eqref{eq:defomega}, we have
\begin{align} \label{eq:actionofa}
a . \CobloF_T 
= \; & \Delta(\bs{x}_{1,\ldots,2N})^{-1/2} \,
\omega(a) . \mathcal P_{T^t} , \qquad a\in \bC[\SymGrp_{2N}] .
\end{align}

\begin{remark}
From Remark~\ref{rem:remarkomega}, we see that
the $\SymGrp_{2N}$-module $\SolSp\sub{1^{2N}}$ is isomorphic to the simple module $V\super{N,N}$.
Moreover, the conformal block basis $\{\CobloF_T \; | \; T \in \SYTof{N,N} \}$ of $\SolSp\sub{1^{2N}}$ is sent in this isomorphism to the basis of dual polytabloids of $V\super{N,N}$
--- see Remark~\ref{rem:dual polytabloids}. 
\end{remark}

We now proceed with the action on $\SolSp\sub{1^{2N}}$ 
of the Temperley-Lieb algebra. 

\begin{definition}
The \emph{Temperley-Lieb algebra} $\TL_{\Summed}(\nu)$ with 
fugacity $\nu := -q - q^{-1} \in \bC$ parameterized by 
$q \in \bC \setminus \{0\}$
is generated by $e_i \in \TL_{\Summed}(\nu)$ for $i\in\llbracket 1, \Summed-1\rrbracket$ with relations
\begin{equation} \label{eq:TLpres}
\begin{aligned}
e_i^2 = \; & \nu \, e_i , && \textnormal{for } i\in\llbracket 1, \Summed-1\rrbracket , \\
e_ie_{i+1}e_i = \; &  e_i , && \textnormal{for }  i\in\llbracket 1, \Summed-2\rrbracket , \\
e_ie_{i-1}e_i = \; &  e_i , && \textnormal{for } i\in\llbracket 2, \Summed-1\rrbracket, \\
e_ie_j = \; & e_je_i, && \textnormal{for } |j-i|>1.
\end{aligned}
\end{equation}
\end{definition}

The Temperley-Lieb algebra $\TL_{\Summed}(\nu)$ is isomorphic to a diagram  algebra~\cite{Kauffman:An_invariant_of_regular_isotopy} 
which, as a vector space, is generated by non-crossing planar tangles 
embedded in a rectangle connecting $2\Summed$ points lying on the boundary. 
In this presentation, there are exactly $\Summed$ points on the left boundary and $\Summed$ points on the right boundary: e.g., two elements of $\TL_{4}(\nu)$ are
\begin{align*}
\vcenter{\hbox{\includegraphics[scale=0.275]{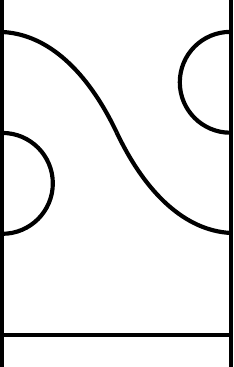}}} \qquad \qquad \textnormal{and} \qquad \qquad
\vcenter{\hbox{\includegraphics[scale=0.275]{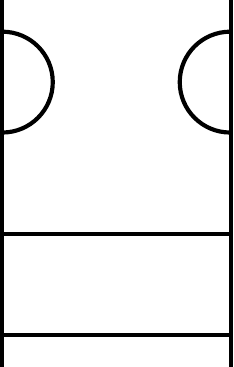}  .}}
\end{align*} 
Multiplication of diagrams is defined to be their concatenation with the additional rule that, whenever a loop is formed, it is removed and replaced by a scalar factor of $\nu = -q-q^{-1}$:
\begin{align*}
& \vcenter{\hbox{\includegraphics[scale=0.275]{figures/e-TLalgebra2-new-new.pdf}}} \quad 
\vcenter{\hbox{\includegraphics[scale=0.275]{figures/e-TLalgebra1-new.pdf}}} \quad := \quad 
\vcenter{\hbox{\includegraphics[scale=0.275]{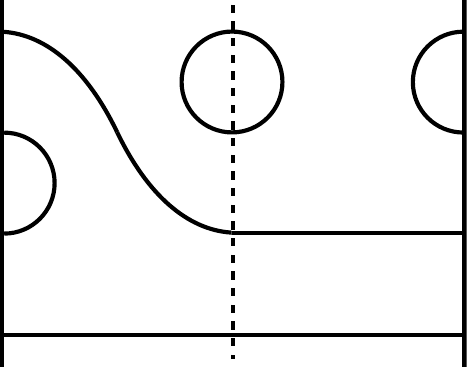}}} \quad = \quad \nu \,\, \times \,\, 
\vcenter{\hbox{\includegraphics[scale=0.275]{figures/e-TLalgebra2-new-new.pdf}  .}}
\end{align*}
The product is extended bilinearly to the whole algebra. The isomorphism between the algebra defined by the presentation~\eqref{eq:TLpres} and the diagram algebra is explicitly given by
\begin{align*}
e_i \quad = \quad \vcenter{\hbox{\includegraphics[scale=0.275]{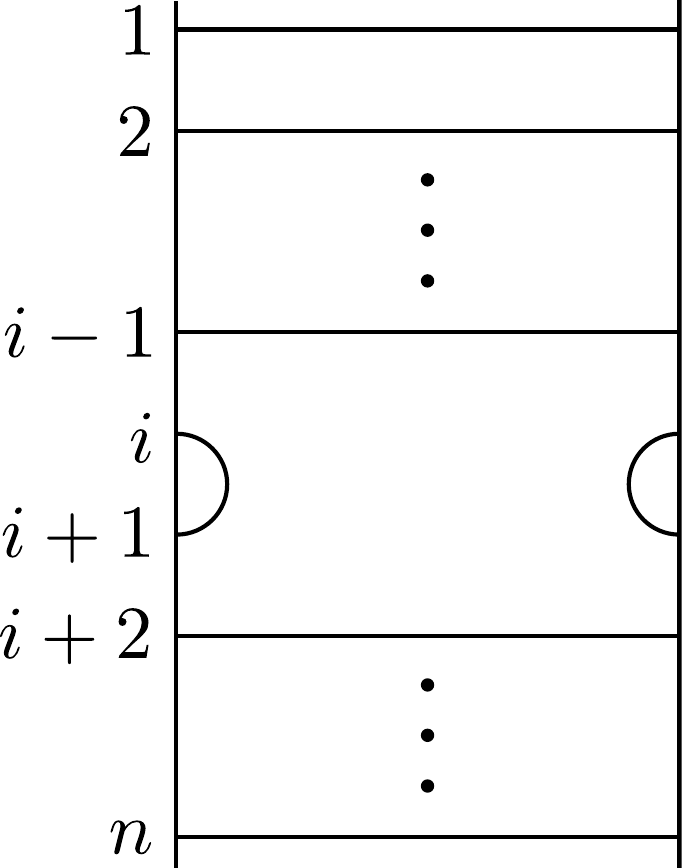} ,}} 
\qquad i\in\llbracket 1, \Summed-1\rrbracket ,
\end{align*}
and the unit of the algebra is given by the through-line diagram
\begin{align*}
1 \quad = \quad
\vcenter{\hbox{\includegraphics[scale=0.275]{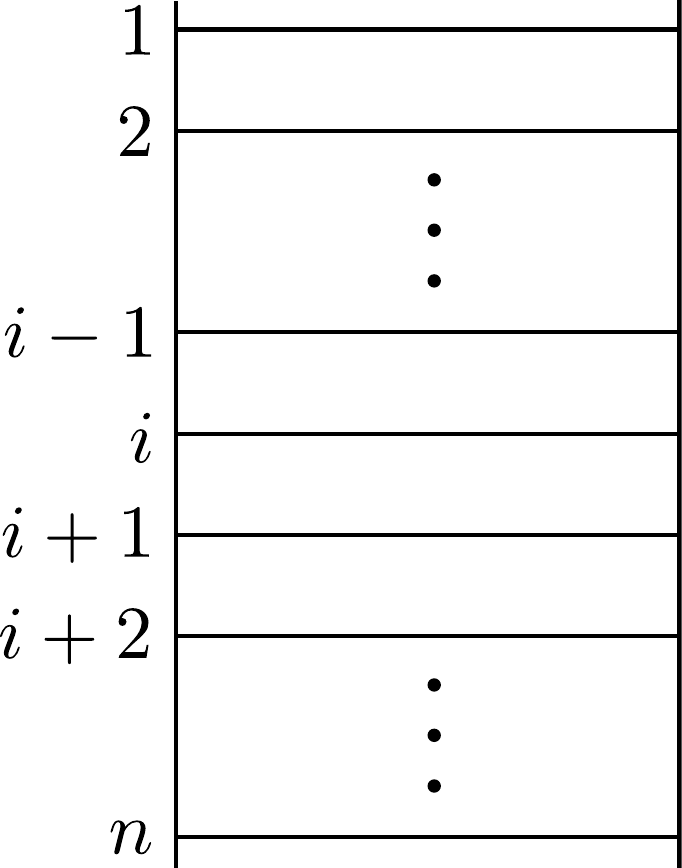}  .}}
\hphantom{\qquad i\in\llbracket 1, \Summed-1\rrbracket ,}
\end{align*}

When $q$ is not a root of unity, or when $q=\pm 1$, 
the algebra $\TL_{\Summed}(\nu)$ with $\nu = -q-q^{-1}$ 
is semisimple, with its simple modules given by the so-called \emph{standard modules} (cell modules)
\begin{align*}
\{\LS_{\Summed}\super{s} \; | \; s\in \{\Summed \textnormal{ mod }2,\Summed \textnormal{ mod }2+2,\ldots,\Summed \} .
\end{align*}
Elements in the standard module $\LS_{\Summed}\super{s}$
can be understood diagrammatically as non-crossing planar tangles embedded in a rectangle and connecting $\Summed+s$ points on the boundary, with $n$ points on the left boundary and $s$ points on the right boundary, and such that the $s$ points cannot be connected among each other. 
(See, e.g.,~\cite{Ridout-Saint-Aubin:Standard_modules_induction_and_structure_of_TL, Flores-Peltola:Standard_modules_radicals_and_the_valenced_TL_algebra} 
for a detailed account.)
The multiplication rule is then given by concatenation with the rules that a loop is replaced by a factor $\nu$ as before, 
and whenever the resulting diagram connects points on the right boundary, it is set to zero. 
Examples of the action of $\TL_{4}(\nu)$ on 
$\LS_{4}\super{2}$ are
\begin{align*}
\vcenter{\hbox{\includegraphics[scale=0.275]{figures/e-TLalgebra1-new.pdf}}} \quad 
\vcenter{\hbox{\includegraphics[scale=0.275]{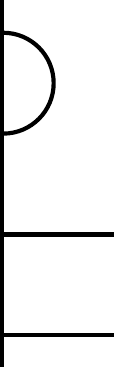}}} \quad
= \; & \quad \vcenter{\hbox{\includegraphics[scale=0.275]{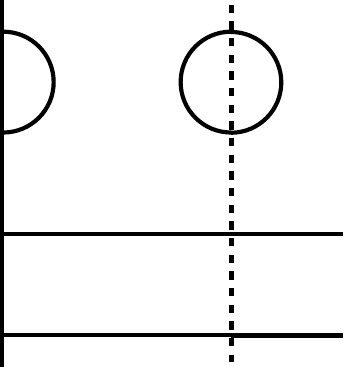}}}  \quad
 = \quad \nu \,\, \times \,\, 
\vcenter{\hbox{\includegraphics[scale=0.275]{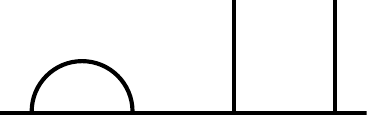} .}}
\\
\vcenter{\hbox{\includegraphics[scale=0.275]{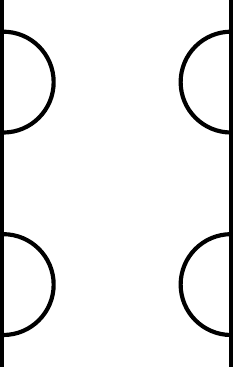}}} \quad 
\vcenter{\hbox{\includegraphics[scale=0.275]{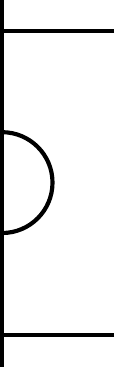}}} \quad
= \; & \quad \vcenter{\hbox{\includegraphics[scale=0.275]{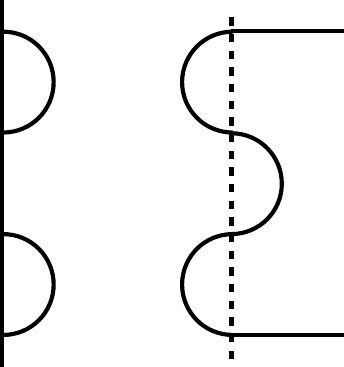}}} \quad
= \quad 0 \,\, \times \,\, 
\vcenter{\hbox{\includegraphics[scale=0.275]{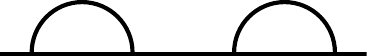}}} 
\quad = \quad 0 .
\end{align*}

\begin{remark} \label{rem:quotient_Hecke}
Setting $\tau_k=1-e_k$ for all $k$, 
the defining relations~\eqref{eq:TLpres} of $\TL_\Summed = \TL_\Summed(2)$ 
with $q=-1$ can be written in the form
\begin{equation*}
\begin{aligned}
\tau_i^2 = \; & 1 , && \textnormal{for } i\in\llbracket 1, \Summed-1\rrbracket , \\
\tau_i\tau_{i+1}\tau_i = \; & \tau_{i+1}\tau_i\tau_{i+1} , && \textnormal{for } i\in\llbracket 1, \Summed-2\rrbracket, \\
\tau_i\tau_j = \; & \tau_j\tau_i , && \textnormal{for } |j-i|>1,
\end{aligned}
\end{equation*}
together with the important relation 
\begin{align} \label{eq:TLrelation} 
1-\tau_i-\tau_{i+1}+\tau_i\tau_{i+1}+\tau_{i+1}\tau_i-\tau_i\tau_{i+1}\tau_i = \; &  0 , && \textnormal{for } 
i\in\llbracket 1, \Summed-2\rrbracket.
\end{align}
This makes it clear that $\TL_\Summed$ 
is a (nontrivial) quotient of the group algebra $\bC[\SymGrp_\Summed]$ of the symmetric group (or equivalently, 
of the Hecke algebra $\Hecke\sub{1^{\Summed}}(-1)$). 
\end{remark}

\begin{proposition} \label{prop:propdescendtoTL}
The representation of $\bC[\SymGrp_{2N}]$ on $\SolSp\sub{1^{2N}}$ descends to a representation of $\TL_{2N}$. 
\end{proposition}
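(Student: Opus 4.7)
My plan is to invoke Remark~\ref{rem:quotient_Hecke}, which presents $\TL_{2N}$ as the quotient of $\bC[\SymGrp_{2N}]$ by the single additional relation~\eqref{eq:TLrelation}. Since the action~\eqref{eq:actionofa} on $\SolSp\sub{1^{2N}}$ comes from a genuine symmetric group representation, the quadratic, braid and commutation relations are automatic, and the entire claim reduces to showing that the element
\begin{align*}
R_i := 1-\tau_i-\tau_{i+1}+\tau_i\tau_{i+1}+\tau_{i+1}\tau_i-\tau_i\tau_{i+1}\tau_i
\end{align*}
annihilates $\SolSp\sub{1^{2N}}$ for every $i \in \llbracket 1, 2N-2 \rrbracket$.

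The next step is to translate this into an assertion about Specht polynomials via~\eqref{eq:actionofa}, which yields
\begin{align*}
R_i . \CobloF_T \; = \; \Delta(\bs{x}_{1,\ldots,2N})^{-1/2} \, \omega(R_i) . \mathcal P_{T^t} , \qquad T \in \SYTof{N,N} .
\end{align*}
Observing that $R_i$ is precisely the signed sum $\sum_{\sigma \in \SymGrp_3} \sign(\sigma) \sigma$ over the copy of $\SymGrp_3 \subset \SymGrp_{2N}$ permuting the three positions $\{i, i+1, i+2\}$, the definition~\eqref{eq:defomega} of $\omega$ converts this antisymmetrizer into the full symmetrizer $\omega(R_i) = \sum_{\sigma \in \SymGrp_3} \sigma$. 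Hence it will suffice to verify that symmetrizing $\mathcal P_{T^t}$ with respect to the three consecutive variables $x_i, x_{i+1}, x_{i+2}$ vanishes.

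At this point the result should fall out from a pigeonhole argument. The transpose $T^t \in \SYTof{2^N}$ has shape $2^N$ with exactly two columns, so by~\eqref{eq:spechtfactorized} the polynomial $\mathcal P_{T^t}$ factors as a product of two Vandermonde determinants, one per column. Among the three labels $i, i+1, i+2$ distributed into these two columns, at least two, say $j < k$, land in the same column; consequently $\mathcal P_{T^t}$ is antisymmetric under $x_j \leftrightarrow x_k$. Pairing each permutation $\sigma \in \SymGrp_3$ with $\sigma \cdot (j\,k)$ shows that any polynomial with this antisymmetry is killed by the $\SymGrp_3$-symmetrizer. Thus $\omega(R_i) . \mathcal P_{T^t} = 0$, and since $\{\CobloF_T \;|\; T \in \SYTof{N,N}\}$ is a basis of $\SolSp\sub{1^{2N}}$ (Lemma~\ref{prop:lemma3p3}), the representation descends to $\TL_{2N}$.

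I do not anticipate any serious obstacle. The only subtlety is the $\omega$-twist built into the action~\eqref{eq:actionofa}, which is precisely what converts the Hecke antisymmetrizer $R_i$ (whose vanishing on polynomials would be opaque) into a symmetrizer against which the column-antisymmetry of $\mathcal P_{T^t}$ becomes decisive. The restriction to two-column shapes $2^N$ is exactly what makes the pigeonhole with three indices succeed, echoing the same two-column limitation encountered in Theorem~\ref{thm:theoremA}.
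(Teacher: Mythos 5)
Your proposal is correct and follows essentially the same route as the paper's proof: reduce via Remark~\ref{rem:quotient_Hecke} to the relation~\eqref{eq:TLrelation}, recognize that element as the signed sum over $\langle\tau_i,\tau_{i+1}\rangle\cong\SymGrp_3$, use the $\omega$-twist in~\eqref{eq:actionofa} to turn it into the $\SymGrp_3$-symmetrizer acting on $\mathcal P_{T^t}$, and kill it by the pigeonhole observation that two of the three indices share a column of the two-column tableau $T^t$. No gaps.
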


\begin{proof}
By Remark~\ref{rem:quotient_Hecke}, it suffices to verify the relation~\eqref{eq:TLrelation}. 
Denote by $\langle \tau_k,\tau_{k+1}\rangle \cong \SymGrp_3$ the subgroup of $\SymGrp_{2N}$ generated by the transpositions $\tau_k$ and $\tau_{k+1}$. 
By~\eqref{eq:actionofa}, we have
\begin{align} \label{eq:aux_rel}
\begin{split}
\sum_{\sigma \in \langle \tau_k,\tau_{k+1}\rangle} \sign(\sigma) \sigma . \CobloF_T 
= \; & \Delta(\bs{x}_{1,\ldots,2N})^{-1/2} \,
\sum_{\sigma \in \langle \tau_k,\tau_{k+1}\rangle}
\sign(\sigma) \omega(\sigma) . \mathcal P_{T^t} \\
= \; & \Delta(\bs{x}_{1,\ldots,2N})^{-1/2} \,
\sum_{\sigma \in \langle \tau_k,\tau_{k+1}\rangle}
\sigma . \mathcal P_{T^t} ,
    \qquad k\in \llbracket 1, 2N-2 \rrbracket .
\end{split}
\end{align}
Since $T^t$ is a Young tableau with two columns, at least two entries among $k,k+1,k+2$ lie on the same column. 
We thus infer that $\mathcal P_{T^t}$ is antisymmetric in at least two variables among $(x_k,x_{k+1},x_{k+2})$. 
Hence, the symmetrization of $\mathcal P_{T^t}$ with respect to $(x_k,x_{k+1},x_{k+2})$ gives zero, 
which together with~\eqref{eq:aux_rel} implies that the relation~\eqref{eq:TLrelation} is satisfied. 
\end{proof}

\begin{corollary} \label{cor:TL_rep}
The $\TL_{2N}$-module $\SolSp\sub{1^{2N}}$ is isomorphic to the standard module $\LS_{2N}\super{0}$.
\end{corollary}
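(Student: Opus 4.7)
The plan is to identify the $\TL_{2N}$-module $\SolSp\sub{1^{2N}}$ in two stages: first as a $\bC[\SymGrp_{2N}]$-module, and then by invoking the classical correspondence between two-row Specht modules and standard Temperley--Lieb modules. A preliminary observation is that both modules have dimension equal to the $N$th Catalan number $C_N$: for $\SolSp\sub{1^{2N}}$ this follows from Proposition~\ref{prop:lemma3p3}, which exhibits the basis $\{\CobloF_T \;|\; T \in \SYTof{N,N}\}$, while $\dim \LS_{2N}\super{0} = C_N$ counts planar link patterns on $2N$ points without defects.

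First, I would establish $\SolSp\sub{1^{2N}} \cong V\super{N,N}$ as $\bC[\SymGrp_{2N}]$-modules. The transpose bijection $T \leftrightarrow T^t$ between $\SYTof{N,N}$ and $\SYTof{2^N}$ yields a linear isomorphism $\phi \colon \SolSp\sub{1^{2N}} \to P\super{2^N}$ sending $\CobloF_T$ to $\mathcal P_{T^t}$. By Equation~\eqref{eq:actionofa}, $\phi$ satisfies $\phi(a.\CobloF_T) = \omega(a).\phi(\CobloF_T)$; that is, it intertwines the $\SymGrp_{2N}$-action on $\SolSp\sub{1^{2N}}$ with the $\omega$-twisted action on $P\super{2^N}$, where $\omega$ is the sign involution of~\eqref{eq:defomega}. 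Combining Lemma~\ref{lem:isomorphism} ($P\super{2^N} \cong V\super{2^N}$) with Remark~\ref{rem:remarkomega} ($V\super{\lambda}$ twisted by $\omega$ is isomorphic to $V\super{\bar\lambda}$), the claim follows because $(2^N)^t = (N,N)$.

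The main obstacle will be translating this into an identification of $\TL_{2N}(2)$-modules. By Proposition~\ref{prop:propdescendtoTL} and Remark~\ref{rem:quotient_Hecke}, the $\SymGrp_{2N}$-action on $\SolSp\sub{1^{2N}}$ factors through the semisimple quotient $\TL_{2N}(2)$. Under this quotient, only the simple Specht modules $V\super{N+k, N-k}$ with $k \in \llbracket 0, N \rrbracket$ survive, and the classical correspondence --- provable via Schur--Weyl duality for $\mathfrak{sl}(2,\bC) \curvearrowright (\bC^2)^{\otimes 2N}$ --- identifies $V\super{N+k, N-k}$ with the standard module $\LS_{2N}\super{2k}$. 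Specializing to $k=0$ will complete the proof; note that a bare dimension count does not suffice here, since e.g.\ for $N=3$ both $\LS_6\super{0}$ and $\LS_6\super{4}$ have dimension $5$, so the $\SymGrp_{2N}$-module structure established above is essential.

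A more self-contained alternative would be to use the explicit bijection $\SYTof{N,N} \leftrightarrow \LP_{N}$ from the proof of Proposition~\ref{prop:lemma3p3} to define a candidate $\bC$-linear isomorphism $\SolSp\sub{1^{2N}} \to \LS_{2N}\super{0}$ sending $\CobloF_T$ to the link pattern $\alpha(T)$, and to verify its $\TL_{2N}$-equivariance directly using~\eqref{eq:tau action on coblo} together with $e_k = 1 - \tau_k$; the verification would reduce to matching how both sides respond to adjacent column swaps on $T^t$, using the column-antisymmetry of $\mathcal P_{T^t}$.
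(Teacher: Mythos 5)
Your main argument is essentially the paper's own proof: identify $\SolSp\sub{1^{2N}}$ with $V\super{N,N}$ as a $\bC[\SymGrp_{2N}]$-module via~\eqref{eq:actionofa}, Lemma~\ref{lem:isomorphism}, and Remark~\ref{rem:remarkomega}, note via Proposition~\ref{prop:propdescendtoTL} that the action descends to the $\TL_{2N}$ quotient, and invoke the standard identification $V\super{N,N} \cong \LS_{2N}\super{0}$ (which the paper sources to \cite[Lem.~4.2]{PPR:Promotion_and_cyclic_sieving_via_webs} rather than re-deriving it from Schur--Weyl duality). Your observation that a bare dimension count cannot pin down the standard module is a worthwhile addition, and the correspondence $V\super{N+k,N-k} \leftrightarrow \LS_{2N}\super{2k}$ you state is consistent with the convention $\tau_i = 1 - e_i$ used here.
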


\begin{proof}
$\SolSp\sub{1^{2N}}$ is isomorphic to $V\super{N,N}$ as a $\bC[\SymGrp_{2N}]$-module. 
By~\cite[Lem.~4.2]{PPR:Promotion_and_cyclic_sieving_via_webs}, the latter is isomorphic to $\LS_{2N}\super{0}$. 
Proposition~\ref{prop:propdescendtoTL} shows that these representations descend to the quotient $\TL_{2N}$ (cf.~Remark~\ref{rem:quotient_Hecke}), which proves the claim. 
\end{proof}

\begin{remark} 
By virtue of Proposition~\ref{prop:propdescendtoTL}, 
throughout this section we will often employ loose notations --- e.g., we identify $\tau_k\in \bC [\SymGrp_{2N}]$ and $\tau_k=1-e_k \in \TL_{2N}$ when acting on $\SolSp\sub{1^{2N}}$.
Note however that, when acting on functions in $\SolSp\sub{1^{2N}}$, 
the action~\eqref{eq:tau action on coblo} of the element $\tau_k \in \TL_{2N}$ 
is \emph{not a permutation of the variables}:
for each $k \in\llbracket 1, {2N}-1\rrbracket$, we have
\begin{align*}
\tau_k . \CobloF_T(x_{1},\ldots ,x_{2N})
= \; & \Delta(x_{1},\ldots ,x_{2N})^{-1/2} \,
(-\tau_k) . \mathcal P_{T^t}(x_{1},\ldots,x_{k},x_{k+1},\ldots ,x_{2N}) \\
= \; & - \Delta(x_{1},\ldots ,x_{2N})^{-1/2} \,
\mathcal P_{T^t}(x_{1},\ldots,x_{k+1},x_{k},\ldots ,x_{2N}) .
\end{align*} 
\end{remark} 

\begin{remark} 
With this identification, 
the relation $b_k = -\ii \tau_k = -\ii (1-e_k)$ 
in the action~\eqref{eq:braid action in terms of sym} of $\TL_\Summed$ 
corresponds to the familiar \quote{skein relation} for the Kauffman bracket polynomial~\cite{Kauffman:State_models_and_the_Jones_polynomial}, 
with deformation parameter $q=-1$ (and fugacity $\nu = 2$).
\end{remark}

\subsection{Conformal blocks for general valences $\multii$}
\label{subsec: fused blocks}

We now construct the spaces of $c=1$ conformal blocks, denoted $\SolSp_\multii$, for any valences $\multii=(s_1,\ldots,s_\np) \in \bZpos^\np$, and show that they carry representations of \quote{fused} versions of the Temperley-Lieb algebra,
called \emph{valenced Temperley-Lieb algebras}~\cite{Flores-Peltola:Standard_modules_radicals_and_the_valenced_TL_algebra, Flores-Peltola:Higher_spin_QSW}.
We begin with the definition of the valenced Temperley-Lieb algebra, which also gives systematic tools to carry out the fusion of the conformal blocks.

\begin{definition} 
The \emph{Jones-Wenzl idempotents}~\cite{Wenzl:On_sequences_of_projections} 
in the Temperley-Lieb algebra $\TL_\Summed(\nu)$ are
nonzero elements $\textnormal{JW}_{i,j} \neq 0$ for $i,j \in \llbracket 1,\Summed\rrbracket$ with $i<j$, defined recursively via
\begin{align*}
\textnormal{JW}_{i,j}\textnormal{JW}_{i,j} = \; & \textnormal{JW}_{i,j}, \\
e_k\;\textnormal{JW}_{i,j} = \; & \textnormal{JW}_{i,j}\;e_k=0 , \qquad\textnormal{for all } k\in \llbracket i,j-1\rrbracket. 
\end{align*}
In the case $q=-1$ (and $\nu=2$), the Jones-Wenzl idempotents are given by the \emph{symmetrizers}:
\begin{align*}
\textnormal{JW}_{i,j} = \frac{1}{(j-i+1)!} 
\sum_{\sigma \in \langle \tau_i,\tau_{i+1},\ldots ,\tau_{j-1}\rangle} \sigma 
\end{align*}
(or rather, their images under the quotient map in Remark~\ref{rem:quotient_Hecke}).
\end{definition}

Consider the $\multii$-\emph{symmetrizer} idempotent $\sym$ of $\bC [\mathfrak{S}_\Summed]$ defined in~\eqref{eq:defsymmetrizer} and denote by $\symTL$ the corresponding image in the $\TL_\Summed$ quotient (cf.~Remark~\ref{rem:quotient_Hecke}). 
Then, we have
\begin{align*}
\symTL = \prod_{k=1}^\np \textnormal{JW}_{\summ_k,\summ_{k+1}-1} ,
\qquad
\textnormal{where}
\qquad
\summ_k := 1+ \sum_{j=1}^{k-1} s_j, \qquad k \in \llbracket 1, \np\rrbracket .
\end{align*} 

To define the fused conformal blocks,  
we use the notation~\eqref{eq: eval notation} for $[f]_\textnormal{eval}$. 
We will show that for each $f \in  \SolSp\sub{1^{2N}}$, 
the evaluations 
$x_{\summ_k}=x_{\summ_k+1}=\cdots=x_{\summ_{k+1}-1}$ for all $k \in \llbracket 1, \np\rrbracket$ of
\begin{align*}
\frac{\symTL. f(x_1,\ldots ,x_{2N})}{\sqrt{\prod_{k=1}^\np \prod_{\summ_k\leq i<j<\summ_{k+1}} (x_j-x_i)}}
\end{align*} 
yields a finite value. This gives the following result.

\begin{proposition}\label{prop:propdefCvarsigma}
The following space of functions is well-defined:
\begin{align} \label{eq:defCvarsigma}
\SolSp_{\multii} := \bigg\{F \colon \chamber_\np \to \bR \; \bigg | \; F 
= \left[ \frac{\symTL.f}{\sqrt{\prod_{k=1}^\np \prod_{\summ_k\leq i<j<\summ_{k+1}} (x_j-x_i)}} \right]_{\textnormal{eval}} , \; f \in \SolSp\sub{1^{2N}}\bigg\} .
\end{align}
\end{proposition}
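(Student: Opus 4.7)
The plan is to reduce the claim to a single conformal block basis element, then use the structure theory from Sections~\ref{sec:sectionhecke} and~\ref{subsec: TL action} to turn the bracketed expression in~\eqref{eq:defCvarsigma} into a fused Specht polynomial divided by a positive \quote{cross-factor} whose evaluation is manifestly finite.

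Since the expression in~\eqref{eq:defCvarsigma} is linear in $f$ and, by Lemma~\ref{prop:lemma3p3}, the space $\SolSp\sub{1^{2N}}$ is spanned by the conformal block basis $\{\CobloF_T \;|\; T \in \SYTof{N,N}\}$, it suffices to treat a single basis element $\CobloF_T = \Delta(\bs{x}_{1,\ldots,2N})^{-1/2} \, \mathcal P_{T^t}$. First I would compute $\symTL . \CobloF_T$ explicitly. By Proposition~\ref{prop:propdescendtoTL}, $\symTL$ acts on $\SolSp\sub{1^{2N}}$ through the image of $\sym \in \bC[\SymGrp_{2N}]$ defined in~\eqref{eq:defsymmetrizer}; combined with~\eqref{eq:actionofa} and the identity $\omega(\sym) = \idpt$ (for $\omega$ the involution of~\eqref{eq:defomega}), this yields
\[
\symTL . \CobloF_T \;=\; \Delta(\bs{x}_{1,\ldots,2N})^{-1/2} \, \idpt . \mathcal P_{T^t}.
\]

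Next I would split the Vandermonde as $\Delta(\bs{x}_{1,\ldots,2N}) = V_\multii \cdot W_\multii$, where $V_\multii := \prod_{k=1}^\np \prod_{\summ_k \le i < j < \summ_{k+1}} (x_j - x_i)$ is exactly the denominator appearing in~\eqref{eq:defCvarsigma} and $W_\multii$ collects the remaining \quote{cross} factors; both $V_\multii$ and $W_\multii$ are strictly positive on $\chamber_{2N}$. The key observation is that $\idpt . \mathcal P_{T^t}$ is antisymmetric within each block $\{x_{\summ_k}, \ldots, x_{\summ_{k+1}-1}\}$ and therefore divisible, as a polynomial, by $V_\multii$. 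Consequently,
\[
\frac{\symTL . \CobloF_T}{\sqrt{V_\multii}} \;=\; \frac{(\idpt . \mathcal P_{T^t})/V_\multii}{\sqrt{W_\multii}} ,
\]
whose numerator is a polynomial whose evaluation on $\mathfrak{D}_\multii$ is, by Definition~\ref{def:fusedspecht}, the fused Specht polynomial $\mathcal F_{T^t}(x_1,\ldots,x_\np)$, while the denominator evaluates to $\prod_{1 \le k < l \le \np}(x_l-x_k)^{s_k s_l/2}>0$ on $\chamber_\np$. This both establishes well-definedness and produces the closed form
\[
F(x_1,\ldots,x_\np) \;=\; \frac{\mathcal F_{T^t}(x_1,\ldots,x_\np)}{\prod_{1 \le k < l \le \np}(x_l - x_k)^{s_k s_l / 2}} .
\]

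The main (and essentially only) substantive step is the polynomial divisibility of $\idpt . \mathcal P_{T^t}$ by $V_\multii$. This is precisely the mechanism underlying the definition of the fused Specht polynomials: the denominator $\sqrt{V_\multii}$ in~\eqref{eq:defCvarsigma} has been engineered to cancel the zero of $\symTL . \CobloF_T$ produced by coalescing coordinates, so that fusion on the CFT side and the antisymmetrizer construction on the algebraic side dovetail perfectly and the resulting function extends smoothly across $\mathfrak{D}_\multii$.
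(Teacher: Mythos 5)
Your argument is correct and follows essentially the same route as the paper: write $f = \Delta^{-1/2}\mathcal P$ via Lemma~\ref{prop:lemma3p3}, use $\omega(\sym)=\idpt$ to obtain $\symTL.f = \Delta^{-1/2}\,\idpt.\mathcal P$, and conclude from the block-antisymmetry of $\idpt.\mathcal P$ that the denominator divides out. The extra details you supply (reduction to basis elements, the explicit split $\Delta = V_\multii W_\multii$, and the resulting closed form) are correct and coincide with what the paper records separately in Lemma~\ref{lem:lemmaUintermsoffusedSpecht}.
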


\begin{proof}
By Lemma~\ref{prop:lemma3p3}, for every $f \in \SolSp\sub{1^{2N}}$ there is a polynomial $\mathcal P \in P^{(2^{N})}$ such that
\begin{align*}
f = \Delta(\bs{x}_{1,\ldots,2N})^{-1/2} \, \mathcal P .
\end{align*}
Thus, the claim follows by noting that
\begin{align} \label{eq:actionObecomesactionp}
\symTL . f = \Delta(\bs{x}_{1,\ldots,2N})^{-1/2} \,\omega(\sym) . \mathcal P 
= \Delta(\bs{x}_{1,\ldots,2N})^{-1/2} \, \idpt . \mathcal P,
\end{align}
and $\idpt . \mathcal P$ is divisible by $\prod_{k=1}^\np \prod_{\summ_k\leq i<j<\summ_{k+1}} (x_j-x_i)$.
\end{proof}

Using Proposition~\ref{prop:propdefCvarsigma}, we
define a family $\{\CobloF_T \; | \; T \in \CSYTof{N,N} \}$, 
where $\CSYT$ is the set of column-strict Young tableaux of shape $(N,N)$ and content $\multii$. We set
\begin{align} \label{eq: sol_space general}
\SolSp_\multii := \Spn_{\bC} \{\CobloF_T \; | \; T \in \CSYTof{N,N} \} .
\end{align}
In Lemma~\ref{lem:lemmaUintermsoffusedSpecht}, we write $\CobloF_T$ in the form 
$\CobloF_T = \mathcal K \mathcal F_{T^t}$ where 
$\mathcal K$ is a normalization factor independent of $T$, 
and $\mathcal F_{T^t}$ is the fused Specht polynomial (Definition~\ref{def:fusedspecht}). 
As the collection 
$\{\mathcal F_{T^t} \; | \; T \in \CSYTof{N,N} \}$ is linearly independent by Proposition~\ref{prop:proplinearindepfusedspecht}, 
this implies that the collection
$\{\CobloF_T \; | \; T \in \CSYTof{N,N} \}$ is linearly independent and thus forms a basis for $\SolSp_\multii$.

\begin{definition} \label{def:firstdeffusedCBs}
For each $T \in \CSYTof{N,N}$, we define 
the \emph{conformal block basis function} as
\begin{align} \label{eq:infirstdeffusedCBs}
\CobloF_T(x_1,\ldots ,x_\np) := 
\left[ \frac{\symTL . \CobloF_{\hat{T}}}{\sqrt{\prod_{k=1}^\np \prod_{\summ_k\leq i<j<\summ_{k+1}} (x_j-x_i)}} \right]_{\textnormal{eval}} ,
\end{align}
where $\hat{T} := (\tilde{F})^t$ with $F = T^t \in \RSYTof{2^N}$ being the transpose 
of $T \in \CSYTof{N,N}$, and $\tilde{F} \in \SYT$ constructed as in Definition~\ref{def:tildeT} and Lemma~\ref{lem:tildeT}.
\end{definition}

This definition is motivated by fusion in CFT: the left-hand side in~\eqref{eq:infirstdeffusedCBs} should be a correlation function of CFT fields obtained from fusion of fields with Kac type conformal weights $h_{1,2} = 1/4$ as in~\eqref{eq:conf_weights} --- and the correlation functions of the latter are given by the functions in the solution space $\SolSp\sub{1^{2N}}$ in~\eqref{eq:defspaceS}~\cite{Flores-Kleban:Solution_space_for_system_of_null-state_PDE1}. 
Now, Lemma~\ref{prop:lemma3p3} implies that the conformal block functions appearing on the right-hand side in~\eqref{eq:infirstdeffusedCBs} form a basis for this space, and can hence be thought of as conformal blocks. 
Finally, the evaluation operation on the right-hand side in~\eqref{eq:infirstdeffusedCBs} is nothing but a fusion with the appropriate fusion channels, 
to obtain CFT fields with more general Kac type conformal weights in~\eqref{eq:conf_weights} labeled by the valences $\multii$. 
We shall make this heuristics precise in the course of the rest of this article.

Let us begin by observing that the (fused) conformal block basis functions can be written even more explicitly using the fused Specht polynomials $\mathcal F_F$ from Definition~\ref{def:fusedspecht}.
Recall that the latter also have an explicit combinatorial formula~\eqref{eq:combinatorialformula} obtained in Proposition~\ref{prop:combinatorialformula}.

\begin{lemma} \label{lem:lemmaUintermsoffusedSpecht}
Fix $\multii=(s_1,\ldots,s_\np) \in \bZpos^\np$ with $s_1+\cdots+s_\np = 2N$. 
Let $T \in \CSYTof{N,N}$, and let $T^t \in \RSYTof{2^{N}}$ be its transpose. 
The conformal block function $\CobloF_T$ then reads
\begin{align} \label{eq:defUvarsigmaT}
\CobloF_T (x_1,\ldots ,x_\np) 
=  \prod_{1\leq i < j \leq \np} (x_j-x_i)^{-\frac{s_i s_j}2}\times \mathcal F_{T^t}(x_1,\ldots ,x_\np)  .
\end{align}
\end{lemma}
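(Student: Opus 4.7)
The proof is essentially an unwinding of the definitions. Starting from~\eqref{eq:infirstdeffusedCBs}, I would first substitute Definition~\ref{def:CBsh12def} for $\CobloF_{\hat{T}}$, using that $\hat{T}^t = \tilde{T^t}$ by the construction of $\hat{T}$ in Definition~\ref{def:firstdeffusedCBs}. Invoking the identity~\eqref{eq:actionObecomesactionp} from the proof of Proposition~\ref{prop:propdefCvarsigma} (which uses that $\symTL$ commutes with the totally symmetric prefactor $\Delta(\bs{x}_{1,\ldots,2N})^{-1/2}$ and that its action on the polynomial factor agrees with $\idpt$ via the automorphism $\omega$), I obtain
\begin{align*}
\symTL . \CobloF_{\hat{T}} = \Delta(\bs{x}_{1,\ldots,2N})^{-1/2} \, \idpt . \mathcal P_{\tilde{T^t}} .
\end{align*}
Substituting this into~\eqref{eq:infirstdeffusedCBs} recasts $\CobloF_T$ as the evaluation of $\idpt . \mathcal P_{\tilde{T^t}} / \sqrt{\Delta \cdot Q}$, where $Q := \prod_{k=1}^{\np} \prod_{\summ_k \leq i < j < \summ_{k+1}}(x_j - x_i)$ denotes the \quote{intra-block} part of the Vandermonde determinant.

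The key algebraic step is then the factorization $\Delta(\bs{x}_{1,\ldots,2N}) = Q \cdot R$, where
\begin{align*}
R := \prod_{1 \leq k < l \leq \np} \; \prod_{\substack{\summ_k \leq i < \summ_{k+1} \\ \summ_l \leq j < \summ_{l+1}}} (x_j - x_i)
\end{align*}
collects the \quote{inter-block} differences. This gives $\sqrt{\Delta \cdot Q} = Q \cdot \sqrt{R}$, so the prefactor separates into $Q^{-1} \cdot R^{-1/2}$. The combination $Q^{-1} \, \idpt . \mathcal P_{\tilde{T^t}}$ is precisely the expression whose evaluation defines $\mathcal F_{T^t}$ in Definition~\ref{def:fusedspecht} (with $F = T^t$). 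The remaining factor $R^{-1/2}$ involves only cross-block differences, so under the evaluations $x_{\summ_k} = \cdots = x_{\summ_{k+1}-1}$ each factor $(x_j - x_i)$ in $R$ collapses to $(x_l - x_k)$ in the $\np$-variable notation and appears with total multiplicity $s_k s_l$. Hence $[R^{-1/2}]_{\textnormal{eval}} = \prod_{1 \leq i < j \leq \np}(x_j - x_i)^{-s_i s_j / 2}$, and combining the pieces yields the claim.

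The only delicate point is that $Q^{-1/2}$ is singular on the evaluation locus while $\idpt . \mathcal P_{\tilde{T^t}}$ vanishes there, but these conspire to produce the finite quantity $\mathcal F_{T^t}$, as already established in the proof of Proposition~\ref{prop:propdefCvarsigma}. By contrast, $R^{-1/2}$ is smooth and nonvanishing on the evaluation locus and therefore passes through the evaluation bracket without issue. There is no deeper obstacle: once the factorization $\Delta = Q \cdot R$ is noticed, the rest is bookkeeping.
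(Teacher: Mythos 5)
Your proof is correct and follows essentially the same route as the paper: both unwind Definition~\ref{def:firstdeffusedCBs} via~\eqref{eq:actionObecomesactionp}, peel off the factor $Q^{-1}\,\idpt.\mathcal P_{\widetilde{T^t}}$ as the fused Specht polynomial from Definition~\ref{def:fusedspecht}, and evaluate the leftover square-root prefactor to $\prod_{i<j}(x_j-x_i)^{-s_is_j/2}$. The only cosmetic difference is that the paper cites Proposition~\ref{prop:fusedspechtonecolumn} for the evaluation of $\sqrt{Q/\Delta}$, whereas you carry out that same inter-block counting inline via the factorization $\Delta = Q\cdot R$.
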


\begin{proof}
From Definition~\ref{def:firstdeffusedCBs} (denoting $\hat{T} := (\tilde{F})^t$ with $F = T^t$) and~\eqref{eq:actionObecomesactionp}, we obtain 
\begin{align*} 
\CobloF_T (x_1,\ldots ,x_\np) 
= \; & 
\left[ \frac{\symTL . \CobloF_{\hat{T}}}{\sqrt{\prod_{k=1}^\np \prod_{\summ_k\leq i<j<\summ_{k+1}} (x_j-x_i)}}\right]_{\textnormal{eval}} \\
= \; & \left[ \frac{\idpt . \mathcal P_{\widetilde{T^t}}}{\prod_{k=1}^\np \prod_{\summ_k\leq i<j<\summ_{k+1}} (x_j-x_i)} \frac{\sqrt{\prod_{k=1}^\np \prod_{\summ_k\leq i<j<\summ_{k+1}} (x_j-x_i)}}{\sqrt{\prod_{1\leq i < j \leq 2N} (x_j-x_i)}}\right]_{\textnormal{eval}} 
\end{align*}
These two ratios have a well-defined evaluation. Indeed, it follows from Definition~\ref{def:fusedspecht} that the first fraction gives $\mathcal F_{T^t}$. 
Moreover, by Proposition~\ref{prop:fusedspechtonecolumn} we have 
\begin{align*}
\left[ \frac{\sqrt{\prod_{k=1}^\np \prod_{\summ_k\leq i<j<\summ_{k+1}} (x_j-x_i)}}{\sqrt{\prod_{1\leq i < j \leq 2N} (x_j-x_i)}}\right]_{\textnormal{eval}}
= \prod_{1\leq i < j \leq \np} (x_j-x_i)^{-\frac{s_i s_j}2} .
\end{align*}
This gives the asserted identity~\eqref{eq:defUvarsigmaT}.
\end{proof}

\begin{proposition} \label{prop:basisformathcalC}
The collection $\{\CobloF_T \; | \; T \in \CSYTof{N,N} \}$ is a basis for $\SolSp_\multii$ in~\eqref{eq: sol_space general}.
\end{proposition}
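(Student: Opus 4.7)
The plan is to deduce linear independence of $\{\CobloF_T \mid T \in \CSYTof{N,N}\}$ directly from the corresponding fact about fused Specht polynomials, and then to verify the spanning property by tracing through the definition of $\SolSp_\multii$ given in~\eqref{eq:defCvarsigma}.

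First, I would invoke Lemma~\ref{lem:lemmaUintermsoffusedSpecht} to factor
\begin{align*}
\CobloF_T(x_1,\ldots,x_\np) \; = \; \mathcal K(x_1,\ldots,x_\np) \cdot \mathcal F_{T^t}(x_1,\ldots,x_\np),
\qquad \mathcal K \; := \; \prod_{1 \leq i<j\leq \np} (x_j-x_i)^{-s_is_j/2},
\end{align*}
noting that $\mathcal K$ is a common nonvanishing prefactor on $\chamber_\np$, and is independent of $T$. Transposition of Young tableaux gives a bijection $\CSYTof{N,N} \to \RSYTof{2^N}$, because swapping rows and columns exchanges row-strict and column-strict conditions, and the shape $\lambda=(N,N)$ with $\Bar\lambda=(2^N)$. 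Since $(2^N)$ is a two-column diagram, Proposition~\ref{prop:proplinearindepfusedspecht} applies and yields that the collection $\{\mathcal F_{T^t} \mid T \in \CSYTof{N,N}\}$ is linearly independent. Multiplying by $\mathcal K\neq 0$ preserves linear independence, so $\{\CobloF_T \mid T \in \CSYTof{N,N}\}$ is linearly independent.

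Next, to see that this collection spans the space $\SolSp_\multii$ defined in~\eqref{eq:defCvarsigma}, I would argue as follows. Any $F \in \SolSp_\multii$ arises from some $f \in \SolSp\sub{1^{2N}}$, and by Lemma~\ref{prop:lemma3p3} the element $f$ is a linear combination of $\CobloF_{\hat T}$ for $\hat T \in \SYTof{N,N}$. Using~\eqref{eq:actionObecomesactionp} together with the square root identity
\begin{align*}
\left[\frac{\Delta(\bs{x}_{1,\ldots,2N})^{1/2}}{\sqrt{\prod_{k=1}^\np \prod_{\summ_k \leq i<j<\summ_{k+1}}(x_j-x_i)}}\right]_{\textnormal{eval}} \; = \; \prod_{1\leq i<j\leq \np}(x_j-x_i)^{s_is_j/2},
\end{align*}
which follows from Proposition~\ref{prop:fusedspechtonecolumn} applied to the single-column tableau, I compute
\begin{align*}
\left[\frac{\symTL . \CobloF_{\hat T}}{\sqrt{\prod_{k=1}^\np\prod_{\summ_k\leq i<j<\summ_{k+1}}(x_j-x_i)}}\right]_{\textnormal{eval}} \; = \; \mathcal K \cdot \psi(\idpt.\mathcal P_{\hat T^t}),
\end{align*}
where $\psi$ is the map~\eqref{eq:defmappsi}. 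By Theorem~\ref{thm:theoremA} (valid for the two-column shape $(2^N)$), the image $\psi(\idpt.\mathcal P_{\hat T^t})$ lies in $\Spn_\bC\{\mathcal F_F \mid F \in \RSYTof{2^N}\}$, so the above expression lies in $\Spn_\bC\{\CobloF_T \mid T \in \CSYTof{N,N}\}$. Summing over the $\CobloF_{\hat T}$ appearing in $f$ then shows $F$ lies in this span, which closes the argument.

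The only delicate point is the compatibility of square roots on $\chamber_\np$ used in extracting $\Delta^{1/2}$ through the evaluation; this is handled by noting all functions involved are positive on the chamber, so the chosen branches match with those appearing in Definition~\ref{def:CBsh12def}. Otherwise the argument is a short synthesis of Lemma~\ref{lem:lemmaUintermsoffusedSpecht}, Proposition~\ref{prop:proplinearindepfusedspecht}, and Theorem~\ref{thm:theoremA}; no further representation-theoretic input is needed.
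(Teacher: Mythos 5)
Your argument for linear independence is exactly the paper's proof: Lemma~\ref{lem:lemmaUintermsoffusedSpecht} writes $\CobloF_T = \mathcal K\,\mathcal F_{T^t}$ with a common nonvanishing prefactor on $\chamber_\np$, and Proposition~\ref{prop:proplinearindepfusedspecht} applied to the two-column shape $(2^N)$ (via the transposition bijection $\CSYTof{N,N}\to\RSYTof{2^N}$) does the rest. The spanning argument you add is correct but strictly superfluous for the statement as written, since $\SolSp_\multii$ in~\eqref{eq: sol_space general} is \emph{defined} as the span of the $\CobloF_T$; it does, however, usefully confirm that this span coincides with the space defined in~\eqref{eq:defCvarsigma}, a compatibility the paper leaves implicit.
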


\begin{proof}
Lemma~\ref{lem:lemmaUintermsoffusedSpecht} gives an explicit expression of each conformal block function $\CobloF_T$ in terms of fused Specht polynomials. 
The claim thus follows from  Proposition~\ref{prop:proplinearindepfusedspecht}.
\end{proof}

\begin{remark}
If $\multii=(1,1,\ldots,1)=1^\Summed$ with $\np = \Summed = 2N$, then $T \in \SYTof{N,N}$. 
Moreover, by Remark~\ref{remarkfusedspechtforvalence1} the fused Specht polynomial $\mathcal F_{T^t}$ then becomes the classical Specht polynomial $\mathcal P_{T^t}$. Hence $\CobloF_T$ reduces to the conformal block function~\eqref{eq:CBsh12} of~\cite{Peltola-Wu:Global_and_local_multiple_SLEs_and_connection_probabilities_for_level_lines_of_GFF}.
\end{remark}

\begin{remark} \label{rem:Dyck path gen2}
Let us check that our functions match with~\cite[Eq.~(5.15,~5.16,~5.17)]{Liu-Wu:Scaling_limits_of_crossing_probabilities_in_metric_graph_GFF} 
as discussed in Remark~\ref{rem:Dyck path gen}.
These functions were shown to have an important interpretation for GFF level sets with height gaps $\pm 2\pi$.
Analogously, we expect that our more general functions 
play the same role for general GFF level sets of the type considered in~\cite{ALS:First_passage_sets_of_2D_GFF} with more general height gaps,
and we plan to return to this in future work.

Consider the conformal block functions in $\SolSp\sub{2,2,2,2}$. 
We have $\dim\SolSp\sub{2,2,2,2} = 3$,  because there are three column-strict Young tableaux with this set of contents:
\begin{align*}
T_1 \; = \; 
{\ytableausetup{nosmalltableaux}\begin{ytableau} 
1 & 1 & 2 & 2 \\
3 & 3 & 4 & 4
\end{ytableau}}
, \qquad
T_2 \; = \; 
{\ytableausetup{nosmalltableaux}\begin{ytableau} 
1 & 1 & 3 & 3 \\
2 & 2 & 4 & 4
\end{ytableau}}
, \qquad
T_3 \; = \; 
{\ytableausetup{nosmalltableaux}\begin{ytableau} 
1 & 1 & 2 & 3 \\
2 & 3 & 4 & 4 
\end{ytableau}}.
\end{align*}
Utilizing Lemma~\ref{lem:lemmaUintermsoffusedSpecht}, 
we compute the conformal block functions
$\{ \CobloF_{T_1}, \CobloF_{T_2}, \CobloF_{T_3} \}$ spanning  $\SolSp\sub{2,2,2,2}$. 
First of all, the fused Specht polynomials $\mathcal F_{T_1^t}$ and $\mathcal F_{T_2^t}$ are immediately 
computed, because any given entry appears only in one column (see Proposition~\ref{prop:fusedspechtonecolumn}). 
This gives
\begin{align*}
\mathcal F_{T_1^t} = (x_2-x_1)^4 (x_4-x_3)^4 
\qquad 
\textnormal{and} 
\qquad 
\mathcal F_{T_2^t} = (x_3-x_1)^4 (x_4-x_2)^4 .
\end{align*}
Therefore, thanks to Lemma~\ref{lem:lemmaUintermsoffusedSpecht} we immediately obtain
\begin{align*}
\CobloF_{T_1} = \; & \bigg(\frac{(x_2-x_1)(x_4-x_3)}{(x_3-x_1)(x_3-x_2)(x_4-x_1)(x_4-x_2)}\bigg)^2
\\
\CobloF_{T_2} = \; & \bigg(\frac{(x_3-x_1)(x_4-x_2)}{(x_2-x_1)(x_3-x_2)(x_4-x_1)(x_4-x_3)}\bigg)^2. 
\end{align*}
Hence, we readily see that $\CobloF_{T_1}$ and $\CobloF_{T_2}$ correspond to~\cite[Eq.~(5.16,~5.15)]{Liu-Wu:Scaling_limits_of_crossing_probabilities_in_metric_graph_GFF}. 

The third conformal block function $\CobloF_{T_3}$ is slightly more intricate, because the entries \quote{$2$} and \quote{$3$} appear in two different columns in $T_3^t$. 
Thus, consider $\hat{T}_3 \in \SYTof{4,4}$, with transpose
\begin{align*}
\widetilde{T_3^t} \; = \; 
{\ytableausetup{nosmalltableaux}\begin{ytableau} 
    1 & 4 \\
    2 & 6 \\
    3 & 7 \\
    5 & 8
\end{ytableau}}
\end{align*}
as in Definition~\ref{def:tildeT}.
From Definition~\ref{def:fusedspecht} and Lemma~\ref{lem:lemmaUintermsoffusedSpecht}, we obtain 
\begin{align}\label{eq:T3}
\frac{\CobloF_{T_3}(x_1,x_2,x_3,x_4)}{\prod_{1 \leq i < j \leq 4}(x_j-x_i)^{-2}} = 
\left[ \frac{\idptof{2,2,2,2} . \mathcal P_{\widetilde{T_3^t}}(x_1,\ldots,x_8)}{(x_2-x_1)(x_4-x_3)(x_6-x_5)(x_8-x_7)} \right]_{\substack{x_1, \, x_2  \, \mapsto x_1, \\ x_3, \, x_4  \, \mapsto x_2 \\ x_5,  \, x_6  \, \mapsto x_3, \\ x_7,  \, x_8  \, \mapsto x_4}} ,
\end{align}
where $\idptof{2,2,2,2}$ acts by antisymmetrizing with respect to the sets of variables $\{x_1,x_2\}$, $\{x_3,x_4\}$, $\{x_5,x_6\}$, and $\{x_7,x_8\}$. 
Note that this formula slightly simplifies because the Specht polynomial $\smash{\mathcal P_{\widetilde{T_3^t}}}$ is by definition antisymmetric with respect to $\{x_1,x_2\}$ and $\{x_7,x_8\}$.  
Thus, we have 
\begin{align*}
\textnormal{\eqref{eq:T3}} = 
\left[\frac{\mathcal P_{\widetilde{T_3^t}} \, - \, \mathcal P_{\widetilde{T_3^t}}(x_4 \leftrightarrow x_3) \, - \, \mathcal P_{\widetilde{T_3^t}}(x_6 \leftrightarrow x_5) \, + \, \mathcal P_{\widetilde{T_3^t}}(x_4 \leftrightarrow x_3 , x_6 \leftrightarrow x_5)}{4(x_2-x_1)(x_4-x_3)(x_6-x_5)(x_8-x_7)} \right]_{\substack{x_1, \, x_2 \, \mapsto x_1, \\ x_3, \, x_4  \,\mapsto x_2 \\ x_5, \, x_6  \, \mapsto x_3, \\ x_7, \, x_8 \, \mapsto x_4}}
\end{align*}
(denoting by $x_i \leftrightarrow x_j$ the interchange of the variables $x_i$ and $x_j$). 
An explicit computation then finally leads to
\begin{align*}
\CobloF_{T_3} = \frac{1}{2(x_4-x_1)^2 (x_3-x_2)^2} + \frac{1}{4(x_2-x_1)(x_3-x_1)(x_4-x_2)(x_4-x_3)},
\end{align*}
which is equivalent to~\cite[Eq.~(5.17)]{Liu-Wu:Scaling_limits_of_crossing_probabilities_in_metric_graph_GFF} up to a factor of $4$. 
\end{remark}

\subsection{Valenced Temperley-Lieb action}
\label{subsec: Valenced TL action}

We will next consider the \emph{valenced Temperley-Lieb algebra} $\TL_\multii = \TL_\multii(2)$ 
with fugacity $\nu = -q - q^{-1} = 2$, i.e., with deformation parameter $q=-1$. 
It is isomorphic to a diagram algebra of valenced tangles~\cite{Flores-Peltola:Standard_modules_radicals_and_the_valenced_TL_algebra} 
(which we will not, however, use in the present work).  
It is conveniently defined as the associative algebra 
\begin{align*}
\TL_\multii := \TL_\multii(2) := \symTL \TL_{2N}(2) \symTL
\end{align*} 
with unit $\symTL$.
Moreover, by Lemma~\ref{lem:lemmaPAP}, as the Temperley-Lieb algebra $\TL_{2N} = \TL_{2N}(2)$ is semisimple, 
so is $\TL_\multii$, and 
its simple modules are given by $\symTL(\LS_{2N}\super{s})$ whenever nontrivial\footnote{See~\cite{Flores-Peltola:Standard_modules_radicals_and_the_valenced_TL_algebra,Flores-Peltola_Generators_projectors_and_the_JW_algebra} for a thorough study of this algebra.}. 
Let us lastly note that $\TL_\multii$ is also isomorphic to a quotient of the fused Hecke algebras $\Hecke_\multii := \Hecke_\multii(-1) = \idpt \bC[\SymGrp_{2N}]\idpt \cong \sym \bC[\SymGrp_{2N}] \sym =: \Hecke_\multii(1)$, discussed in Section~\ref{subsec:fused Hecke algebra}
(cf.~\cite{Crampe-Poulain-d-Andecy:Fused_braids_and_centralisers_of_tensor_representations_of_Uq_gln}). 

\smallskip

Proposition~\ref{prop:basisformathcalC} implies the following result, which is an analog of Theorem~\ref{thm:theoremA}:

\begin{proposition} \label{prop:repvTL}
Fix $\multii=(s_1,\ldots,s_\np) \in \bZpos^\np$ with $s_1+\cdots+s_\np = 2N$. 
The map 
\begin{align} \label{eq:iso}
\symTL .f  \qquad \mapsto \quad  
\left[ \frac{\symTL . f}{\sqrt{\prod_{k=1}^\np \prod_{\summ_k\leq i<j<\summ_{k+1}} (x_j-x_i)}}\right]_{\textnormal{eval}} , \qquad f \in \SolSp\sub{1^{2N}} ,
\end{align}
is a linear isomorphism from $\symTL(\SolSp\sub{1^{2N}})$ to $\SolSp_\multii$, and it induces an isomorphism of $\TL_\multii$-modules as 
\begin{align} \label{eq:valTLaction}
(\symTL a \symTL) . F = \left[ \frac{(\symTL a \symTL) . f}{\sqrt{\prod_{k=1}^\np \prod_{\summ_k\leq i<j<\summ_{k+1}} (x_j-x_i)}}\right]_{\textnormal{eval}},\quad a\in \textnormal{TL}_{2N}(2) ,
\end{align} 
where $f \in \SolSp\sub{1^{2N}}$ is chosen such that 
\begin{align*}
F = \left[\frac{\symTL . f}{\sqrt{\prod_{k=1}^\np \prod_{\summ_k\leq i<j<\summ_{k+1}} (x_j-x_i)}}\right]_{\textnormal{eval}} 
\; \in \; \SolSp_{\multii}.
\end{align*}  
Moreover $\SolSp_{\multii}$ is isomorphic to the simple module $\symTL(\LS_{2N}\super{0})$.
\end{proposition}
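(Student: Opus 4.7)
The plan is to transport everything through the representation-theoretic identifications of Section~\ref{sec:sectionhecke} and reduce the claim to Theorem~\ref{thm:theoremA} applied to the two-column diagram $\lambda=(2^N)$. By \eqref{eq:actionofa}, for every $T'\in\SYTof{N,N}$ one has
\[
\symTL.\CobloF_{T'} \; = \; \Delta(\bs{x}_{1,\ldots,2N})^{-1/2}\,\idpt.\mathcal{P}_{(T')^t},
\]
so $\symTL$ acting on $\SolSp\sub{1^{2N}}$ corresponds, through the identification $\CobloF_{T'}\leftrightarrow\mathcal{P}_{(T')^t}$, to $\idpt$ acting on $P^{(2^N)}$. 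The ratio appearing in \eqref{eq:iso} is then, up to the common factor $\prod_{1\leq i<j\leq\np}(x_j-x_i)^{-s_is_j/2}$ provided by Lemma~\ref{lem:lemmaUintermsoffusedSpecht}, exactly the map $\psi$ of \eqref{eq:defmappsi}.

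Next I would exhibit explicit bases that are carried to one another by \eqref{eq:iso}. For $T\in\CSYTof{N,N}$, recall $\hat T=(\widetilde{T^t})^t\in\SYTof{N,N}$ from Definition~\ref{def:firstdeffusedCBs}. By Proposition~\ref{prop:linind1}, combined with the polytabloid/polynomial isomorphism of Lemma~\ref{lem:isomorphism} and the transpose bijection $\CSYTof{N,N}\leftrightarrow\RSYTof{2^N}$, the family $\{\idpt.\mathcal{P}_{\widetilde{T^t}} \; : \; T\in\CSYTof{N,N}\}$ is a basis of $\idpt(P^{(2^N)})$; consequently $\{\symTL.\CobloF_{\hat T}\; : \; T\in\CSYTof{N,N}\}$ is a basis of $\symTL(\SolSp\sub{1^{2N}})$. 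By Definition~\ref{def:firstdeffusedCBs}, the map \eqref{eq:iso} sends $\symTL.\CobloF_{\hat T}\mapsto\CobloF_T$, and by Proposition~\ref{prop:basisformathcalC} the target family is a basis of $\SolSp_\multii$. Hence \eqref{eq:iso} is a linear isomorphism onto $\SolSp_\multii$.

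For the $\TL_\multii$-module structure defined by \eqref{eq:valTLaction}, I would first verify that it is well-defined on $F$ rather than on its lift $f$: if $\symTL.f_1=\symTL.f_2$, then since $\symTL^2=\symTL$ one has $(\symTL a\symTL)(f_1-f_2)=\symTL a(\symTL f_1-\symTL f_2)=0$, so the prescription depends only on $\symTL.f\in\symTL(\SolSp\sub{1^{2N}})$. By construction \eqref{eq:iso} then intertwines the $\TL_\multii$-actions, and is therefore an isomorphism of $\TL_\multii$-modules. Finally, Corollary~\ref{cor:TL_rep} gives $\SolSp\sub{1^{2N}}\cong\LS_{2N}\super{0}$ as $\TL_{2N}$-modules, whence $\SolSp_\multii\cong\symTL(\LS_{2N}\super{0})$; Lemma~\ref{lem:lemmaPAP} applied to $A=\TL_{2N}$ and $p=\symTL$ then shows that this latter module is either zero or simple as a $\TL_\multii$-module, establishing the last assertion.

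The only point requiring care is bookkeeping: tracking the $\omega$-twist in \eqref{eq:actionofa}, checking that the Vandermonde factors cancel correctly under the evaluation $[\,\cdot\,]_\textnormal{eval}$ so that \eqref{eq:iso} indeed realizes the isomorphism $\psi$ of Theorem~\ref{thm:theoremA}, and matching the basis $\{\idpt.\mathcal{P}_{\widetilde{T^t}}\}$ with the polytabloid basis of Proposition~\ref{prop:linind1} under the transpose bijection. No genuinely new computations are needed beyond these identifications, as every required linear-algebra input is already assembled in Sections~\ref{sec:sectionhecke} and~\ref{subsec: fused blocks}.
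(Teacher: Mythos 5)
Your proof is correct and follows essentially the same route as the paper's: the paper establishes the linear isomorphism by noting surjectivity is built into the definition of $\SolSp_\multii$ and then matching dimensions ($\dim\SolSp_\multii = |\CSYTof{N,N}| = \dim\sym(V\super{N,N}) = \dim\symTL(\LS_{2N}\super{0})$, via Propositions~\ref{prop:basisformathcalC} and~\ref{prop:linind1} and Corollary~\ref{cor:TL_rep}), which is the same content as your explicit basis correspondence $\symTL.\CobloF_{\hat T}\mapsto\CobloF_T$. Your additional checks (well-definedness of the action \eqref{eq:valTLaction} via $\symTL^2=\symTL$, and simplicity via Lemma~\ref{lem:lemmaPAP}) are correct and merely make explicit what the paper leaves implicit.
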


\begin{proof}
Recall that in~\eqref{eq:defCvarsigma}, 
$\SolSp_{\multii}$ is defined via the map~\eqref{eq:iso}, and
by Proposition~\ref{prop:basisformathcalC}, we have
$\dim\SolSp_\multii = |\CSYTof{N,N}|$. 
Also, Corollary~\ref{cor:TL_rep} (and its proof) 
shows that $\symTL(\SolSp\sub{1^{2N}})$ is isomorphic to the $\TL_\multii$-module $\symTL(\LS_{2N}\super{0})$ and
$\dim\symTL(\LS_{2N}\super{0}) = \dim\sym(V\super{N,N})$.
The claim now follows, since 
from the proof of Proposition~\ref{prop:linind1}, 
we obtain $\dim\sym(V\super{N,N}) = |\CSYTof{N,N}|$.
\end{proof}

\subsection{Covariance properties}
\label{subsec: cov}

The purpose of this section is to verify that the conformal block functions satisfy the M\"obius covariance of the primary fields with Kac weights~\eqref{eq:conf_weights}. 

\begin{proposition} \label{prop:wardidentity}
Let $\varphi \colon \bH \to \bH$ be a M\"obius transformation 
such that $\varphi(x_1)<\cdots<\varphi(x_\np)$. 
The conformal block functions $\CobloF_T$ satisfy the covariance property
\begin{align*}
\CobloF_T(\varphi(x_1),\ldots ,\varphi(x_\np)) 
=  \prod_{i=1}^\np |\varphi'(x_i)|^{-s_i^2 / 4} \times \CobloF_T(x_1,\ldots ,x_\np).
\end{align*}
\end{proposition}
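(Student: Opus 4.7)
The plan is to exploit the definition of $\CobloF_T$ as a symmetrization-and-evaluation of an element of $\SolSp\sub{1^{2N}}$ (Definition~\ref{def:firstdeffusedCBs}), transferring the already established covariance~\eqref{eq:covariancepropertyvalence1} of conformal block functions in $\SolSp\sub{1^{2N}}$ through the evaluation map. Write
\begin{align*}
\CobloF_T(x_1,\ldots,x_\np)
= \left[ \frac{\symTL.\CobloF_{\hat T}(x_1,\ldots,x_{2N})}{\sqrt{\prod_{k=1}^\np \prod_{\summ_k\leq i<j<\summ_{k+1}}(x_j-x_i)}}\right]_{\textnormal{eval}},
\end{align*}
where $\CobloF_{\hat T} \in \SolSp\sub{1^{2N}}$, and replace each $x_i$ by $\varphi(x_i)$ on both sides.

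First I would handle the numerator. Since $\symTL$ acts only by permuting variables within each block $\{\summ_k, \ldots, \summ_{k+1}-1\}$, and the M\"obius substitution $x_i \mapsto \varphi(x_i)$ is applied coordinate-wise, the operations commute: $\symTL.\big(\CobloF_{\hat T}(\varphi(x_1),\ldots,\varphi(x_{2N}))\big) = (\symTL.\CobloF_{\hat T})(\varphi(x_1),\ldots,\varphi(x_{2N}))$. Applying the covariance~\eqref{eq:covariancepropertyvalence1} (which holds for every element of $\SolSp\sub{1^{2N}}$, hence for $\symTL.\CobloF_{\hat T}$ by Lemma~\ref{prop:lemma3p3}) produces the factor $\prod_{i=1}^{2N}\varphi'(x_i)^{-1/4}$. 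This factor is invariant under permutations inside each block, so it commutes with the evaluation $x_{\summ_k}=\cdots=x_{\summ_{k+1}-1}$, and under this evaluation it becomes $\prod_{k=1}^\np \varphi'(x_k)^{-s_k/4}$.

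Next I would transform the denominator. Since $\varphi\colon\bH\to\bH$ is a real M\"obius map with $\varphi(x_1)<\cdots<\varphi(x_\np)$, one has $\varphi'(x_i)>0$ for real $x_i$ and the standard identity
\begin{align*}
\varphi(x_j)-\varphi(x_i)=(x_j-x_i)\sqrt{\varphi'(x_i)\varphi'(x_j)} .
\end{align*}
Substituting into the denominator, the extra factor from the square root is
\begin{align*}
\sqrt{\prod_{k=1}^\np \prod_{\summ_k\leq i<j<\summ_{k+1}} \sqrt{\varphi'(x_i)\varphi'(x_j)}} .
\end{align*}
Inside block $k$, each variable index $m\in[\summ_k,\summ_{k+1})$ appears in $s_k-1$ pairs, so the inner product equals $\prod_{m\in[\summ_k,\summ_{k+1})}\varphi'(x_m)^{(s_k-1)/2}$. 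After evaluation ($s_k$ variables collapse to $x_k$) and applying the outer square root, the denominator picks up $\prod_{k=1}^\np \varphi'(x_k)^{s_k(s_k-1)/4}$.

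Combining numerator and denominator contributions,
\begin{align*}
\CobloF_T(\varphi(x_1),\ldots,\varphi(x_\np))
= \prod_{k=1}^\np \varphi'(x_k)^{-s_k/4 - s_k(s_k-1)/4}\,\CobloF_T(x_1,\ldots,x_\np)
= \prod_{k=1}^\np \varphi'(x_k)^{-s_k^2/4}\,\CobloF_T(x_1,\ldots,x_\np),
\end{align*}
which is the claim (with $|\varphi'(x_k)|=\varphi'(x_k)$ since $\varphi'(x_k)>0$). The only potential pitfall is the bookkeeping of how the $\varphi'(x_i)$'s in the two blocks aggregate under evaluation, but this reduces to the elementary identity $\tfrac{s_k}{4}+\tfrac{s_k(s_k-1)}{4}=\tfrac{s_k^2}{4}$; there is no substantive obstacle, as the Jones--Wenzl/symmetrizer structure is already compatible with the M\"obius action by construction.
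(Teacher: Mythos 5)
Your proof is correct and follows essentially the same route as the paper: the paper likewise pushes the covariance~\eqref{eq:covariancepropertyvalence1} of elements of $\SolSp\sub{1^{2N}}$ and the identity $\varphi(x)-\varphi(y)=(x-y)\sqrt{\varphi'(x)\varphi'(y)}$ through the evaluation map, merely packaging the exponent bookkeeping via Lemma~\ref{lem:lemmaUintermsoffusedSpecht} and the two auxiliary Lemmas~\ref{lem:lemmamobius1}--\ref{lem:lemmamobius2}. One cosmetic point: your claim that $\symTL$ ``acts only by permuting variables'' and hence commutes with the substitution is both unnecessary and slightly inaccurate (the paper explicitly stresses that the $\TL_{2N}$-action on $\SolSp\sub{1^{2N}}$ is \emph{not} a permutation of the variables); all you actually need is that $\symTL.\CobloF_{\hat T}$ is a fixed element of $\SolSp\sub{1^{2N}}$ and therefore satisfies~\eqref{eq:covariancepropertyvalence1}.
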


\begin{proof}
Applying $x_i \mapsto \varphi(x_i)$ for all $i=1,\ldots ,\np$ in~\eqref{eq:defUvarsigmaT} of Lemma~\ref{lem:lemmaUintermsoffusedSpecht}, we obtain
\begin{align*} 
\CobloF_T (\varphi(x_1),\ldots ,\varphi(x_\np))
=  \prod_{1\leq i < j \leq \np} (\varphi(x_j)-\varphi(x_i))^{-s_i s_j / 2}\times \mathcal F_{T^t}(\varphi(x_1),\ldots ,\varphi(x_\np))  .
\end{align*}
The claim now follows from Lemmas~\ref{lem:lemmamobius1} and~\ref{lem:lemmamobius2}, proven below.
\end{proof}

\begin{lemma} \label{lem:lemmamobius1}
Fix $\multii=(s_1,\ldots,s_\np) \in \bZpos^\np$ with $s_1+\cdots+s_\np = 2N$. 
Let $\varphi \colon \bH \to \bH$ be a M\"obius transformation 
such that $\varphi(x_1)<\cdots<\varphi(x_\np)$. 
We have
\begin{align*}
\prod_{1\leq i < j \leq \np} (\varphi(x_j)-\varphi(x_i))^{-s_i s_j / 2}
= \prod_{i=1}^\np \varphi'(x_i)^{s_i (s_i-2N) /4} \times 
\prod_{1\leq i < j \leq \np} (x_j-x_i)^{-s_i s_j / 2} .
\end{align*}
\end{lemma}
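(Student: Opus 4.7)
The statement is an algebraic identity governed by the standard transformation law for differences under a Möbius map, so the plan is to reduce everything to that law and then a bookkeeping of exponents. First, I parameterize $\varphi \colon \bH \to \bH$ as $\varphi(x) = (ax+b)/(cx+d)$ with $a,b,c,d \in \bR$ and $ad - bc = 1$. A direct computation yields the two identities
\begin{align*}
\varphi'(x) \; = \; (cx+d)^{-2} , \qquad \varphi(y) - \varphi(x) \; = \; \frac{y-x}{(cx+d)(cy+d)} .
\end{align*}
Since $\varphi$ preserves the ordering $x_1 < \cdots < x_\np$, the real numbers $cx_1+d,\ldots,cx_\np+d$ all share a common sign, so the product $(cx_i+d)(cx_j+d)$ is positive and I can rewrite the second identity as
\begin{align*}
\varphi(x_j) - \varphi(x_i) \; = \; (x_j - x_i) \, \sqrt{\varphi'(x_i)\varphi'(x_j)} ,
\end{align*}
using the positive square roots (note $\varphi'(x_i) > 0$).

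Raising to the power $-s_i s_j / 2$ and multiplying over all pairs $i < j$ gives
\begin{align*}
\prod_{1 \leq i < j \leq \np} \big(\varphi(x_j) - \varphi(x_i)\big)^{-s_i s_j/2}
\; = \; \prod_{1 \leq i < j \leq \np} (x_j - x_i)^{-s_i s_j/2} \, \times \, \prod_{1 \leq i < j \leq \np} \varphi'(x_i)^{-s_i s_j/4} \varphi'(x_j)^{-s_i s_j/4} .
\end{align*}
The final step is to collect, for each $k \in \llbracket 1, \np \rrbracket$, the total exponent of $\varphi'(x_k)$ in the right-hand factor: the index $k$ appears as the smaller index paired with each $j > k$ and as the larger index paired with each $i < k$, so the exponent is
\begin{align*}
-\frac{1}{4} \bigg( \sum_{j > k} s_k s_j + \sum_{i < k} s_i s_k \bigg)
\; = \; -\frac{s_k}{4} \sum_{l \neq k} s_l
\; = \; -\frac{s_k (2N - s_k)}{4}
\; = \; \frac{s_k (s_k - 2N)}{4} ,
\end{align*}
using $s_1 + \cdots + s_\np = 2N$. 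This gives precisely the claimed prefactor $\prod_{i=1}^\np \varphi'(x_i)^{s_i(s_i - 2N)/4}$, completing the proof. There is no genuine obstacle here; the only care needed is to justify the positive-square-root rewriting (which is ensured by the monotonicity assumption on $\varphi$) so that fractional powers are unambiguous.
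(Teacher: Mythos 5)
Your proof is correct and follows exactly the paper's route: it uses the Möbius identity $\varphi(x_j)-\varphi(x_i) = (x_j-x_i)\sqrt{\varphi'(x_i)\varphi'(x_j)}$ together with $\sum_j s_j = 2N$ to collect the exponents. The paper leaves this as a direct verification, and your write-up simply fills in the same computation in detail (with the welcome extra care about positivity of the square roots).
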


\begin{proof}
This can be directly verified by utilizing the identity
\begin{align} \label{eq:identitymobius}
\varphi(x)-\varphi(y) = (x-y) \sqrt{\varphi'(x) \varphi'(y)} ,
\end{align}
satisfied by all M\"obius transformations $\varphi$, 
combined with the identity $2N = \sum_{j=1}^\np s_j$. 
\end{proof}

\begin{lemma} \label{lem:lemmamobius2}
Fix $\multii=(s_1,\ldots,s_\np) \in \bZpos^\np$ with $s_1+\cdots+s_\np = 2N$. 
We have
\begin{align*}
\mathcal F_{T^t}(\varphi(x_1),\ldots ,\varphi(x_\np)) 
= \prod_{i=1}^\np \varphi'(x_i)^{s_i(N-1)/2 - s_i(s_i-1)/2} \times \mathcal F_{T^t}(x_1,\ldots ,x_\np) .
\end{align*}
\end{lemma}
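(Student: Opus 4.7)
The plan is to start from Definition~\ref{def:fusedspecht}, which expresses
\begin{align*}
\mathcal F_{T^t}(x_1,\ldots ,x_\np)
\; = \; \left[ \frac{\idpt.\mathcal P_{\widetilde{T^t}}(x_1,\ldots ,x_{2N})}{\prod_{k=1}^\np \prod_{\summ_k\leq i<j<\summ_{k+1}} (x_j-x_i)} \right]_{\textnormal{eval}} ,
\end{align*}
and to substitute $x_i \mapsto \varphi(x_i)$ \emph{before} performing the evaluation, tracking how the numerator and denominator separately transform via the M\"obius identity~\eqref{eq:identitymobius}. Since $\widetilde{T^t} \in \SYTof{2^N}$ has two columns of length $N$, the Specht polynomial factorizes as
\begin{align*}
\mathcal P_{\widetilde{T^t}}(\bs{x}_{1,\ldots ,2N})
= \Delta(\bs{x}_{\widetilde{T^t}_{.,1}}) \, \Delta(\bs{x}_{\widetilde{T^t}_{.,2}}) ,
\end{align*}
and applying~\eqref{eq:identitymobius} to each Vandermonde (each involving $N$ variables, hence each variable appearing in $N-1$ factors) gives the clean transformation rule
\begin{align*}
\mathcal P_{\widetilde{T^t}}(\varphi(x_1),\ldots ,\varphi(x_{2N}))
= \prod_{i=1}^{2N} \varphi'(x_i)^{(N-1)/2} \times \mathcal P_{\widetilde{T^t}}(x_1,\ldots ,x_{2N}) .
\end{align*}

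The next step is to observe that this covariance factor is a \emph{totally} symmetric function of $(x_1,\ldots ,x_{2N})$, so it is invariant under the permutation action used to build the antisymmetrizer $\idpt$ in~\eqref{eq:idempotent}. Consequently it factors through $\idpt$:
\begin{align*}
(\idpt . \mathcal P_{\widetilde{T^t}})(\varphi(x_1),\ldots ,\varphi(x_{2N}))
= \prod_{i=1}^{2N} \varphi'(x_i)^{(N-1)/2} \times (\idpt . \mathcal P_{\widetilde{T^t}})(x_1,\ldots ,x_{2N}).
\end{align*}
Applying~\eqref{eq:identitymobius} analogously to the denominator, which is a product of \quote{intra-group} Vandermondes where each $x_i$ in the $k$th group appears in $s_k - 1$ factors, yields
\begin{align*}
\prod_{k=1}^\np \prod_{\summ_k\leq i<j<\summ_{k+1}} (\varphi(x_j)-\varphi(x_i))
= \prod_{k=1}^\np \prod_{\summ_k\leq i<\summ_{k+1}} \varphi'(x_i)^{(s_k-1)/2} \times \prod_{k=1}^\np \prod_{\summ_k\leq i<j<\summ_{k+1}} (x_j-x_i) .
\end{align*}

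Finally, the evaluation $x_{\summ_k}=\cdots =x_{\summ_{k+1}-1} \mapsto x_k$ collapses each of the $s_k$ variables in group $k$ onto the new variable $x_k$. Hence the numerator M\"obius factor contributes $\prod_{k=1}^\np \varphi'(x_k)^{s_k(N-1)/2}$ and the denominator M\"obius factor contributes $\prod_{k=1}^\np \varphi'(x_k)^{s_k(s_k-1)/2}$, so that taking the ratio gives exactly
\begin{align*}
\mathcal F_{T^t}(\varphi(x_1),\ldots ,\varphi(x_\np))
= \prod_{i=1}^\np \varphi'(x_i)^{s_i(N-1)/2 - s_i(s_i-1)/2} \times \mathcal F_{T^t}(x_1,\ldots ,x_\np) ,
\end{align*}
as claimed. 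There is no genuine obstacle in the argument; the only point requiring a little care is the commutativity of the substitution $x_i \mapsto \varphi(x_i)$ with the action of $\idpt$, which holds precisely because the M\"obius covariance factor $\prod_i \varphi'(x_i)^{(N-1)/2}$ is symmetric across \emph{all} $2N$ variables and in particular across each block on which $\idpt$ antisymmetrizes.
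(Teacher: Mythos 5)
Your proposal is correct and follows essentially the same route as the paper's proof: transform numerator and denominator separately via \eqref{eq:identitymobius}, pull the resulting symmetric factor through the antisymmetrizer $\idpt$, and then evaluate, obtaining exactly the exponent count in~(\ref{eq:identity1Lemma2},~\ref{eq:identity2Lemma2}). The only (harmless) variation is that you derive the numerator factor $\prod_{i}\varphi'(x_i)^{(N-1)/2}$ directly from the factorization of $\mathcal P_{\widetilde{T^t}}$ into two column Vandermondes, whereas the paper obtains the same factor indirectly from the covariance property~\eqref{eq:covariancepropertyvalence1} of the unfused conformal block together with the transformation of the full Vandermonde $\Delta(\bs{x}_{1,\ldots,2N})^{1/2}$.
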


\begin{proof}
Using~\eqref{eq:CBsh12} from Definition~\ref{def:CBsh12def} 
and Equations~(\ref{eq:covariancepropertyvalence1},~\ref{eq:identitymobius}) for $\CobloF_T$, 
we obtain
\begin{align} \label{eq:intermediatelemma2}
\begin{split} 
\; & \frac{\idpt . \mathcal P_{\widetilde{T^t}}(\varphi(x_1),\ldots ,\varphi(x_{2N}))}{\prod_{k=1}^\np \prod_{\summ_k\leq i<j<\summ_{k+1}} (\varphi(x_j)-\varphi(x_i))} 
\\
= \; & \frac{\idpt . \Big( \prod_{1\leq i < j \leq {2N}} \varphi'(x_i)^{1/4} \varphi'(x_j)^{1/4}
\times 
\prod_{i=1}^{2N} \varphi'(x_i)^{-1/4} \times \mathcal P_{\widetilde{T^t}}(x_1,\ldots ,x_{2N})\Big)}{\prod_{k=1}^\np \prod_{\summ_k\leq i<j<\summ_{k+1}} (x_j-x_i) \varphi'(x_i) ^{1/2}\varphi'(x_j)^{1/2}}. 
\end{split}
\end{align}
Now, straightforward computations similar to those in the proof of Lemma~\ref{lem:lemmamobius1} lead to 
\begin{align} 
\label{eq:identity1Lemma2} 
\prod_{1\leq i < j \leq 2N} \varphi'(x_i)^{1/4}  \varphi'(x_j)^{1/4} 
= \; & \prod_{i=1}^{2N} \varphi'(x_i)^{(2N-1)/4} , \\
\label{eq:identity2Lemma2}
\prod_{\summ_k\leq i<j<\summ_{k+1}} \varphi'(x_i)^{1/2}  \varphi'(x_j)^{1/2} 
= \; & \prod_{i=\summ_k}^{\summ_{k+1}-1} \varphi'(x_i)^{(s_k-1)/2} 
\end{align}
(recall here that $\summ_{k+1}-\summ_k=s_k$).
Substituting~(\ref{eq:identity1Lemma2},~\ref{eq:identity2Lemma2})  into~\eqref{eq:intermediatelemma2} yields
\begin{align*}
\left[\textnormal{\eqref{eq:intermediatelemma2}}\right]_{\textnormal{eval}} 
= \left[\frac{\idpt . \big( \prod_{j=1}^{2N} \varphi'(x_j)^{(N-1)/2} \times 
\mathcal P_{\widetilde{T^t}}(x_1,\ldots ,x_{2N})\big)}{\prod_{k=1}^\np \prod_{i=\summ_k}^{\summ_{k+1}-1} \varphi'(x_i)^{(s_k-1)/2}\times\prod_{k=1}^\np \prod_{\summ_k\leq i<j<\summ_{k+1}} (x_j-x_i)} \right]_\textnormal{eval}.
\end{align*}
Here, since the product $\prod_{j=1}^{2N} \varphi'(x_j)^{(N-1)/2}$ is symmetric in the $x_j$ variables, we may also take it out of the antisymmetrizer $\idpt$. 
Therefore, we infer from Definition~\ref{def:fusedspecht} that
\begin{align*}
\mathcal F_{T^t}(\varphi(x_1),\ldots ,\varphi(x_\np)) 
= \; & \left[\textnormal{\eqref{eq:intermediatelemma2}}\right]_{\textnormal{eval}} 
= \left[\frac{\prod_{j=1}^{2N} \varphi'(x_j)^{(N-1)/2}}{\prod_{k=1}^\np \prod_{i=\summ_k}^{\summ_{k+1}-1} \varphi'(x_i)^{(s_k-1)/2}} \right]_\textnormal{eval} 
\times \mathcal F_{T^t}(x_1,\ldots ,x_\np) 
\\
= \; & \prod_{k=1}^\np \varphi'(x_k)^{s_k(N-1)/2 - s_k(s_k-1)/2} 
\times \mathcal F_{T^t}(x_1,\ldots ,x_\np) . 
\end{align*}
This completes the proof. 
\end{proof}

\subsection{BPZ partial differential equations}

In this section, we consider a system of BPZ partial differential equations for the conformal block functions. 
To write them explicitly, let 
\begin{align} \label{eq:deffirstorderdiffop}
\mathcal L_m\super{j} := -\sum_{\substack{1\leq i \leq d\\i\neq j}} \bigg( (x_i-x_j)^{1+m} \frac{\partial}{\partial x_i} +  \frac{1+m}{4} \, s_i^2  (x_i-x_j)^m\bigg) , \qquad m \in \bZ ,
\end{align}
be first order differential operators,
with $h_{1,s_i+1} = s_i^2 / 4$ in the Kac table~\eqref{eq:conf_weights}, 
and define
\begin{align} \label{eq:defBSAdifferentialoperator}
\mathcal{D}_{s_j+1}\super{j} := \sum_{k=1}^{s_j+1} \sum_{\substack{m_1,\ldots ,m_k \geq 1 \\ m_1+\cdots +m_k = {s_j+1}}} \frac{ (-1)^{k-s_j-1} (s_j!)^2}{\prod_{l=1}^{k-1} (\sum_{i=1}^l m_i)(\sum_{i=l+1}^k m_i)} \times \mathcal L_{-m_1}\super{j} \; \cdots \; \mathcal L_{-m_k}\super{j} .
\end{align}
(These are also known as Beno{\^i}t~\&~Saint-Aubin equations~\cite{BSA:Degenerate_CFTs_and_explicit_expressions_for_some_null_vectors}, in a CFT with central charge
$c=1$.)
A special case of this is the second order PDE system~\eqref{eq:BPZeq}, satisfied by the functions $\CobloF_T(x_1,\ldots ,x_{2N})$ in Definition~\ref{def:CBsh12def}, 
which we will use to derive the general case.

\begin{theorem} \label{thm:theoremBSA}
For each $T \in \CSYTof{N,N}$, the conformal block function 
of Definition~\ref{def:firstdeffusedCBs} satisfies 
\begin{align*} 
\mathcal{D}_{s_j+1}\super{j} \; \CobloF_T(x_1,\ldots ,x_\np) = 0, 
\qquad \textnormal{for all } j \in \llbracket 1, \np\rrbracket .
\end{align*}
\end{theorem}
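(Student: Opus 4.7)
The strategy is to bootstrap the higher-order BSA equation $\mathcal D_{s_j+1}\super{j} \CobloF_T = 0$ from the second-order BPZ system~\eqref{eq:BPZeq} satisfied by the unit-valence blocks $\CobloF_{\hat T}$ (Lemma~\ref{prop:lemma3p3}), via the fusion construction of $\CobloF_T$ in~\eqref{eq:infirstdeffusedCBs}. Since $\CobloF_T$ is obtained from $\CobloF_{\hat T}$ by applying the symmetrizer $\symTL$ and then evaluating $x_{\summ_j}=x_{\summ_j+1}=\cdots =x_{\summ_{j+1}-1}$, the task is to show how the $s_j$ second-order annihilators at the collided coordinates combine into a single annihilator of order $s_j+1$ at the fused point $x_j$. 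This matches the CFT statement that fusing $s_j$ primary fields of weight $h_{1,2}=1/4$ through the fully symmetric channel produces a primary of weight $h_{1,s_j+1}=s_j^2/4$, whose Verma module carries a singular vector precisely at level $s_j+1$.

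The first step is to recast $\mathcal L_m\super{j}$ in~\eqref{eq:deffirstorderdiffop} as representing the Virasoro generators $L_m$ acting at the $j$-th insertion on the space of correlators of $\np$ degenerate fields with $c=1$ and weights $(h_{1,s_1+1},\ldots ,h_{1,s_\np+1})$. In this language, $\mathcal D_{s_j+1}\super{j}$ in~\eqref{eq:defBSAdifferentialoperator} is the differential realization of the Benoit--Saint-Aubin singular vector of level $s_j+1$ in $M_{c=1,\, h_{1,s_j+1}}$, so the theorem reduces to showing that this singular vector annihilates $\CobloF_T$.

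The next step, following~\cite{Dubedat:SLE_and_Virasoro_representations_localization, Dubedat:SLE_and_Virasoro_representations_fusion}, is to organize the fusion in terms of a filtration of Virasoro modules localized at $x_j$, together with the short-distance expansion of $\CobloF_{\hat T}$ as the coordinates $x_{\summ_j},\ldots ,x_{\summ_{j+1}-1}$ are brought together. The known second-order annihilation of $\CobloF_{\hat T}$ by each $\mathcal D_2\super{i}$ for $i=\summ_j,\ldots ,\summ_{j+1}-1$, together with the M\"obius covariance of Proposition~\ref{prop:wardidentity} and the explicit form of $\CobloF_T$ in terms of fused Specht polynomials given in Lemma~\ref{lem:lemmaUintermsoffusedSpecht}, should identify the Virasoro submodule generated by the image of the BSA singular vector inside the relevant Verma quotient, and force it to vanish on $\CobloF_T$.

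The hard part will be the $c=1$ obstruction. Dub\'edat's framework is designed for irrational central charge, where $M_{c,h_{r,t}}$ is controlled by a single embedded singular vector; at $c=1$, the collisions in~\eqref{eq:conf_weights0} imply that the submodule structure of $M_{c=1,h_{1,s+1}}$ is strictly richer, and several intermediate steps of~\cite{Dubedat:SLE_and_Virasoro_representations_fusion} fail verbatim. The key new input I expect to need is the $c=1$ analogue of~\cite[Lem.~1]{Dubedat:SLE_and_Virasoro_representations_fusion}, announced as Lemma~\ref{lem:analoglemma1Dubedat}, which should pin down precisely the quotient of the Verma module carrying the fused conformal block. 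Once this algebraic input is secured, extracting the BSA equation from the singular vector becomes a direct translation through the operator-state correspondence, finishing the proof.
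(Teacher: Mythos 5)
Your plan identifies the right overall architecture --- bootstrapping from the second-order system~\eqref{eq:BPZeq} via fusion, passing through Dub\'edat's Virasoro-uniformization framework, and isolating the $c=1$ analogue of~\cite[Lem.~1]{Dubedat:SLE_and_Virasoro_representations_fusion} as the key new algebraic input --- and this is indeed how the paper proceeds. However, there is a substantive gap in how you propose to organize the fusion. You describe bringing all $s_j$ coordinates $x_{\summ_j},\ldots,x_{\summ_{j+1}-1}$ together simultaneously, so that $s_j$ second-order annihilators ``combine'' in one step into a single order-$(s_j+1)$ annihilator through the fully symmetric channel. The paper does not (and cannot easily) do this: it fuses \emph{one coordinate at a time}, always merging a fresh valence-$1$ point (weight $h_{1,2}$) into an accumulating point of weight $h_{1,k+1}$ to produce weight $h_{1,k+2}$, via a double induction (outer over the groups $j$, inner over $k\in\llbracket 0,s_j-1\rrbracket$). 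This pairwise structure is not cosmetic: the fusion theorem (Theorem~\ref{thm:intermediatetheorem}) requires $s_{k+1}=1$ for the incoming point precisely because the algebraic Lemma~\ref{lem:analoglemma1Dubedat} uses the \emph{second-order} relation $\tilde\Delta_2 w=0$ to solve recursively for all coefficients $v_k=R_k v_0$ of the formal series in terms of $v_0$; a simultaneous multi-point fusion would demand a quite different and substantially harder lemma. As written, your plan would stall at exactly this point.

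Two further ingredients are missing that your sketch glosses over. First, the evaluation in~\eqref{eq:infirstdeffusedCBs} must be rewritten as an iterated chain of normalized limits $\lim_{x_{\summ_k+m}\to x_{\summ_k}}(x_{\summ_k+m}-x_{\summ_k})^{-m/2}(\cdots)$ (Lemma~\ref{lem:limits}) so that each step matches the hypothesis of the fusion theorem with the correct exponent $h_{1,k+3}-h_{1,k+2}-h_{1,2}=(k+1)/2$. Second, the fusion theorem needs a \emph{genuine convergent} expansion~\eqref{eq:asyf}, not merely a formal one; this is supplied by the explicit algebraic form $P/\sqrt{Q}$ of the functions at every stage (Lemma~\ref{lem:lemmaASY}), and the induction must carry this structural property along. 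Without these two points the reduction to Theorem~\ref{thm:intermediatetheorem} is not justified, so the proposal should be regarded as a correct strategic outline with the recursion step and its analytic prerequisites still to be supplied.
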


Due to the complexity of the general BPZ differential operators in Equation~\eqref{eq:defBSAdifferentialoperator}, 
our proof of Theorem~\ref{thm:theoremBSA} 
does not rely on a direct computation utilizing the explicit representation of Lemma~\ref{lem:lemmaUintermsoffusedSpecht}. 
Instead, we follow a recursive approach.
A key result for the proof will be to show that, 
if we start from a solution of two BPZ equations of orders $s_j+1$ and $s_{j+1}+1=2$ at $x_j$ and $x_{j+1}$ having a specific asymptotic behavior as $x_{j+1} \to x_j$, 
then we can construct a solution of a new BPZ equation of order $s_j+2$ at $x_j$ which no longer depends on $x_{j+1}$. 
More precisely, the following result is the key to the proof of Theorem~\ref{thm:theoremBSA}.

\begin{restatable}{theorem}{intermediatetheorem}
\label{thm:intermediatetheorem}
Fix $\np \geq 2$. 
Fix $\multii=(s_1,\ldots,s_\np) \in \bZpos^\np$ such that 
$s_k = \ell-1$ and $s_{k+1}=1$ for some $k \in \llbracket 1,\ldots,\np-1\rrbracket$. 
Also, let $F \colon \chamber_\np \to \bR$ be 
a smooth function satisfying the BPZ PDEs
\begin{align} \label{eq:saintaubineq}
\mathcal{D}\super{j}_{s_j+1} F(x_1,\ldots,x_\np) = 0, 
\qquad \textnormal{for all } j \in \llbracket 1, \np \rrbracket .
\end{align}
Finally, using the indices $h_{s+1} := h_{1,s+1}$ in the Kac table~\eqref{eq:conf_weights}, 
assume that when $|x_{k+1}-x_k| > 0$ is small enough, the following (convergent) expansion holds: 
\begin{align} \label{eq:asyf}
F(x_1,\ldots,x_\np) = (x_{k+1}-x_k)^{h_{\ell+1}-h_{\ell}-h_{2}} 
\sum_{i\geq 0} f_i(\ldots, x_k,x_{k+2},\ldots) (x_{k+1}-x_k)^i,
\end{align}
where $f_i(x_1,\ldots, x_k,x_{k+2},\ldots, x_\np)$ are smooth functions on $\chamber_{\np-1}$. 
Then, $f_0$ satisfies the BPZ~PDEs 
\begin{align} 
\label{eq:jBSA} 
\mathcal{D}\super{j}_{s_j+1} f_0(x_1,\ldots, x_k,x_{k+2},\ldots, x_\np) = \; & 0, 
\qquad j \in \llbracket 1, \np \rrbracket , \; j \neq k, k+1 , \\
\label{eq:kp1BSA} 
\mathcal{D}\super{k}_{\ell+1} f_0(x_1,\ldots, x_k,x_{k+2},\ldots, x_\np) = \; & 0.
\end{align}
\end{restatable}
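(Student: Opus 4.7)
The strategy is to substitute the collision expansion~\eqref{eq:asyf} into the BPZ system~\eqref{eq:saintaubineq} satisfied by $F$ and extract the leading behavior in $\eta := x_{k+1}-x_k$. Two regimes need separate treatment: the PDEs at points $x_j$ remote from the collision (yielding~\eqref{eq:jBSA}), and the combined PDEs at $x_k$ and $x_{k+1}$, which must conspire to produce the higher-order equation~\eqref{eq:kp1BSA} at the fused point with new valence $\ell$ and Kac weight $h_{\ell+1}$.

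\emph{Remote points $j \neq k, k+1$.} I would Taylor-expand each first-order operator $\mathcal L_m^{(j)}$ in powers of $\eta$. Writing $x_{k+1} = x_k + \eta$ and $\partial_{x_{k+1}} = \partial_\eta$, one has $(x_{k+1}-x_j)^{1+m} = (x_k-x_j)^{1+m} \bigl(1 + \eta/(x_k-x_j)\bigr)^{1+m}$; using $s_{k+1}^2 = 1$ together with the Kac-table identity needed to match $h_\ell + h_2 + (\text{leading gap}) = h_{\ell+1}$, a direct expansion gives $\mathcal L_m^{(j)} = \widehat{\mathcal L}_m^{(j)} + \eta\, R_m^{(j)} + O(\eta^2)$, where $\widehat{\mathcal L}_m^{(j)}$ is the operator~\eqref{eq:deffirstorderdiffop} associated with the fused valence vector $(s_1,\dots,s_{k-1},\ell,s_{k+2},\dots,s_\np)$. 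Inserting the ansatz into $\mathcal D^{(j)}_{s_j+1} F = 0$ and reading off the coefficient of $\eta^{h_{\ell+1}-h_\ell-h_2}$ then yields~\eqref{eq:jBSA}. This step is bookkeeping with combinatorial identities and nothing representation-theoretic is required.

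\emph{Combined equations at $k$ and $k+1$.} This is the heart of the proof: the BPZ equation of order $\ell$ at $x_k$ and the BPZ equation of order $2$ at $x_{k+1}$ must together imply the order-$(\ell+1)$ equation at the merged point. Following Dub\'edat's fusion strategy~\cite{Dubedat:SLE_and_Virasoro_representations_fusion}, I would realize $F$ as a conformal block in a tensor product of Virasoro Verma modules, identify the operators $\mathcal L_m^{(k)}$ and $\mathcal L_m^{(k+1)}$ (after incorporating the ambient coordinates as spectator degrees of freedom) with a Virasoro action localized near the pair $(x_k,x_{k+1})$, and interpret~\eqref{eq:asyf} as a projection of $F$ onto the $h_{\ell+1}$-fusion component of $h_\ell \otimes h_2$. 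The order-$2$ equation $\mathcal D^{(k+1)}_2 F = 0$ then becomes, order by order in $\eta$, a recursion expressing each $f_i$ for $i \geq 1$ in terms of Virasoro modes acting on $f_0, \dots, f_{i-1}$. Substituting this recursion back into $\mathcal D^{(k)}_\ell F = 0$ eliminates the $\eta$-dependence entirely and collapses the pair of lower-order equations into the single equation~\eqref{eq:kp1BSA} for $f_0$; the Beno{\^\i}t--Saint-Aubin coefficients appearing in $\mathcal D^{(k)}_{\ell+1}$ emerge automatically from the product of the two recursions because, at the level of singular vectors, the singular vector of level $\ell+1$ in the $h_{1,\ell+1}$-Verma module factorizes through the fusion of the level-$\ell$ and level-$2$ singular vectors.

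\emph{Main obstacle.} The critical technical input is the $c=1$ analog of~\cite[Lem.~1]{Dubedat:SLE_and_Virasoro_representations_fusion}, namely the asserted Lemma~\ref{lem:analoglemma1Dubedat}, which guarantees the existence and uniqueness (up to scalar) of the descendant producing~\eqref{eq:kp1BSA} and pins down the right coefficients. Dub\'edat's original argument exploits genericity of the central charge so that the Verma modules are irreducible and the fusion decomposition is transparent; at $c=1$, however, the weights~\eqref{eq:conf_weights} fall entirely inside the degenerate Kac region, so the Verma modules contain extra singular vectors and one must carefully track which vectors survive in the quotient relevant to the BPZ system. Overcoming this degeneration and verifying that the surviving operator is exactly $\mathcal D^{(k)}_{\ell+1}$, rather than some linear combination involving other singular vectors, will be the most delicate part of Section~\ref{sec:section4}.
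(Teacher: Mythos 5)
Your plan follows the paper's proof in its essentials. The heart of the argument is exactly the paper's: lift $F$ to a section of the determinant line bundle, use Dub\'edat's Lemmas 12--14 (which hold at any central charge) to identify the actions at $x_k$ and $x_{k+1}$ with the algebraic operators $\hat L_n$ and $\tilde L_{-1},\tilde L_{-2}$ on $W\otimes V_{\alpha,h_2}$, and then invoke Lemma~\ref{lem:analoglemma1Dubedat} to conclude that $\Delta_{\ell+1}$ annihilates the leading coefficient. Your description of the order-two equation as a recursion determining $f_i$ from $f_0,\dots,f_{i-1}$ is precisely what happens inside the proof of that lemma (the relations $v_k=R_kv_0$), and you correctly locate the difficulty at $c=1$: the Verma modules are degenerate, so one must rule out that the kernel of $M_{\ell+1}\to W$ is a smaller submodule than $M_{\ell+3}$, which is what the lemma's degree bound $k<2\ell+4$ accomplishes. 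Two remarks on where you diverge or leave gaps. First, your treatment of the spectator points by direct Taylor expansion of $\mathcal L_m^{(j)}$ in $\eta=x_{k+1}-x_k$ is a legitimate, more elementary alternative to the paper's route through the geometric framework: the $\eta^{\alpha-1}$ terms do cancel, and $\alpha+\tfrac{(\ell-1)^2+1}{4}=h_{\ell+1}$ so the fused weight comes out right; but you would still need to check that the cancellation persists under composition of the $\mathcal L_{-m_i}^{(j)}$ inside $\mathcal D_{s_j+1}^{(j)}$, which requires observing that each application preserves the form $\eta^\alpha\times(\text{power series in }\eta)$ with coefficients independent of $x_{k+1}$. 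Second, you omit the one analytic step the paper is explicit about: the hypothesis~\eqref{eq:asyf} is an expansion in $x_{k+1}-x_k$, whereas the algebraic lemma needs an expansion of the section in the genuine local coordinate $\eta^{(x_k)}(x_{k+1})-\eta^{(x_k)}(x_k)$; the paper passes between the two by Lagrange inversion and checks that the resulting coefficients $\tilde f_i$ are smooth sections with $\tilde f_0$ a highest-weight vector of weight $h_{\ell+1}$ at $x_k$. This translation is routine but cannot be skipped, since it is exactly the point of requiring a true convergent expansion rather than a formal one.
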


A result similar to Theorem~\ref{thm:intermediatetheorem} was proven 
through a direct computation by Karrila, Kyt\"ol\"a, and Peltola in~\cite[Lem.~5.6]{KKP:Boundary_correlations_in_planar_LERW_and_UST} 
in a specific scenario where the two \quote{merging} points $x_k$ and $x_{k+1}$ have $s_k=s_{k+1}=1$ and the other \quote{spectator} points have $s_j=1$ or $s_j=2$. 
However, extending their proof to the case of arbitrary $s_k\geq 1$ for one of the two merging points is, again, a priori out of reach due to the complexity of the BPZ differential operator. 
Instead, we follow an approach developed by Dub\'edat in \cite{Dubedat:SLE_and_Virasoro_representations_localization, 
Dubedat:SLE_and_Virasoro_representations_fusion}, 
which relies on the framework of Virasoro uniformization developed in particular by Kontsevich and Friedrich~\cite{Kontsevich:Virasoro_and_Teichmuller_spaces, 
Kontsevich:CFT_SLE_and_phase_boundaries, 
Friedrich-Kalkkinen:On_CFT_and_SLE,
Friedrich:On_connections_of_CFT_and_SLE}. 
Specifically,~\cite[Thm.~15]{Dubedat:SLE_and_Virasoro_representations_fusion} 
is a result similar to our Theorem~\ref{thm:intermediatetheorem}, 
except that it only applies to \emph{irrational} central charges $c \notin \bQ$, 
whereas the present case of interest concerns \emph{unit central charge}, $c=1$. 
Nevertheless, several key lemmas to the proof of~\cite[Thm.~15]{Dubedat:SLE_and_Virasoro_representations_fusion}
do still apply as well to $c=1$ --- 
and we will use them for the proof of Theorem~\ref{thm:intermediatetheorem} (in Section~\ref{subsec:proof}). 
As a matter of convenience for the readers, the majority of this proof will be relegated to the next Section~\ref{sec:section4}.

Recall that Definition~\ref{def:firstdeffusedCBs} expresses 
the conformal block functions 
in terms of an evaluation of a linear combination of conformal block functions for 
$\multii=(1^{2N})$. 
In the next result, we rewrite~\eqref{eq:infirstdeffusedCBs} 
in such a form that Theorem~\ref{thm:intermediatetheorem} can be applied recursively.

\begin{lemma} \label{lem:limits}
Fix $\multii=(s_1,\ldots,s_\np) \in \bZpos^\np$ with $s_1+\cdots+s_\np = 2N$. 
Let $f \in \SolSp\sub{1^{2N}}$. Then, we have
\begin{align} \label{eq:fusionpairbypair} 
\; & \left[\frac{\symTL . f}{\sqrt{\prod_{k=1}^\np \prod_{\summ_k\leq i<j<\summ_{k+1}} (x_j-x_i)}}\right]_{\textnormal{eval}} \\
\nonumber
= \; & 
\lim_{x_{\summ_\np} \to x_\np} \lim_{x_{\summ_\np+s_\np-1} \to x_{\summ_\np}} \frac{1}{(x_{\summ_\np+s_\np-1}-x_{\summ_\np})^{\frac{s_\np-1}2}} 
\cdots \hspace*{-4mm} 
\lim_{x_{\summ_\np+2} \to x_{\summ_\np}} \frac{1}{(x_{\summ_\np+2}-x_{\summ_\np})} \lim_{x_{\summ_\np+1} \to x_{\summ_\np}} \frac{1}{(x_{\summ_\np+1}-x_{\summ_\np})^{\frac12}}  \\
\nonumber
\; & \times \cdots \times
\lim_{x_{\summ_1} \to x_1} \lim_{x_{\summ_1+s_1-1} \to x_{\summ_1}} \frac{1}{(x_{\summ_1+s_1-1}-x_{\summ_1})^{\frac{s_1-1}2}}  
\cdots \hspace*{-4mm} 
\lim_{x_{\summ_1+1} \to x_{\summ_1}} \frac{\symTL. f(x_1,\ldots ,x_{2N}) }{(x_{\summ_1+1}-x_{\summ_1})^{\frac12}}.
\end{align}
\end{lemma}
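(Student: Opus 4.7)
The plan is to show that both sides of~\eqref{eq:fusionpairbypair} equal the same explicit smooth quantity on $\mathfrak{D}_\multii$, obtained by factoring out the within-block antisymmetry of $\idpt.\mathcal P$, and to verify the iterated-limit formula by induction on the number of limits already performed within each block.

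By Lemma~\ref{prop:lemma3p3} I write $f = \Delta^{-1/2}\mathcal P$ with $\Delta := \Delta(\bs{x}_{1,\ldots,2N})$ and $\mathcal P \in P\super{2^{N}}$. Setting $\Delta_{\textnormal{w},k} := \prod_{\summ_k \leq i < j < \summ_{k+1}}(x_j-x_i)$, split the Vandermonde determinant as $\Delta = \Delta_{\textnormal{w}} \, \Delta_{\textnormal{a}}$, where $\Delta_{\textnormal{w}} := \prod_{k=1}^\np \Delta_{\textnormal{w},k}$ and $\Delta_{\textnormal{a}}$ collects the remaining across-block factors. Since $\idpt.\mathcal P$ is antisymmetric within each block, it is divisible by $\Delta_{\textnormal{w}}$, so there is a polynomial $R$, symmetric within each block, with $\idpt.\mathcal P = \Delta_{\textnormal{w}} \, R$. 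Combined with the identity $\symTL.f = \Delta^{-1/2}(\idpt.\mathcal P)$ from the proof of Proposition~\ref{prop:propdefCvarsigma}, this gives, off the diagonal,
\begin{align*}
\frac{\symTL.f}{\sqrt{\Delta_{\textnormal{w}}}} = \frac{R}{\sqrt{\Delta_{\textnormal{a}}}}.
\end{align*}
Since $\Delta_{\textnormal{a}}$ does not vanish on $\mathfrak{D}_\multii$, the right-hand side is smooth there, and the left-hand side of~\eqref{eq:fusionpairbypair} equals $[R/\sqrt{\Delta_{\textnormal{a}}}]_{\textnormal{eval}}$.

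For the nested limits, the outermost operations $\lim_{x_{\summ_k} \to x_k}$ merely relabel the surviving variable of each block, so it suffices to analyze the inner limits block by block. Fix $k$, set $y_i := x_{\summ_k+i}$ for $i=0,\ldots,s_k-1$, and treat variables outside block $k$ as parameters. The inductive claim is that, after the first $i$ innermost limits within block $k$ have been performed, the resulting function equals
\begin{align*}
\prod_{j=i+1}^{s_k-1}(y_j-y_0)^{(i+1)/2} \cdot \sqrt{\prod_{i+1 \leq a < b \leq s_k-1}(y_b-y_a)} \cdot \sqrt{\prod_{k'\neq k} \Delta_{\textnormal{w},k'}} \cdot \frac{R\big|_{y_1=\cdots=y_i=y_0}}{\sqrt{\Delta_{\textnormal{a}}\big|_{y_1=\cdots=y_i=y_0}}}.
\end{align*}
The base case $i=0$ follows from the identity $\sqrt{\Delta_{\textnormal{w},k}} = \prod_{j=1}^{s_k-1}(y_j-y_0)^{1/2} \cdot \sqrt{\prod_{1 \leq a < b \leq s_k-1}(y_b-y_a)}$. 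For the induction step, the $(i+1)$-th limit carries the normalization $(y_{i+1}-y_0)^{(i+1)/2}$, which exactly cancels the $j=i+1$ factor on display; at $y_{i+1}=y_0$ the factors $\sqrt{y_b-y_{i+1}}$ inside the remaining square root become $\sqrt{y_b-y_0}$ for $b\geq i+2$, adding $1/2$ to each corresponding exponent and yielding the formula at level $i+1$.

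At $i=s_k-1$ all within-block factors for block $k$ have been consumed, leaving $\sqrt{\prod_{k'\neq k}\Delta_{\textnormal{w},k'}} \cdot R\big|_{\textnormal{block } k \textnormal{ evaluated}} / \sqrt{\Delta_{\textnormal{a}}\big|_{\textnormal{block } k \textnormal{ evaluated}}}$. Processing the blocks one at a time in the order prescribed by~\eqref{eq:fusionpairbypair} (each being structurally identical to the first up to relabeling, with the already-processed blocks treated as parameters), after all $\np$ blocks have been handled the expression reduces precisely to $[R/\sqrt{\Delta_{\textnormal{a}}}]_{\textnormal{eval}}$, matching the left-hand side. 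The hard part is the combinatorial bookkeeping of exponents in the induction step; because the relevant differences $y_j - y_0$ are positive on $\chamber_{2N}$, the fractional powers are unambiguously positive real numbers and no branch-cut ambiguity arises.
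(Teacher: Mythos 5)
Your proposal is correct and follows essentially the same route as the paper: both factor the within-block Vandermonde out of $\idpt.\mathcal P$ and out of $\Delta$ (your $R$ and $\Delta_{\textnormal{a}}$ are the paper's $Q_1$ and $Q_2$), identify the left-hand side as $[Q_1/\sqrt{Q_2}]_{\textnormal{eval}}$, and then check that the nested limits reproduce this. The only difference is organizational — the paper evaluates the two chains of limits for $Q_1$ and $1/\sqrt{Q_2}$ separately and multiplies, while you track a single explicit intermediate formula for the exponents by induction within each block — and both verifications are sound.
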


\begin{proof}
By Lemma~\ref{prop:lemma3p3}, we can write
$f = \Delta(\bs{x}_{1,\ldots,2N})^{-1/2} \, \mathcal P$ for some polynomial $\mathcal P \in P\super{2^{N}}$. 
We first rewrite the left-hand side of~\eqref{eq:fusionpairbypair} utilizing~\eqref{eq:actionObecomesactionp}:
\begin{align*}
\frac{\symTL . f}{\sqrt{\prod_{k=1}^\np \prod_{\summ_k\leq i<j<\summ_{k+1}} (x_j-x_i)}} 
= \frac{\idpt . \mathcal P \Big( \frac{\prod_{k=1}^\np \prod_{\summ_k\leq i<j<\summ_{k+1}} (x_j-x_i)^{1/2}}{\prod_{1\leq i < j \leq n} (x_j-x_i)^{1/2}} \Big)}{\prod_{k=1}^\np \prod_{\summ_k\leq i<j<\summ_{k+1}} (x_j-x_i)} .
\end{align*}
Note now that 
\begin{align}
\label{eq:defg1}
\idpt . \mathcal P = \; & \prod_{k=1}^\np \prod_{\summ_k\leq i<j<\summ_{k+1}} (x_j-x_i)\times Q_1, \\
\label{eq:defg2}
\prod_{1\leq i < j \leq n} (x_j-x_i) = \; & \prod_{k=1}^\np \prod_{\summ_k\leq i<j<\summ_{k+1}} (x_j-x_i)\times Q_2,
\end{align} 
where $Q_1$ and $Q_2$ are some polynomials, where in particular, $Q_2$ does not vanish at $x_i=x_j$ for $(i,j) \in \llbracket \summ_k,\ldots,\summ_{k+1}-1 \rrbracket^2$ and $k \in \llbracket 1, \np\rrbracket$. 
This leads to the formula
\begin{align} \label{eq:g1oversqrtg2}
\left[\frac{\symTL . f}{\sqrt{\prod_{k=1}^\np \prod_{\summ_k\leq i<j<\summ_{k+1}} (x_j-x_i)}}\right]_{\textnormal{eval}}
= \; &  \left[ \frac{Q_1}{\sqrt{Q_2}}\right]_{\textnormal{eval}} 
\end{align}
for the left-hand side of~\eqref{eq:fusionpairbypair}.
We now examine the right-hand side of~\eqref{eq:fusionpairbypair}:
\begin{align*}
\; & \lim_{x_{\summ_\np} \to x_\np} \lim_{x_{\summ_\np+s_\np-1} \to x_{\summ_\np}} 
\hspace*{-3mm} \cdots 
\lim_{x_{\summ_1} \to x_1} \lim_{x_{\summ_1+s_1-1} \to x_{\summ_1}} 
\hspace*{-3mm} \cdots 
\lim_{x_{\summ_1+1} \to x_{\summ_1}}
\frac{\symTL. f(x_1,\ldots ,x_{2N}) }{\prod_{k=1}^\np\prod_{m=1}^{s_k-1}(x_{\summ_k+m} - x_{\summ_k})^{m/2}}
\\
= \; & \lim_{x_{\summ_\np} \to x_\np} \lim_{x_{\summ_\np+s_\np-1} \to x_{\summ_\np}} 
\hspace*{-3mm} \cdots 
\lim_{x_{\summ_1} \to x_1} \lim_{x_{\summ_1+s_1-1} \to x_{\summ_1}} 
\hspace*{-3mm} \cdots 
\lim_{x_{\summ_1+1} \to x_{\summ_1}}
\frac{\idpt . \mathcal P \; \Big( \frac{\prod_{k=1}^\np\prod_{m=1}^{s_k-1}(x_{\summ_k+m} - x_{\summ_k})^{m/2}}{\prod_{\summ_k \leq i < j < \summ_{k+1}} (x_j-x_i)^{1/2}} \Big)}{\prod_{k=1}^\np\prod_{m=1}^{s_k-1}(x_{\summ_k+m} - x_{\summ_k})^{m}} . 
\end{align*}
We compute the chain of limits of each ratio separately. 
Fix $l \in \llbracket 1,\np\rrbracket$. 
By~\eqref{eq:defg1}, 
\begin{align*}
\; & \lim_{x_{\summ_l+s_l-1} \to x_{\summ_l}} \hspace*{-3mm} \cdots \lim_{x_{\summ_l+1} \to x_{\summ_l}}
\frac{\idpt. \mathcal P(x_1,\ldots ,x_{2N})}{\prod_{k=1}^\np\prod_{m=1}^{s_k-1}(x_{\summ_k+m} - x_{\summ_k})^{m}}
\\
= \; & \lim_{x_{\summ_l+s_l-1} \to x_{\summ_l}} 
\hspace*{-3mm} \cdots \lim_{x_{\summ_l+1} \to x_{\summ_l}} 
\frac{\prod_{k=1}^\np\prod_{i=0}^{j-1} \prod_{j=1}^{s_k-1} (x_{\summ_k+j}-x_{\summ_k+i})}{\prod_{k=1}^\np\prod_{m=1}^{s_k-1}(x_{\summ_k+m} - x_{\summ_k})^{m}} \; Q_1(x_1,\ldots ,x_{2N}).
\end{align*}
This chain of limits acts only on the terms $k=l$ in the product over $k$, 
so all terms for $k \neq l$ can be taken out. 
For the terms $k=l$ on which the limits act, we obtain
\begin{align*}
\; & 
\frac{\prod_{k=1, k\neq l}^\np\prod_{m=1}^{s_k-1}(x_{\summ_k+m} - x_{\summ_k})^{m}} {\prod_{k=1, k \neq l}^\np\prod_{i=0}^{j-1} \prod_{j=1}^{s_k-1} (x_{\summ_k+j}-x_{\summ_k+i})}
\; \lim_{x_{\summ_l+s_l-1} \to x_{\summ_l}} 
\hspace*{-3mm} \cdots \lim_{x_{\summ_l+1} \to x_{\summ_l}} 
\frac{\idpt . \mathcal P(x_1,\ldots ,x_{2N})}{\prod_{k=1}^\np\prod_{m=1}^{s_k-1}(x_{\summ_k+m} - x_{\summ_k})^{m}} \\
= \; & \lim_{x_{\summ_l+s_l-1} \to x_{\summ_l}} \hspace*{-3mm} \cdots 
\lim_{x_{\summ_l+1} \to x_{\summ_l}} \hspace*{-3mm} 
\frac{\prod_{i=0}^{s_l-2} (x_{\summ_l+s_l-1}-x_{\summ_l+i})}{(x_{\summ_l+s_l-1} - x_{\summ_l})^{s_l-1}} 
\cdots \frac{(x_{\summ_l+2}-x_{\summ_l+1})}{(x_{\summ_l+2} - x_{\summ_l})} 
\; Q_1(x_1,\ldots ,x_{2N}) .
\end{align*}
Since the limit of each ratio is finite at each step of the chain of limits, we obtain
\begin{align*}
\; & \lim_{x_{\summ_l+s_l-1} \to x_{\summ_l}} \hspace*{-3mm} \cdots \lim_{x_{\summ_l+1} \to x_{\summ_l}} 
\frac{\idpt . \mathcal P(x_1,\ldots ,x_{2N})}{\prod_{k=1}^\np\prod_{m=1}^{s_k-1}(x_{\summ_k+m} - x_{\summ_k})^{m}} \\
= \; & 
\frac{\prod_{k=1, k \neq l}^\np\prod_{i=0}^{j-1} \prod_{j=1}^{s_k-1} (x_{\summ_k+j}-x_{\summ_k+i})}{\prod_{k=1, k \neq l}^\np\prod_{m=1}^{s_k-1}(x_{\summ_k+m} - x_{\summ_k})^{m}} \lim_{x_{\summ_l+s_l-1} \to x_{\summ_l}} \hspace*{-3mm} \cdots \lim_{x_{\summ_l+1} \to x_{\summ_l}} Q_1(x_1,\ldots ,x_{2N}).
\end{align*}
Repeating the chain of limits as above for all $l \in \llbracket 1,\np\rrbracket$, 
we conclude that
\begin{align*}
\lim_{x_{\summ_\np} \to x_\np} \lim_{x_{\summ_\np+s_\np-1} \to x_{\summ_\np}} 
\hspace*{-3mm} \cdots 
\lim_{x_{\summ_1} \to x_1} 
\hspace*{-3mm} \cdots 
\lim_{x_{\summ_1+1} \to x_{\summ_1}}
\frac{\idpt . \mathcal P(x_1,\ldots ,x_{2N})}{\prod_{k=1}^\np\prod_{m=1}^{s_k-1}(x_{\summ_k+m} - x_{\summ_k})^{m}} 
= \left[ Q_1\right]_{\textnormal{eval}} .
\end{align*}
Utilizing~\eqref{eq:defg2}, similar arguments can be invoked to show that 
\begin{align*}
\lim_{x_{\summ_\np} \to x_\np} \lim_{x_{\summ_\np+s_\np-1} \to x_{\summ_\np}} 
\hspace*{-3mm} \cdots 
\lim_{x_{\summ_1} \to x_1} 
\hspace*{-3mm} \cdots 
\lim_{x_{\summ_1+1} \to x_{\summ_1}}
\hspace*{-2mm}
\frac{\prod_{k=1}^\np\prod_{m=1}^{s_k-1}(x_{\summ_k+m} - x_{\summ_k})^{m/2}}{\prod_{\summ_k \leq i < j < \summ_{k+1}} (x_j-x_i)^{1/2}}
= \left[\frac{1}{\sqrt{Q_2}}\right]_{\textnormal{eval}}.
\end{align*}
Multiplying the two equations above, 
we obtain the right-hand side of the sought~\eqref{eq:g1oversqrtg2}. 
\end{proof}

\begin{lemma} \label{lem:lemmaASY}
Let $F \colon \chamber_\np \to \bR$ such that $F = P_1/\sqrt{P_2}$ for some polynomials $P_1$ and $P_2$. 
Then, for any index $k \in \llbracket 1, \np-1 \rrbracket$, 
there exists $\theta_k \in \bZ$ such that
when $|x_{k+1}-x_k| > 0$ is small enough, the following (convergent) expansion holds: 
\begin{align*}
F(x_1,\ldots,x_\np) = (x_{k+1}-x_k)^{\theta_k/2} \sum_{i \geq 0} f_i(\ldots, x_k,x_{k+2},\ldots) \, (x_{k+1}-x_k)^i .
\end{align*}
Moreover, all coefficients $f_i$ have the form $P_i/\sqrt{Q_i}$, 
where $P_i$ and $Q_i$ are polynomials in the variables 
$x_1,\ldots,x_k,x_{k+2},\ldots,x_\np$ (which do not depend on $x_{k+1}$).
\end{lemma}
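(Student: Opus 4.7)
The plan is to work with the affine coordinate $y := x_{k+1} - x_k$ and expand both polynomials in powers of $y$ with coefficients that are polynomials in the remaining variables $\bs{x}' := (x_1, \ldots, x_k, x_{k+2}, \ldots, x_\np)$. Concretely, I would write
\begin{align*}
P_1 = \sum_{j \geq 0} A_j(\bs{x}') \, y^j, \qquad P_2 = \sum_{j \geq 0} B_j(\bs{x}') \, y^j,
\end{align*}
where $A_j, B_j \in \bC[\bs{x}']$. Let $a := \min\{j \;|\; A_j \not\equiv 0\}$ and $b := \min\{j \;|\; B_j \not\equiv 0\}$, and set $\theta_k := 2a - b \in \bZ$. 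Factoring out monomials, $P_1 = y^a \tilde P_1(y)$ and $P_2 = y^b \tilde P_2(y)$, with $\tilde P_1(0) = A_a$ and $\tilde P_2(0) = B_b$ both non-identically-zero polynomials.

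Next I would invoke the binomial series to expand $1/\sqrt{\tilde P_2(y)}$: writing
\begin{align*}
\tilde P_2(y) = B_b \Bigl( 1 + \sum_{i\geq 1} (B_{b+i}/B_b) \, y^i \Bigr),
\end{align*}
the identity $(1+u)^{-1/2} = \sum_{n\geq 0}\binom{-1/2}{n} u^n$, valid for $|u|<1$, yields a convergent expansion
\begin{align*}
\frac{1}{\sqrt{\tilde P_2(y)}} = \frac{1}{\sqrt{B_b(\bs{x}')}} \sum_{i\geq 0} R_i(\bs{x}') \, y^i
\end{align*}
for all $y$ in a (variable-dependent) neighbourhood of $0$, where each $R_i$ is a rational function of $\bs{x}'$ whose denominator is a power of $B_b$. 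Multiplying by $\tilde P_1(y)$ and collecting powers of $y$ gives
\begin{align*}
F = y^{\theta_k/2} \cdot \frac{\tilde P_1(y)}{\sqrt{\tilde P_2(y)}} = y^{\theta_k/2} \sum_{i\geq 0} f_i(\bs{x}') \, y^i,
\end{align*}
with $f_i = S_i/\sqrt{B_b}$ for some rational function $S_i(\bs{x}')$. This is the claimed expansion.

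To put $f_i$ into the required form $P_i/\sqrt{Q_i}$ with $P_i, Q_i$ polynomials, I would clear denominators: writing $S_i = U_i/V_i$ with $U_i, V_i \in \bC[\bs{x}']$, we have
\begin{align*}
f_i = \frac{U_i}{V_i \sqrt{B_b}} = \frac{U_i}{\sqrt{V_i^2 \, B_b}},
\end{align*}
so set $P_i := U_i$ and $Q_i := V_i^2 B_b$. The whole argument is essentially bookkeeping; the only point that requires a word is convergence of the binomial series, but this is immediate since for any fixed $\bs{x}'$ with $B_b(\bs{x}') \neq 0$, the inner expression $\sum_{i\geq 1}(B_{b+i}/B_b) y^i$ tends to $0$ as $y \to 0$, so its modulus is eventually less than $1$. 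I do not foresee a genuine obstacle: the main care point is simply tracking that $\theta_k = 2a - b$ is an integer (forced by $a,b \in \bZnn$), and that the factor $\sqrt{B_b}$ does not depend on $y$, so it can be absorbed into $Q_i$ as indicated.
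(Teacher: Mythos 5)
Your proof is correct and follows essentially the same route as the paper's: factor out the leading half-integer power of $x_{k+1}-x_k$ and observe that what remains is analytic in that variable with coefficients of the form polynomial over square root of polynomial. The paper simply asserts the analyticity and reads off the coefficients as Taylor derivatives, whereas you make the expansion explicit via the binomial series and the substitution $Q_i = V_i^2 B_b$ — a harmless (indeed clarifying) elaboration of the same argument.
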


\begin{proof}
The first claim readily follows, since for any $k \in \llbracket 1, \np-1 \rrbracket$, 
there exists $\theta_k \in \bZ$ such that 
$F(x_1,\ldots,x_\np) = (x_{k+1}-x_k)^{\theta_k/2} g(x_1,\ldots,x_\np)$, 
where $g$ is (real) analytic at $x_{k+1}=x_k$. 
Concerning the second claim, any $f_i$ is of the form
\begin{align*}
f_i = \frac{1}{i!} \left[ \partial_{x_{k+1}} \big((x_{k+1}-x_k)^{-\theta_k/2} F(x_1,\ldots,x_\np)\big) \right]_{x_{k+1}=x_k} 
= \frac{P_i}{\sqrt{Q_i}} ,
\end{align*}
where $P_i$ and $Q_i$ are polynomials independent of $x_{k+1}$. 
This completes the proof. 
\end{proof}

We are now ready to prove Theorem~\ref{thm:theoremBSA}.
\begin{proof}[Proof of Theorem~\ref{thm:theoremBSA}] 
By definition~\eqref{eq:defspaceS}, 
each function $f \in \SolSp\sub{1^{2N}}$ satisfies $\mathcal{D}\super{j}_2 f=0$, 
for all $j \in \llbracket 1, 2N \rrbracket$.
Note that Lemma~\ref{lem:limits} consists of $\np$ chains of limits.
We proceed by induction on the number of limits, 
where the base case will be governed by $\symTL . f \in \SolSp\sub{1^{2N}}$.
As the induction hypothesis, 
we suppose that for given $i\in \llbracket 0,\np \rrbracket$, the function 
\begin{align*}
& \; g(x_1,x_2,\ldots,x_{i-1},x_{\summ_i},x_{\summ_i+1},\ldots,x_{2N}) \\
= \; & \lim_{x_{\summ_{i-1}} \to x_{i-1}} \lim_{x_{\summ_{i-1}+s_{i-1}-1} \to x_{\summ_{i-1}}} \frac{1}{(x_{\summ_{i-1}+s_{i-1}-1}-x_{\summ_{i-1}})^{(s_{i-1}-1)/2}}  
\\
\; & \times \cdots \times
\lim_{x_{\summ_{i-1}+2} \to x_{\summ_{i-1}}} \frac{1}{(x_{\summ_{i-1}+2}-x_{\summ_{i-1}})} 
\lim_{x_{\summ_{i-1}+1} \to x_{\summ_{i-1}}} \frac{1}{(x_{\summ_{i-1}+1}-x_{\summ_{i-1}})^{1/2}} 
\\
\; & \times \cdots \times
\lim_{x_{\summ_1} \to x_1} \lim_{x_{\summ_1+s_1-1} \to x_{\summ_1}} \frac{1}{(x_{\summ_1+s_1-1}-x_{\summ_1})^{(s_1-1)/2}} 
\cdots \hspace*{-4mm} 
\lim_{x_{\summ_1+1} \to x_{\summ_1}} \frac{\symTL . f(x_1,\ldots ,x_{2N})}{(x_{\summ_1+1}-x_{\summ_1})^{1/2}} 
\end{align*}
is of the form $P/\sqrt{Q}$ with some polynomials $P$ and $Q$, 
and that $g$ satisfies the BPZ PDEs
\begin{align*}
\begin{cases}
\mathcal{D}_{s_j+1}\super{j} \, g = 0, & \textnormal{for all} \; j\in \llbracket 1,i-1 \rrbracket , 
\\[.3em]
\mathcal{D}_{2}\super{j} \, g = 0, & \textnormal{for all} \; j\in \llbracket \summ_i,2N \rrbracket.
\end{cases}
\end{align*}
(So the base case is $i=0$, in which case we just have $\symTL . f$ --- 
indeed of the form $P/\sqrt{Q}$ and satisfies $\mathcal{D}\super{j}_2 f=0$ for all $j\in \llbracket 1,2N \rrbracket$.) 
Now, define $g_k$ with $k\in \llbracket 0, s_i-1\rrbracket$ and such that 
\begin{align} \label{eq:defgk} 
\; & g_k(x_1,x_2,\ldots,x_{i-1},x_{q_i},x_{\summ_i+k+1},x_{\summ_i+k+2},\ldots,x_{2N}) \\
\nonumber 
= \; & \lim_{x_{\summ_{i}+k} \to x_{\summ_{i}}} \frac{1}{(x_{\summ_{i}+k}-x_{\summ_{i}})^{k/2}}  \cdots \hspace*{-2mm} \lim_{x_{\summ_{i}+2} \to x_{\summ_{i}}} \frac{1}{(x_{\summ_{i}+2}-x_{\summ_{i}})} \lim_{x_{\summ_{i}+1} \to x_{\summ_{i}}} \frac{g(\ldots,x_{i-1},x_{\summ_i},x_{\summ_i+1},\ldots)}{(x_{\summ_{i}+1}-x_{\summ_{i}})^{1/2}} .
\end{align}
We now perform the induction step, 
i.e., show that the function $g_{s_i-1}$ is also of the form $P/\sqrt{Q}$ for some (different) polynomials $P$ and $Q$, and that it satisfies the BPZ PDEs
\begin{align}\label{eq:gsim1}
\begin{cases}
\mathcal{D}_{s_j+1}\super{j} \, g = 0, & \textnormal{for all} \; j\in \llbracket 1,i \rrbracket,
\\[.3em]
\mathcal{D}_{2}\super{j} \, g = 0, & \textnormal{for all} \; j\in \llbracket i+1, 2N \rrbracket.
\end{cases}
\end{align}
By induction, this then implies that the following function 
satisfies $\mathcal{D}_{s_j+1}\super{j} h = 0$ for all $j$: 
\begin{align*}
h := \left[\frac{\symTL. f}{\sqrt{\prod_{k=1}^\np\prod_{\summ_k\leq i<j<\summ_{k+1}} (x_j-x_i)}}\right]_{\textnormal{eval}} .
\end{align*}
Since the conformal block functions $\CobloF_T(x_1,\ldots,x_\np)$ 
are defined by~\eqref{eq:infirstdeffusedCBs}, 
it thus suffices to take $f = \CobloF_{\hat{T}}(x_1,\ldots,x_{2N})$ and 
$T^t \in \SYTof{2^{N}}$ to conclude the proof of Theorem~\ref{thm:theoremBSA}.

In order to finish the induction step, 
we again proceed by induction, now on $k\in \llbracket 0 , s_i -1\rrbracket$. 
Suppose $g_k = P_k/\sqrt{Q_k}$ where $P_k$ and $Q_k$ are polynomials, and $g_k$ satisfies the BPZ PDEs
\begin{align*}
\begin{cases}
\mathcal{D}_{s_j+1}\super{j} \, g_k = 0, & \textnormal{for all} \; j\in \llbracket 1,i-1 \rrbracket , 
\\[.3em]
\mathcal{D}_{k+2}\super{\summ_i} \, g_k = 0 , 
\\[.3em]
\mathcal{D}_{2}\super{j} \, g_k = 0, & \textnormal{for all} \; j\in \llbracket \summ_i+k+1,2N \rrbracket .
\end{cases}
\end{align*}
This is obviously true for the base case $k=0$, since $g_0=g$. 
Next, suppose this is true for a given $k\in \llbracket 1, s_i-1\rrbracket$. By~\eqref{eq:defgk}, the function $g_{k+1}$ is given by
\begin{align*}
g_{k+1}=\lim\limits_{x_{\summ_{i}+k+1} \to x_{\summ_{i}}} \frac{g_k}{(x_{\summ_{i}+k+1}-x_{\summ_{i}})^{(k+1)/2}} .
\end{align*}
Therefore, by Lemma~\ref{lem:lemmaASY} we have
\begin{align*}
g_k=(x_{\summ_i+k+1}-x_{\summ_i})^{(k+1)/2} \sum_{m \geq 0} u_m \, (x_{\summ_i+k+1}-x_{\summ_i})^k,
\end{align*}
where the coefficients $u_m$ are smooth for all $m$ and, in particular, 
$g_{k+1}=u_0$ is of the form $P_{k+1}/\sqrt{Q_{k+1}}$ for some polynomials $P_{k+1}$ and $Q_{k+1}$. Moreover, because
\begin{align*}
h_{1,k+3}-h_{1,k+2}-h_{1,2} = (k+1)/2,
\end{align*}
we can apply Theorem~\ref{thm:intermediatetheorem} 
to deduce that $g_{k+1}$ satisfies the BPZ PDEs 
\begin{align*}
\begin{cases}
\mathcal{D}_{s_j+1}\super{j} \, g_{k+1} = 0, & \textnormal{for all} \; j \in \llbracket 1,i-1 \rrbracket,
\\[.3em]
\mathcal{D}_{k+3}\super{\summ_i} \, g_{k+1} = 0,
\\[.3em]
\mathcal{D}_{2}\super{j}g_{k+1} = 0,  & \textnormal{for all} \; j \in \llbracket \summ_i+k+2,2N \rrbracket . 
\end{cases}
\end{align*}
We then conclude by induction that $g_{s_i-1}$ satisfies~\eqref{eq:gsim1}; 
thereby proving Theorem~\ref{thm:theoremBSA}.
\end{proof}

\bigskip{}
\section{Fusion argument for BPZ PDEs --- Proof of Theorem~\ref{thm:intermediatetheorem}} 
\label{sec:section4}
To prove that the BPZ PDEs are satisfied at all valences, 
we follow a fusion argument bootstrapping from the already known lower order PDEs 
to the higher order ones. 
This approach, which seems to us to be the most amenable one to carry out systematically, 
utilizes a combination of tools from algebra and analytic geometry, 
and rigorously appeared in~\cite{Dubedat:SLE_and_Virasoro_representations_fusion} 
for the case of irrational central charges. 
Since the case of present interest is that of unit central charge $c=1$, 
we have to modify the argument to account for slightly more complicated representation structure of the Virasoro algebra.
We present the gist of the proof in this section in a manner that does not assume prior knowledge of~\cite{Dubedat:SLE_and_Virasoro_representations_fusion}.

First of all, it is well-known and not too hard to check that the differential operators 
$\{\mathcal L_m\super{j} \;|\; m \in \bZ \}$ in~\eqref{eq:deffirstorderdiffop} 
satisfy the commutation relations of the Witt algebra. 
In fact, there is a natural action of $\np$ copies of the Witt algebra acting 
on the space of functions $\SolSp_\multii$, one copy for each point. 
These actions do not commute with each other. 
An essential step for the proof of Theorem~\ref{thm:intermediatetheorem} will be 
to extend such non-commuting actions of the Witt algebra 
to \emph{commuting} actions of the \emph{Virasoro algebra}. 
Such an extension was investigated in detail by Dub\'edat in~\cite{Dubedat:SLE_and_Virasoro_representations_localization, 
Dubedat:SLE_and_Virasoro_representations_fusion} 
within the geometric framework of Virasoro uniformization~\cite{Kontsevich:Virasoro_and_Teichmuller_spaces, Beilinson-Schechtman:Determinant_bundles_and_Virasoro_algebra,Friedrich:On_connections_of_CFT_and_SLE, 
Friedrich-Kalkkinen:On_CFT_and_SLE}). 
In this approach, the Virasoro algebra acts on the space of sections of a suitable line bundle over an extended Teichm\"uller space. 
Whereas the Teichm\"uller space of a surface parametrizes its equivalence classes of complex (Riemann surface) structures, the extension we consider emerges 
from the fact that the Riemann surface is endowed with more data. More precisely, the extra data consist of a choice of a local coordinate for each marked point, and vanishing at the corresponding marked point. 
This viewpoint is closely related to Segal's sewing formalism~\cite{Segal:Definition_of_CFT}, which instead considers Riemann surfaces with parametrized boundary circles and their filling with analytic disks. 
Let us mention that while Dub\'edat's framework~\cite{Dubedat:SLE_and_Virasoro_representations_localization,
Dubedat:SLE_and_Virasoro_representations_fusion} 
holds for general bordered Riemann surfaces with marked points, we will specifically study the case of genus zero 
Riemann surfaces with one boundary component and with $\np$ marked points lying on the boundary. 
The (extended) Teichm\"uller space of such surfaces is simpler because its first homology group 
(and, therefore, their mapping class group) is trivial.

The proof of Theorem~\ref{thm:intermediatetheorem} consists of three steps 
and utilizes various results of~\cite{Dubedat:SLE_and_Virasoro_representations_localization,
Dubedat:SLE_and_Virasoro_representations_fusion}. 
The first step is to construct the extension of the solution space of a set of $\np$ 
BPZ differential equations giving rise to $\np$ non-commuting actions of the Witt algebra 
to the space of sections of a line bundle over the extended Teichm\"uller space, giving rise to 
$\np$ commuting actions of the Virasoro algebra. 
We describe such an extension at the beginning of Section~\ref{subsec:section6p2}, 
and it essentially recalls the results of~\cite[Sect.~4]{Dubedat:SLE_and_Virasoro_representations_localization}. 
Once such a space of sections is identified, the second step of the proof consists of choosing a local coordinate which encircles the marked points $x_k$ and $x_{k+1}$ and 
studying what the two Virasoro representations at $x_k$ and $x_{k+1}$ become in the limit $|x_{k+1} - x_k| \to 0$. 
A~crucial point here is to translate the problem, written in analytic-geometric form, into an equivalent algebraic problem, which then becomes amenable. 
As a matter of convenience for the readers, and because this is the key difference 
to~\cite{Dubedat:SLE_and_Virasoro_representations_localization,
Dubedat:SLE_and_Virasoro_representations_fusion},  
we first address the algebraic part of the problem separately in Section~\ref{subsec:section5p1p2} 
(see Lemma~\ref{lem:analoglemma1Dubedat}). 
Finally, once the algebraic problem is solved, 
it remains to utilize the extension the other way around 
to get back to solutions of higher order BPZ differential equations, as desired (see Section~\ref{subsec:proof}).

\subsection{Verma modules over the Virasoro algebra and fusion} 

The \emph{Virasoro algebra} $\Vir$ is the infinite-dimensional Lie algebra 
generated by the Virasoro modes $\{L_n \;|\; n \in \bZ\}$ and the central element $C$, 
\begin{align*}
\Vir = \bC C \oplus \bigoplus_{n \in \bZ} \bC L_n ,
\end{align*}
with the following commutation relations:
\begin{align} \label{eq:comm rel}
\begin{split}
[L_m,L_n] = \; & (m-n) L_{m+n} + \delta_{m,-n} \frac{m^2(m-1)}{12} C, \qquad m,n \in \bZ, \\
[C, \Vir] = \; & 0, \qquad n \in \bZ
\end{split}
\end{align}
(where $\delta_{i,j}$ stands for the Kronecker delta function, equaling zero unless $i=j$). 
It has the triangular decomposition $\Vir = \Vir^- \oplus \mathfrak h \oplus \Vir^+$, 
where $\mathfrak h = \bC C \oplus L_0$ and $\smash{\Vir^\pm = \underset{\pm n>0}{\oplus} \bC L_n}$. 
The universal enveloping algebra of the subalgebra $\Vir^-$ is
\begin{align*}
\mathcal U(\Vir^-) 
= \bigoplus_{\substack{0 < i_1 \leq \cdots  \leq i_k \\ k \geq 0}} \bC L_{-i_k} \cdots  L_{-i_1} ,
\end{align*}
and it has the \quote{standard basis} 
$\{L_{-i_k} \cdots  L_{-i_1} \;|\; 0< i_1 \leq \cdots  \leq i_k, \; k \geq 0\}$ by the Poincar\'e-Birkhoff-Witt theorem. 
Let us also note that $\mathcal U(\Vir)$ is a $\bZ$-graded algebra with \emph{degree} $\textnormal{deg}(L_n) := -n$ and $\textnormal{deg}(C) := 0$.
(See the textbook~\cite{Iohara-Koga:Representation_theory_of_Virasoro} for more background on $\Vir$.)

Let $V$ be a $\Vir$-module. 
For $(c,h) \in \bC^2$, a $(c,h)$-\emph{highest-weight vector} $v_{h}^{c} \in V$ is 
an element satisfying $C v_{h}^{c} = c v_{h}^{c}$, $L_0 v_{h}^{c} = h v_{h}^{c}$, 
and $L_n v_{h}^{c} = 0$ for all $n>0$. 
In this context, $c \in \bC$ is called the \emph{central charge} and 
$h \in \bC$ is called the \emph{weight} of $v_{h}^{c}$.  
The \emph{Verma module} $M_{h}^{c}$ is the $\bZ_{\geq 0}$-graded $\Vir$-module spanned by $\mathcal U(\Vir^-)v_{h}^{c}$, 
\begin{align*}
M_{h}^{c} = \bigoplus_{\ell \geq 0} (M_{h}^{c})_\ell ,
\qquad \textnormal{where} \qquad
(M_{h}^{c})_\ell := \bigoplus_{\substack{0 < i_1 \leq \cdots  \leq i_k \\ i_1+\cdots +i_k = \ell \\ k \geq 0}} \bC L_{-i_k}\cdots L_{-i_1} v_{h}^{c} .
\end{align*}
Note that the dimension of $(M_{h}^{c})_\ell$ is the number of partitions of $\ell$.
Moreover, it follows from the commutation relations~\eqref{eq:comm rel} 
that each element $v \in (M_{h}^{c})_\ell$ satisfies $L_0 v = (h+\ell) v$. 
Hence, we say that each $v \in (M_{h}^{c})_\ell$ is a vector in $M_{h}^{c}$ at \emph{level} (or degree) $\ell$.

A highest-weight vector $w_\ell \in M_{h}^{c}$ of level $\ell>0$ is called a \emph{singular vector}.  
If a non-zero singular vector 
can be found, 
then $M_{h}^{c}$ is said to be \emph{degenerate at level $\ell>0$}, 
and in this case, $w_\ell$ generates a proper submodule of $M_{h}^{c}$ isomorphic to $M_{h+\ell}^{c}$. 
Submodules of Verma modules were classified by B.~Fe{\u\i}gin and D.~Fuchs
\cite{Feigin-Fuchs:Invariant_skew-symmetric_differential_operators_on_the_line_and_Verma_modules_over_Virasoro,
Feigin-Fuchs:Verma_modules_over_Virasoro_book,
Feigin-Fuchs:Representations_of_Virasoro}.
(See, e.g., the book~\cite{Iohara-Koga:Representation_theory_of_Virasoro} for more background.) 
In particular, every submodule of $M_{h}^{c}$ is generated by singular vectors. 
There is an exceptional set of parameters $(c,h)$ for which $M_{h}^{c}$ is not irreducible --- the Kac table~\cite{Kac:Contravariant_form_for_infinite-dimensional_Lie_algebras_and_superalgebras,
Kac:Highest_weight_representations_of_infinite_dimensional_Lie_algebras} 
--- see~\eqref{eq:conf_weights} for an example with $c=1$ (relevant to the present work). 
Since irreducible modules generally appear in conformal field theory applications, it is important to classify singular vectors of $M_{h}^{c}$, 
which was also done in~\cite{Feigin-Fuchs:Invariant_skew-symmetric_differential_operators_on_the_line_and_Verma_modules_over_Virasoro,
Feigin-Fuchs:Verma_modules_over_Virasoro_book,
Feigin-Fuchs:Representations_of_Virasoro}.

Fix $c=1$. From now on, we only consider Verma modules of type $M_{h} := M_{h}^{1}$,
which possess a singular vector at level $\ell>0$ if and only if $h$
belongs to the Kac table~\eqref{eq:conf_weights}:
\begin{align} \label{eq:conf_weights again}
h_{\ell} = h_{1,\ell} := \tfrac{1}{4} (\ell-1)^2 \in \big\{ 0,\tfrac{1}{4},1,\tfrac{9}{4},4,\tfrac{25}{4},9,\tfrac{49}{4},16,\tfrac{81}{4} , \ldots \big\} .
\end{align}
Let $v_{\ell} = \smash{v_{h_{\ell}}^{1}}$ denote the highest-weight vector of $M_{{\ell}} := M_{h_{\ell}}$.
Then, the singular vector at level $\ell$ has the form $w_\ell = \Delta_\ell v_{\ell}$, 
where $\Delta_\ell \in \mathcal U(\Vir^-)$ is some polynomial in the negative Virasoro generators. 
As the coefficient of $L_{-1}^\ell$ in $\Delta_\ell$ cannot vanish~\cite[Sect.~5.2.1]{Iohara-Koga:Representation_theory_of_Virasoro}, we may normalize it to one.
An explicit formula for the polynomial $\Delta_\ell$ was found in~\cite{BSA:Degenerate_CFTs_and_explicit_expressions_for_some_null_vectors}\footnote{Note that in~\eqref{eq:defBSAoperator}, the Virasoro generators $L_{-i_j}$ are not ordered.}:
\begin{align} \label{eq:defBSAoperator}
\Delta_\ell = \sum_{k=1}^\ell \sum_{\substack{i_1,\ldots ,i_k \geq 1 \\ i_1+\cdots +i_k = \ell}} \frac{(-1)^{\ell-k} (\ell-1)!^2}{\prod_{l=1}^{k-1} (\sum_{j=1}^l i_j)(\sum_{j=l+1}^k i_j)} 
\; L_{-i_1} \cdots  L_{-i_k}
\end{align}
(see also~\cite{BDIZ:Singular_vectors_of_Virasoro_algebra}).
For instance, $\ell=1,2,3$ the formula~\eqref{eq:defBSAoperator} yields 
\begin{align*}
\Delta_1 = \; & L_{-1}, \\
\Delta_2 = \; & L_{-1}^2 - L_{-2}, \\
\Delta_3 = \; & L_{-1}^3 - 2(L_{-1} L_{-2} + L_{-2}L_{-1}) + 4 L_{-3}.
\end{align*}
Observe that $L_0 (\Delta_\ell v_{\ell}) = (h_{\ell}+\ell) (\Delta_\ell v_{\ell}) = h_{\ell+2} (\Delta_\ell v_{\ell})$. 
In fact, the singular vector $\Delta_\ell v_{\ell}$ generates a submodule of $M_{{\ell}}$ isomorphic to $M_{{\ell+2}}$, which is the maximal proper submodule. 
Generally, when $c=1$ there exists a one-dimensional infinite chain of submodules, 
where each arrow denotes the embedding of $M_{{j+2}}$ into $M_{{j}}$ giving its maximal proper submodule:
\begin{align} \label{eq:submodules}
M_{{\ell}} \hookleftarrow M_{{\ell+2}} \hookleftarrow M_{{\ell+4}} \hookleftarrow \cdots .
\end{align} 
This structure of the Verma module $M_{{\ell}}$ is referred to as \quote{chain} type (see~\cite[Figure~1]{Kytola-Ridout:On_staggered_indecomposable_Virasoro_modules}, 
and~\cite{Feigin-Fuchs:Verma_modules_over_Virasoro_book, Iohara-Koga:Representation_theory_of_Virasoro} for details). 
Let us also remark that the submodule structure of Verma modules can be more intricate 
for other rational values of the central charge~\cite{Feigin-Fuchs:Verma_modules_over_Virasoro, 
Iohara-Koga:Representation_theory_of_Virasoro}.

\subsection{Fusion: the key algebraic lemma} 
\label{subsec:section5p1p2}

Let $t$ be a formal variable. 
For $\alpha, h\in \bR$, consider the space $V_{\alpha,h} := \bC [[t]][t^{-1}]t^\alpha$ 
of formal series with finitely many negative terms:
\begin{align*}
t^\alpha \sum_{k\in \bZ} a_k t^k ,\quad a_k\in \bC , \qquad  
\textnormal{with} \quad \inf \{k \colon a_k\neq 0\} > -\infty .
\end{align*}
$V_{\alpha,h}$ is a $\Vir$-module with zero central charge $c=0$, 
where each generator $L_n$ acts by 
\begin{align*}
L_n \mapsto L_n^0 := -t^{n+1}\partial_t-(n+1)ht^n .
\end{align*}
(The role of the parameter $\alpha$ will become clear in the fusion procedure later, see Lemma~\ref{lem:analoglemma1Dubedat}, and also~\cite[Sect.~8.A]{DMS:CFT}.)
The operators $\{L_n^0 \;|\; n \in \bZ\}$ satisfy the commutation relations 
$[L_m^0,L_n^0] = (m-n) L_{m+n}^0$ of the \emph{Witt algebra} 
(so $V_{\alpha,h}$ is also a Witt-module\footnote{Recall that the Virasoro algebra is a one-dimensional central extension of the Witt algebra.}).
This action is motivated by CFT in the context of vertex algebras
(cf.~\cite{Huang:2D_Conformal_geometry_and_VOAs, Kac:Vertex_algebras_for_beginners,
Frenkel-Ben-Zvi:Vertex_Algebras_and_Algebraic_Curves}): 
for a Riemann surface with marked points, 
to each marked point one associates a representation of the Witt algebra 
(morally, the Lie algebra of deformations of the complex structure, 
where the formal variable represents a local coordinate):
a deformation near a given marked point is governed by the Witt-action on the corresponding module. 
(See also~\cite[Sect.~2.4]{Dubedat:SLE_and_Virasoro_representations_localization}.)

Next, let $W$ be a $\Vir$-module with central charge $c=1$, 
whose $\Vir$-action is simply denoted by $L_n$. 
Consider the space $W\otimes V_{\alpha,h}$ of formal series with coefficients in $W$:
\begin{align*}
t^\alpha \sum_{k\in \bZ} v_k t^k,\quad v_k\in W , \qquad 
\textnormal{with} \quad \inf \{k \colon v_k\neq 0\} > -\infty .
\end{align*}
Then, 
$W\otimes V_{\alpha,h}$ is a $\Vir$-module with central charge $c=1$,
where each generator $L_n$ acts by 
\begin{align} \label{eq:actionhatLn}
L_n \mapsto \hat{L}_n(vt^{\alpha+k}) := (L_nv)t^{\alpha+k}-(\alpha+k+(n+1)h)vt^{\alpha+k+n} ,
\qquad n,k \in \bZ, \; v \in W.
\end{align}
Let $\hat{\Delta}_\ell$ be the BPZ operator in~\eqref{eq:defBSAoperator} 
with the substitutions $L_n \mapsto \hat{L}_n$ for all $n \in \bZ$.

The reason to introduce the $\Vir$-module $W\otimes V_{\alpha,h}$ 
is motivated by fusion in CFT. 
If $W$ is the Virasoro module associated to a given marked point $x$, 
the tensor product $W\otimes V_{\alpha,h}$ associates another $\Vir$-module $V_{\alpha,h}$ at a nearby point $y=x+t$,
and the action~\eqref{eq:actionhatLn} can be thought of as a deformation at $x$ also keeping track of $y$. 
Conversely, a deformation at $y=x+t$ keeping track of $x$ can be represented by operators of type~(\ref{eq:actiontildeL1},~\ref{eq:actiontildeL2}) in Lemma~\ref{lem:analoglemma1Dubedat}.

We are now ready to state the key algebraic result, 
crucial for the proof of Theorem~\ref{thm:intermediatetheorem}. 
It is analogous to~\cite[Lem.~1]{Dubedat:SLE_and_Virasoro_representations_fusion}
--- however the proof slightly differs because the Virasoro submodule structure 
is more intricate for the present case of $c=1$ than for irrational $c$.
This is the main reason why we cannot use the results~\cite{Dubedat:SLE_and_Virasoro_representations_localization, 
Dubedat:SLE_and_Virasoro_representations_fusion} 
of Dub\'edat directly.

The result is an algebraic formulation of the fusion of two points $x$ and $y=x+t$ on a Riemann surface (as $t \to 0$). 
We assume that the point $x$ carries a Virasoro highest-weight representation of 
weight $h_{\ell} := h_{1,\ell}$ and 
the point $y$ carries a Virasoro highest-weight representation of 
weight $h_{2} := h_{1,2}$.
We expect from the CFT operator product expansion (fusion) 
of the corresponding two fields that 
\quote{$\Phi_{1,\ell}(x) \times \Phi_{1,2}(y) = \Phi_{1,\ell-1}(x) + \Phi_{1,\ell+1}(x)$} 
as $y \to x$.
In the present work, we are interested in the subleading channel $\Phi_{1,\ell+1}$,
which results in a conformal weight $h_{\ell+1} = h_{1,\ell+1}$ at higher level, 
needed for Theorem~\ref{thm:intermediatetheorem}.

\begin{lemma} \label{lem:analoglemma1Dubedat}
Fix $\ell \geq 2$. 
Using the notation from~\eqref{eq:conf_weights again}, let us 
denote $\smash{\Tilde{h}} := h_{2}$, $\smash{\hat{h}} := h_{\ell}$, and $\alpha := h_{\ell+1} - \smash{\hat{h}} - \smash{\Tilde{h}}$. 
Suppose $w = t^\alpha \sum_{k\geq 0} v_k t^k$ is a highest-weight vector of weight $\smash{\hat{h}}$ such that
\begin{align*}
\hat{\Delta}_\ell w = 0 
\qquad \textnormal{and} \qquad
\Tilde{\Delta}_2 w=0 ,
\end{align*}
where $\Tilde{\Delta}_2 := \Tilde{L}_{-1}^2-\Tilde{L}_{-2}$ is defined in terms of
\begin{align}
\label{eq:actiontildeL1} 
\Tilde{L}_{-1}:= \; & \partial_t \\
\label{eq:actiontildeL2} 
\Tilde{L}_{-2}:= \; & -t^{-1}\partial_t + t^{-1}L_{-1} + \smash{\hat{h}} t^{-2} + \sum_{k\geq 0} t^k L_{-2-k} .
\end{align}
Then, the coefficient $v_0$ 
is a highest-weight vector in $W$ of weight $h_{\ell+1}$ which satisfies $\Delta_{\ell+1} v_0=0$.
\end{lemma}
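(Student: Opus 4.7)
The plan is to adapt the three-stage strategy of~\cite[Lem.~1]{Dubedat:SLE_and_Virasoro_representations_fusion}, originally developed for irrational central charges, to the $c=1$ setting where the Verma module structure is of chain type~\eqref{eq:submodules}. First, I would expand the highest-weight hypothesis on $w$ as a power series in $t$: using the explicit action formula~\eqref{eq:actionhatLn}, the $t^\alpha$ coefficient of $\hat{L}_n w = 0$ (for $n \geq 1$) and of $\hat{L}_0 w = \hat{h} w$ yields immediately $L_n v_0 = 0$ and $L_0 v_0 = (\hat{h} + \alpha + \tilde{h}) v_0 = h_{\ell+1} v_0$, so that $v_0$ is a highest-weight vector of weight $h_{\ell+1}$ in $W$.

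Next, expanding $\tilde{\Delta}_2 w = 0$ in $t$ using~\eqref{eq:actiontildeL1} and~\eqref{eq:actiontildeL2}, the $t^{\alpha + j - 2}$ coefficient (for $j \geq 0$) reads
\begin{align*}
[(\alpha + j)^2 - \hat{h}]\, v_j
\; = \; L_{-1} v_{j-1} + \sum_{\substack{m, k \geq 0 \\ m + k = j - 2}} L_{-2-m}\, v_k ,
\end{align*}
with the convention $v_{-1} := 0$. Computing $(\alpha + j)^2 - \hat{h} = j(j + \ell - 1)$ from $\alpha = (\ell - 1)/2$ and $\hat{h} = (\ell - 1)^2/4$, this prefactor is nonzero for every $j \geq 1$ (with $j = 0$ making the equation automatic, consistent with arbitrary $v_0$, and $j = 1 - \ell$ lying out of range since $\ell \geq 2$). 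Hence the recursion uniquely determines each $v_j = P_j v_0$ with $P_j \in \mathcal{U}(\Vir^-)$ an explicit degree-$j$ element (for instance, $P_1 = L_{-1}/\ell$).

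The final stage consists in substituting $v_k = P_k v_0$ into $\hat{\Delta}_\ell w = 0$ and extracting the specific $t$-coefficient equal to $\Delta_{\ell + 1} v_0$. Splitting $\hat{L}_{-i} = L_{-i} + M_{-i}$, where $M_{-i}(v t^{\alpha + k}) := -(\alpha + k + (1-i)\tilde{h}) v t^{\alpha + k - i}$ is the shift part, the $t^{\alpha + j}$ coefficient $u_j$ of $\hat{\Delta}_\ell w$ lies in $(\mathcal{U}(\Vir^-) v_0)_{\ell + j}$; in particular $u_1 \in (\mathcal{U}(\Vir^-) v_0)_{\ell + 1}$. Since $\hat{\Delta}_\ell w$ is a highest-weight vector in $W \otimes V_{\alpha, \tilde{h}}$ (being the image of the singular vector $\Delta_\ell v_{\hat{h}}^c \in M_{\hat{h}}$ under the canonical map $M_{\hat{h}} \to W \otimes V_{\alpha, \tilde{h}}$ sending $v_{\hat{h}}^c \mapsto w$), a direct combinatorial computation using $\alpha^2 = \hat{h}$ together with the Stage~2 recursion shows that the lower-order coefficients $u_j$ for $j \leq 0$ vanish identically in $\mathcal{U}(\Vir^-) v_0$, so that $u_1$ is itself a highest-weight vector at level $\ell + 1$ above $v_0$. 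By the uniqueness of such vectors in the $c = 1$ Verma module $M_{h_{\ell + 1}}$ (spanned by $\Delta_{\ell + 1} v_{h_{\ell + 1}}^c$, since by~\eqref{eq:submodules} the next singular vector sits at level $\ell + 3$), one concludes $u_1 = \kappa \Delta_{\ell + 1} v_0$ for some scalar $\kappa$, identified by matching the coefficient of $L_{-1}^{\ell + 1} v_0$ and verified to be nonzero by a short combinatorial calculation. The main obstacle is this final stage: the vanishing of the lower-order $u_j$ and the non-vanishing of $\kappa$ both require explicit computation, with no abstract representation-theoretic argument available at $c = 1$ to rule out accidental cancellations.
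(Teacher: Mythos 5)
Your Stages 1 and 2 coincide with the paper's proof: the identification of $v_0$ as a highest-weight vector of weight $h_{\ell+1}$, and the recursion $v_j = R_j v_0$ obtained from $\tilde{\Delta}_2 w = 0$ with nonvanishing prefactor $\varrho(j) = (\alpha+j)^2 - \hat{h} = j(j+\ell-1)$ for $j\geq 1$, are exactly as in the paper. The divergence --- and the gap --- is in Stage 3. You propose to pin down the single coefficient $u_1 = P_{\ell+1}v_0$ at level $\ell+1$ as $\kappa\,\Delta_{\ell+1}v_0$ with $\kappa\neq 0$. This rests on three unproved inputs: (i) that the coefficients $u_j$ for $j\leq 0$ vanish \emph{as elements of} $\mathcal{U}(\Vir^-)$ --- note that vanishing in $W$ is automatic from $\hat{\Delta}_\ell w = 0$ and holds for \emph{all} $j$, so gives no information; what you need is vanishing in the Verma module $M_{h_{\ell+1}}$, a genuine family of identities; (ii) that $u_1$, transported to $M_{h_{\ell+1}}$, is a singular vector --- this does not follow from $\hat{\Delta}_\ell w$ being $\hat{L}$-highest-weight in $W\otimes V_{\alpha,\tilde{h}}$, since that statement lives in $W$, where $u_1 = 0$ anyway; and (iii) that $\kappa\neq 0$. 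You acknowledge that these require explicit computation, but none is supplied, and (iii) in particular is not guaranteed: in the paper's notation the coefficient of $L_{-1}^{\ell+1}$ in $P_{\ell+1}$ is $O(1)/(\varrho(1)\cdots\varrho(\ell+1))$ for a polynomial $O$ of degree at most $2\ell$, and nothing forces $O(1)\neq 0$; if it vanished, your argument would yield only $0=0$.

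The paper's proof is engineered precisely to avoid locating or normalizing the annihilating element. It shows, by a degree-counting contradiction (a nonzero polynomial $O$ of degree at most $2\ell$ cannot vanish at the $2\ell+4$ integers $m\in\llbracket -\ell,\ell+3\rrbracket$), that \emph{some} $P_k$ with $k\leq 2\ell+3$ is a nonzero element of $\mathcal{U}(\Vir^-)$ annihilating $v_0$; then the chain structure $M_{{\ell+1}}\hookleftarrow M_{{\ell+3}}\hookleftarrow M_{{\ell+5}}$ at $c=1$, with the second singular vector sitting at level $2\ell+4$, forces the kernel of $M_{{\ell+1}}\to W$ to contain $\Delta_{\ell+1}v_{\ell+1}$. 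This is where the $c=1$ submodule structure actually does its work. To repair your argument you would either have to carry out the three computations above in full, or --- more economically --- replace your Stage 3 by the weaker but sufficient claim that some $P_k$ of degree strictly less than $2\ell+4$ is nonzero, combined with the chain-of-submodules argument.
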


\begin{proof}
The proof consists of two steps. 
The first step is to check that $v_0 \in W$ is indeed a highest-weight vector of weight $h_{\ell+1}$.  
Indeed, we have $\hat{L}_0 w = \smash{\hat{h}} w$ at degree $\alpha$, 
which yields $L_0v_0=(\smash{\hat{h}}+\smash{\Tilde{h}}+\alpha)v_0$. 
Moreover, we have $\hat{L}_n w=0$ at degree $\alpha$, for all $n>0$, 
which yields $L_nv_0=0$, for all $n>0$. 
This shows that $v_0$ is a highest-weight vector of weight $h_{\ell+1}$.

The second and last step of the proof is to find an element 
$P_k \in \mathcal U(\Vir^-) \setminus \{0\}$ of degree $k < 2\ell+4$ such that $P_k v_0=0$. 
To see why this is useful, consider the homomorphism $\phi \colon M_{{\ell+1}} \to W$ 
of $\Vir$-modules 
which maps the highest-weight vector $v_{\ell+1} \in M_{{\ell+1}}$ to $v_0 \in W$. 
The first isomorphism theorem of modules implies that $\text{Ker}(\phi)$ is a proper submodule of $M_{{\ell+1}}$. 
Using the chain~\eqref{eq:submodules} of Verma modules, we obtain
\begin{align*}
M_{{\ell+1}} \hookleftarrow M_{{\ell+3}} \hookleftarrow M_{{\ell+5}} \hookleftarrow \cdots ,
\end{align*} 
where the image of $M_{{\ell+3}}$ is generated by $\Delta_{\ell+1} v_{\ell+1}$, 
the image of $M_{{\ell+5}}$ is generated by $\Delta_{\ell+3} (\Delta_{\ell+1} v_{\ell+1})$, etc. 
Now, if there exists $P_k \in \mathcal U(\Vir^-) \setminus \{0\}$ of degree $k$ such that $P_k v_0=0$, then it follows that $P_k v_{\ell+1} \in \text{Ker}(\phi)$. 
In particular, we have $\text{Ker}(\phi) = M_{{\ell+3}}$ if $k<2\ell+4$, 
in which case we may conclude that $0 = \phi(\Delta_{\ell+1} v_{\ell+1}) = \Delta_{\ell+1} v_0$, as desired.

It now remains to construct such a $P_k$.
To this end, consider first the assumption $\Tilde{\Delta}_2 w = 0$. 
Expanding by degree, we obtain
\begin{align*}
0  = \; & \rho(\alpha)v_0 \\
0  = \; & \rho(\alpha +1)v_1-L_{-1}v_{0} \\
0  = \; & \rho(\alpha +k)v_k-\sum_{j=1}^{k}L_{-j}v_{k-j} ,
\end{align*}
where $\rho(a) = a^2 - \smash{\hat{h}}$ has roots $\alpha$ and $h_{\ell-1}-\smash{\hat{h}}-\smash{\Tilde{h}} < \alpha$. 
Thus, we have $\rho(\alpha+k)\neq 0$ for all $k>0$, 
and there exist elements $R_0,R_1,\ldots \in \mathcal{U}(\Vir^-)$ such that $v_k = R_kv_0$ for all $k$.

Next, consider the assumption $\hat{\Delta}_\ell w=0$:
\begin{align*}
\hat{\Delta}_\ell \Big( t^\alpha \sum_{k\geq 0} t^kR_kv_0 \Big)=0.
\end{align*}
Write
\begin{align*}
\hat{\Delta}_\ell \Big( t^\alpha \sum_{k\geq 0} t^kR_k \Big) = t^{\alpha-\ell} \sum_{k\geq 0} t^kP_k ,
\end{align*}
for some polynomials $P_k\in \mathcal{U}(\Vir^-)$ of degree $k$ such that $P_kv_0=0$, for all $k\geq 0$. 
We first focus on the coefficients of $L_{-1}^k$ of $P_k$ decomposed in the standard basis.  If $P,Q\in \mathcal{U}(\Vir^-)$ 
are homogeneous and such that $P=aL_{-1}^k+\cdots $ and $Q=bL_{-1}^{k'}+\cdots $ in the standard basis, then $PQ=abL_{-1}^{k+k'}+\cdots $ in the standard basis. 
(This holds because the commutation relations of $\text{Vir}^-$ do not produce any monomial in $L_{-1}$.) 
We then see inductively that 
\begin{align*}
R_k=\frac{1}{\varrho(1)\cdots \varrho(k)}L_{-1}^k+\cdots , \qquad k \geq 1 ,
\end{align*}
with $\varrho(k)=\rho(\alpha+k)$.
Next, we write
\begin{align*}
\hat{\Delta}_\ell = \sum_{\substack{i+j+k=\ell \\ i,j,k\geq 0}} b_{i,j,k}t^{-i}\partial_t^jL_{-1}^k+\cdots , 
\end{align*}
where the remainder does not contain any monomial in $L_{-1}$. Note that $b_{0,0,\ell}=1$.

We now finally show that there exists an element $P_k \in \mathcal U(\Vir^-) \setminus \{0\}$ of degree $d < 2\ell+4$ such that $P_k v_0=0$. 
To this end, we assume towards a contradiction that no $P_k$ has a nonzero monomial 
in $L_{-1}$ for $k\leq 2\ell+3$. Then, we have
\begin{align*}
t^{\alpha-\ell} \sum_{k\geq 0} t^kP_k 
= \; & \hat{\Delta}_\ell \Big( t^\alpha \sum_{k\geq 0} t^kR_k \Big) \\
= \; & 
\bigg( \sum_{\substack{i+j+k=\ell \\ i,j,k\geq 0}} b_{i,j,k}t^{-i}\partial_t^jL_{-1}^k\bigg)\bigg( t^\alpha \sum_{l\geq 0} \frac{t^lL_{-1}^l}{\varrho(1)\cdots  \varrho(l)}  \bigg)+\cdots 
\end{align*}
For each $d \in \llbracket 0, \ell\rrbracket$, let $Q_d$ be the polynomial of degree at most $d$ (determined by explicit differentiation) such that
\begin{align*}
\bigg( \sum_{\substack{i+j=d \\ i,j\geq 0}} b_{i,j,\ell-d}t^{-i}\partial_t^j\bigg)t^{\alpha+m+d} = Q_d(l)t^{\alpha+m} , \qquad m \geq -d .
\end{align*}
Note that $Q_0 = b_{0,0,\ell}=1$. 
Now, we have 
\begin{align*}
t^{\alpha-\ell} \sum_{k\geq 0} t^kP_k 
= \sum_{d=0}^\ell \sum_{j\geq-d}\frac{Q_d(j)}{\varrho(1)\cdots  \varrho(j+d)}t^{\alpha+j}L_{-1}^{\ell+j}+\cdots .
\end{align*}
By assumption, we know that the coefficients of monomials in $L_{-1}$ of degree $\ell+j$ for $j \in \llbracket -\ell, \ell+3\rrbracket$ are vanishing. Thus, we obtain
\begin{align*}
0 = \; & Q_\ell(-\ell) , \\
0 = \; & \frac{Q_\ell(-\ell+1)}{\varrho(1)}+Q_{\ell-1}(-\ell+1) , \\
0 = \; & \frac{Q_\ell(-\ell+2)}{\varrho(1)\varrho(2)}+\frac{Q_{\ell-1}(-\ell+2)}{\varrho(1)}+Q_{\ell-2}(-\ell+2) , \\
\vdots \; & \\
0 = \; & \frac{Q_\ell(0)}{\varrho(1)\cdots  \varrho(\ell)}+\frac{Q_{\ell-1}(0)}{\varrho(1)\cdots  \varrho(\ell-1)}+\cdots  +Q_0(0) , \\
\vdots\; & \\
0 = \; & \frac{Q_\ell(\ell+3)}{\varrho(1)\cdots  \varrho(2\ell+3)}+\frac{Q_{\ell-1}(\ell+3)}{\varrho(1)\cdots  \varrho(2\ell+2)}+\cdots  +\frac{Q_0(\ell+3)}{\varrho(1)\cdots  \varrho(\ell+3)} .
\end{align*}
Multiplying the $i$-th equation by $\varrho(1)\cdots  \varrho(i-1)$, we obtain 
\begin{align*}
0 = \; & Q_\ell(-\ell) , \\
0 = \; & Q_\ell(-\ell+1)+Q_{\ell-1}(-\ell+1)\varrho(1) , \\
0 = \; & Q_\ell(-\ell+2)+Q_{\ell-1}(-\ell+2)\varrho(2)+Q_{\ell-2}(-\ell+2)\varrho(1)\varrho(2) , \\
\vdots \; & \\
0 = \; & Q_\ell(0)+Q_{\ell-1}(0)\varrho(\ell)+\cdots  Q_0(0)\varrho(1)\cdots  \varrho(\ell) , \\
\vdots \; & \\
0 = \; & Q_\ell(\ell+3)+Q_{\ell-1}(\ell+3) \varrho(2\ell+3)+\cdots  +Q_0(\ell+3)\varrho(\ell+4)\cdots  \varrho(2\ell+3) .
\end{align*}
Since $\varrho(0)=0$, we find that for all $m \in \llbracket -\ell, \ell+3 \rrbracket$
\begin{align*}
0 = Q_\ell(m)+Q_{\ell-1}(m)\varrho(\ell+m)+\cdots  +Q_0(m)\varrho(m+1)\cdots  \varrho(\ell+m) 
=: \; & O(m) .
\end{align*}
On the one hand, since $Q_0=1$, the last term is non-vanishing and of degree $2\ell$, 
while all the other terms are of degree at most $2\ell-1$. 
Thus, $O$ is not the zero polynomial. 
On the other hand, since $O$ is a polynomial of degree at most $2\ell$ with $2\ell+4$
zeroes, we infer that $O \equiv 0$. 
This is a contradiction. 
Hence, we conclude that there exists an element $P_k \in \mathcal U(\Vir^-) \setminus \{0\}$ of degree $d < 2\ell+4$ such that $P_k v_0=0$. 
This concludes the proof of Lemma~\ref{lem:analoglemma1Dubedat}.
\end{proof}

\subsection{Virasoro action on the determinant line bundle} 
\label{subsec:section6p2}

Next, 
we shall describe the geometric framework for the conformal block functions, viewed as sections of a line bundle. 
It turns out that the space $\SolSp_\multii$ of conformal block functions of Proposition~\ref{prop:propdefCvarsigma} carries non-commuting actions 
of the Witt algebra at each variable $x_i \in \bR = \partial \bH$, for $i \in \llbracket 1, \np\rrbracket$. 
This is a manifestation of the (infinitesimal) conformal symmetry in CFT. 
Our aim is to construct a space which carries \emph{commuting} actions of the Virasoro algebra, which leads to a structure underlying the BPZ partial differential equations. 
For this purpose, we first have to pass from the Witt algebra action to
an action of its central extension (viz.~the Virasoro algebra), on a space of sections of a one-dimensional line bundle (\quote{determinant line bundle}) 
over a Teichm\"uller space involving marked boundary points 
(cf.~the variables $(x_1,\ldots,x_\np)$). 

In the constructions and statements below, we mostly follow~\cite[Sect.~2~\&~4]{Dubedat:SLE_and_Virasoro_representations_localization}.

\subsubsection{Extended Teichm\"uller space and determinant lines}

Let $S$ be a simply-connected, compact Riemann surface with a single boundary component $\partial S$ and  
marked points $x_i \in \partial S$, for $i \in \llbracket 1, \np\rrbracket$. 
We endow $S$ with the following additional data. 
Let $z$ be a local coordinate at $x \in \partial S$. 
A \emph{$k$-jet at $x$} is an element of $\bR[z]/(z^{k+1}\bR[z])$ with a first order zero: for each
\begin{align*}
\eta = \sum_{i \geq 1} \eta_i z^i \; \in \; \bR[z] , \qquad \eta_1 > 0 ,
\end{align*}
we denote the associated $k$-jet as $[\eta]_k = \sum_{i=1}^k \eta_i z^i$.
For each $\bs{k} = (k_1,\ldots, k_\np)$, we define $\mathcal{T}_{\bs{k}}$ to be the space of equivalence classes of surfaces $S$ as above with a $k_i$-jet at $k_i$ at $x_i$, for $i \in \llbracket 1, \np\rrbracket$, 
where each equivalence class consists of all marked surfaces related by conformal isomorphisms sending marked points to marked points and $k_i$-jets to $k_i$-jets.

For each surface $S$, let $\cconf(S)$ be the set of conformal metrics on $S$
which near the boundary are pushforwards of the flat metric from 
the cylinder, so that in particular the boundary $\partial S$ is a geodesic. 
For two such conformally equivalent metrics $g \in \cconf(S)$ and $e^{2\sigma}g \in \cconf(S)$, with Weyl factor $\sigma \in C^\infty(S, \bR)$, 
we define the \emph{conformal anomaly}
\begin{align*}
\lfunct(\sigma, g) := \frac{1}{12 \pi} \iint_S \bigg(
\frac{1}{2} |\nabla_g \sigma|_g^2 + R_g \sigma
\bigg) \vol_g ,
\end{align*}
where $\nabla_g$, $R_g$, and $\vol_g$ are respectively the divergence, 
Gaussian curvature, and volume form on $S$ in the metric $g$.
We then define the (real) \emph{determinant line} associated to $S$ 
as the one-dimensional $\bR$-vector space $\Det_\bR(S) := (\bR \times \cconf(S))/_\sim$ 
consisting of pairs $(r,g) = r[g]$, 
where $r \in \bR$ and $g \in \cconf(S)$, subject to 
the equivalence relation \quote{$\sim$} given by 
$[g] = e^{-\lfunct(\sigma, g)} [e^{2\sigma}g]$ in terms of the anomaly.
(See also~\cite{Friedrich:On_connections_of_CFT_and_SLE, Kontsevich-Suhov:On_Malliavin_measures_SLE_and_CFT, Dubedat:SLE_and_Virasoro_representations_localization, Maibach-Peltola:From_the_conformal_anomaly_to_the_Virasoro_algebra}.)

We view $\Det := \{\Det_\bR(\bH)\}$ as the determinant line \quote{bundle} over the (genus zero, trivial) Teichm\"uller space $\mathcal{T} = \{\bH\}$ of simply-connected, compact Riemann surfaces with a single boundary component without any marked points 
(that can be represented by the upper half-plane $S = \bH$, say). 
We then define the \emph{determinant line bundle} $\Det_{\bs{k}}$ over $\mathcal{T}_{\bs{k}}$ as the pull-back of $\Det$ under the projection forgetting marked points and jets. 
In the spirit of the infinitesimal conformal symmetry in CFT~\cite{DMS:CFT,Schottenloher:Mathematical_introduction_to_CFT} 
and Virasoro uniformization~\cite{Kontsevich:Virasoro_and_Teichmuller_spaces, Beilinson-Schechtman:Determinant_bundles_and_Virasoro_algebra}, 
as explained in detail in~\cite[Sect.~2.4.4]{Dubedat:SLE_and_Virasoro_representations_localization}, 
there exists an action of the Witt algebra as (local) differential 
and multiplication operators $L_n^0 \mapsto -z^{n+1} \partial_z$ 
such that the negative generators send smooth functions on $\mathcal{T}_{\bs{k}}$ to 
smooth functions on $\mathcal{T}_{\bs{k}'}$ for some $k_i'\geq k_i$ for all $i$. 
To make the action on this tower $(\mathcal{T}_{\bs{k}})_{\bs{k}}$ closed, 
one considers the \emph{projective limit} 
given by the smooth projections $\mathcal{T}_{\bs{k}'}\rightarrow\mathcal{T}_{\bs{k}}$, consisting of truncations of the jets for $k_i'\geq k_i$ 
(see~\cite[Sect.~2.4.4]{Dubedat:SLE_and_Virasoro_representations_localization} for a detailed account),
\begin{align*}
\mathcal{T}_\infty := \lim_{\leftarrow} \mathcal{T}_{\bs{k}} .
\end{align*}
Elements of $\mathcal{T}_\infty$ may be thought of as equivalence classes of surfaces $S$ with marked points as above, 
but with \emph{formal coordinates}\footnote{The key difference is that is when a local coordinate $z$ is given, a formal coordinate is an element of $z\bR[[z]]$.}  
at each marked point instead of $k$-jets. 
The result in~\cite[Thm.~4]{Dubedat:SLE_and_Virasoro_representations_localization} shows that the space $C^\infty (\mathcal{T}_\infty,\Det_\infty)$ of sections  
of the pull-back bundle $\Det_\infty$ over $\mathcal{T}_\infty$
obtained from the projective limit construction 
carries a representation of $\np$ \emph{commuting copies of the Virasoro algebra} 
with central charge $c=1$: 
one for each marked point $x_1, \ldots, x_\np \in \partial S$. 
As the details of this construction are irrelevant for the purposes of understanding 
the present work, 
we refer the readers to~\cite{Dubedat:SLE_and_Virasoro_representations_localization, Dubedat:SLE_and_Virasoro_representations_fusion} for more details,
and only highlight the key ingredients for proving Theorem~\ref{thm:intermediatetheorem}.

\subsubsection{Conformal block functions as sections of the determinant line bundle}

After choosing a reference section $\mu_\zeta$ of the bundle $\Det_\infty$ 
(which can, for example, be constructed from the zeta-regularized determinant of the Laplacian~\cite[Sect.~3]{Dubedat:SLE_and_Virasoro_representations_localization}), 
we shall denote sections in $C^\infty (\mathcal{T}_\infty,\Det_\infty)$ 
by $f \mu_\zeta$, where $f \in C^\infty (\mathcal{T}_\infty)$. 
The functions $f$ will play the role of the correlation functions in $\SolSp_\multii$.
Indeed, to any given smooth function $F \colon \chamber_\np \to \bC$, 
we associate a smooth function $f \in C^\infty (\mathcal{T}_\infty)$ 
as (the lift\footnote{Abusing notation, we identify $f \in C^\infty (\mathcal{T}_{(1,...,1)})$ with its pullback under the projection map, $f \in C^\infty (\mathcal{T}_\infty)$.} of the one) described in Equation~\eqref{eq:deftildef} below.

We will use a convenient choice of smooth coordinates on $\mathcal{T}_{\bs{k}}$, 
associated to the choice of reference surface $S=\bH$ of $\mathcal{T}$ 
with coordinates $z$ around $0$ and $-1/z$ around $\infty$. 
Thanks to the action of the M\"obius group, 
we may also choose two of the marked points to be 
$x_1=0$ and $x_{\np+1} = \infty$, 
and we may choose the first order of the jet at $\infty$ to equal one:
thus, for one and two marked points, respectively, we have
\begin{align*}
[(\bH; 0; [\eta\super{0}]_k)] \in \; & \mathcal{T}_{k} , \qquad 
[\eta\super{0}]_k := \sum_{i=1}^k \eta\super{0}_i z^i , \quad 
\eta\super{0}_i \in \bR ,
\\
[(\bH; 0, \infty; [\eta\super{0}]_{k}, [\eta\super{\infty}]_{l})] \in \; & \mathcal{T}_{k,l}, \qquad 
[\eta\super{\infty}]_l := -\frac{1}{z} + \sum_{i=2}^l (-1)^i \eta\super{\infty}_i z^{-i} , \quad \eta\super{\infty}_i \in \bR ,
\end{align*}
and for at least three marked points, we obtain the representatives 
\begin{align*}
[(\bH; 0, x_2, \ldots , x_\np, \infty; [\eta\super{0}]_{k_1}, \; & [\eta\super{x_2}]_{k_2}, \ldots , [\eta\super{x_\np}]_{k_\np}, [\eta\super{\infty}]_{k_{\np+1}})] \in  \mathcal{T}_{k_1,\ldots,k_{\np+1}}, \\
[\eta\super{x_j}]_k := \; & \sum_{i=1}^k \eta\super{x_j}_i (z-x_j)^i , \quad \eta\super{x_j}_i \in \bR ,
\; j \in \llbracket 2,\np \rrbracket .
\end{align*}
Thus, the following collection provides a set of smooth coordinates on $\mathcal{T}_{\bs{k}}$ and hence on $\mathcal{T}_\infty$:
\begin{align} \label{eq:Teich_coord}
(x_2,\ldots, x_\np; \eta\super{0}_1, \ldots, \eta\super{0}_{k_1}; 
\eta\super{x_2}_1, \ldots, \eta\super{x_2}_{k_2}; 
\ldots; \eta\super{x_\np}_1, \ldots, \eta\super{x_\np}_{k_\np}; 
\ldots; \eta\super{\infty}_2, \ldots, \eta\super{\infty}_{k_{\np+1}}).
\end{align}
In particular,
given any smooth function $F \colon \chamber_\np \to \bC$, 
taking $k_1 = k_2 = \cdots = k_\np = k_{\np+1} = 1$, 
\begin{align} \label{eq:deftildef}
\begin{split}
f[(\bH; \; & 0, x_2, \ldots , x_\np, \infty; [\eta\super{0}]_{1}, [\eta\super{x_2}]_{1}, \ldots , [\eta\super{x_\np}]_{1}, [\eta\super{\infty}]_{1})]
\\
:= \; & (\eta\super{0}_1)^{-h_{s_1+1}} \prod_{j=2}^\np (\eta\super{x_j}_1)^{-h_{s_j+1}} \times F(0, x_2, \ldots , x_\np) 
\end{split}
\end{align}
where $(s_1,\ldots,s_\np)=\multii \in \bZpos^\np$, 
defines a smooth function $f \in C^\infty (\mathcal{T}\sub{1,1,\ldots,1})$,
which lifts to a smooth function in $C^\infty (\mathcal{T}_\infty)$. 
By virtue of~\cite[Thm.~4]{Dubedat:SLE_and_Virasoro_representations_localization} 
there are $\np+1$ commuting copies of the Virasoro algebra acting 
on the section $f \mu_\zeta \in C^\infty (\mathcal{T}_\infty,\Det_\infty)$, 
with one copy corresponding to each marked point $x_1=0, x_2, \ldots , x_\np, \infty$. 
We denote the generators in the $\Vir$-action associated to the marked point $x_j$ by $L_n\super{x_j}$. 
By construction, this action has central charge $c=1$. 
Furthermore, these representations are in fact highest-weight modules with highest-weight vectors $f \mu_\zeta$:
for each $j \in \llbracket 1,\np \rrbracket$, we have
\begin{align} \label{eq:highest-weight sections}
\begin{split}
L_0\super{x_j}(f \mu_\zeta) = \; & h_{s_j+1} (f \mu_\zeta) , \\
L_n\super{x_j}(f \mu_\zeta) = \; & 0 ,\qquad \textnormal{for all } n > 0.
\end{split}
\end{align}
The representation at $\infty$ has weight zero: 
$L_n\super{\infty}(f \mu_\zeta) = 0$ for all $n \geq 0$. 
Moreover, we have
\begin{align*}
\Delta_\ell\super{x_j} = \mathcal{D}_\ell\super{x_j} + D_\ell\super{x_j},
\end{align*}
where $\Delta_\ell\super{x_j}$ is the partial differential operator~\eqref{eq:defBSAoperator} involving $L_i\super{x_j}$ for all $i$, 
and $\mathcal{D}_\ell\super{x_j} = \mathcal{D}_\ell\super{j}$ is defined in~\eqref{eq:defBSAdifferentialoperator}, 
and $D_\ell\super{x_j}$ is a differential operator which vanishes if $\eta\super{x_j}_i = 0$ for all $i>1$. 
This shows that, if $F$ satisfies the BPZ equations~\eqref{eq:saintaubineq} at each marked point, then
\begin{align} \label{eq:BPZlifted}
\Delta_{s_j+1}\super{x_j} (f \mu_\zeta) = 0 ,
\qquad \textnormal{for all }j \in \llbracket 1, \np\rrbracket .
\end{align}
Conversely, if $f \mu_\zeta$ satisfies 
the \quote{null-vector} equations~\eqref{eq:BPZlifted}, 
then $F$ satisfies the BPZ PDEs~\eqref{eq:saintaubineq}
(see~\cite[Sect.~4]{Dubedat:SLE_and_Virasoro_representations_localization}). 
In conclusion, we have related solutions $F$ to the BPZ PDEs~\eqref{eq:saintaubineq} 
to solutions $f \mu_\zeta$ to Equations~\eqref{eq:BPZlifted}
via the correspondence of Equation~\eqref{eq:deftildef}.

\subsection{Fusion of BPZ PDEs --- proof of Theorem~\ref{thm:intermediatetheorem}}
\label{subsec:proof}

\intermediatetheorem*

\begin{remark}
Our Theorem~\ref{thm:intermediatetheorem} as well as its proof are very closely related to~\cite[Thm.~15]{Dubedat:SLE_and_Virasoro_representations_fusion}.
However,~\cite[Thm.~15]{Dubedat:SLE_and_Virasoro_representations_fusion} 
only applies to irrational central charges, 
because of a certain algebraic result required to carry out the 
argument~\cite[Lem.~1]{Dubedat:SLE_and_Virasoro_representations_fusion}. 
The reason for this is that the structure of highest-weight modules 
of the Virasoro algebra is much more intricate when the central charge is rational. 
Our Lemma~\ref{lem:analoglemma1Dubedat} is an extension of~\cite[Lem.~1]{Dubedat:SLE_and_Virasoro_representations_fusion} 
to the case of unit central charge. 
On the other hand,~\cite[Lem.~12,~13,~14]{Dubedat:SLE_and_Virasoro_representations_fusion}, 
which are used for building the bridge between analytic geometry and algebra, 
do apply to any central charge. 
Therefore, we can use all of them for the proof of Theorem~\ref{thm:intermediatetheorem}.
\end{remark}

\begin{proof}[Proof of Theorem~\ref{thm:intermediatetheorem}]
From $F$ as in the statement, 
we construct $f \in C^\infty (\mathcal{T}\sub{1,1,\ldots,1})$ as in~\eqref{eq:deftildef}:
\begin{align*}
f[(\bH; \; & 0, x_2, \ldots , x_\np, \infty; [\eta\super{0}]_{1}, [\eta\super{x_2}]_{1}, \ldots , [\eta\super{x_\np}]_{1}, [\eta\super{\infty}]_{1})]
\\
:= \; & (\eta\super{0}_1)^{-h_{s_1+1}} (\eta\super{x_k}_1)^{-h_{\ell}} (\eta\super{x_{k+1}}_1)^{-h_{2}} 
\prod_{\substack{2 \leq j \leq \np \\ j \neq k,k+1}} (\eta\super{x_j}_1)^{-h_{s_j+1}} \times F(0, x_2, \ldots , x_\np) ,
\end{align*}
where in the coordinates~\eqref{eq:Teich_coord} on $\mathcal{T}\sub{1,1,\ldots,1}$, we have
the $1$-jets $[\eta\super{\infty}]_1 = -1/z$ and
\begin{align*}
[\eta\super{0}]_1 = \; & \eta\super{0}_1 z , \quad \eta\super{0}_1 \in \bR ,
\\
[\eta\super{x_j}]_1 = \; & \eta\super{x_j}_1 (z-x_j) , \quad \eta\super{x_j}_1 \in \bR ,
\; j \in \llbracket 2,\np \rrbracket .
\end{align*}
The first step of the proof is to derive an asymptotic expansion for the section 
$f \mu_\zeta$ as $x_{k+1} \to x_k$, starting from the assumed asymptotic expansion~\eqref{eq:asyf} of $F$. 
To this end, note that we have $\eta\super{x_{k+1}} = \eta\super{x_k} - \eta\super{x_k}(x_{k+1})$, which implies that
\begin{align*}
\sum_{j \geq 1} \eta\super{x_{k+1}}_j (z-x_{k+1})^j = \sum_{j \geq 1} \eta\super{x_k}_j \big( (z-x_k)^j - (x_{k+1}-x_k)^j) \big).
\end{align*}
Taking the derivative with respect to $z$ and evaluating at $z=x_{k+1}$ yields
\begin{align*}
\eta\super{x_{k+1}}_1 = \sum_{j \geq 1} j \, \eta\super{x_k}_j (x_{k+1} - x_k)^{j-1} .
\end{align*}
Hence, using the expansion~\eqref{eq:asyf}, we infer that
\begin{align*}
f[(\bH; \; & 0, x_2, \ldots , x_\np, \infty; [\eta\super{0}]_{1}, \ldots , [\eta\super{x_k}]_{1}, [\eta\super{x_k} - \eta\super{x_k}(x_{k+1})]_1 , [\eta\super{x_{k+2}}]_{1}, \ldots , [\eta\super{\infty}]_{1})]
\\
:= \; & (\eta\super{0}_1)^{-h_{s_1+1}} (\eta\super{x_k}_1)^{-h_{\ell}} 
\prod_{\substack{2 \leq j \leq \np \\ j \neq k,k+1}} (\eta\super{x_j}_1)^{-h_{s_j+1}} 
\times \Big( \eta\super{x_k}_1 + \sum_{j \geq 2} j \, \eta\super{x_k}_j (x_{k+1} - x_k)^{j-1} \Big)^{-h_{2}} 
\\
\; & \times 
(x_{k+1}-x_k)^{h_{\ell+1}-h_{\ell}-h_{2}} 
\sum_{i\geq 0} f_i(\ldots, x_k,x_{k+2},\ldots) (x_{k+1}-x_k)^i .
\end{align*}
It is crucial to note that, while $k$-jets use \emph{formal} local coordinates, with possibly zero as radius of convergence, 
in order to carry out the fusion argument for the PDEs 
it is necessary to establish a \emph{true series expansion} in genuine local coordinates with a \emph{positive} radius of convergence.
This we obtain for our \emph{explicit} functions from Lemma~\ref{lem:lemmaASY}, which gives~\eqref{eq:asyf}.

The Lagrange inversion theorem now allows us to write 
\begin{align*}
x_{k+1}-x_k = \sum_{i \geq 1} g_i (\eta\super{x_k}(x_{k+1}) - \eta\super{x_k}(x_k))^i,
\end{align*}
where $g_i$ is a rational function of $\{x_k,\eta\super{x_k}_1, \ldots, \eta\super{x_k}_{n_i}\}$ where $n_i$ is a finite integer for all $i\geq 1$ and, 
in particular, $g_1 = 1/\eta\super{x_k}_1$. 
This justifies that we indeed have the expansion
\begin{align} \label{eq:expansiontildef}
\begin{split}
f[(\bH; \; & 0, x_2, \ldots , x_\np, \infty; [\eta\super{0}]_{1}, \ldots , [\eta\super{x_k}]_{1}, [\eta\super{x_k} - \eta\super{x_k}(x_{k+1})]_1 , [\eta\super{x_{k+2}}]_{1}, \ldots , [\eta\super{\infty}]_{1})]
\\
:= \; & \big( \eta\super{x_k}(x_{k+1}) - \eta\super{x_k}(x_{k}) \big)^{h_{\ell+1}-h_{\ell}-h_{2}}
\sum_{i\geq 0} \tilde{f}_i \, \big( \eta\super{x_k}(x_{k+1}) - \eta\super{x_k}(x_{k}) \big)^i ,
\end{split}
\end{align}
where $\tilde{f}_i$ is a smooth function of 
$\{x_2, \ldots , x_\np, \infty\}$ as well as of 
$\{\eta\super{x_j}_1 \;|\; j \neq k,k+1 \}$ and of 
$\{\eta\super{x_k}_1, \ldots , \eta\super{x_k}_{m_i}\}$ for some finite integer $m_i$. 
Note also that the coefficients $\tilde{f}_i$ are smooth because they are products of compositions of smooth functions. 
Moreover, $m_0 = 1$ and 
\begin{align} \label{eq:deftildef0}
\tilde{f}_0 = (\eta\super{0}_1)^{-h_{s_1+1}}  (\eta\super{x_k}_1)^{-h_{\ell+1}}
\prod_{\substack{2 \leq j \leq \np \\ j \neq k,k+1}} (\eta\super{x_j}_1)^{-h_{s_j+1}} 
\times f_0(x_1,\ldots,x_k,x_{k+2},\ldots,x_\np) .
\end{align}
Therefore, $\tilde{f}_0  \mu_\zeta$ is a $\Vir$-highest-weight vector as in~\eqref{eq:highest-weight sections} 
with weight $h_{s_j+1}$ at $x_j$ for $j \neq k,k+1$, 
and weight $h_{\ell+1}$ at $x_k$. 
The other coefficients give smooth sections $\tilde{f}_i \mu_\zeta$.

The final step of the proof is to connect this analytic setting to the algebraic Lemma~\ref{lem:analoglemma1Dubedat}. 
Consider the space $C^\infty (\mathcal{T}_\infty,\Det_\infty) \otimes V_{\alpha,h_{2}}$ of formal series as in Section~\ref{subsec:section5p1p2}, and set 
\begin{align*}
\mathcal Z := t^\alpha \sum_{i\geq 0} (\tilde{f}_i \mu_\zeta) t^i , \qquad
\alpha := h_{\ell+1} - h_{\ell} - h_{2} . 
\end{align*}
We now express the action of the $\np+1$ $\Vir$-copies on $\mathcal Z$ 
in terms of the action of the $\np$ $\Vir$-copies on $f \mu_\zeta$. 
Specifically, 
\cite[Lem.~12]{Dubedat:SLE_and_Virasoro_representations_fusion} 
identifies the $\Vir$-action at $x_k$ with the action of $\hat{L}_n$ in~\eqref{eq:actionhatLn};
\cite[Lem.~13]{Dubedat:SLE_and_Virasoro_representations_fusion} 
identifies the action of the generators $L_{-1}\super{x_{k+1}}$ and $L_{-2}\super{x_{k+1}}$ with $\Tilde{L}_{-1}$~\eqref{eq:actiontildeL1} and $\Tilde{L}_{-2}$~\eqref{eq:actiontildeL2}, respectively;
and~\cite[Lem.~14]{Dubedat:SLE_and_Virasoro_representations_fusion} 
relates the action of $L_n\super{x_j}$ on $f \mu_\zeta$ associated to the \quote{spectator points} $x_j$, with $j \neq k,k+1$,
to the action of $L_n\super{x_j}$ on $\mathcal Z$. 
Also, by construction, the section $f \mu_\zeta$ is $\Vir$-highest-weight as in~\eqref{eq:highest-weight sections} 
with weight $h_{s_j+1}$ at $x_j$ for $j \neq k,k+1$, 
weight $h_{\ell}$ at $x_k$, and weight $h_{2}$ at $x_{k+1}$. 
Since by assumption $F$ in~\eqref{eq:deftildef} 
satisfies the corresponding BPZ equations at those points, 
$f \mu_\zeta$ satisfies~\eqref{eq:BPZlifted}:
\begin{align*}
\begin{cases}
\Delta_{s_j+1}\super{x_j} \, (f \mu_\zeta) =  0 , \qquad j \neq k,k+1 , \\
\Delta_{\ell}\super{x_k} \, (f \mu_\zeta) =  0 , \\
\Delta_{2}\super{x_{k+1}} \, (f \mu_\zeta) =  0 .
\end{cases}
\end{align*}
This implies in particular that 
$\hat{\Delta}_\ell (\mathcal Z) = 0$ and $\tilde{\Delta}_2 (\mathcal Z) = 0$. 
To finish the proof, we just need to apply Lemma~\ref{lem:analoglemma1Dubedat} to infer that $\tilde{f}_0$ in~\eqref{eq:deftildef0} satisfies
\begin{align*}
\begin{cases}
\Delta_{s_j+1}\super{x_j} \,  (\tilde{f}_0 \mu_\zeta) =  0 , \qquad j \neq k,k+1 , \\
\Delta_{\ell+1}\super{x_k} \,  (\tilde{f}_0 \mu_\zeta) =  0 .
\end{cases}
\end{align*}
From this, we conclude that $f_0$ satisfies the asserted BPZ equations~(\ref{eq:jBSA},~\ref{eq:kp1BSA}). 
\end{proof}

\appendix 

\bigskip{}
\section{Expressions for Schur polynomials} 
\label{app:appendixschur}
In this appendix, we recall the Schur polynomials, 
used in particular for the proof of Proposition~\ref{prop:combinatorialformula}. 
Recall first that we consider sets of variables $x_{\summ_k},\ldots ,x_{\summ_{k+1}-1}$, where 
\begin{align*}
\summ_k := 1+ \sum_{j=1}^{k-1} s_j, \qquad k \in \llbracket 1, \np\rrbracket.
\end{align*} 
In particular, we have $\summ_{k+1}-\summ_k = s_k$. 
Moreover, let $\lambda=(\lambda_i)_{i=1}^{s_k}$ be a partition. 
The \emph{Schur polynomial} associated with the partition $\lambda$ admits the bialternant formula 
\begin{align*} 
\Schur_\lambda(x_{\summ_k},\ldots ,x_{\summ_{k+1}-1}) 
= \frac{\det\big( x_i^{\lambda_j+s_k-j}\big)_{q_k \leq i,j \leq q_{k+1}-1}}{\prod_{\summ_k\leq i < j < \summ_{k+1}} (x_i-x_j)}.
\end{align*}
Utilizing the Leibniz formula for the determinant, this can also be written as
\begin{align} \label{eq:defslambda}
\Schur_\lambda(x_{\summ_k},\ldots ,x_{\summ_{k+1}-1}) = \frac{\sum_{\sigma \in \SymGrp_{s_k}} \sign(\sigma) \prod_{i=\summ_k}^{\summ_{k+1}-1} x_{\sigma(i)}^{\lambda_i+s_k-i}}{\prod_{\summ_k\leq i < j < \summ_{k+1}} (x_i-x_j)}.
\end{align}
Equivalently, the Schur polynomial also admits the following combinatorial formula:
\begin{align} \label{eq:combinatorialschur}
\Schur_\lambda(x_{\summ_k},\ldots ,x_{\summ_{k+1}-1}) = \sum_T x_{\summ_k}^{t_{\summ_k}} \cdots  x_{\summ_{k+1}-1}^{t_{\summ_{k+1}-1}},
\end{align}
where the sum is taken over all column-strict tableaux $T$ with shape $\lambda$, with entries in $\{1,\dots, s_k\}$ (and any content), 
and where each $t_i$ is the number of occurrences of the number \quote{$i$} in the tableau $T$. 
In particular, the evaluation of~\eqref{eq:combinatorialschur} 
at $x_i=x_k$ for all $i \in \llbracket \summ_k, \summ_{k+1}-1\rrbracket$ 
leads to
\begin{align} \label{eq:evaluationschur}
\Schur_\lambda(x_k,\ldots ,x_k) = \Schur_\lambda(1,\ldots ,1) \, x_k^{|\lambda|} ,
\end{align}
where $\Schur_\lambda(1,\ldots ,1)$ represents the number of column-strict Young tableaux 
of shape $\lambda$ with entries in $\{1,\dots, s_k\}$ (and any content): 
\begin{align*}
\Schur_\lambda(1,\ldots ,1) = \prod_{1\leq i<j\leq s_k} \frac{\lambda_i-\lambda_j+j-i}{j-i}.
\end{align*}


\newcommand{\changeurlcolor}[1]{\hypersetup{urlcolor=#1}}      
\changeurlcolor{black}

\bibliographystyle{alpha}
\newcommand{\etalchar}[1]{$^{#1}$}

\end{document}